\newtheorem{theorem}{Theorem}
\newtheorem{definition}{Definition}
\newtheorem{lemma}{Lemma}
\newtheorem{statement}{Statement}
\def\H{\mathcal{H}}
\def\Hs{\mathsf{H}}
\def\M{\mathcal{M}}
\def\Z{\mathbb{Z}}
\def\C{\mathcal{C}}
\def\B{\mathcal{B}}
\def\R{\mathcal{R}}
\def\N{\mathcal{N}}
\def\L{\mathcal{L}}
\def\T{\mathcal{T}}
\newcommand{\ZZ}{\mathbb{Z}}
\newcommand{\FF}
{\mathbb{F}}
\newcommand{\CC}{\mathbb{C}}
\newcommand{\bket}[2]{\langle \, #1 \,|\, #2 \, \rangle}
\newcommand{\boket}[3]{\langle\, #1 \,|\, #2 \,|\, #3 \,\rangle}
\newcommand{\be}{\begin{equation}}
\newcommand{\ee}{\end{equation}}
\newcommand{\bc}{\begin{center}}
\newcommand{\ec}{\end{center}}
\newcommand{\nin}{\noindent}
\newcommand{\lo}{\overline}
\newcommand{\as}{\mathsf{a}}
\newcommand{\bs}{\mathsf{b}}
\newcommand{\cs}{\mathsf{c}}
\newcommand{\ds}{\mathsf{d}}
\newcommand{\fs}{\mathsf{f}}
\newcommand{\ts}{\mathsf{\tau}}
\newcommand{\cC}{\mathcal{C}}
\newcommand{\cL}{\mathcal{L}}
\newcommand{\cM}{\mathcal{M}}
\newcommand{\cR}{\mathcal{R}}
\newcommand{\cS}{\mathcal{S}}\newcommand{\cT}{\mathcal{T}}
\newcommand{\cX}{\mathcal{X}}
\newcommand{\cZ}{\mathcal{Z}}
\definecolor{mydarkgreen}{rgb}{0.0, 0.37, 0.1}
\renewcommand{\red}[1]{\textcolor{red}{#1}}
\renewcommand{\blue}[1]{\textcolor{blue}{#1}}
\renewcommand{\green}[1]{\textcolor{mydarkgreen}{#1}}
\newcommand{\rd}{\textcolor{red}{\texttt{r}}}
\newcommand{\bl}{\textcolor{blue}{\texttt{b}}}
\newcommand{\gr}{\textcolor{mydarkgreen}{\texttt{g}}}
\newlength\lrvec@height
\newlength\lrvec@width
\newif\iflrvec@same@height
\def\lrvec{\@ifstar\slrvec@\lrvec@}
\newcommand{\slrvec@}[2][.4ex]{
  \lrvec@same@heighttrue
  \mathpalette\lrvec@@{{#1}{#2}}
}
\newcommand{\lrvec@}[2][.4ex]{
  \lrvec@same@heightfalse
  \mathpalette\lrvec@@{{#1}{#2}}
}
\def\lrvec@@#1#2{\lrvec@@@#1#2}
\def\lrvec@@@#1#2#3{%
  \iflrvec@same@height
    \settoheight{\lrvec@height}{$\m@th#1 \mathbf{T}#3$}
  \else
    \settoheight{\lrvec@height}{$\m@th#1#3$}
  \fi
  \settowidth{\lrvec@width}{$\m@th#1#3$}
  \kern.08em
  \raisebox{#2}{\raisebox{\lrvec@height}{\rlap{%
    \kern-.05em
    \begin{tikzpicture}[<-> /.tip={To[width=.4em, length=.2em]}]
      \draw [<->] (-.05em,0)--(\lrvec@width+.05em,0);
    \end{tikzpicture}%
  }}}%
  #3
  \kern.08em
}
\begin{document}

\title{Non-Abelian qLDPC: TQFT Formalism, Addressable Gauging Measurement and Application to Magic State Fountain on 2D Product Codes}

\author{Guanyu Zhu}
\email{guanyu.zhu@ibm.com}
\affiliation{IBM Quantum, IBM T.J. Watson Research Center, Yorktown Heights, NY 10598 USA}

\author{Ryohei Kobayashi}
\email{ryok@ias.edu} 
\affiliation{School of Natural Sciences, Institute for Advanced Study, Princeton, NJ 08540, USA}

\author{Po-Shen Hsin}
\email{po-shen.hsin@kcl.ac.uk}

\affiliation{Department of Mathematics, King’s College London, Strand, London WC2R 2LS, UK}

\begin{abstract}
A fundamental problem of fault-tolerant quantum computation with quantum low-density parity-check (qLDPC) codes is the tradeoff between connectivity and universality.  It is widely believed that in order to perform native logical non-Clifford gates, one needs to resort to 3D  product-code constructions.   In this work, we extend Kitaev's framework of non-Abelian topological codes on manifolds to non-Abelian qLDPC codes (realized as  Clifford-stabilizer codes) and the corresponding combinatorial topological quantum field theories (TQFT) defined on Poincar\'e CW complexes and certain types of general chain complexes.   We also construct the spacetime path integrals as topological invariants on these complexes.  Remarkably, we show that native non-Clifford logical gates can be realized using constant-rate 2D hypergraph-product codes and their Clifford-stabilizer variants. This is achieved by a spacetime path integral effectively implementing the addressable gauging measurement of a new type of \textit{0-form subcomplex symmetries}, which correspond to addressable transversal Clifford gates and become \textit{higher-form} symmetries when lifted to higher-dimensional CW complexes or manifolds.  Building on this structure, we apply the gauging protocol to the \textit{magic state fountain} scheme for parallel preparation of $O(\sqrt{n})$ disjoint CZ magic states  with code distance of $O(\sqrt{n})$, using a total number of $n$ qubits.

\end{abstract}

\maketitle

\tableofcontents

\section{Introduction}
Understanding the fundamental limit of spacetime overhead for universal fault-tolerant quantum computation is one of the key theoretical problems in quantum error correction.
Significant progress has been made in the study of \textit{quantum low-density parity-check} (qLDPC) codes in recent years in terms of low-overhead quantum information storage \cite{fiberbundlecode21, hastingswr21,pkldpc22,Quantum_tanner,lh22,guefficient22,dhlv23,lzdecoding23,gusingleshot23, Bravyi:2024wc}  with the culmination of achieving asymptotically good qLDPC codes \cite{pkldpc22}.  Nevertheless, ongoing efforts continue to focus on developing logical gates for qLDPC codes.

Until now, a large portion of the works in this category have focused on constructing logical Clifford gates using techniques involving lattice surgery (see e.g.~\cite{cohen22,  cross2024improved, williamson2024low, swaroop2024universal}),  homomorphic measurements \cite{huang2023homomorphic, xu2024fast},  fold-transversal gates \cite{breuckmann2022fold}, and so on.  In order to achieve universality, it is also desirable to construct logical non-Clifford gates, which can reduce the large overhead of magic state distillation.  Another important challenging problem is how to perform addressable and parallelizable logical gates, since unlike topological codes, a large number of logical qubits are encoded into a single code block and it is very hard to address them individually. 

Some recent progress has been made in directly applying transversal non-Clifford gates to qLDPC codes using 3D product codes  corresponding to a 4-term chain complex \cite{zhu2023non, scruby2024quantum, golowich2024quantum, lin2024transversal, breuckmann2024cups, Hsin2024:classifying, zhu2025topological, zhu2025transversal, menon2025magic, li2025transversal, golowich2025constant, tan2025single, gulshen2025quantum}. In particular, Refs.~\cite{zhu2023non, zhu2025topological, zhu2025transversal} have proposed the `\textit{magic state fountain}' scheme to use transversal non-Clifford gates to fault-tolerantly prepare a large number of magic states in parallel in a single shot. This type of approaches are consistent with the Bravyi-K\"onig bound \cite{Bravyi:2013dx} derived in the context of topological codes stating that the spatial dimension needs to be at least $n$ for transversal logical gates in the $n^\text{th}$-level of Clifford hierarchy.  Attempts have also been made to extend such bound to the context of $n$-dimensional hypergraph-product codes \cite{fu2025no}.   Therefore, one would expect that in order to get logical non-Clifford gates, one has to resort to 3D product codes. These codes typically require more demanding  long-range connectivity compared to the 2D product codes for the physical realization on a 2D layout, which is the constraint for most of the quantum computing architecture.    A fundamental question along this direction is hence the tradeoff between universality and connectivity (or dimensionality).  In this paper, we will show that 3D product codes are not necessary for achieving native logical non-Clifford gates, at least when not restricting to transversal gates and Pauli stabilizer codes.

Looking back in history, at the same time as Kitaev introduced the famous toric code more than two decades ago,  he also introduced a large class of 2D non-Abelian topological codes in the form of quantum double models \cite{kitaev2003}.  The non-Abelian codes can admit universal topological quantum computation through braiding anyons.   A natural question to ask is whether such non-Abelian codes can be generalized to qLDPC codes which can have much lower space overhead and larger distance.  

Several fundamental and technical barriers exist along the way. First of all, the non-Abelian codes are always associated with an underlying topological quantum field theory (TQFT).  However, a TQFT is typically defined on a manifold, with its combinatorial version on the triangulation of a manifold.  Moreover, the path integral of a TQFT, also called a state sum (e.g.~\cite{walker2021universalstatesum}), is a topological invariant that is only determined by the manifold topology and is invariant under re-triangulation or equivalently local geometric deformation. Some well-known examples of such manifold invariants include the Turaev-Viro invariants \cite{Turaev:1992hq} and the Dijkgraaf-Witten invariants \cite{Dijkgraaf:1989pz}.  On the other hand, a qLDPC code is often defined on a general chain complex based on expander graphs to boost the rate and distance. The general chain complex is typically not a triangulation or simplicial complex. The other difficulty is that the conventional way of doing topological quantum computation via braiding anyons is not particularly suitable for qLDPC codes,  since the large distance and high rate feature belongs to the ground-state subspace.

In this work, we solve these problems by generalizing the framework of combinatorial TQFT beyond the manifold triangulation and extend it to the Poincar\'e complex,  a special type of CW complex (cellular complex) that admits Poincar\'e duality, and also to a special type of more general chain complexes. In particular,  the path integral of a higher-form generalization of the Dijkgraaf-Witten twisted gauge theory (called `\textit{cubic theory}') previously developed for the study of non-Abelian self-correcting memory \cite{Hsin2024_non-Abelian} has been generalized to a Poincar\'e complex. Such a complex is built from mapping a \textit{skeleton classical code} to a higher-dimensional manifold using a generalized version \cite{zhu2025topological} of the Freedman-Hastings mapping \cite{freedman:2020_manifold_from_code}, and subsequently deformation retract to a higher-dimensional CW complex, which automatically preserves Poincar\'e duality.  The CW complex hence admits cohomology operations such as cup product.   By taking the tensor product of two higher-dimensional CW complexes built from the Tanner graphs of the skeleton classical codes,  we obtain a \textit{`thickened' 2D hypergraph-product code}.
The corresponding Poincar\'e CW complex is then used to construct the path integral of the higher-form twisted gauge theory,  which is a topological invariant of the complex, a generalization of manifold invariants.
In particular, it is a cohomology invariant under the addition of  a coboundary on the cochain (gauge field) configuration, which physically corresponds to the gauge invariance under a gauge transformation.  
Moreover, it is also an invariant under the  subdivision of the CW complex, resembling the re-triangulation invariance in the manifold case. The corresponding code of this twisted gauge theory is a Clifford stabilizer code \cite{Hsin2024_non-Abelian, Davydova:2025ylx, Kobayashi:2025cfh,Warman:2025hov}, with $X$-stabilizer dressed by extra CZ gates. Even more interestingly, we can pull back the path integral invariant defined with cup product and the corresponding Clifford stabilizer code to the skeleton 2D (3-term) chain complexes formed directly by the product of two skeleton classical codes. The Clifford stabilizer codes defined on the CW complexes or the skeleton chain complexes are called  \textit{twisted 2D hypergraph-product codes}.

\begin{figure}
    \centering \includegraphics[width=1\linewidth]{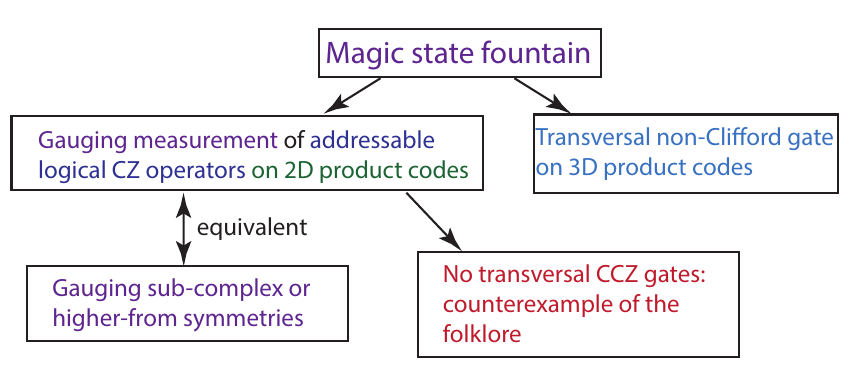}
    \caption{A diagram summarizing the relations between various concepts appeared in this paper.}
  \label{fig:conceptual_diagram}
\end{figure}

Equipped with the above higher-form twisted gauge theory, we are then able to construct the logical non-Clifford gates on 2D hypergraph-product codes \cite{Tillich:2014_hyergraph_product}.  In particular, we generalize the gauging measurement protocol for 2D topological codes in Ref.~\cite{Davydova:2025ylx} to 2D hypergraph-product codes along with the modified construction of the magic state fountain with in Ref.~\cite{zhu2025topological,zhu2025transversal}.   We start with a 2D hypergraph-product code (either the `thickened' or the skeleton version) with constant rate and $\sqrt{n}$ distance,  with the logical qubits initialized in $\lo{\ket{+}}$ or $\lo{\ket{0}}$ states. Next, we perform a gauging measurement of the transversal CZ operators with $O(d)$ rounds of error correction. This leads to a twisted (gauged) 2D hypergraph-product code that is a Clifford stabilizer code corresponding to the higher-form twisted gauge theory.   We then ungauge the twisted code back to the original untwisted hypergraph-product code, while the gauging measurement effectively projects the input logical state into a hypergraph magic state introduced in Refs.~\cite{zhu2023non, zhu2025topological, zhu2025transversal}. 
Subsequent gate teleportation can be used to convert these resource states into logical non-Clifford gate.  

Interestingly, the whole protocol can be derived from the \textit{first principle} via the path integral of the twisted gauge theory introduced above, and the logical action can also be derived from that.   In fact, the path integral is a powerful formalism for studying fault-tolerant logical gates,  with its connection to error correction in the context of topological codes first established in Refs.~\cite{Bauer:2023awl,Bauer:2024qpc,Bauer:2024alh}, and also applied to the gauging measurement protocol for non-Abelian topological codes in Ref.~\cite{Davydova:2025ylx}.    The path integral can be interpreted as an imaginary-time evolution corresponding to a post-selected spacetime history with all the stabilizer measurement outcome being $+1$.   In the more general situation of the error correction process where the measurement has random $\pm 1$ outcome, one can describe the history via an imaginary-time path integral in the presence of topological defects. Based on the syndromes, the decoder then outputs a recovery operation to close the defects.  If the decoder succeeds, then the defects will be closed in a homologically trivial way, which results in a \textit{defect-decorated imaginary-time path integral} equaling to the defect-free path integral due to topological invariance of defects. 
In fact, Kitaev's original way of doing topological quantum computation can also be described by
the defect-decorated path integral, where the defects correspond to worldlines of the non-Abelian anyons.  Nevertheless, a large class of fault-tolerant operations go beyond the braiding picture, including the gauging measurement protocol which instead uses the \textit{spacetime domain walls} to implement logical action, which is more suitable for the case of qLDPC codes.    
This paper generalizes the spacetime path integral from topological codes to qLDPC codes, and applies it to the gauging measurement protocol. Even more generally, one can also view the path integral as a natural language for describing a fault-tolerant Clifford or  non-Clifford circuit implementing logical actions  which can also be considered as a \textit{spacetime code}. For example, a Floquet qLDPC code is an example of such spacetime codes \cite{Hastings:2021ptn, Bauer:2023awl,Bauer:2024qpc,Bauer:2024alh}.

A crucial question when generalizing the gauging measurement to qLDPC codes is the addressability and parallelizability.  The gauging measurement, as well as the recently proposed magic state cultivation \cite{gidney2024magic}, are both  based on the measurement of transversal logical Clifford gates, which seems to be hard due to the large weight.  The cultivation scheme uses a GHZ state as a logical ancilla to perform controlled logical Clifford gate for the measurement, which is not scalable and relies on post selection in the low-distance regime. The 
essence of the gauging measurement is to factorize the global transversal gate corresponding to a global symmetry into a product of local Gauss's law operators corresponding to local gauge symmetries which are dressed X-stabilizer in our example. These local operators have $O(1)$ weight, which can hence be measured fault-tolerantly and is hence scalable to arbitrary distance. Now the issue for generalizing gauging measurement to qLDPC codes is the lack of  addressability since the conventional transversal gate corresponds to a 0-form global symmetry that act on the entire code and hence all logical qubits are involved in the measurement which typically results in a highly entangled magic state without individual addressability. 

Nevertheless, 
it has been realized in Ref.~\cite{zhu2023non} and elaborated in Ref.~\cite{Hsin:2024nwc} that the mechanism of a large class of addressable transversal gates is that they correspond to higher-form ($k$-form) symmetries defined on a codimension-$k$ submanifold as illustrated in Fig.~\ref{fig:subcomplex_symmetry}(a), including transversal Clifford and non-Clifford gates.  On the other hand, there 
exists a folklore (see e.g.~Refs.~\cite{JochymOConnor:2021ih, lin2024transversal, he2025quantum}) in the quantum error correction community that in order to get addressable (targeted) transversal $\text{C}^{n-1}\text{Z}$ gate, one needs to conjugate logical X with a global logical $\text{C}^n\text{Z}$ gate. This folklore seems to suggest that in order to get addressable logical CZ gate, we have to go to a 3D product code
to have transversal CCZ, and hence addressable gauging measurement of logical CZ is impossible for a 2D product code. Surprisingly, this folklore statement is only true for topological codes defined on 2D and 3D manifold, but not the case for 2D product codes defined on general 2D (3-term) chain-complex.  This is because a general 2D chain complex is unlike a 2D manifold which only has a unique top-dimensional  cycle (2-cycle) supported on the entire 2D manifold, instead it can have a large number of top-dimensional cycles (2-cycles).  Therefore,  there exist a large number of possible pairings between 2-cycles and 2-cocycles addressing only a subset of logical qubits (O(1) in our example).  These pairings correspond to \textit{subcomplex symmetries} supported on a subcomplex of the 2D chain complex, providing a more general framework for addressable logical gates that includes higher-form symmetries as special cases. In particular, the symmetries we get here are \textit{0-form subcomplex symmetries} that act on a codimension-0 (top dimensional) subcomplex corresponding to a 2-cycle $\bar\eta^*_2$ as illustrated in Fig.~\ref{fig:subcomplex_symmetry}(b), rather than a subdimensional complex.  Interestingly, when mapping the skeleton 2D chain complex to a higher-dimensional Poincar\'e CW complex or triangulated manifold as mentioned above, the subcomplex symmetry becomes a higher-form ($k$-form) symmetry ($k=2$ in our case), suggesting that the 2D general chain complex is more like a higher-dimensional manifold.

Now the essence of our LDPC gauging measurement protocol is hence the gauging of subcomplex or higher-form symmetries.  Although the gauging operation is applied on the entire complex, where the local Gauss's law operators (dressed X-stabilizers) are measured everywhere, the product of a selected subset of the measurement outcome of the dressed X-stabilizer 
supported on a particular homology class gives rise to a subcomplex (higher-form) symmetry of that homology class. The complete set of dressed X-stabilizer measurement outcomes in the twisted (gauged) code can be combined to obtain the measurement value of the subcomplex (higher-form) symmetries supported on any homology class. This is the essence how independent projective measurements can be obtained for all the transversal CZ operators without additional scaling overhead and in an ancilla-free way, which can simultaneously prepare a large number of disjoint logical magic states (`\textit{magic state fountain}').   In our construction, the fountain can prepare $\Theta(\sqrt{n})=\Theta(\sqrt{k})$ disjoint logical CZ magic states with distance $\Omega(\sqrt{n})$ in $O(d)$ rounds of error correction.  

Besides the application to the magic state fountain, exploring non-Abelian qLDPC codes broadens the landscape of highly-entangled quantum matter.  Here, we also explore the fundamental physics problem of understanding the non-Abelian topological order of qLDPC codes or more generally $k$-local quantum state of matter with long-range connectivity.  With the mapping to higher-dimensional CW complex,   we can again understand the non-Abelian fusion and braiding statistics in the language of twisted gauge theory.  In particular, the Borromean ring braiding statistics, which serves as the signature of the non-Abelian topological order of our twisted (gauged) qLDPC codes, can be derived from the spacetime path integral mentioned above.  Interestingly, the Borromean ring statistics is deeply related to the triple intersection of the world-sheet of magnetic excitations in the spacetime picture, which is also the underlying mechanism of the logical action introduced above.
 
Moreover, qLDPC codes provide a stronger form of topological ordered states: all states below certain energy density contain no trivial states, i.e. no low energy trivial states (NLTS), which was conjectured to exist in Ref.~\cite{Freedman:2013zfj} and proven in Ref.~\cite{Anshu:2022hsn}.  In contrast, most topological codes only have nontrivial topological orders in the ground states. In addition, the low energy states of qLDPC codes give rise to topological quantum spin glasses \cite{Placke:2024wey}, where constant rate gives rise to exponential numbers of stable, topologically ordered Gibbs states that persist at finite temperatures. This is a quantum analogue of the classical spin glasses with complex free energy landscapes that provide a mechanism for long-lived memory \cite{ANDERSON1970549,PhysRevLett.43.1754,Binder:1986zz,568530b2-62d5-3d43-9e2f-adfdf424006b}.  Studying non-Abelian generalization of the above types of phases of matter will hence be an interesting future direction.

\begin{figure}[t]
    \centering
    \includegraphics[width=1\linewidth]{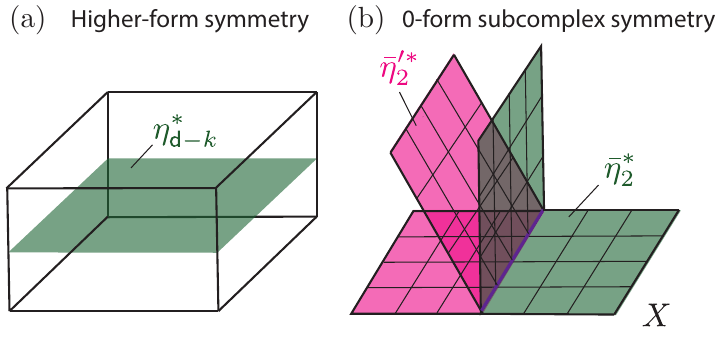}
    \caption{(a) Illustration of a higher-form ($k$-form) symmetry in a $\ds$-dimensional  manifold or Poincar\'e CW complex. The symmetry operator is supported on a codimension-$k$ submanifold/subcomplex corresponding to a $(\ds-k)$-cycle $\eta^*_{\ds-k}$. (b) Illustration of 0-form subcomplex symmetries on a 2D square complex $X$ which has a `book-like' structure: a purple edge (book hinge) are adjacent to multiple squares (book pages).  There can exist a large number of codimension-0 (top dimensional) subcomplexes corresponding to 2-cycles such as $\bar\eta^*_2$ and $\bar\eta'^*_2$.  Each of these 2-cycles can support a symmetry operator, leading to addressable transversal Clifford gate such as $\widetilde{\text{CZ}}_{\bar\eta^*_2}$.  }
    \label{fig:subcomplex_symmetry}
\end{figure}

\subsection{Summary of results}

We summarize our main results and the flow of this paper as follows:

First of all, we construct the first non-Abelian qLDPC codes and the corresponding TQFTs on certain types of general chain complexes and Poincar\'e CW complexes that are beyond the usual manifold scenario. The qLDPC codes in this case have non-Abelian topological order. In particular, we construct the spacetime path integral of a twisted higher-form gauge theory defined on a Poincar\'e CW complex in 
Eq.~\eqref{eq:path_inegral_cup} of Sec.~\ref{sec:Feynman_path_integral} using cup product, and show that it is a topological invariant in Lemma \ref{lemma:invariant} and \ref{lemma:subdivision}.  An explicit construction with non-trivial triple intersection in the path integral is also given in Sec.~\ref{sec:triple_intersection_construction} which guarantees non-trivial logical action in the gauging measurement protocol.  We then present the corresponding Clifford stabilizer code, also called \textit{twisted code} in Eq.~\eqref{eq:stabilizer_summary} of Sec.~\ref{sec:Clifford_stabilizer_codes}, along with other basic properties such as stabilizer commutation relations in Lemma \ref{lemma:stabilizer_commutation} and \ref{lemma:stabilizer_magnetic_commutation} and the expression of the logical operators.  The derivation from gauging a higher-form symmetry-protected topological (SPT) phase is presented in Appendix \ref{app:gauging_derivation}.  

We then show in Sec.~\ref{sec:charge_parity_operator} that the charge parity operators in the twisted code defined on the $d$-dimensional CW complex, which are equivalent to the addressable transversal CZ gates in the untwisted code, have the form of higher-form ($k$-form) symmetries supported on codimension-$k$ subcomplexes, or equivalently $(\ds -k)$-cycles $\eta^*_{\ds-k}$, as illustrated in Fig.~\ref{fig:subcomplex_symmetry}(a). In particular, these higher-form ($k$-form) symmetries ($k>0$), as well as its generalization of subcomplex symmetries ($k \ge 0$) 
provide a general mechanism for addressable transversal gates, including Clifford and non-Clifford gates.

Next, we study the explicit construction of the \textit{ thickened 2D hypergraph-product (HGP) code}, both the untwisted and twisted versions in Sec.~\ref{sec:rate_and_distance}, which are defined on a 16D Poincar\'e CW complex as a tensor product of two 8D CW complexes built from the skeleton classical codes as reviewed in Sec.~\ref{sec:code_to_manifold}.   For both the untwisted and twisted cases, we show in Lemma \ref{lemma:rate_untwisted} to \ref{lemma:distance_twisted} that the codes have constant encoding rate $k=\Theta(n)$ and polynomial subsystem-code distance $d=\Omega(\sqrt{n})$.  Here, we consider the subsystem-code distance in order to ignore the spurious short (co)cycles in the construction by treating them as logical operator support of gauge qubits, using Lemma \ref{lemma:subsystem} which is proven in Appendix \ref{app:subsystem}.   Note that the distance bound for the twisted codes is obtained assuming the well-known Statement \ref{statement:TQFT} about condensation descendants from TQFT holds in our case.  Since the codes we construct are also just equivalent to a combinatorial TQFT, Statement \ref{statement:TQFT} is unlikely to be violated.

In Sec.~\ref{sec:pullback_skeleton}, we then consider the pullback of the twisted (untwisted) thickened 2D HGP code defined on the 16D Poincar\'e CW complex to the \textit{twisted (untwisted) skeleton 2D HGP code} defined on the skeleton 2D general chain complex formed directly by the product of the skeleton classical codes. This minimizes the constant overhead in the construction which makes the construction particularly realistic for near-term implementation. Remarkably, the cup product structure and path integral invariants are also pulled back to the skeleton codes defined on the 2D general chain complexes.  In this way, one can also see how to define the path integral invariant without using cup product but just the chain-cochain pairing in the skeleton codes.   The code parameter scaling of the skeleton code is the same as the thickened code as stated in Lemma \ref{lemma:scaling_skeleton}.

Even more interestingly, in Sec.\ref{sec:0-form}, the higher-form symmetries in the higher-dimensional CW complex are pulled back to a new type of 0-form subcomplex symmetries supported on a codimension-0 (top-dimensional) 2-cycle of a general 2D chain complex. As illustrated in the toy example of a 2D square complex in Fig.~\ref{fig:subcomplex_symmetry}(b),  there is a book-like structure which is typical for qLDPC codes where more than two faces (pages) are adjacent to a single edge (hinge in purple), which is not allowed in the celllulation of a 2-manifold.  Therefore, in general there are many inequivalent 2-cycles $\bar \eta^*_2$ (illustrated by the green and yellow membranes in Fig.~\ref{fig:subcomplex_symmetry}(b)) which can support the subcomplex symmetry operators. These operators correspond to addressable transversal Clifford gates, such as the transversal CZ gates $\widetilde{\text{CZ}}_{\bar\eta^*_2}$, which gives the counterexample of the folklore statement since there is no transversal CCZ gates in the 2D HGP codes. 

We then present the addressable gauging measurement protocol for gauging the higher-form or subcomplex symmetries in Sec.~\ref{sec:parallel_gauging}, generalizing the protocol in Ref.~\cite{Davydova:2025ylx} for topological codes. With the specific triple-intersection structure chosen in Sec.~\ref{sec:triple_intersection_construction} and proper initialization of the logical qubits, we are able to realize the \textit{magic state fountain} that prepares $\Theta(\sqrt{n})$ logical CZ magic states with $\Omega(\sqrt{n})$ distance in parallel  using $O(d)$ rounds of error corrections, as stated in Theorem \ref{theorem:protocol}.   We further derive the logical action of the gauging measurement protocol via the spacetime path integral in Sec.~\ref{sec:spacetime_path_integral}.  We also explain the tensor-network picture of the spacetime path integral and the connection to imaginary-time evolution and error correction.   
The protocol is then also adapted to the skeleton HGP codes via the pullback described above in Sec.~\ref{sec:pullback_protocol}.

Finally, we show that the twisted qLDPC codes we have constructed have non-Abelian topological order by deriving the non-Abelian fusion rules and the Borromean ring braiding statistics in Sec.~\ref{sec:nonabelianfusionbraiding}.

\section{Review: mapping classical LDPC codes to CW complexes}\label{sec:code_to_manifold}

In this section, we first review the CW complex construction from the skeleton classical codes in Ref.~\cite{zhu2025topological}.  This paves the way for the later construction of twisted 2D hypergraph-product code in Sec.~\ref{sec:non-Abelian_LDPC}.

\begin{figure*}[t]
    \centering
    \includegraphics[width=1\linewidth]{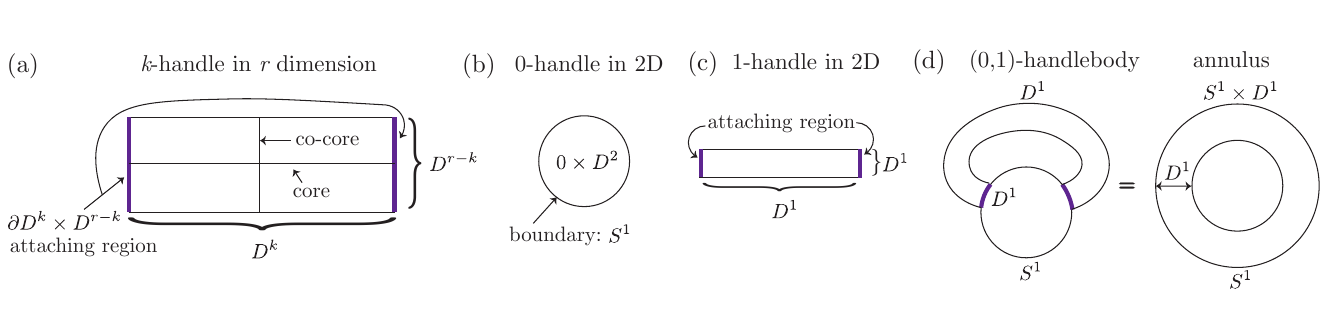}
    \caption{ (a) Anatomy of a \(k\)-handle in \(r\) dimensions. (b) A 0-handle in 2D, \(D^{0}\times D^{2}\), whose boundary is \(S^{1}\). (c) A 1-handle in 2D, \(D^{1}\times D^{1}\), with attaching region \(D^{1}\) (purple). (d) Attaching the 1-handle to the 0-handle along \(D^{1}\) produces a \((0,1)\)-handlebody homeomorphic to the annulus \(S^{1}\times D^{1}\).}
    \label{fig:handle_introduction}
\end{figure*}

\subsection{Mapping classical codes to manifolds}
We begin with the mapping of a classical code to an $r$-dimensional manifold $\cM^r$. For simplicity, we choose $r \ge 8$ as in \cite{zhu2025topological} to ensure the separation of the dimensions of the logical and spurious (co)cycles.   This starts with a skeleton classical code $\bar{\C}_{\text{c}}$ associated with the $\ZZ_2$ ($\FF_2$) chain complex $X$:
\begin{equation}\label{eq:classical_chain complex}
  \bar C_1 \xrightarrow[]{\partial_{1}=\mathsf{H}} \bar C_{0}.   
\end{equation}
In order to construct a manifold, one needs to lift it to a chain complex $\hat{X}$ over $\ZZ$ coefficients:
\begin{equation}\label{eq:lifted_chain_complex}
  \hat{\bar C}_1 \xrightarrow[]{\hat{ \partial}_{1}=\hat{\mathsf{H}}} \hat{\bar C}_{0},   
\end{equation}
where we have the lifted boundary map and parity-check matrix $\hat{\partial}_{1}=\hat{\mathsf{H}}$ over $\ZZ$. The choice of lift is not unique, however, for the classical code a straightforward choice is the `naive lift' \cite{freedman:2020_manifold_from_code}:  0 mod 2 $\rightarrow $ 0,  1 mod 2 $\rightarrow$ 1.   We now use the handle construction to build a manifold $\cM^r$ from the classical code, which corresponds to the following handle chain complex $\cL_h$ (for $r=8$):
\begin{widetext}
\begin{align}\label{eq:long_chain_CW}
& C_8 \rightarrow C_7 \rightarrow C_{6} \xrightarrow[]{\hat{\partial}^T = \hat{\mathsf{H}}^T}  C_5 \rightarrow 0  \rightarrow C_3 \xrightarrow[]{\hat{\partial} = \hat{\mathsf{H}}} C_{2} \rightarrow C_1 \rightarrow C_0, \cr
&\qquad \qquad \qquad  \qquad \qquad \qquad \qquad \ \ \ \ \text{bit}  \qquad \text{check} 
\end{align}
\end{widetext}
where $C_k=\text{span}_\ZZ(k\text{-}\text{handles})$. Note that since we have used a naive lift, the above handle chain complex $\cL_h$ can be viewed as chain complex either over $\ZZ$ or $\ZZ_2$.   Here, a  $k$-handle in $r$ dimensions ($r=8$ in this case) is a pair
\begin{equation}\label{eq:handle_expression}
  h_k=(D^k \times D^{r-k}, \partial D^k \times D^{r-k}),  
\end{equation}
which represents an $r$-dimensional manifold $D^k \times D^{r-k}$ together with its attaching region $\partial D^k \times D^{r-k} = S^{k-1} \times D^{r-k}$, as illustrated in Fig.~\ref{fig:handle_introduction}(a).  Here $D^k$ and $S^{k-1}$ represent a $k$-dimensional ball and a ($k-1)$-dimensional sphere respectively.  The two factors in the product $D^k$ and $D^{r-k}$ are the \textit{core} and \textit{co-core}  of a $k$-handle respectively. 

The handle chain complex $\L_h$ specifies how the $k$-handle is attached to $(k-1)$-handles for all $k \le r$.   In particular, the attaching map between the 3-handle and 2-handle is specified by the lifted boundary map $\hat{\partial}_1= \hat{\mathsf{H}}$ of the skeleton classical code according to Eq.~\eqref{eq:lifted_chain_complex}.   Note that the left portion of the handle chain complex $\cL_h$ is isomorphic to the right portion up to a transpose, which is a reflection of the underlying Poincar\'e duality of the manifold.  In particular, the chain group with dual dimensions $k$ and $r-k$ are isomorphic, e.g., $C_3$ and $C_5 \cong C^*_3$, as well as $C_2$ and $C_6 \cong C^*_2$. Here $C^*_k$ is the dual $k$-chain group of the dual complex $\cL^*_h$, which is spanned by the dual $k$-handles $h^*_k$, and defined as follows:
\begin{align}\label{eq:long_chain_dual}
 \cdots \leftarrow C_1^* \leftarrow C_{2}^* \xleftarrow[]{\hat{\partial} \sim {\mathsf{H}}}  C^*_3 \leftarrow 0  \leftarrow C^*_5 \xleftarrow[]{\hat{\partial}^T  \sim {\mathsf{H}}^T} C^*_{6} & \leftarrow C^*_7 \cdots \cr 
\end{align}
The Poincar\'e duality corresponds to the isomorphism of the homology and cohomology group \cite{Hatcher:2001ut}: 
\be
H_k(\cL_h) \cong H^{r-k}(\cL^*_h),
\ee
where we have $r=8$ in this construction.

Now we explain the details of the handle construction.  Besides the standard handle expressed in Eq.~\eqref{eq:handle_expression}, we also need to introduce the dressed $k$-handles, which are themselves composed of handles with the same index ($k$) and lower indexes (denoted by ``$k$-handle" from now on for simplicity following the convention in Ref.~\cite{freedman:2020_manifold_from_code}). In $r$ dimensions,  each ``$k$-handle" has the form 
\begin{equation}\label{eq:dressed_handle}
\tilde{h}_k=(N_k \times D^{r-k}, \partial N_k \times D^{r-k} ),     
\end{equation}
where $D^{m}$ represents the $m$-dimensional disk (ball) and $N_k$ is the dressed core (composed of genuine cores with the same or lower indexes).  From now on we focus on the $r=8$ case.

\begin{figure*}[t]
    \centering
    \includegraphics[width=1\linewidth]{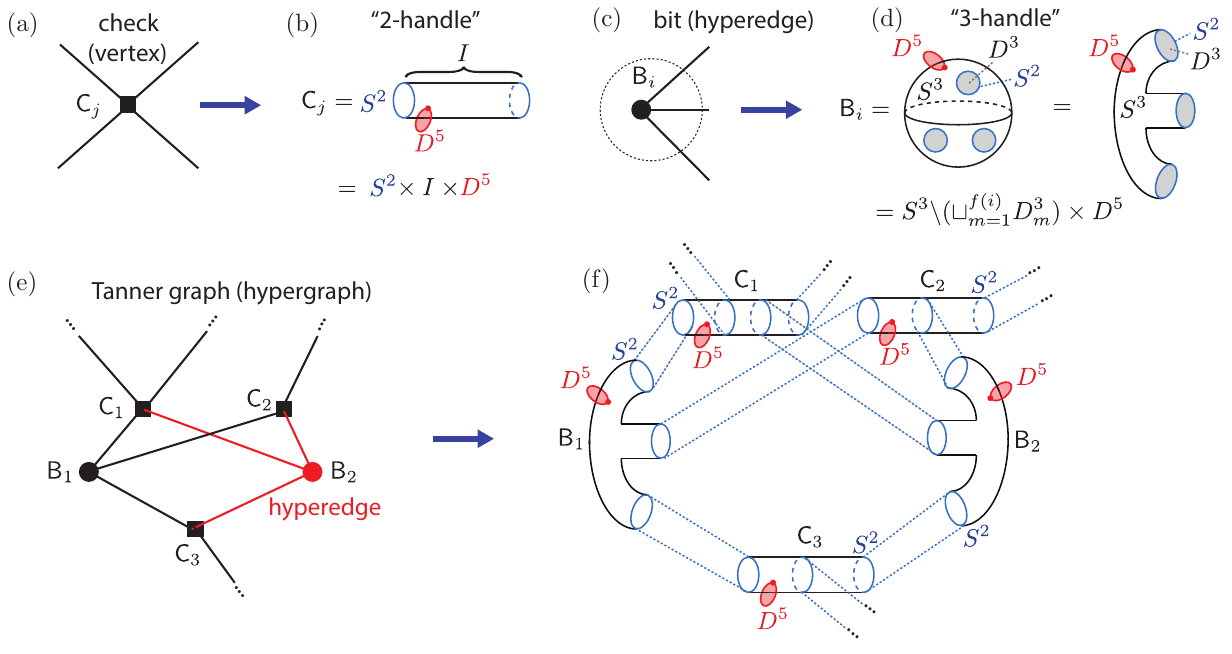}
    \caption{ Plumber’s view of classical codes. (a) A parity check \(\mathsf{C}_j\) represented as a vertex (square). (b) The check \(\mathsf{C}_j\) is mapped to a “2-handle,” \(\mathsf{C}_j = S^{2} \times I \times D^{5}\). (c) A bit \(\mathsf{B}_i\) associated with a hyper-edge consisting of multiple edges incident on a single circle. (d) The bit \(\mathsf{B}_i\) is mapped to a “3-handle,” obtained from a 3-sphere \(S^{3}\) with several 3-disks \(D^{3}\) removed and then thickened by \(D^{5}\); its multiple legs attach to neighboring 2-handles. (e) A bipartite Tanner graph with check nodes (squares) and bit nodes (circles), equivalently viewed as a hypergraph with checks on vertices and bits on hyper-edges (red). (f) Thickening the Tanner graph yields a handlebody in which 3-handles are attached to adjacent 2-handles according to the boundary map of the Tanner graph, by gluing along the attaching regions \(S^{2} \times D^{5}\) of the 3-handles.}
    \label{fig:dictionary}
\end{figure*}

We begin by associating a 0-handle \(h_0 = 0 \times D^{8} \cong D^{8}\) with each check of the classical code. For each such 0-handle, we then attach a 2-handle \(h_2 = D^{2} \times D^{6}\) along its boundary [the 2D analogue of attaching a 1-handle to a 0-handle is illustrated in Fig.~\ref{fig:handle_introduction}(b--d)]. The attaching region of \(h_2\) is \(\partial D^{2} \times D^{6} = S^{1} \times D^{6}\), with attaching map
\be
S^{1} \times D^{6} \hookrightarrow \partial D^{8} = S^{7}.
\ee
This construction yields a \((0,2)\)-handlebody
\be
\sqcup_{j=1}^{\bar{n}_c} (S^{2} \times D^{6})_j \;\equiv\; \sqcup_{j=1}^{\bar{n}_c} \mathsf{C}_j,
\ee
where \(\sqcup\) denotes a disjoint union, \(\mathsf{C}_j= S^{2} \times D^{6}\) is the dressed ``2-handle" associated with the \(j^\text{th}\) check, \(\bar{n}_c\) is the total number of checks, and the dressed core defined in Eq.~\eqref{eq:dressed_handle} is $N^{2} = S^{2}$.   The mapping between the check and the ``2-handle" $\mathsf{C}_j$ is illustrated in Fig.~\ref{fig:dictionary}(a,b),  where we have re-written the ``2-handle" as $\mathsf{C}_j= S^{2} \times I \times  D^{5}$ and $I \equiv D^1$ represents an interval.  In this way, one can interpret $\mathsf{C}_j$ as $S^2 \times I$ thickened by $D^5$ in extra dimensions.   Note that both $D^5$ and $S^2$ along extra dimensions for the skeleton classical code have $O(1)$ size in our construction. Note that each dressed ``2-handle" has a handle decomposition in terms of one 2-handle and one 0-handle as described above.

\begin{figure*}[t]
    \centering
    \includegraphics[width=1\linewidth]{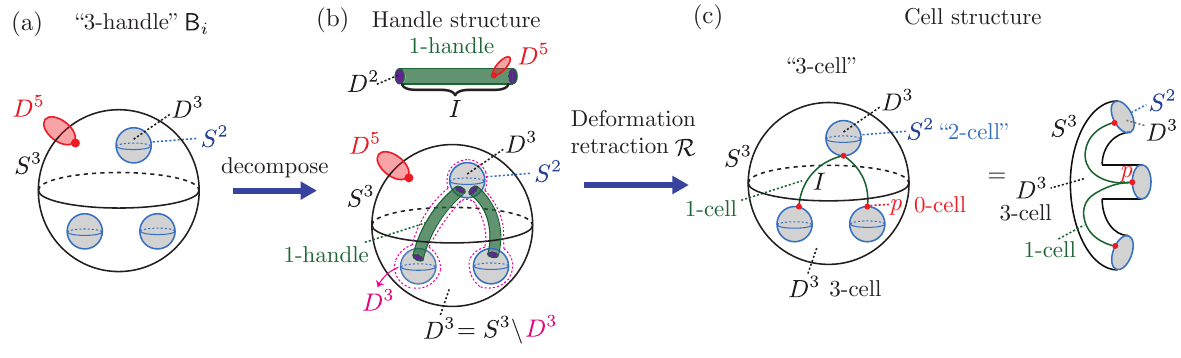}
    \caption{Anatomy of the dressed ``3-handle.''
(a) Realistic three-dimensional representation of the ``3-handle". The 3-sphere is viewed as the interior of a 3-ball whose boundary is identified to a single point. Three thickened 3-balls \(D^{3}\times D^{5}\) (gray) are removed, leaving thickened 2-sphere boundaries \(S^{2}\times D^{5}\).
(b) On the dressed core, the ``3-handle" contains two 1-handles connecting the three thickened \(S^{2}\) boundaries; together with the three removed \(D^{3}\) regions, these form a single 3-ball \(D^{3}\) (highlighted by pink dashed lines). The complement in the 3-sphere is another 3-ball, \(D^{3}=S^{3}\setminus D^{3}\), which becomes the 3-handle upon thickening by \(D^{5}\).
(c) Deformation retraction of the two 1-handles and the 3-handle to two 1-cells and one 3-cell, respectively, with the dressed ``3-handle" retracting to its core. The resulting cell connects to three neighboring ``2-cells" \(S^{2}\) and three 0-cells \(p\) on adjacent ``2-handles". The drawing may be further deformed into the abstract representation shown on the right, matching the style of Fig.~\ref{fig:dictionary}(d).}
    \label{fig:retraction_3-handle}
\end{figure*}

  We then associate each bit with a  ``3-handle"  $\mathsf{B}_i$$=$$N_3$$\times$$D^5$, where the dressed core $N_3=(S^3 \backslash \sqcup_{m=1}^{f(i)}D^3_m)$ is a punctured 3-sphere and $f(i)$ is the number of checks that the bit $\mathsf{B}_i$ couples to, as illustrated in Fig.~\ref{fig:dictionary}(c,d).   The attaching region of the ``3-handles" is $\partial N_3 \times D^6 = \sqcup_{m=1}^{f(i)} (S^2 \times D^5)_{i,m}$, i.e., 2-sphere thickened along $D^5$ in the extra dimensions.  
  
  The handle decomposition structure of the ``3-handle" is illustrated in Fig.~\ref{fig:retraction_3-handle}(a,b) (with $f(i)=3$),  using a more realistic three-dimensional representation of the dressed core \(N_i\), as opposed to the abstract illustration shown in Fig.~\ref{fig:dictionary}(d).  One can think of the 3-sphere $S^3$ being the entire 3D space inside a 3-ball $D^3$ with the boundary $S^2$ being identified to a single point.  We begin with \(f(i)\) disconnected thickened \(S^{2}\) boundary components, each of
the form \(S^{2} \times D^{5}\), and attach handles in increasing index, as
illustrated in Fig.~\ref{fig:retraction_3-handle}(b). We explain the handle decomposition as follow:

We first introduce
\((f(i)-1)\) 1-handles
\[
h_1
=
\bigl(
I \times (D^{2} \times D^{5}),\;
S^{0} \times (D^{2} \times D^{5})
\bigr),
\]
where \(S^{0}\) denotes the two endpoints of the interval \(I\). The attaching
region therefore consists of two copies of the thickened 2-disk
\(D^{2} \times D^{5}\) (shown in purple). Attaching these \((f(i)-1)\) 1-handles
connects the \(f(i)\) initially disconnected thickened \(S^{2}\) boundaries.
Each 1-handle joins two distinct thickened \(S^{2}\) components, which can
equivalently be viewed as attachments to the boundaries of the dressed
``2-handles'' \(\mathsf{C}_j\) associated with the dressed ``3-handle''
\(\mathsf{B}_i\).

Within the 3-sphere \(S^{3}\) containing the dressed core, the union of the
\((f(i)-1)\) 1-handles and the \(f(i)\) removed 3-balls \(D^{3}\) (shown in gray)
forms a single connected component homeomorphic to a 3-ball \(D^{3}\),
highlighted by the pink dashed curves. Removing this component from \(S^{3}\)
leaves the complement $S^{3} \setminus D^{3} \cong D^{3}$,
i.e., another 3-disk. Thickening this complement by \(D^{5}\) produces a 3-handle
\[
h_3
=
\bigl(
D^{3} \times D^{5},\;
S^{2} \times D^{5}
\bigr),
\]
whose attaching region is the thickened 2-sphere
\(S^{2} \times D^{5}\) (indicated by the purple dashed region). Altogether, the
dressed ``3-handle'' \(\mathsf{B}_i\) consists of one 3-handle and
\((f(i)-1)\) 1-handles.

  \begin{figure*}[t]
    \centering
    \includegraphics[width=1\linewidth]{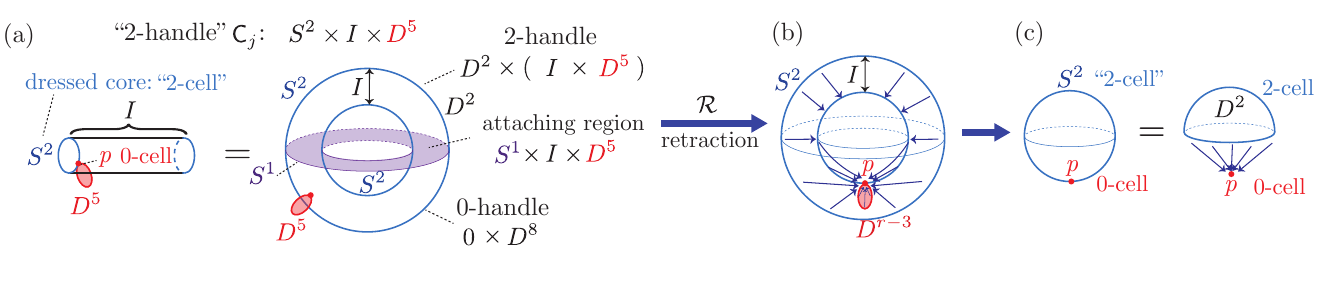}
    \caption{Anatomy of the dressed ``2-handle''. (a) The ``2-handle" has a dressed core: the ``2-cell" $S^2$.  It is formed by attaching a 2-handle $D^2 \times (I \times D^5)$ to the 0-handle $0 \times D^8$.  Here, the 0-handle is deformed into a south-facing half-shell \(D^{2}\times I\), thickened by \(D^{5}\).  The 2-handle is deformed into a north-facing thickened half-shell, whose attaching region (purple) is a thickened annulus along the equator. Attaching the 2-handle to the 0-handle along the equatorial annulus yields the dressed ``2-handle", a full shell \(S^{2}\times I\) thickened by \(D^{r-3}\). 
(b,c)  Deformation retracting the 0-handle to a single point \(p\), forming a 0-cell (vertex \(p\))  in the south pole, while
retracting the 2-handle to a 2-cell  \(D^{2}\) (disk) which is attached to the 0-cell and form a dressed ``2-cell".}
    \label{fig:retraction_2-handle}
\end{figure*}
  
  Now since each ``3-handle" and ``2-handle" only consists of one 3-handle and 2-handle respectively, the ``3-handles" are attached to ``2-handles" according to the lifted boundary map $\hat{\partial}=\hat{\mathsf{H}}$ as shown in Eq.~\eqref{eq:long_chain_CW}. We then attach the ``3-handles"  along the attaching region to the boundary of ``2-handles" $\partial (S^2 \times D^6)=S^2 \times S^5$  with the following attaching map:
\begin{equation}
(S^2 \times D^5)_{i, m}  \hookrightarrow (S^2 \times S^5)_j. 
\end{equation}
We have hence constructed a 3-handlebody $H$ corresponding to the right portion of the handle chain complex $\cL_h$ in Eq.~\eqref{eq:long_chain_CW}. The mapping from the Tanner graph (hypergraph) of a classical code $\bar{\C}_{\text{c}}$ and the 3-handlebody $H$ is illustrated in Fig.~\ref{fig:dictionary}(e, f).  

We now obtain the left portion by taking an identical upside-down copy of the 3-handlebody $H^*$, which is composed of the dual $k$-handles in the dual handle complex $\cL_h^*$ equivalent to the $(8-k)$-handles in the original complex $\cL_h$, i.e.,  $h^*_k = h_{8-k}$.   In particular, the 3-handle and 2-handle corresponding to the bit and check become the 5-handle and 6-handle in the upside-down copy $H^*$.    
We can hence build the closed manifold $\cM^{8}=\mathcal{D}H$ as the double of the 3-handlebody $H$ by gluing the two copies $H$ and $H^*$ along their common boundary $\partial H$ with an identity map $id_{\partial H}$, i.e., 
\begin{equation}
\mathcal{D}H=H \cup_{id_{\partial H}} H^*.  
\end{equation}
The closed manifold corresponds to the entire handle chain complex in Eq.~\eqref{eq:long_chain_CW}. 

\begin{figure*}[t]
\centering	\includegraphics[width=1\textwidth]{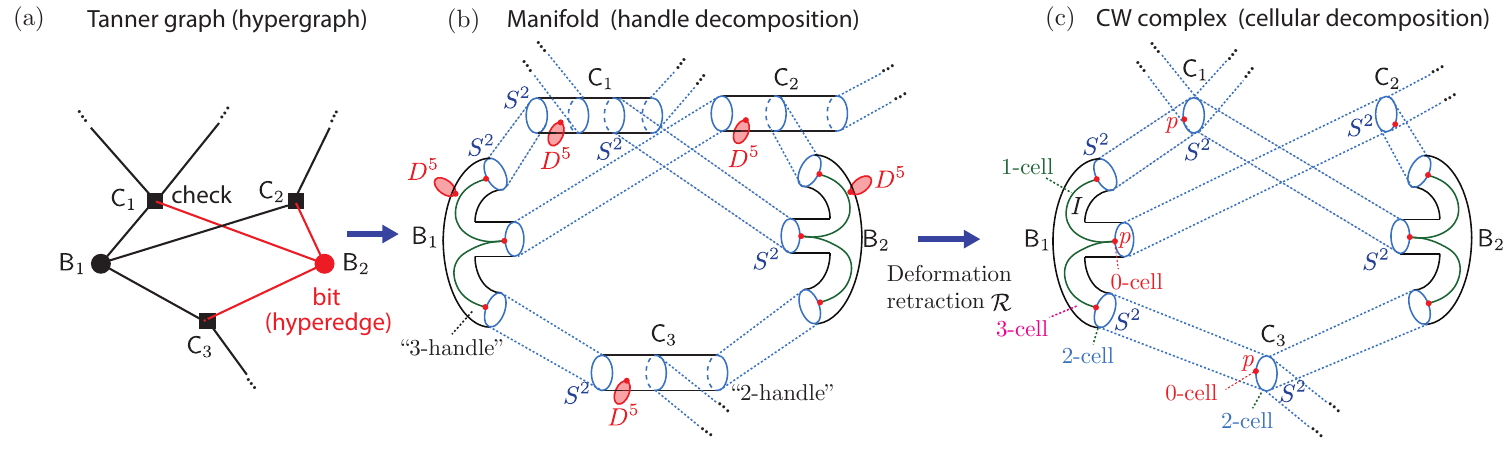}
    \caption{
(a,b) Mapping the input Tanner graph to a handle decomposition of a 3-handlebody \(H\) for an \(r\)-manifold, with cells shown on the dressed cores of the handles.
(c) Deformation retraction to a CW complex. Each ``3-handle'' retracts to its dressed core, yielding a 3-cell with one 3-cell, \((f(i)-1)\) 1-cells, and \(f(i)\) \(S^{2}\) boundary components, each equipped with a 0-cell \(p\).
Each ``2-cell'' contains a single \(S^{2}\) and a 0-cell \(p\), identified with the corresponding \(S^{2}\) boundaries and 0-cells on neighboring 3-cells.
A single 2-cell may attach to more than two 3-cells, demonstrating that the resulting CW complex is not a discretization of a manifold.}
    	\label{fig:CW_construction}
\end{figure*}

We have hence obtained a classical code defined on the triangulation $\L_\Delta$ of the 8-manifold $\M^8$ from the skeleton classical code $\bar\cC$, with bits and checks defined on the 3- and 2-simplices respectively.

\subsection{Deformation retraction to the hidden Poincar\'e CW complex}
In order to further simplify the construction and lower the constant overhead, we show that one can \textit{deformation retract} the manifold $\cM^r$ to a \textit{CW complex} $\cL_c$  \cite{Hatcher:2001ut} as hinted in Ref.~\cite{freedman:2020_manifold_from_code} and elaborated in Ref.~\cite{guemard2025lifting} and \cite{zhu2025topological}.  

We have the following formal definition for a CW complexes (also called a cellular complex):
\begin{definition}
A CW complex is built inductively by attaching 
$k$-cells (copies of open 
$k$-balls $D^k$) via attaching maps from their boundaries 
$S^{k-1}$ into the $(k-1)$-skeleton with the following conditions:
\begin{enumerate}
\item Closure-finite (C): The closure of each cell meets only finitely many other cells.
\item	
Weak topology (W): A set is closed if and only if its intersection with each cell-closure is closed.
\end{enumerate}
\end{definition}
\nin Note that simplicial complexes are special type of CW complexes where the cells are all simplices.

The CW complex we obtain is isomorphic to the handle chain complex in Eq.~\eqref{eq:long_chain_CW}, i.e., $\cL_c \cong \cL_h$. Under the deformation retraction $\cR$, each $k$-handle is retracted to its core---the $k$-cell $D^k$, i.e.,
\begin{equation}
  \cR:  h_k=D^k \times D^{r-k} \rightarrow D^k
\end{equation}
The dressed ``$k$-handle" $\tilde{h}_k$ is retracted to the dressed core $N_k$ as a dressed ``$k$-cell", i.e.,
\begin{equation}
 \cR:   \tilde{h}_k=N_k \times D^{r-k} \rightarrow N_k.
\end{equation}

Therefore, each dressed ``2-handle" associated with a check $\mathsf{C}_j$ is retracted to its dressed core: the dressed ``2-cell" $N_2=S^2$. This is illustrated by Fig.~\ref{fig:retraction_2-handle}, where we show in the right panel of (a) that the ``2-handle" $\mathsf{C}_j=S^2 \times I \times D^5$ can be viewed as a 3D shell $S^2 \times I$ thickened by $D^5$ along extra dimensions.  As we know above, the dressed ``2-handle" can be decomposed into one 2-handle $h_2=D^2 \times (I \times D^5)$ and one 0-handle $h_0=0 \times D^8$, corresponding to the upper and lower thickened half-shells respectively.  The attaching region of the 2-handle is a thickened annulus $S^1 \times I \times D^5$ highlighted in purple, which is then attached to the boundary of the 0-handle.  Now we apply the deformation retraction $\R$, as illustrated in Fig.~\ref{fig:retraction_2-handle}(b,c).  The 0-handle (lower thickened half-shell) is retracted to a single 0-cell (vertex) $p$, while the 2-handle (upper thickened half-shell) is retracted to a 2-cell $D^2$.  Overall, the dressed ``2-handle" is retracted to a dressed ``2-cell" $S^2$, which can be considered as a 0-cell $p$ and a 2-cell $D^2$ attached to the 0-cell (equivalent to identifying the boundary of $D^2$ into a single point that gives rise to $S^2$), as illustrated in Fig.~\ref{fig:retraction_2-handle}(c).  

Next, each dressed ``3-handle" associated with a bit $\mathsf{B}_j$ is retracted to its dressed core:  a dressed ``3-cell" $N_3=(S^3 \backslash \sqcup_{m=1}^{f(i)}D^3_m)$, as illustrated in Fig.~\ref{fig:retraction_3-handle}(c).  Throughout this procedure, each 1-handle
\(h_1 = I \times (D^{2} \times D^5)\) is retracted onto its core \(I\) (shown in green),
which becomes a single 1-cell (edge). At the same time, the 3-handle
\(h_3 = D^{3} \times D^{5}\) is retracted to a single 3-cell \(D^{3}\),
corresponding to the complement in the 3-sphere \(S^{3}\) of the
\((f(i)-1)\) 1-cells together with the \(f(i)\) embedded 3-balls \(D^{3}\).  The handle decomposition of the dressed ``3-handle" hence gives rise to the cell decomposition of the dressed ``3-cell":  each ``3-cell" can be further decomposed into $(f(i)-1)$ 1-cells and one 3-cell $D^3$ as illustrated in Fig.~\ref{fig:retraction_3-handle}(c).   

We now summarize our entire construction in Fig.~\ref{fig:CW_construction}.   We start with a Tanner graph  (hypergraph) of the skeleton classical code $\bar{\C}_{\text{c}}$.  We then use the handle construction to map the Tanner graph to a manifold $\M^8$ as illustrated in (b).  Finally, we apply the deformation retraction $\R$ which retracts the dressed ``2-handles" $S^2 \times D^5$ to ``2-cell" $S^2$ with one 0-cell $p$ placed on it, and retracts the dressed ``3-handle" $N_3 \times D^5$ to its dressed core $N_3$ as a dressed ``3-cell" composed of one 3-cell $D^3$ and $(f(i)-1)$ 1-cells, as illustrated in (c).  Note that the boundary of the ``3-cell" are  $f(i)$ ``2-cells" $S^2$ which are identified with the neighboring ``2-cells" corresponding to the connected checks. 

Due to the isomorphism between a portion of the CW complex involving $C_2$ and $C_3$ [Eq.~\eqref{eq:long_chain_CW}] and the classical code $\bar{\cC}$ [Eq.~\eqref{eq:classical_chain complex}], the classical code defined on the CW complex is completely the same as the skeleton code $\bar{\cC}$, since we only need 3-cells and 2-cells to define bits and checks. Meanwhile, the hidden CW complex structure (including cells other than dimension 2 and 3) has mathematically well-defined cup product which can be used to define the path integrals as topological invariants as will be introduced in Sec.~\ref{sec:Feynman_path_integral}. Note that although there additional cell structures in the CW complex, the information about them is already contained in the input Tanner graph (hypergraph) of the skeleton classical code.  For example, each 0-cell $p$ (\red{red}) is associated with a check node in the Tanner graph.  The 1-cells (\green{green}) come from dividing the hyper-edge in the hypergraph of the classical code into individual edges, and the minimal number of the edges is $f(i)-1$. The illustration in the rightmost panel of Fig.~\ref{fig:retraction_3-handle}(c) containing $f(i)$ divided edges is even more intuitive to understand.   Note that the hypergraph-product code $\cC$ obtained from tensor product of the CW complexes as will be introduced in Sec.~\ref{sec:non-Abelian_LDPC} is a non-topological code,  since $\cL_c$ is no longer a discretization of the manifold like the triangulation. This compact realization makes the near-term implementation practical and has minimal overhead.    Following the convention in Ref.~\cite{zhu2025topological}, we call the classical code defined on the CW complex $\cL_c$ the thickened classical code from now on.

Interestingly, the isomorphism between $\cL_c$ and $\cL_h$ preserves the Poincar\'e duality, i.e., and $\cL_c$ is hence a \textit{Poincar\'e complex}, which is defined below:
\begin{definition}
An $r$-dimensional CW complex is a Poincar\'e complex if and only if there exists a Poincar\'e duality isomorphism 
\be
H_k(\cL_c) \cong H^{r-k}(\cL^*_c),
\ee
where $\L^*_c$ is the dual CW complex.
\end{definition}
\nin Note that the property of the Poincar\'e complex makes it much easier to define a path integral as a topological invariant as opposed to the general CW or simplicial complexes, as will be discussed in Sec.~\ref{sec:Feynman_path_integral}.

We have the following lemma:
\begin{lemma} \cite{zhu2025topological}
From the Tanner graph of any input skeleton classical LDPC code $\bar{\cC}=\text{Ker}(\partial)=\text{Ker}(\Hs)$ that is asymptotically good, one can define the same asymptotically good classical LDPC code on an $r$-dimensional CW complex $\cL_c$ (with bits and checks placed on the 3- and 2-cells) (for $r\ge 8$), with the code space being the 3rd homology group $H_3(\cL_c; \ZZ_2)$.
\end{lemma}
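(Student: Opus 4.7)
The plan is to establish two chain-level equalities: (i) the cellular boundary $\partial_3:C_3(\cL_c)\to C_2(\cL_c)$ coincides with the parity-check matrix $\Hs$ of $\bar{\cC}$, and (ii) the group $C_4(\cL_c)$ vanishes in the handle chain complex, so that $H_3(\cL_c;\ZZ_2)=\mathrm{Ker}(\partial_3)=\bar{\cC}$. Asymptotic goodness is then inherited by fiat. I will work throughout with the isomorphism $\cL_c\cong\cL_h$ induced by the deformation retraction $\cR$, which acts handle-by-handle and so is transparent at the chain level.

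First I would set up the cell/bit and cell/check bijections. Each check $\mathsf{C}_j$ is assigned a dressed ``2-handle'' $S^2\times D^5$ whose handle decomposition (Fig.~\ref{fig:retraction_2-handle}) contains exactly one genuine 2-handle and one 0-handle; $\cR$ retracts these to one 2-cell and one 0-cell per check. Each bit $\mathsf{B}_i$ is assigned a dressed ``3-handle'' $N_3\times D^5$ with decomposition one 3-handle and $(f(i)-1)$ 1-handles (Fig.~\ref{fig:retraction_3-handle}); $\cR$ retracts these to one 3-cell and $(f(i)-1)$ 1-cells per bit. Hence $|C_2(\cL_c)|=\bar{n}_c$ and $|C_3(\cL_c)|=\bar{n}_b$, matching the check and bit registers of $\bar{\cC}$.

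Next I would identify the attaching maps of the 3-cells. The attaching region of the dressed ``3-handle'' $\mathsf{B}_i$ is $\sqcup_{m=1}^{f(i)}(S^2\times D^5)_{i,m}$, and each component is glued to the boundary of a neighboring dressed ``2-handle'' $\mathsf{C}_j$ according to the lifted boundary map $\hat{\partial}_1=\hat{\Hs}$. Because $\cR$ is performed independently on each handle, it commutes with these attachments, so after retraction the 3-cell coming from $\mathsf{B}_i$ attaches to precisely the 2-cells indexed by the support of the $i^{\text{th}}$ column of $\Hs$. Reducing the naive lift modulo 2 gives $\partial_3=\Hs$, and therefore $\mathrm{Ker}(\partial_3)=\mathrm{Ker}(\Hs)=\bar{\cC}$.

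Finally I would read off from the handle chain complex in Eq.~\eqref{eq:long_chain_CW} that the slot between $C_5$ and $C_3$ is literally $0$, so $C_4(\cL_c)=0$ and $\partial_4=0$. Then $H_3(\cL_c;\ZZ_2)=\mathrm{Ker}(\partial_3)/\mathrm{Im}(\partial_4)=\mathrm{Ker}(\Hs)=\bar{\cC}$, proving the lemma. The main obstacle is verifying that the auxiliary 0-cells and 1-cells produced by $\cR$ inside each dressed cell are genuinely invisible to the 3-to-2 boundary: they live in $C_0$ and $C_1$ and affect only $\partial_2$ and $\partial_1$, so they cannot corrupt $H_3$, but one must be careful to justify that the particular handle decompositions chosen for the dressed ``2-handle'' and ``3-handle'' really contain a single 2-handle and a single 3-handle respectively. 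This is precisely the content of the anatomy arguments in Figs.~\ref{fig:retraction_2-handle}--\ref{fig:retraction_3-handle}, which the proof merely has to assemble into an explicit identification of the cellular chain complex with $\cL_h$.
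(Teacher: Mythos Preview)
Your proposal is correct and takes essentially the same approach as the paper: both identify the $C_3\to C_2$ portion of the cellular chain complex with the skeleton's $\bar C_1\xrightarrow{\Hs}\bar C_0$ via the one-3-handle-per-bit and one-2-handle-per-check anatomies, and use the vanishing slot $C_4=0$ in Eq.~\eqref{eq:long_chain_CW} to conclude $H_3(\cL_c;\ZZ_2)=\mathrm{Ker}(\Hs)=\bar\cC$. The only cosmetic difference is that the paper spells out the Hamming-weight preservation $|\bar\as_1|=|\as_3|$ to derive $d=\bar d$ explicitly, which you subsume under ``inherited by fiat.''
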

\begin{proof}
As pointed out above, there exists an isomorphism between the chain complex $X$ of the $[\bar{n}, \bar{k}, \bar{d}]$ skeleton classical code $\bar{\C}_{\text{c}}$ and the portion in the CW complex $\L_c$ involving 2-cells and 3-cells.  In particular, there is an isomorphism between the $\ZZ_2$ homology groups of $X$ and $\L_c$:
\be
H_1(X; \ZZ_2) \cong H_3(\L_c; \ZZ_2),  \quad 
H_0(X; \ZZ_2) \cong H_2(\L_c; \ZZ_2).
\ee
The dimension $\bar{k}$ of the skeleton  classical code $\bar{\C}_{\text{c}} $$=$$H_1(X; \ZZ_2)$ is equal to the first $\ZZ_2$ Betti number $\bar{b}_1$ of $X$ which in turn equals the third Betti number $b_3$ of $\L_c$ and hence the dimension $k$ of the $[n, k, d]$ thickened classical code $\C= H_3(\L_c; \ZZ_2)$ defined on $\L_c$, i.e.,
\be
\bar{k} = \bar{b}_1 \equiv \dim(H_1(X; \ZZ_2)) = b_3  \equiv  \dim(H_3(\L_c; \ZZ_2)) = k.
\ee
Since the skeleton code $\bar{\C}_{\text{c}}$ is good, it has linear dimension $\bar{k}=\Theta(\bar{n})$.  Due to the isomorphism between $X$ and the portion of $\L_c$, we have $\bar{n}=n $. Therefore, we also have linear dimension or equivalently constant rate for the thickened classical code $\C$ defined on $\L_c$: $k=\Theta(n)$. 

Moreover, due to the isomorphism of chain complexes, we have the following mapping between any 1-cycle representative $\bar{\as}_1 \in H_1(X;\ZZ_2)$ in $X$ and a 3-cycle representative  $\as_3 \in H_3(\L_c; \ZZ_2)$:
\be
\bar{\as}_1 \rightarrow \as_3,
\ee
where the mapping preserves the Hamming weight of the $\ZZ_2$ cycle, i.e.,
\be\label{eq:Hamming_weight_preservation}
|\bar{\as}_1|=|\as_3|.
\ee
The asymptotically good skeleton classical code $\bar{\C}_{\text{c}}$ has linear distance:
\be
\bar{d} = \min \{ |\bar{\as}_1|: \ \bar{\as}_1 \neq 0 \in H_1(X; \ZZ_2) \} = \Theta(\bar{n}).
\ee
The distance of the thickened classical code $\C$ is defined as
\be
d =\min \{ |\as_3|: \ \as_3 \neq 0 \in H_3(\L_c; \ZZ_2) \}. 
\ee
According to Eq.~\eqref{eq:Hamming_weight_preservation}, we have the distance of the two codes equaling to each other:
\be
d=\bar{d} = \Theta(\bar{n}) = \Theta(n).
\ee
The thickened code $\C$ hence have both constant rate and linear distance, and is hence asymptotically good.
\end{proof}

Similarly, if the transposed code of the input skeleton code (also called a 0-cocycle code)  $\bar{\C}^*_c=\text{Ker}(\mathsf{H^T})= \text{Ker} \  d$ (where $d \equiv\partial^T$ is the coboundary map) is asymptotically good, then we also get a good classical code defined on the CW complex $\L_c$ with the code space being the 2rd cohomology group $H^2(\L_c; \ZZ_2)$.

We emphasize that this scheme is applicable to arbitrary skeleton classical code without requiring any specific structure.

Finally, as pointed out in Ref.~\cite{freedman:2020_manifold_from_code} and also discussed in Refs.~\cite{zhu2025topological, zhu2025transversal}, there exist spurious 1-cycles $\fs_1$ and 1-cocycles $\fs^1$, as well as Poincar\'e dual 7-cocycles ${\fs^*}^7 \equiv \text{PD}(\fs_1)$ and dual 7-cycles  $\fs^*_7 \equiv \text{PD}(\fs^1)$ in the manifold and CW-complex construction. These spurious cycles and cocycles could have very small minimal Hamming weight, as small as $O(1)$. For example, one can see from Fig.~\ref{fig:CW_construction}(c) that there can be short 1-cycles supported on the green edges (1-cells), which is not controlled by the distance of the skeleton classical code.   
When taking tensor product of the CW complexes, there could be short logical (co)cycles due to the spurious (co)cycles.   So one needs to use the subsystem-code idea developed in Refs.~\cite{zhu2025topological, zhu2025transversal} to treat these short logical (co)cycles as the logicals of gauge qubits and ignore them.  We will further elaborate on this point in Sec.~\ref{sec:rate_and_distance}.

\subsection{Define cup product}\label{sec:cup_product}
As mentioned above, the main motivation to map the classical code to the CW complex is to define the spacetime path integral as a topological invariant.  The type of TQFT we are developing in this paper is a higher-form analog of the Dijkgraaf-Witten twisted gauge theory, which uses cup product in its path integral.   Meanwhile, the stabilizer code corresponding to this type of gauge theory can also be conveniently represented using cup product.  

It is well-known that cup product is well-defined for the CW complex \cite{Hatcher:2001ut}. We now introduce the cup product `$\cup$' as a bilinear map on the cochain groups:
\be
\cup :  C^p(\L_c) \times C^q(\L_c) \rightarrow C^{p+q}(\L_c).
\ee
Here, $C^p$ represents the $p^\text{th}$ cochain group.
This can be interpreted as the cup product between a $p$-cochain $\alpha^p \in C^{p}$ and a $q$-cochain $\beta^q \in C^{q}$ gives rise to a $(p+q)$-cochain  $\alpha^p \cup \beta^q \in C^{p+q}$.  Moreover, the cup product  induces a bilinear operation on the cohomology groups as well:
\be\label{eq:bilinear}
\cup :  H^p(\L_c) \times H^q(\L_c) \rightarrow H^{p+q}(\L_c).
\ee

We now consider how to explicitly evaluate the cup product.  In the special case of a simplicial complex $\L_{\Delta}$, we have the following explicit formula for the evaluation on a $(p+q)$-simplex $[v_0,v_1,\cdots,v_{p+q}]$ as \cite{Hatcher:2001ut} 
\begin{align}\label{eq:cup_def}
 & (\alpha^p \cup \beta^q)([v_0,v_1,\cdots, v_{p+q}]) \cr
=&\alpha^p([v_0,v_1,\cdots, v_{p}])\beta^q([v_p, v_{p+1},\cdots, v_{p+q}])~. 
\end{align}
Here, we can choose an arbitrary global ordering for vertices $v_i$ on the entire complex $\L_{\Delta}$.   Then on each ($p+q$)-simplex we have the ordering $v_0 < v_1 < v_2 \cdots < v_{p+q}$, which induces an orientation of the simplex and specifies how to pick the $p$-simplices and $q$-simplices in the above evaluation.

Now for a general CW complex, we can subdivide the CW complex $\L_c$ into a simplicial complex $\L_{\Delta}$ and then use Eq.~\eqref{eq:cup_def} to evaluate the cup product. One can hence define the qubit models on the refined (subdivided) complex $\L_{\Delta}$ by introducing a constant number of extra qubits.  Alternatively, one can still define the qubit models on the original CW complex $\L_c$, while using subdivision to simplicial complex  $\L_{\Delta}$ to derive the cup product evaluation on general cells.  The detailed discussion of this can be found in Ref.~\cite{zhu2025topological}.   We also note that in some cases it is also possible to directly derive the cup product evaluation on the CW complex, as will be disccused in Sec.~\ref{sec:evaluation}.

\section{Non-Abelian qLDPC codes as twisted gauge theories}\label{sec:non-Abelian_LDPC}

In this section, we construct the non-Abelian qLDPC codes along with the corresponding combinatorial TQFTs.  We also study the detailed properties of these codes such as their Clifford stabilizer and logical operator form, the underlying symmetries, and the code parameters including encoding rate and distance.

\subsection{Feynman path integral in spacetime}\label{sec:Feynman_path_integral}
We start with the following twisted $\Z_2^3$ gauge theory with action:
\begin{align}\label{eq:non-Abelian_action}
S=& \pi \int_{\tilde{\L}} \red{ a^p} \cup \red{d \tilde{a}^u } +  \pi \int_{\tilde{\L}} \blue{b^q} \cup \blue{ d\tilde{b}^v} + \pi \int_{\tilde{\L}} \green{c^s} \cup \green{d\tilde{c}^w} \cr
&+ \pi \int_{\tilde{\L}} \red{a^p} \cup \blue{b^q} \cup \green{c^s},
\end{align}
where the first three terms come from the BF theory, while the last term is the higher-form generalization of the Dijkgraaf-Witten twisted gauge theories. Here, we have $p+q+s=D$ and $p+u+1=q+v+1=s+w+1=D$, where $D=\mathsf{d}+1$ is the total spacetime dimension and $\mathsf{d}$ represents the spatial dimension. Here, $\tilde{\L}$ represents the closed space-time complex either built form the triangulation $\L_\Delta$ of the manifold constructed from codes or the CW complex $\L_c$.   

\begin{figure}[t]
    \centering
    \includegraphics[width=1\linewidth]{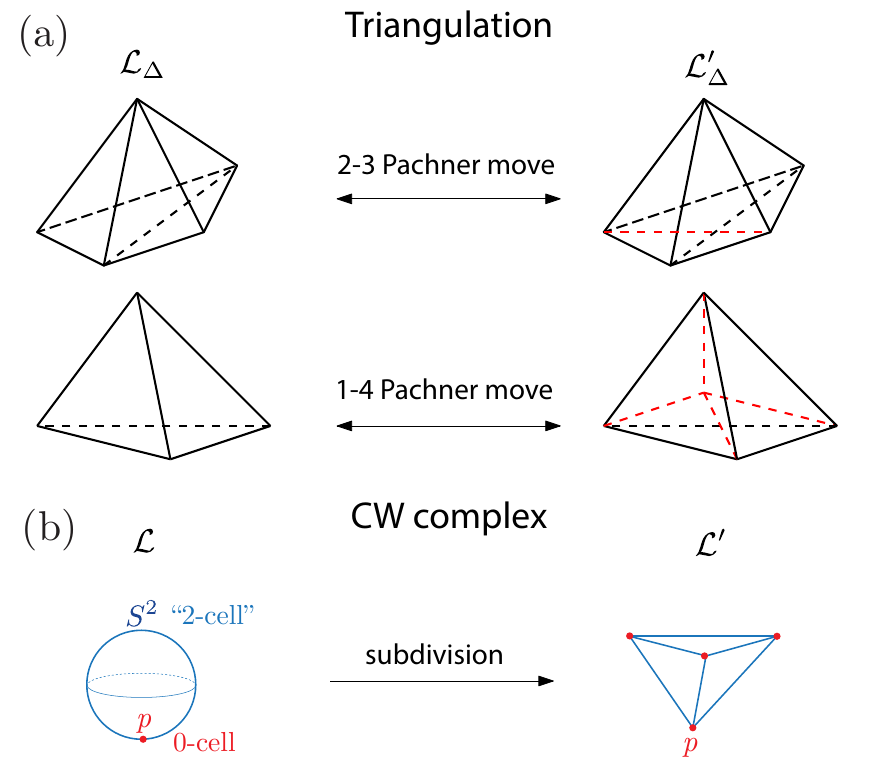}
    \caption{(a) Example of Pachner moves that change the triangulation. (b) Illustration on the subdivision of the cells in a CW complex. Here, an $S^2$ attached to a vertex (0-cell) is subdivided into the surface of a tetrahedron with four vertices. }
    \label{fig:pachner}
\end{figure}

We now consider the \textit{path integral} $\mathcal{Z}[\tilde{\L}]$, which is also called the \textit{partition function}.
For the code (ground) space $\C$ of the  twisted $\ZZ_2^3$ gauge theory, the path integral is related to the action by
\be
\mathcal{Z}[\tilde{\L}] = \frac{1}{\N} \sum_{\red{a^p}, \blue{b^q}, \green{c^s}, \red{\tilde{a}^u}, \blue{\tilde{b}^v}, \green{\tilde{c}^w} \in {C^{\empty}}^{\bullet}(\tilde{\L})} e^{iS[\red{a^p}, \blue{b^q}, \green{c^s}, \red{\tilde{a}^u}, \blue{\tilde{b}^v}, \green{\tilde{c}^w}]},
\ee
where $\N$ is a normalization constant. 
The first three  BF terms in $S$ [Eq.~\eqref{eq:non-Abelian_action}] gives a factor of $1$ contribution and imposes the cocycle condition:
\be
\red{da^p}=\blue{db^q}=\green{dc^s}=0, 
\ee
which physically means that $\red{a^p},  \blue{b^q}, \green{c^s}$ are flat gauge fields.   We hence reach the following form of the path integral:  
\be\label{eq:path_inegral_cup}
\mathcal{Z}[\tilde{\L}] = \sum_{\red{a^p}, \blue{b^q}, \green{c^s} \in {H^{\empty}}^{\bullet}(\tilde{\L})} (-1)^{\int_{\tilde{\L}} \red{a^p} \cup \blue{b^q} \cup \green{c^s}}, 
\ee
where for convenience we have set the normalization constant as $\N=1$ since it is irrelevant for the main results in this paper.
  One can re-express it using the Poincar\'e dual cycles (membranes) $\red{a^*_{D-p}} = \text{PD}(\red{a^p}), \blue{b^*_{D-q}} = \text{PD}(\blue{b^q}), \green{c^*_{D-s}}=\text{PD}(\green{c^s})$:
\be\label{eqn:pathintegralcycle}
\mathcal{Z}[\tilde{\L}] =\sum_{\red{a^*_{D-p}}, \blue{b^*_{D-q}}, \green{c^*_{D-s}} \in {H_\bullet}(\tilde{\L})} (-1)^{| \red{a^*_{D-p}} \cap \blue{b^*_{D-q}} \cap \green{c^*_{D-s}}|},
\ee
where we have used the geometric interpretation of triple cup product as the $\ZZ_2$ tripe intersection number  between the Poincar\'e dual cycles: $| \red{a^*_{D-p}} \cap \blue{b^*_{D-q}} \cap \green{c^*_{D-s}}|$ \cite{zhu2023non}.  Note that $\red{a^*_{D-p}}$,  $\blue{b^*_{D-q}}$ and $\green{c^*_{D-s}}$ are cycles (membranes) that the magnetic logical operators are supported.  

We have the following lemma:
\begin{lemma}\label{lemma:invariant} 
The path integral $\mathcal{Z}[\tilde{\L}]$ defined on the Poincar\'e CW complex $\tilde{\L}$ is a cohomology invariant which physically corresponds to the gauge invariance.  This means $\mathcal{Z}[\tilde{\L}]$ is invariant under the deformation by a coboundary or equivalently the gauge transformation on the gauge fields:  $\red{a^p} \rightarrow \red{a^p} + d\tilde\eta^{p-1}$, $\blue{b^q} \rightarrow \blue{b^q} + d \tilde \xi^{q-1}$, $\green{c^s} \rightarrow \green{c^s} + d\tilde\zeta^{s-1}$.
\end{lemma}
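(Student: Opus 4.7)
The plan is to show that the integrand $(-1)^{\int_{\tilde{\L}} \red{a^p} \cup \blue{b^q} \cup \green{c^s}}$ in Eq.~\eqref{eq:path_inegral_cup} is unchanged mod $2$ under a coboundary shift of any one of the three flat gauge fields, so that the exponent descends to a well-defined function on triples of cohomology classes and $\mathcal{Z}[\tilde{\L}]$ is manifestly gauge invariant. Since the three shifts enter symmetrically, I would treat the $\red{a^p}$ shift in detail and indicate how the other two follow by the same Leibniz calculation.

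First, I would compute the change in the exponent induced by $\red{a^p} \mapsto \red{a^p} + d\tilde\eta^{p-1}$, with $\blue{b^q}$ and $\green{c^s}$ kept as cocycles as enforced by the BF terms. The shift contributes $\int_{\tilde{\L}} d\tilde\eta^{p-1} \cup \blue{b^q} \cup \green{c^s}$. The graded Leibniz rule for the cup product gives
$$d\bigl(\tilde\eta^{p-1} \cup \blue{b^q} \cup \green{c^s}\bigr) = d\tilde\eta^{p-1} \cup \blue{b^q} \cup \green{c^s} + (-1)^{p-1}\, \tilde\eta^{p-1} \cup d(\blue{b^q} \cup \green{c^s}),$$
and the second term vanishes because $d\blue{b^q} = d\green{c^s} = 0$. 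Hence the shift is an exact $D$-cochain. I would then invoke the cellular Stokes' theorem: define $\int_{\tilde{\L}} \omega^D := \langle \omega^D, [\tilde{\L}] \rangle$ with $[\tilde{\L}] \in H_D(\tilde{\L}; \ZZ_2)$ the fundamental class supplied by Poincar\'e duality; since $\partial [\tilde{\L}] = 0$, any exact cochain pairs to zero mod $2$. The analogous shifts $\blue{b^q} \mapsto \blue{b^q} + d\tilde\xi^{q-1}$ and $\green{c^s} \mapsto \green{c^s} + d\tilde\zeta^{s-1}$ produce, by the same Leibniz expansion together with $d\red{a^p} = d\green{c^s} = 0$ and $d\red{a^p} = d\blue{b^q} = 0$ respectively, exact $D$-cochains $(-1)^p\, d(\red{a^p} \cup \tilde\xi^{q-1} \cup \green{c^s})$ and $(-1)^{p+q}\, d(\red{a^p} \cup \blue{b^q} \cup \tilde\zeta^{s-1})$, so their contributions also vanish mod $2$.

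The hard part is ensuring that the three ingredients — cup product, Leibniz rule, and cellular Stokes' pairing — all remain valid on $\tilde{\L}$, which is a Poincar\'e CW complex rather than a triangulation or smooth manifold. The Leibniz rule holds cochain-wise as soon as the cup product satisfies the usual coboundary compatibility; in our setting this is inherited from the simplicial cup product formula of Eq.~\eqref{eq:cup_def} on a subdivision $\L_\Delta$ discussed in Sec.~\ref{sec:cup_product}. The Stokes pairing requires a closed top-dimensional cycle against which to integrate, and this is precisely what the Poincar\'e-complex hypothesis supplies through $H_D(\tilde{\L}; \ZZ_2) \cong H^0(\tilde{\L}; \ZZ_2)$; this is the one place where the Poincar\'e-complex assumption is essential, and it is what separates the argument from a routine manifold computation. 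Once these structural points are in place the lemma reduces to the one-line Leibniz identity above, and the same package of facts will feed into the subsequent subdivision-invariance statement of Lemma~\ref{lemma:subdivision}.
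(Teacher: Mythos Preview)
Your proposal is correct and follows essentially the same route as the paper: both show that a coboundary shift turns the integrand into the original triple cup product plus an exact $D$-cochain, then kill that exact piece via Stokes on the closed Poincar\'e complex. Your version is in fact more explicit than the paper's---you spell out the Leibniz rule and treat each shift separately, and you phrase Stokes as pairing with the fundamental class $[\tilde{\L}]$ rather than the paper's terser $\int_{\tilde{\L}} d\omega = \int_{\partial\tilde{\L}} \omega = 0$ with $\partial\tilde{\L} = 0$---but the underlying argument is identical.
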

\begin{proof}
The cup product applies the following trilinear map on the cohomology group:
\be\label{eq:trilinear_map}
\cup : H^{p}(\tilde \L) \times H^{q}(\tilde \L) \times H^{q_s}(\tilde \L) \rightarrow H^{p+q+s}(\tilde \L).
\ee
We hence have 
\begin{align}
 & (\red{a^p} + d\tilde\eta^{p-1} )\cup (\blue{b^q} +d\tilde \xi^{q-1} )\cup (\green{c^s} + d\tilde\zeta^{s-1} ) \cr
=&  \red{a^p} \cup \blue{b^q} \cup \green{c^s} + d\omega^{D-1}.   
\end{align}
Therefore, the exponent in the path integral becomes:
\begin{align}
& \int_{\tilde \L} {(\red{a^p} + d\tilde\eta^{p-1} )\cup (\blue{b^q} +d\tilde \xi^{q-1} )\cup (\green{c^s} + d\tilde\zeta^{s-1} )}   \cr
=& \int_{\tilde \L}  \red{a^p} \cup \blue{b^q} \cup \green{c^s} +  \int_{\tilde \L} d\omega^{D-1} \cr
=& \int_{\tilde \L}  \red{a^p} \cup \blue{b^q} \cup \green{c^s},
\end{align}
where in the second equality we have used the Stoke's theorem: 
\be
\int_{\tilde \L} d\omega^{D-1} = \int_{ \partial 
\tilde{\L}}\omega^{D-1}=0.
\ee
In the above identity, we have use the property that the Poincar\'e CW complex has no boundary, i.e., $\partial 
\tilde{\L}=0$. 
\end{proof}

In the above proof, we can see the reason why we need to define the combinatorial TQFT on a Poincar\'e CW complex, since the path integral is still a topological invariant.  On the other hand, for a general CW complex $\tilde{\L'}$ which can have non-trivial boundary $\partial \tilde{\L'} \neq 0$, the path integral will no longer be a topological invariant.  

Moreover, an important property of TQFT defined on the triangulation of a manifold $\tilde\L_\Delta $ is that the path integral is invariant under re-triangulation  $\tilde\L_\Delta \rightarrow \tilde\L'_\Delta$ generated by the  Pachner moves illustrated in Fig.~\ref{fig:pachner}(a).  This property forbids one to directly define a TQFT on general chain complexes such as the product complexes obtained from expander graphs.  Nevertheless, there is a similar property which holds for the Poincar\'e CW complex:
\begin{lemma}\label{lemma:subdivision}
The path integral $\mathcal{Z}[\tilde{\L}]$ defined on the Poincar\'e CW complex $\tilde{\L}$ is invariant under subdivision. 
\end{lemma}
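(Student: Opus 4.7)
The plan is to show $\mathcal{Z}[\tilde{\L}] = \mathcal{Z}[\tilde{\L}']$ term-by-term by reducing invariance under subdivision to three purely (co)homological statements: a subdivision $r: \tilde{\L}' \to \tilde{\L}$ induces an isomorphism of cohomology, the cup product is natural with respect to this isomorphism, and the fundamental class of a Poincar\'e complex is preserved. Since both sides of Eq.~\eqref{eq:path_inegral_cup} are sums indexed by cohomology classes, weighted by a sign obtained by evaluating the triple cup product against the fundamental class, if all three ingredients carry over then the two sums match term-by-term.

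First, I would invoke the standard algebraic-topology fact that a cellular subdivision is a homotopy equivalence, yielding an isomorphism $r^* : H^\bullet(\tilde{\L};\ZZ_2) \xrightarrow{\sim} H^\bullet(\tilde{\L}';\ZZ_2)$ and dually $r_* : H_\bullet(\tilde{\L}';\ZZ_2) \xrightarrow{\sim} H_\bullet(\tilde{\L};\ZZ_2)$. This puts the summation sets of $\mathcal{Z}[\tilde{\L}]$ and $\mathcal{Z}[\tilde{\L}']$ in canonical bijection. Next, I would appeal to the naturality of cup product at the cohomology level,
\begin{equation}
r^*(\red{a^p} \cup \blue{b^q} \cup \green{c^s}) = r^*(\red{a^p}) \cup r^*(\blue{b^q}) \cup r^*(\green{c^s}),
\end{equation}
which holds as an identity of trilinear maps in Eq.~\eqref{eq:trilinear_map} even though cochain-level representatives of the cup product depend on auxiliary choices (vertex ordering or further subdivision to a simplicial complex, as recalled in Sec.~\ref{sec:cup_product}). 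Thus the top cohomology class associated to each triple of gauge classes is transported faithfully.

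Third, and most delicately, I would show that the fundamental class is preserved, $r_*[\tilde{\L}'] = [\tilde{\L}]$. For the Poincar\'e CW complex $\tilde{\L}$ built by the deformation-retraction construction of Sec.~\ref{sec:code_to_manifold}, the fundamental class is inherited from the ambient closed manifold through $\mathcal{R}$, and Poincar\'e duality --- part of the defining data of a Poincar\'e complex --- forces any cellular subdivision to carry one fundamental class to the other (up to sign, which can be fixed consistently by orienting $\tilde{\L}$ once and for all). Combining this with the projection formula
\begin{equation}
\int_{\tilde{\L}} \red{a^p} \cup \blue{b^q} \cup \green{c^s} = \int_{\tilde{\L}'} r^*(\red{a^p}) \cup r^*(\blue{b^q}) \cup r^*(\green{c^s}) \pmod{2}
\end{equation}
makes the sign $(-1)^{\int \cdots}$ attached to each summand identical on the two sides, and the lemma follows.

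The main obstacle will be the third step: while homotopy invariance of cohomology and naturality of cup product are standard, carefully verifying compatibility of the fundamental class under cellular subdivision in our broader class of Poincar\'e CW complexes (which are not themselves manifold triangulations, and may contain book-like structures of the type shown in Fig.~\ref{fig:subcomplex_symmetry}(b)) requires showing that the subdivision map is a chain homotopy equivalence commuting with the Poincar\'e duality isomorphism. I would handle this by tracing the fundamental class through the homeomorphism $\tilde{\L}_c \cong \tilde{\L}_h$ back to the double $\mathcal{D}H$ of Sec.~\ref{sec:code_to_manifold}, where it is genuinely a manifold fundamental class and subdivision invariance is classical, and then transporting the result back via the deformation retraction $\mathcal{R}$.
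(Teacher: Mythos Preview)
Your proof is correct and follows the same underlying route as the paper's own argument, though with considerably more detail. The paper's proof is a single sentence: it simply asserts that the cup product sum (equivalently the triple intersection number) is invariant under subdivision, and stops there. Your argument unpacks exactly what this assertion rests on --- the cohomology isomorphism induced by subdivision, naturality of cup product, and preservation of the fundamental class --- and supplies a strategy for the last point by tracing back through the manifold $\mathcal{D}H$ before the deformation retraction $\mathcal{R}$. This is the honest content behind the paper's one-liner, so the two arguments are the same in spirit; yours is simply the rigorous version.

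One minor remark: your worry about ``book-like structures'' is somewhat misplaced here, since those arise in the skeleton 2D chain complexes of Sec.~\ref{sec:pullback_skeleton}, not in the Poincar\'e CW complex $\tilde{\L}$ itself. The complex $\tilde{\L}$ inherits Poincar\'e duality and its fundamental class directly from the handle decomposition of the closed manifold $\mathcal{D}H$ via the isomorphism $\cL_c\cong\cL_h$, so the third step is less delicate than you suggest --- but your proposed strategy of pulling back to $\mathcal{D}H$ is precisely the clean way to see it.
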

\begin{proof}
This simply follows from the fact that the cup product sum or equivalently the triple intersection is invariant under subdivision.
\end{proof}
We also illustrate the local subdivision of a CW complex from a sphere $S^2$ attached to a vertex $p$ (appeared previously in Fig.~\ref{fig:retraction_2-handle}) to a tetrahedron with 4 vertices in Fig.~\ref{fig:pachner}(b).

\subsection{Explicit construction of models with non-trivial triple intersections}\label{sec:triple_intersection_construction}

We start from a hypergraph-product code formed by the tensor product of two good 0-cocycle classical LDPC codes $\bar{\C}^*_c=\text{Ker}(d) =\text{Ker}(\Hs^T)$  corresponding to the product chain complex $ X_x \otimes X_y$, where $X_x=X_y$ represent the chain complex associated with the skeleton classical code $\bar{\C}_{\text{c}}$.  Using the mapping from the classical codes to the CW complexes: $X_x \rightarrow
\L_x$, $X_y \rightarrow \L_y$,  we obtain a product complex:
\be
\L = \L_x \otimes \L_y,  
\ee
where $\L_x$ and $\L_y$ are 8D CW complexes and $\L$ is a 16D CW complex.  The total spacetime complex $\tilde{\L} = \L \otimes I_t$ is 17D, where $I_t$ represents the time complex, i.e., a 1D line.  

We will consider the twisted gauge theory corresponding to the following path integral (with $p=8$, $q=6$ and $s=3$): 
\be
\mathcal{Z}[\tilde{\L}] = \sum_{\red{a^8}, \blue{b^6}, \green{c^3} \in {H^{\empty}}^{\bullet}(\tilde{\L})} (-1)^{\int_{\tilde{\L}} \red{a^8} \cup \blue{b^6} \cup \green{c^3}}.   
\ee
The reason for this particular choice is explained  as follows.
We first choose a cocycle basis for the gauge fields $\red{a^8}$, $\blue{b^6}$ and $\green{c^3}$ respectively, which are denoted by $\{\red{\tilde{\alpha}^8}\}$, $\{\blue{\tilde{\beta}^6}\}$ and $\{\green{\tilde{\gamma}^3}\}$. Here and throughout the paper, we use the tilde $\tilde{\cdot}$ to indicate \textit{spacetime} basis cycles and cocycles, while those without tilde represent \textit{space} the basis cycles and cocycles. 
We then consider the following decomposition of the basis cocycles into cocycles within each factor complex built from a classical code due to the K\"unneth formula: 
\begin{align}\label{eq:cocycle_decomposition}
\red{\tilde{\alpha}^{8}} =&  \bs^2 \otimes {\bs'^*}^6 \otimes  \ts^0 \equiv \red{{\alpha}^{8}} \otimes \ts^0   \cr
\blue{\tilde{\beta}^{6}} =& {\bs^*}^6  \otimes  \cs^0 \otimes     \ts^0 \equiv  \blue{{\beta}^{6}} \otimes \tau^0 \cr
\green{\tilde{\gamma}^{3}} =& \cs^0 \otimes  \bs'^2  \otimes {\ts^*}^1 \equiv \green{\gamma^2} \otimes {\ts^*}^1,  
\end{align}
where $\red{\alpha^8}$, $\blue{\beta^6}$ and $\green{\gamma^2}$ are the spatial components.

Note that $\bs^2$ and ${\bs^*}^6$ are a pair of dual cocycles (also due to Poincar\'e duality) that have non-trivial cup product and intersection, i.e.,  
\begin{equation}\label{eq:non-trivial_cup}
    \bs^2 \cup {\bs^*}^6 \neq 0 \in H^8(\cL; \ZZ_2),    \qquad    \int_{\cL_x} \bs^2 \cup {\bs^*}^6 = |\bs^*_6 \cap \bs_2|=1,
\end{equation}
where the second equation represents the sum of the evaluation of $\bs^2 \cup {\bs^*}^6 (\sigma_8)$ on all the 8-cells $\sigma_8$ of $\cL_x$ and corresponds to the intersection of the corresponding Poincar\'e  dual cycles $\bs^*_5$ and $\as_2$.   Now we can further obtain the triple cup product within each  classical code in the tensor product:
\begin{equation}
    \bs^2 \cup {\bs^*}^6 \cup \cs^0 \neq 0 \in H^8(\cL; \ZZ_2),    \qquad    \int_{\cL_x} \bs^2 \cup {\bs^*}^6 \cup \cs^0 =1,
\end{equation}
which gives rise to a non-trivial triple intersection.   Similarly, we have Poincar\'e duality on the time complex $I_t$: 
\begin{equation}
    \ts^0 \cup {\ts^*}^1 \neq 0 \in H^1(I_t; \ZZ_2),    \qquad    \int_{I_t} \ts^0 \cup {\ts^*}^1 = |\ts^*_1 \cap \ts_0|=1,
\end{equation}
where $\ts^*_1$ and $\ts_0$ are the Poincar\'e dual cycles of $\ts^0$ and $\ts^*_1$ respectively.  This can be extended to the non-trivial  triple intersection: 
\begin{equation}
    \ts^0 \cup \ts^0 \cup {\ts^*}^1   \neq 0 \in H^1(I_t; \ZZ_2),    \qquad    \int_{I_t} \ts^0 \cup \ts^0 \cup {\ts^*}^1  =1.
\end{equation}
Now due to the K\"unneth formula for cup product, we can re-express the triple cup product between three-cocycles as the tensor product of the triple cup product within each thickened classical code $\cL_x$ and $\cL_y$, as well as the temporal complex $I_t$:
\begin{align}\label{eq:triple-cup_decomposition}
    & \red{\tilde \alpha^8} \cup \blue{\tilde{\beta}^{6}} \cup \green{\tilde{\gamma}^3} \cr
    =& (\bs^2 \otimes {\bs'^*}^6 \otimes  \ts^0    
) \cup ( {\bs^*}^6 \otimes  \cs^0 \otimes     \ts^0 ) \cup
( \cs^0 \otimes  \bs'^2  \otimes {\ts^*}^1 )   \cr
=& (\bs^2 \cup {\bs^*}^6 \cup  \cs^0    
) \otimes ({\bs'^*}^6 \otimes  \cs^0 \otimes     \bs'^2 ) \otimes
( \ts^0 \cup  \ts^0  \cup {\ts^*}^1 ) \cr
\neq & 0 \in H^{17}(\tilde{\cL} ; \ZZ_2).
\end{align}
Note that the triple cup product $\red{\tilde{\alpha}^{8}} \cup \blue{\tilde{\beta}^{6}} \cup \green{\tilde{\gamma}^3}$ is non-trivial since each tensor component is non-trivial.  We hence have the non-trivial triple intersections:
\begin{align}\label{eq:triple_cup_sum}
   & \int_{\tilde{\cL}}  \red{\tilde{\alpha}^{8}} \cup \blue{\tilde{\beta}^{6}} \cup \green{\tilde{\gamma}^3}  \cr
=&  \int_{\cL_x}(\bs^2 \cup {\bs^*}^6 \cup  \cs^0    
) \cdot \int_{\cL_y} ({\bs'^*}^6 \otimes  \cs^0 \otimes     \bs'^2 )   \cr
&\cdot \int_{I_t} ( \ts^0 \cup  \ts^0  \cup {\ts^*}^1 ) \cr
=& 1 \cdot 1 \cdot 1 =1.
\end{align}

We note that the spatial component of these cocycles also have the non-trivial triple intersections:
\begin{align}
    \int_{{\cL}}  \red{{\alpha}^{8}} \cup \blue{{\beta}^{6}} \cup \green{{\gamma}^2} =1.
\end{align}

\subsection{Clifford stabilizer codes on Poincar\'e complexes}\label{sec:Clifford_stabilizer_codes}

Here, we present the explicit construction of the non-Abelian qLDPC codes, which belong to the family of Clifford stabilizer codes.  We then discuss the properties of this family of codes, including the stabilizer commutation relations, logical operators, and charge parity operators.

\subsubsection{Construction of the codes}\label{sec:code_construction}

We present a family of Clifford stabilizer codes with the code space represented by $\tilde{\C}$. The derivation of this model from gauging the higher-form symmetry-protected topological (SPT) phase is shown in details in  Appendix \ref{app:gauging_derivation}, which is a generalization of the higher-form twisted gauge theory (called \textit{cubic theory}) in Ref.~\cite{Hsin:2024nwc} to the context of CW complexes.   The Clifford stabilizer group $\tilde{\mathcal{S}}$ is a subgroup of the $n$-qubit Clifford group $Cl_n$, i.e., $\tilde{\mathcal{S}} < Cl_n $, and is generated as follows
\be
\tilde{\mathcal{S}} =\langle  \{ \tilde{A}^{\rd}_{\sigma_{p-1}},  \tilde{A}^{\bl}_{\sigma_{q-1}}, \tilde{A}^{\gr}_{\sigma_{s-1}},  B^{\rd}_{\sigma_{p+1}},  B^{\bl}_{\sigma_{q+1}}, B^{\gr}_{\sigma_{s+1}} \} \rangle, 
\ee
where $\tilde{A}$ represents the dressed X stabilizer generators corresponding to the Gauss's law term $H_\text{Gauss}$ and $B$ represents the $Z$-stabilizer generators corresponding to the flux term $H_\text{Flux}$.   Similar to the Pauli stabilizer code, we can define the code space $\tilde{\C}$ of the $n$-qubit  Clifford stabilizer code with Clifford stabilizer group $\tilde{\mathcal{S}} $ as:
\be\label{eq:code_space_Clifford}
\tilde{\C}(\tilde{\mathcal{S}}) = \{ \ket{\psi} \in (\CC^2)^{\otimes n}:  g \ket{\psi} = \ket{\psi} \ , \forall g \in \tilde{\mathcal{S}} \}. 
\ee
We now present the stabilizer generators as follows:    
\begin{align}\label{eq:stabilizer_summary}
\tilde{A}^{\rd}_{\sigma_{p-1}} =& A^{\rd}_{\sigma_{p-1}} \prod_{\sigma_q,\sigma_s: \int\tilde{\sigma}_{p-1}\cup \tilde{\sigma}_q\cup \tilde{\sigma}_s\neq 0}\text{CZ}^{\bl,\gr}_{\sigma_q, \sigma_s},  \cr
\tilde{A}^{\bl}_{\sigma_{q-1}} =& A^{\bl}_{\sigma_{q-1}}\prod_{\sigma_p,\sigma_s: \int \tilde{\sigma}_p\cup \tilde{\sigma}_{q-1}\cup  \tilde{\sigma}_{s} \neq 0}\text{CZ}^{\rd, \gr}_{\sigma_p, \sigma_s}, \cr
 \tilde{A}^{\gr}_{\sigma_{s-1}} =& A^{\gr}_{\sigma_{s-1}}\prod_{\sigma_p,\sigma_q: \int \tilde{\sigma}_p\cup \tilde{\sigma}_q\cup \tilde{\sigma}_{s-1} \neq 0}\text{CZ}^{\rd,\bl}_{\sigma_p, \sigma_q},  \cr
B^{\rd}_{\sigma_{p+1}}=& \prod_{\sigma_{p}\subset \partial \sigma_{p+1}}Z_{\sigma_p}^{\rd}, \quad  B^{\bl}_{\sigma_{q+1}}=\prod_{\sigma_{q}\subset \partial \sigma_{q+1}}Z_{\sigma_q}^{\bl}   \cr 
B^{\gr}_{\sigma_{s+1}} =& \prod_{\sigma_{s}\subset \partial \sigma_{s+1}}Z_{\sigma_{s}}^{\gr},
\end{align}
where $\tilde{\sigma}_{k}$ is an indicator $k$-cochain that takes value 1 on a single $k$-cell $\sigma_k $ and zero otherwise.

In the above expression,  $A^{\rd}_{\sigma_{p-1}}$, $A^{\bl}_{\sigma_{q-1}}$ and $A^{\gr}_{\sigma_{s-1}}$ are the  $X$-stabilizers of the corresponding Pauli stabilizer codes corresponding to the untwisted $\red{\Z_2^{(p)}}$$\times$$\blue{\Z_2^{(q)}}$$\times$$\green{\Z_2^{(s)}}$ gauge theories, which are three independent copies of higher-form $\ZZ_2$ qLDPC codes defined on the CW complex $\L$, i.e., $p$-form, $q$-form and $s$-form respectively, with their qubits defined on $p$-cells, $q$-cells and $s$-cells respectively.   These $X$-stabilizers have the following form:
\begin{align}\label{eq:X-stabilizers_untiwsted}
{A}^{\rd}_{\sigma_{p-1}} =& \prod_{\sigma_{p-1}\subset \partial \sigma_p} X^{\rd}_{\sigma_{p}},  \quad  {A}^{\bl}_{\sigma_{q-1}} = \prod_{\sigma_{q-1}\subset \partial \sigma_q} X^{\bl}_{\sigma_{q}}, \cr
{A}^{\gr}_{\sigma_{s-1}} =& \prod_{\sigma_{s-1}\subset \partial \sigma_s} X^{\gr}_{\sigma_{s}}. 
\end{align}
The corresponding untwisted  Pauli-stabilizer codes are associated with the standard Pauli-stabilizer group $\mathcal{S}$, which are subgroup of the $n$-qubit Pauli group $\mathcal{P}_n$, i.e.,  $\mathcal{S}< \mathcal{P}_n$. The Pauli-stabilizer group can be generated generated as:
\be\label{eq:stabilizer_untwisted}
\mathcal{S} =\langle  \{ {A}^{\rd}_{\sigma_{p-1}},  {A}^{\bl}_{\sigma_{q-1}}, {A}^{\gr}_{\sigma_{s-1}},  B^{\rd}_{\sigma_{p+1}},  B^{\bl}_{\sigma_{q+1}}, B^{\gr}_{\sigma_{s+1}} \} \rangle, 
\ee
where the $Z$-stabilizers $B^{\rd}_{\sigma_{p+1}},  B^{\bl}_{\sigma_{q+1}}, B^{\gr}_{\sigma_{s+1}}$ are completely the same as the twisted code in Eq.~\eqref{eq:stabilizer_summary}.  The corresponding untwisted codes space are defined as
\be
{\C}({\mathcal{S}}) = \{ \ket{\psi} \in (\CC^2)^{\otimes n}:  g \ket{\psi} = \ket{\psi} \ , \forall g \in {\mathcal{S}} \}. 
\ee

\subsubsection{Constraint of the code space from TQFT}
From the TQFT action $S$ in Eq.~\eqref{eq:non-Abelian_action}, we can derive the following equation of motions from the least action principle $\delta S=0$:
\begin{align}
\begin{split}
    & \red{da^p=0},\quad \blue{db^q=0}, \quad \green{dc^s=0} \cr
    & \red{d\tilde a^{\ds-p}}+\blue{b^q} \cup \green{c^s}=0, \\
    &\blue{d\tilde b^{\ds-q}} +\red{a^p} \cup \green{c^s}=0, \\
    &\green{d\tilde c^{\ds-s}}+\red{a^p} \cup \blue{b^q}=0.
    \end{split}
    \label{eq:EOM}
\end{align}
When summing over a subcomplex (cycle) $\tilde{\mathcal{V}}^*_{\ds-p+1} \equiv \tilde{\mathcal{V}}^*_{q+s}$ of the dual spacetime complex $\tilde{\L}^*$, one obtains
\begin{align}
    \int_{\tilde{\mathcal{V}}^*_{\ds-p+1}} \blue{b^q} \cup \green{c^s} :=& \int_{\tilde{\L}} \blue{b^q} \cup \green{c^s} \cup \tilde{\mathcal{V}}^p = \int_{\tilde{\mathcal{V}}^*_{\ds-p+1}} \red{d\tilde{a}^{\ds-p}} \cr
    =& \int_{\partial\tilde{\mathcal{V}}^*_{\ds-p+1}} \red{\tilde{a}^{\ds-p}}=0,
\end{align}
where $\tilde{\mathcal{V}}^p = \text{PD}(\tilde{\mathcal{V}}^*_{\ds-p+1})$ is the Poincar\'e dual cocycle of $\tilde{\mathcal{V}}^*_{\ds-p+1}$. 
We can also promote the gauge fields to operators  $\red{\hat{a}^p}, \blue{\hat{b}^q}$ and $\green{\hat{c}^s}$ defined in the Hilbert space $\H_{\L}$ on the space complex $\L$, which mathematically correspond to the operator-valued cochains (cocycles), i.e., they are the quantization of cochains. We hence obtain the following relations: 
\begin{align}\label{eq:constraint_operator}
\int_{\mathcal{V}^*_{\ds-p+1}} \blue{\hat{b}^q} \cup \green{\hat{c}^s} :=  \int_{\L} \blue{\hat{b}^q} \cup \green{\hat{c}^s} \cup \mathcal{V}^{p-1}  =  0,   \cr
\int_{\mathcal{V}^*_{\ds-q+1}} \red{\hat{a}^p} \cup \green{\hat{c}^s} :=  \int_{\L} \red{\hat{a}^p}  \cup \green{\hat{c}^s} \cup \mathcal{V}^{q-1}  =  0,   \cr 
\int_{\mathcal{V}^*_{\ds-s+1}} \red{\hat{a}^p} \cup \blue{\hat{b}^q} :=  \int_{\L} \red{\hat{a}^p}  \cup \blue{\hat{b}^q} \cup \mathcal{V}^{s-1}  =  0,   \cr 
\end{align}
where the later two can be simply obtained by permuting the colors.  Here, $\mathcal{V}^{k-1}$ is a $(k-1)$-cocycle in the space complex $\L$, with the relation to the corresponding spacetime cocycle being $\tilde{\mathcal{V}}^{k}= \mathcal{V}^{k-1} \otimes \tau^1$ and $\mathcal{V}^*_{\ds-k+1}= \text{PD}(\mathcal{V}^{k-1})$ being its Poincar\'e dual cocycle.   

We can also re-express one of the above relations in terms of the transversal CZ operators as follows: 
\begin{align}\label{eq:CZ_identity1}
&(-1)^{\int_{\mathcal{V}^*_{\ds-p+1}} \blue{\hat{b}^q} \cup \green{\hat{c}^s} } =  (-1)^{ \int_{\L} \blue{\hat{b}^q} \cup \green{\hat{c}^s} \cup \mathcal{V}^p } \cr
=&  \prod_{\sigma_q, \sigma_s} [ CZ^{\bl, \gr}_{\sigma_q, \sigma_s}]^{\int_{\L}  \tilde{\sigma}_q \cup   \tilde{\sigma}_s \cup \mathcal{V}^{p-1}} \equiv  \widetilde{\text{CZ}}^{\bl, \gr}_{\mathcal{V}^*_{\ds-p+1}} =1,
\end{align}
where we have derived the identity in the second line by expanding the gauge fields as $\blue{\hat{b}^q} $$ = $$ \sum_{\sigma_{q}}\hat{N}^q_{\bl}  \tilde{\sigma}_{q}$, $\green{\hat{c}^s} $$ = $$ \sum_{\sigma_{s}}\hat{N}^s_{\gr}  \tilde{\sigma}_{s}$,  along with the identity $(-1)^{ \hat{N}^q_{\bl} \hat{N}^{s}_{\gr}} $$=$$ \text{CZ}^{\bl,\gr}_{\sigma_q, \sigma_{s}}$. 
Similarly, we can re-express the other two relations by permuting the colors as:
\be\label{eq:CZ_identity2}
\widetilde{\text{CZ}}^{\rd, \gr}_{\mathcal{V}^*_{\ds-q+1}} = 1,  \qquad   \widetilde{\text{CZ}}^{\rd, \bl}_{\mathcal{V}^*_{\ds-s+1}} = 1.
\ee

\subsubsection{Stabilizer commutation relations and logical operators}

Now we investigate the detailed properties of the twisted qLDPC code in the form of a Clifford stabilizer code. 

As indicated by the definition of the Clifford stabilizer code Eq.~\eqref{eq:code_space_Clifford}, the code states have common +1-eigenvalues of all the stabilizer generators in $\tilde{\mathcal{S}}$, This shows that the parent Hamiltonian  
\begin{align}
\tilde{H}=& - \sum_{\sigma_{p-1} } \tilde{A}^{\rd}_{\sigma_{p-1}} - \sum_{\sigma_{q-1} }  \tilde{A}^{\bl}_{\sigma_{q-1}} - \sum_{\sigma_{s-1} }  \tilde{A}^{\gr}_{\sigma_{s-1}}  \cr
& - \sum_{\sigma_{p+1}} B^{\rd}_{\sigma_{p+1}} - \sum_{\sigma_{q+1}} B^{\bl}_{\sigma_{q+1}} - \sum_{\sigma_{s+1}} B^{\gr}_{\sigma_{s+1}}
\end{align}
is frustration free.   Nevertheless, this code is a non-commuting stabilizer code in the sense that the Clifford stabilizers do not commute in the full Hilbert space.   Instead,  as will be proved in the following,
the stabilizers commute within the $+1$ subspace of all the $Z$-stabilizer generators $\{ B^{\rd}_{\sigma_{p+1}},  B^{\bl}_{\sigma_{q+1}}, B^{\gr}_{\sigma_{s+1}} \}$:
\be
P_B  [g, g']    P_B  =1,  \  \forall g, g' \in \tilde{\mathcal{S}},
\ee
where $P_B$ is the projector onto the corresponding subspace.  Note that we have used the group commutator  $[U,V]$$:=$$UVU^{-1}V^{-1}$.

\begin{lemma}\label{lemma:stabilizer_commutation}
    The commutator of the stabilizers $\tilde A,B$ generate the $B$  stabilizers. Thus the stabilizers commute on the zero flux subspace $B=1$.
\end{lemma}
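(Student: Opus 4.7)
The plan is to show that every potentially non-commuting pair of generators of $\tilde{\mathcal{S}}$ has a commutator lying inside the subgroup generated by the $B$ stabilizers; since the code space is the $+1$-eigenspace of all $B$'s, this yields the claimed commutativity there.

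First I would dispatch the easy cases. All $B$ stabilizers are diagonal in the computational basis, so $[B,B']=1$ automatically. For a mixed pair $[\tilde{A}^{c}_{\sigma_{k-1}},B^{c'}_{\sigma_{k'+1}}]$, the CZ dressing of $\tilde{A}^{c}$ is diagonal and hence commutes with every $B$, while the surviving untwisted $A^{c}$ piece commutes with $B^{c'}$ either trivially (different colors act on disjoint qubits) or, for matching colors, by the standard CSS relation $\partial^2=0$. So only the cross-color $\tilde{A}$--$\tilde{A}$ commutators can be nontrivial, and by color permutation it suffices to analyze $[\tilde{A}^{\rd}_{\sigma_{p-1}},\tilde{A}^{\bl}_{\sigma_{q-1}}]$.

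The only non-commuting contributions come from (i) the $X^{\rd}_{\sigma_p}$ factors in $A^{\rd}_{\sigma_{p-1}}$ meeting the $\text{CZ}^{\rd,\gr}_{\sigma_p,\sigma_s}$ factors in $\tilde{A}^{\bl}_{\sigma_{q-1}}$, and (ii) the $X^{\bl}_{\sigma_q}$ factors in $A^{\bl}_{\sigma_{q-1}}$ meeting the $\text{CZ}^{\bl,\gr}_{\sigma_q,\sigma_s}$ factors in $\tilde{A}^{\rd}_{\sigma_{p-1}}$. The elementary identity $X_a\,\text{CZ}_{a,b}\,X_a\,\text{CZ}_{a,b}=Z_b$ implies that each such crossing deposits a single $Z^{\gr}_{\sigma_s}$, so the commutator is a product of $Z^{\gr}_{\sigma_s}$'s with exponent
\begin{align}
E(\sigma_s)
&=\sum_{\sigma_p\supset\sigma_{p-1}}\!\!\int\tilde{\sigma}_{p}\cup\tilde{\sigma}_{q-1}\cup\tilde{\sigma}_{s}
+\sum_{\sigma_q\supset\sigma_{q-1}}\!\!\int\tilde{\sigma}_{p-1}\cup\tilde{\sigma}_{q}\cup\tilde{\sigma}_{s} \notag \\
&=\int d\tilde{\sigma}_{p-1}\cup\tilde{\sigma}_{q-1}\cup\tilde{\sigma}_{s}
+\int\tilde{\sigma}_{p-1}\cup d\tilde{\sigma}_{q-1}\cup\tilde{\sigma}_{s},
\end{align}
where I used $\sum_{\sigma_k\supset\sigma_{k-1}}\tilde{\sigma}_{k}=d\tilde{\sigma}_{k-1}$.

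The final step is where the Poincar\'e-complex structure enters. I would apply the $\ZZ_2$ Leibniz rule $d(\alpha\cup\beta\cup\gamma)=d\alpha\cup\beta\cup\gamma+\alpha\cup d\beta\cup\gamma+\alpha\cup\beta\cup d\gamma$ together with $\int_{\L}d(\,\cdot\,)=0$, which holds because the Poincar\'e complex has no boundary. This rewrites $E(\sigma_s)$ as $\int\tilde{\sigma}_{p-1}\cup\tilde{\sigma}_{q-1}\cup d\tilde{\sigma}_{s}$, and expanding $d\tilde{\sigma}_{s}=\sum_{\sigma_{s+1}\supset\sigma_{s}}\tilde{\sigma}_{s+1}$ and swapping the order of products converts the full commutator into
\begin{align}
\bigl[\tilde{A}^{\rd}_{\sigma_{p-1}},\,\tilde{A}^{\bl}_{\sigma_{q-1}}\bigr]
=\prod_{\sigma_{s+1}}\bigl(B^{\gr}_{\sigma_{s+1}}\bigr)^{\int\tilde{\sigma}_{p-1}\cup\tilde{\sigma}_{q-1}\cup\tilde{\sigma}_{s+1}},
\end{align}
using $B^{\gr}_{\sigma_{s+1}}=\prod_{\sigma_{s}\subset\partial\sigma_{s+1}}Z^{\gr}_{\sigma_{s}}$. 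The other two cross-color commutators produce products of $B^{\bl}$'s and $B^{\rd}$'s by the same argument with colors cyclically permuted. On the zero-flux subspace all these products act as the identity, establishing the lemma. The main technical point is the careful bookkeeping of the cup-product Leibniz rule and the correct identification of $\sum_{\sigma_k\supset\sigma_{k-1}}\tilde{\sigma}_k$ with $d\tilde{\sigma}_{k-1}$; once those are in place the argument is mechanical.
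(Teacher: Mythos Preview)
Your proposal is correct and follows essentially the same route as the paper: dispatch the $B$--$B$ and $\tilde A$--$B$ cases trivially, then for the cross-color $\tilde A$--$\tilde A$ commutator use the Leibniz rule plus $\int_{\L} d(\cdot)=0$ on the Poincar\'e complex to recognize the residual $Z^{\gr}$'s as a product of $B^{\gr}$ stabilizers. The only presentational difference is that the paper packages the computation via the operator-valued-cochain identity $[X_s(-1)^{f(a)},X_{s'}(-1)^{g(a)}]=(-1)^{f(a+\tilde s')-f(a)+g(a+\tilde s)-g(a)}$ and writes the result as $(-1)^{\int(\tilde\sigma_{p-1}\cup\tilde\sigma_{q-1})\cup d\green{\hat c^s}}$, whereas you work directly at the gate level with $[X_a,\text{CZ}_{a,b}]=Z_b$ and regroup the individual $Z^{\gr}_{\sigma_s}$'s into $B^{\gr}_{\sigma_{s+1}}$'s; the two are equivalent and your version is arguably more explicit.
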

\begin{proof}
    The $B$ stabilizers commutes among themselves, and the $B$ stabilizers commute with the $\tilde A$ stabilizers. It remains to compute the commutation of the $\tilde A$ stabilizers themselves.
    We will use the following cup product identity for gauge field $a$ (i.e. it labels the Pauli $Z$ eigenvalue $Z=(-1)^a$) and Pauli $X$ on cells $s,s'$: for any integer value functions $f,g$,
    \begin{equation}\label{eqn:commutatoridentity}
        [X_s (-1)^{f(a)},\; X_{s'}(-1)^{g(a)}]=(-1)^{f(a+\tilde s')-f(a)}(-1)^{g(a+\tilde s)-g(a)}~,
    \end{equation}
    where $\tilde s$ is the indicator  cochain that takes value 1 on $s$ and 0 otherwise.
    This uses the property $[X_s, (-1)^{f(a)}]=(-1)^{f(a+\tilde s)-f(a)}$ \cite{Hsin:2024nwc}.
    
    For the case at hand, 
   \begin{widetext}
    \begin{align}
        &[\tilde A_{\sigma_{p-1}}^{\rd},\tilde A_{\sigma_{q-1}}^{\bl}]\cr 
        &=\left[
        \left(\prod_{\sigma_{p-1}\subset \partial \sigma_p} X_{\sigma_p}^r\right)(-1)^{\int \tilde{\sigma}_{p-1}\cup {\color{blue}\hat b^q}\cup {\green{\hat{c}^s}}},\;
        \left(\prod_{\sigma_{q-1}\subset \partial \sigma_q} X_{\sigma_q}^{\bl}\right)(-1)^{\int  {\color{red}\hat a^p}\cup \tilde{\sigma}_{q-1}\cup {\green{\hat{c}^s}}}
        \right]\cr 
        &=(-1)^{\int \tilde{\sigma}_{p-1}\cup d\tilde{\sigma}_{q-1}\cup {\green{\hat{c}^s}}+d\tilde{\sigma}_{p-1}\cup \tilde{\sigma}_{q-1}\cup \green{\hat{c}^s}}=(-1)^{\int d\left(\tilde{\sigma}_{p-1}\cup \tilde{\sigma}_{q-1}\right)\cup {\green{\hat{c}^s}}}
        =(-1)^{\int \left(\tilde{\sigma}_{p-1}\cup \tilde{\sigma}_{q-1}\right)\cup d{\green{\hat{c}^s}} }
        =B^{\gr}_{\widetilde{\tilde{\sigma}_{p-1}\cup\tilde{\sigma}_{q-1}}}~,
    \end{align}
    where $\widetilde{\tilde{\sigma}_{p-1}\cup\tilde{\sigma}_{q-1}}$ is the dual of $\tilde{\sigma}_{p-1}\cup\tilde{\sigma}_{q-1}$, i.e. we decompose the $(p+q-2)$ cochain $\tilde{\sigma}_{p-1}\cup\tilde{\sigma}_{q-1}$ as a linear combination of basic $(p+q-2)$ cochains $\tilde{\sigma}_{p-1}\cup \tilde{\sigma}_{q-1}=\sum \tilde s_{p+q-2}$ for $(p+q-2)$-cells $\{s\}$, then the dual is the chain given by replacing $\tilde s_{p+q-2}$ in the sum with $s_{p+q-2}$.
       \end{widetext}
    Similarly, 
    \begin{equation}
        [\tilde A_{\sigma_{p-1}}^{\rd},\tilde A_{\sigma_{s-1}}^{\gr}=B^{\bl}_{\widetilde{\tilde{\sigma}_{p-1}\cup \tilde{\sigma}_{s-1}}},\;\;
                [\tilde A_{\sigma_{q-1}}^{\gr},\tilde A_{\sigma_{s-1}}^{\bl}]=B^{\rd}_{\widetilde{\tilde{\sigma}_{q-1}\cup \tilde{\sigma}_{s-1}}}~.
    \end{equation}
Thus the stabilizers commute on the stabilized subspace $B^{\rd}=1,B^{\gr}=1,B^{\bl}=1$.

\end{proof}

We now discuss the logical operators of the twisted qLDPC codes.  The logical-$Z$ operators are analogous to the untwisted codes,  which corresponds to the electric Wilson operators in the gauge theories:
\begin{align}\label{eq:logical_Z_1}
\widetilde{Z}^{\rd}_{{\eta_p}} \equiv & W^{\rd}({\eta_p})=e^{i \pi\int_{{\eta_p}} \red{\hat{a}^p}} = (-1)^{\int_{{\eta_p}} \red{\hat{a}^p}} = \prod_{\sigma_p \in {\eta_p}} Z^{\rd}_{\sigma_p},  \cr
\widetilde{Z}^{\bl}_{{\eta_q}} \equiv & W^{\bl}({\eta_q})=e^{i \pi \int_{{\eta_q}} \blue{\hat{b}^q}} = (-1)^{\int_{{\eta_q}} \blue{\hat{b}^q}} = \prod_{\sigma_q \in {\eta_q}} Z^{\bl}_{\sigma_q},   \cr
\widetilde{Z}^{\gr}_{{\eta_s}} \equiv & W^{\gr}({\eta_s})=e^{i \pi \int_{{\eta_s}} \green{\hat{c}^s}} = (-1)^{\int_{{\eta_s}} \green{\hat{c}^s}} = \prod_{\sigma_s \in {\eta_s}} Z^{\gr}_{\sigma_s}.
\end{align}
Here, $\widetilde{Z}$ means it is an extended operator supported on a cycle, in particular the non-trivial $p$-cycle ${\eta_p}$, the $q$-cycle ${\eta_q}$ and the $s$-cycle ${\eta_s}$  for three types of electric charge $\red{e}_{\rd}$, $\blue{e}_{\bl}$, and $\green{e}_{\gr}$ respectively.  We do not use the notation $\lo{Z}$ here since these cycles $\eta_k$ may not be the basis cycles to label the logical qubits, but instead arbitrary non-trivial $k$-cycles.  

Now we label the logical qubits using the cocycle basis $\{\red{\alpha^p}\}$, $\{\blue{\beta^q}\}$ and $\{\green{\gamma^s}\}$ or equivalently their conjugate  cycle basis $\{\red{\alpha_p}\}$, $\{\blue{\beta_q}\}$ and $\{\green{\gamma_s}\}$ where the logical-$Z$ are supported.   For simplicity, we just use $\{\red{\alpha}\}$, $\{\blue{\beta}\}$ and $\{\green{\gamma}\}$ as the logical qubit labels.  We can hence define the logical-$Z$ operator acting a particular logical qubit as
\be\label{eq:logical_Z_2}
\lo{Z}^{\rd}(\red{\alpha}) \equiv  \widetilde{Z}^{\rd}_{\red{\alpha_p}}, \quad   \lo{Z}^{\bl}(\blue{\beta}) \equiv  \widetilde{Z}^{\bl}_{\blue{\beta_q}},  \quad  \lo{Z}^{\gr}(\green{\gamma}) \equiv  \widetilde{Z}^{\gr}_{\green{\gamma_s}}. 
\ee

In the twisted code,  there exists logical operators similar to the logical-$X$ operators, which can be called dressed $X$ operators.  The difference is that they do not simply apply logical-$X$ operation to the code space $\tilde{\C}$.    

First, there exist the follosing contractible magnetic operators in the twisted gauge theory, and can be defined as follows:  
\begin{align}\label{eqn:magoprCZ}
 {\widetilde{\mathcal{X}}}^{
 \rd}_{\Sigma^*_{\ds-p}} \equiv &  M^{
 \rd}(\Sigma^*_{\ds-p})  =    (-1)^{\int_{\Sigma^*_{\ds-p}} \red{\hat{\tilde a}^{\ds-p}}} \cdot (-1)^{\int_{{\cal V}^*_{\ds-p+1}}  \blue{\hat{b}^q} \cup  \green{\hat{c}^s} }\cr 
 =&  \prod_{\sigma^*_{\ds-p} \in \Sigma^*_{\ds-p}  }  X^{\rd}_{\sigma^*_{\ds-p}}  \prod_{\sigma_q, \sigma_s} [ \text{CZ}^{\bl, \gr}_{\sigma_q, \sigma_s}]^{\int_{\L}  \tilde{\sigma}_q \cup   \tilde{\sigma}_s \cup \mathcal{V}^{p-1}} \cr
  =&  \prod_{\sigma^*_{\ds-p} \in \Sigma^*_{\ds-p}  }  X^{\rd}_{\sigma^*_{\ds-p}}  \prod_{\sigma_q, \sigma_s: \ \int_{\L}\tilde{\sigma}_q \cup   \tilde{\sigma}_s \cup \mathcal{V}^{p-1} \neq 0 }  \text{CZ}^{\bl, \gr}_{\sigma_q, \sigma_s}~, \cr
{\widetilde{\mathcal{X}}}^{
 \bl}_{\Sigma^*_{\ds-q}} \equiv &  M^{
 \bl}(\Sigma^*_{\ds-q})  =    (-1)^{\int_{\Sigma^*_{\ds-q}} \blue{\hat{\tilde b}^{\ds-q}}} \cdot (-1)^{\int_{{\cal V}^*_{\ds-q+1}}  \red{\hat{a}^p} \cup  \green{\hat{c}^s} }\cr 
 =&  \prod_{\sigma^*_{\ds-q} \in \Sigma^*_{\ds-q}  }  X^{\bl}_{\sigma^*_{\ds-q}}  \prod_{\sigma_p, \sigma_s: \ \int_{\L}\tilde{\sigma}_p \cup   \tilde{\sigma}_s \cup \mathcal{V}^{q-1} \neq 0 }  \text{CZ}^{\rd, \gr}_{\sigma_p, \sigma_s}~, \cr  
 {\widetilde{\mathcal{X}}}^{
 \gr}_{\Sigma^*_{\ds-s}} \equiv &  M^{
 \gr}(\Sigma^*_{\ds-s})  =    (-1)^{\int_{\Sigma^*_{\ds-s}} \green{\hat{\tilde c}^{\ds-s}}} \cdot (-1)^{\int_{{\cal V}^*_{\ds-s+1}}  \red{\hat{a}^p} \cup  \blue{\hat{b}^q} }\cr 
=&  \prod_{\sigma^*_{\ds-s} \in \Sigma^*_{\ds-s}  }  X^{\gr}_{\sigma^*_{\ds-s}}  \prod_{\sigma_p, \sigma_q: \ \int_{\L}\tilde{\sigma}_p \cup   \tilde{\sigma}_q \cup \mathcal{V}^{s-1} \neq 0 }  \text{CZ}^{\rd, \bl}_{\sigma_p, \sigma_q}~. \cr  
\end{align}
In the above expressions, ${\cal V}^{p-1},{\cal V}^{q-1},{\cal V}^{s-1}$ are not closed, since their dual have nontrivial boundaries. Thus we cannot apply (\ref{eq:constraint_operator}) to annihilate these bulk contributions.
Note that we have associated the magnetic gauge fields $\red{\hat{\tilde{a}}^{\ds-p}},\blue{\hat{\tilde b}^{\ds-q}}, \green{\hat{\tilde c}^{\ds-s}}$  with the Pauli-$X$ operators on the dual cells $\sigma^*_{\ds - p}, \sigma^*_{\ds - q}, \sigma^*_{\ds - q} \in \L^*$, where $\L^*$ is the dual complex, e.g., $(-1)^{\red{\hat{\tilde{a}}^{\ds-p}} (\sigma^*_{\ds - p})}$$=$$ X^{\rd}_{\sigma}(\sigma^*_{\ds - p})$.  In the above equation, $\Sigma^*_{\ds-k} \in H_{\ds-k}(\L^*, \ZZ_2)$ is the $(\ds-k)$-cycle in the dual complex $\L^*$ ($k=p,q,s$), and in addition we have the relation $\partial \mathcal{V}^*_{\ds-k+1}=\Sigma^*_{\ds-k}$, where $\mathcal{V}^*_{\ds-k+1}$ is a relative cycle (open membrane with boundaries).       

In the code space $\tilde{\C}$, 
these magnetic operators do not depend on the particular choice of $\mathcal{V}^*_{\ds-k+1}$ as long as its boundary remains $\Sigma^*_{\ds-k}$  \cite{Barkeshli:2022edm}.  
To see this, we can choose two different relative cycles $\mathcal{V}^*_{\ds-p}$ and $\mathcal{V}'^*_{\ds-p}$, and their difference $\mathcal{V}^*_{\ds-p} - \mathcal{V}'^*_{\ds-p} = S^*_{\ds-p} $ is a closed cycle due to the cancellation of the common boundary $\Sigma^*_{\ds - p}$.  From the stabilizer conditions on the code space $\tilde {\cal C}$, we have $\int_{{S}^*_{\ds-p+1}}  \blue{\hat{b}^q} \cup  \green{\hat{c}^s}=0$, which shows $\int_{\mathcal{V}^*_{\ds-p+1}}  \blue{\hat{b}^q} \cup  \green{\hat{c}^s} = \int_{\mathcal{V}'^*_{\ds-p+1}}  \blue{\hat{b}^q} \cup  \green{\hat{c}^s}$.   Note that we have derived the CZ expressions by expanding the gauge fields as $\red{\hat{a}^p} $$ = $$ \sum_{\sigma_{p}}\hat{N}^p_{\rd}  \tilde{\sigma}_{p}$,   $\blue{\hat{b}^q} $$ = $$ \sum_{\sigma_{q}}\hat{N}^q_{\bl}  \tilde{\sigma}_{q}$, $\green{\hat{c}^s} $$ = $$ \sum_{\sigma_{s}}\hat{N}^s_{\gr}  \tilde{\sigma}_{s}$,  along with the identity $(-1)^{ \hat{N}^k_i \hat{N}^{k'}_j}= \text{CZ}^{i,j}_{\sigma_k, \sigma_{k'}}$, where $i,j \in \{\rd, \bl, \gr\}$ and $k, k' \in \{p, q, s\}$.

The above magnetic operators are supported on contractibel cycles which only give rise to trivial logical action and are hence not logical operators. We hence needs to consider magnetic operators supported on non-contractible cycles, which are the actual logical operators. We can infer the expression of these operators within the code space as follows. The magnetic operators create holonomy for the electric Pauli $Z$ operator that intersects the support of the magnetic operator at a point. On the other hand, the holonomies in the code space satisfy Eq.~(\ref{eq:constraint_operator}). Therefore the logical magnetic operators need to be decorated with projection to ensure Eq.~(\ref{eq:constraint_operator}):
\begin{widetext}
\begin{align}\label{eqn:magneticoprprojection0}
    \begin{split}
   \cX^{\rd}_{C^*_{\ds - p}} \equiv    M^{\rd}(C^*_{\ds - p}) &= P^{\rd}\left[\prod_{\sigma_p\in  C^*_{\ds - p}} X^{\rd}_{\sigma_p}\prod_{\sigma_q,\sigma_s\in C^*_{\ds - p}}(1+Z^{\bl}_{\sigma_q})(1+Z^{\gr}_{\sigma_s})\right] P^{\rd} \\
    \cX^{\bl}_{C^*_{\ds - q}} \equiv    M^{\bl}(C^*_{\ds - q}) &=P^{\bl}\left[\prod_{\sigma_q\in   
         C^*_{\ds - q}}X^{\bl}_{\sigma_q} \prod_{\sigma_p,\sigma_s\in C^*_{\ds - q}}(1+Z^{\rd}_{\sigma_p})(1+Z^{\gr}_{\sigma_s})\right]P^{\bl} \\
    \cX^{\gr}_{C^*_{\ds - q}} \equiv
        M^{\gr}( C^*_{\ds - q}) &=P^{\gr}\left[\prod_{\sigma_s\in  C^*_{\ds - q}}X^{\gr}_{\sigma_s}\prod_{\sigma_p,\sigma_q\in C^*_{\ds - q}}(1+Z^{\rd}_{\sigma_p})(1+Z^{\bl}_{\sigma_q})\right] P^{\gr}~,\\
    \end{split}
\end{align}
\end{widetext}
where $P^{\rd},P^{\bl},P^{\gr}$ are the stabilizer projectors to the +1-eigenspace for ${A}^{\bl}_{\sigma_{q-1}}, {A}^{\gr}_{\sigma_{s-1}}$, and ${A}^{\rd}_{\sigma_{p-1}}, {A}^{\gr}_{\sigma_{s-1}}$, and ${A}^{\rd}_{\sigma_{p-1}}, {A}^{\bl}_{\sigma_{q-1}}$, respectively. Here, $C^{*}_{\ds - p},C^{*}_{\ds - q},C^{*}_{\ds - q}$ are non-contractible cycles defined on the dual complex $\L^*$, and the product indexed by $\sigma_{q},\sigma_s$ in the first equation of \eqref{eqn:magneticoprprojection0} is over all cells of the $(\ds - p)$-cycle $C^*_{\ds - p}$, and similar for other equations.
The above expression is crucial for deriving the non-Abelian fusion rules as discussed in Ref.~\cite{Hsin2024_non-Abelian}.

Let us show that magnetic operators (\ref{eqn:magoprCZ}),(\ref{eqn:magneticoprprojection0}) commute with the stabilizers on the code space $\tilde {\cal C}$, where the flux is zero $B=1$:
\begin{lemma}\label{lemma:stabilizer_magnetic_commutation}
    The magnetic operators commute with the stabilizers on the code space $\tilde {\cal C}$.
\end{lemma}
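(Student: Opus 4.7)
The plan is to verify commutation separately for the two families of stabilizers—the flux stabilizers $B$ and the dressed electric stabilizers $\tilde A$—and for both the contractible magnetic operators $\widetilde{\mathcal{X}}$ of \eqref{eqn:magoprCZ} and the non-contractible ones $\mathcal{X}$ of \eqref{eqn:magneticoprprojection0}. First I would dispatch the easy case of commutation with $B$: every $B^{\bullet}_{\sigma_{k+1}}$ is a product of Pauli $Z$ on the boundary of a $(k{+}1)$-cell, while every magnetic operator carries its $X$-string on a cycle $\Sigma^*_{\ds-k}$ (or $C^*_{\ds-k}$) whose Poincar\'e dual is a closed $k$-cocycle, and has purely diagonal dressings (CZ's, $(1+Z)$ factors, or the stabilizer projectors $P^{\rd,\bl,\gr}$). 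Since the $\ZZ_2$ pairing of a closed cocycle with a boundary vanishes, the $X$'s and $Z$'s commute exactly on every overlap, independent of any code-space condition.

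The substantive step is commutation of $\tilde A$ with the contractible magnetic operators $\widetilde{\mathcal{X}}$. My approach mirrors Lemma \ref{lemma:stabilizer_commutation}: apply the commutator identity \eqref{eqn:commutatoridentity} to each mixed-color pair, using that $A^{\rd}_{\sigma_{p-1}}$ shifts $\red{\hat a^p}$ by $d\tilde\sigma_{p-1}$ and that the blue $X$-string in $\widetilde{\mathcal{X}}^{\bl}_{\Sigma^*_{\ds-q}}$ shifts $\blue{\hat b^q}$ by $d\mathcal{V}^{q-1}$ (since $\Sigma^*_{\ds-q}=\partial \mathcal{V}^*_{\ds-q+1}$ is Poincar\'e dual to the coboundary $d\mathcal{V}^{q-1}$). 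Summing the two shift phases I expect to obtain
\begin{equation*}
(-1)^{\int \tilde\sigma_{p-1}\cup d\mathcal{V}^{q-1}\cup \green{\hat c^s}\,+\,\int d\tilde\sigma_{p-1}\cup \mathcal{V}^{q-1}\cup \green{\hat c^s}}.
\end{equation*}
The $\ZZ_2$ Leibniz rule applied to $d(\tilde\sigma_{p-1}\cup\mathcal{V}^{q-1}\cup \green{\hat c^s})$ together with vanishing of exact integrals on the closed complex $\L$ collapses this phase to $(-1)^{\int \tilde\sigma_{p-1}\cup \mathcal{V}^{q-1}\cup d\green{\hat c^s}}$, which is trivial on the zero-flux subspace enforced by $B^{\gr}=1$. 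All remaining mixed-color pairs follow by color permutation, and same-color pairs only produce diagonal shifts of the CZ dressings that are again absorbed by the flux constraint.

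The hard part will be the non-contractible magnetic operators $\mathcal{X}^{\bullet}_{C^*_{\ds-k}}$: here $C^*_{\ds-k}$ does not bound, so one cannot appeal to an auxiliary relative chain $\mathcal{V}^*$, and the CZ dressing is replaced by the projectors $(1+Z^{i})(1+Z^{j})$ sandwiched by the outer stabilizer projectors $P^i$. My plan is to (i) commute $\tilde A$ past the $(1+Z)(1+Z)$ layer, which produces only controlled $Z$-phases because $C^*_{\ds-k}$ is closed and therefore its intersection with the support of any bare Pauli-$X$ in $\tilde A$ pairs trivially; (ii) absorb any residual dressed-stabilizer factors landing on the far side into the outer $P^{i}$'s, which by construction project onto their $+1$ eigenspace; and (iii) invoke Lemma \ref{lemma:stabilizer_commutation} together with $B=1$ on $\tilde{\mathcal{C}}$ to commute the surviving $\tilde A$'s among themselves. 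The main obstacle will be careful bookkeeping: the outer projectors $P^i$ are themselves built from the \emph{dressed} stabilizers $\tilde A^i$, so moving $\tilde A^{\rd}$ through $P^{\bl}$ triggers the flux commutator $B^{\gr}$ of Lemma \ref{lemma:stabilizer_commutation}, and one must verify, using the code-space condition $B^{\gr}=1$, that this spurious flux acts trivially. Once those cross terms are accounted for, $[g,M]$ acts as the identity on $\tilde{\mathcal{C}}$ as claimed.
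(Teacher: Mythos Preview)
Your treatment of the $B$-stabilizers and of the contractible magnetic operators $\widetilde{\mathcal X}$ matches the paper essentially verbatim: the commutator identity \eqref{eqn:commutatoridentity}, the Leibniz step, and the reduction to $(-1)^{\int \tilde\sigma_{p-1}\cup \mathcal V^{q-1}\cup d\green{\hat c^s}}$ (trivial on $B=1$) are exactly the paper's argument.

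For the non-contractible operators $\mathcal X$ you are overcomplicating matters, and step (i) of your plan does not hold as stated. The paper's argument is a one-liner: since $M^{\rd}=P^{\rd}[\cdots]P^{\rd}$ and $P^{\rd}$ is by definition the projector onto the $+1$ eigenspace of the complementary-color stabilizers, those stabilizers are absorbed at the \emph{outermost} projector before anything reaches the bracket: $A^{\bl}P^{\rd}=P^{\rd}=P^{\rd}A^{\bl}$ gives $A^{\bl}M^{\rd}=M^{\rd}=M^{\rd}A^{\bl}$ immediately. The same-color case $[M^{\rd},A^{\rd}]=1$ is then trivial because both operators carry only $X^{\rd}$ on red cells and are diagonal in blue and green. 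No commuting through the $(1+Z)(1+Z)$ layer is ever required.

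Two concrete issues. First, the paper defines $P^{i}$ as projectors for the \emph{bare} $A$'s, not the dressed $\tilde A$'s, so the cross-flux commutator you worry about in (iii) stems from a misreading of the definition. Second, your assertion in (i) that closedness of $C^*_{\ds-k}$ makes the Pauli-$X$'s in $\tilde A$ ``pair trivially'' with the $(1+Z)$ layer is false for the other-color case: commuting $X^{\bl}_{\sigma_q}$ past $(1+Z^{\bl}_{\sigma_q})$ yields $(1-Z^{\bl}_{\sigma_q})X^{\bl}_{\sigma_q}$, which swaps the projector for its orthogonal complement rather than producing a mere phase, and no cycle condition on $C^*_{\ds-p}$ makes these flips cancel. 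The outer-projector absorption route sidesteps this difficulty entirely.
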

\begin{proof}
For the magentic operators  in Eq.~\eqref{eqn:magoprCZ},
again we use the identity (\ref{eqn:commutatoridentity}). We have
\begin{align}
    & [ M^{\bl}(\Sigma_{\ds-q}^*),A_{\sigma_{p-1}}^{\rd}] \cr
    =& (-1)^{\int d\tilde{\sigma}_{p-1} \cup {\cal V}^{q-1}\cup  \green{\hat{c}^s}}(-1)^{\int \tilde{\sigma}_{p-1}\cup  d{\cal V}^{q-1}\cup \green{\hat{c}^s}}\cr
    =& (-1)^{\int \tilde{\sigma}_{p-1}\cup {\cal V}^{q-1}\cup  \green{d \hat{c}^s}}~,
\end{align}
which equals 1 in the subspace $B=1$ where $\green{d \hat{c}^s}=0$ mod 2.
Similarly, one can verify that other magnetic operators commute with the stabilizers in the code space. 

For the expression (\ref{eqn:magneticoprprojection0}) of the magnetic logical operators, the commutation with the stabilizers follows from the stabilizer projection. The operators $(\ref{eqn:magneticoprprojection0})$ commute with the B-type stabilizers as in the untwisted code. For the commutator with A type stabilizers, we note that $M^{\rd}$ commutes with ${A}^{\rd}$, and it commutes with ${A}^{\bl},{A}^{\gr}$ since the operator $M^{\rd}$ contains the projectors for the stabilizers ${A}^{\bl},{A}^{\gr}$:
\begin{align}
    &P^{\rd} {A}^{\bl}=P^{\rd} {A}^{\gr}=P^{\rd}=
    {A}^{\bl}P^{\rd} ={A}^{\gr}P^{\rd} 
    \cr
    &\Rightarrow  M^{\rd}{A}^{\bl}=M^{\rd}={A}^{\bl}M^{\rd},\; 
    M^{\rd}{A}^{\gr}=M^{\rd}={A}^{\gr}M^{\rd}~. \cr
\end{align}
Similarly, the other magnetic logical operators commute with the stabilizers.

\end{proof}

\subsubsection{Charge parity operators as higher-form and subcomplex symmetries}\label{sec:charge_parity_operator}

An important observable which will be used later for the gauging measurement in Sec.~\ref{sec:logical_operation} are the charge parity operators of $\rd$-, $\bl$-, $\gr$-types   defined as follows:
\begin{align}\label{eq:charge_parity}
{\mathsf{C}}^{\rd}_{\eta^{p-1}} \equiv {\mathsf{C}}^{\rd}_{\eta^*_{\ds-p+1}} =& \prod_{ \sigma_{p-1} \in \eta^{p-1}} \tilde{A}^{\rd}_{\sigma_{p-1}},   \cr
{\mathsf{C}}^{\bl}_{\eta^{q-1}} \equiv {\mathsf{C}}^{\bl}_{\eta^*_{\ds-q +1}} =& \prod_{ \sigma_{q - 1} \in \eta^{q - 1}} \tilde{A}^{\bl}_{\sigma_{q -1}},   \cr
{\mathsf{C}}^{\gr}_{\eta^{s-1}} \equiv {\mathsf{C}}^{\gr}_{\eta^*_{\ds-s +1}} =& \prod_{ \sigma_{s- 1} \in \eta^{s-1}} \tilde{A}^{\gr}_{\sigma_{s-1}}.        
\end{align}
These operators are higher-form [$(k-1)$-form] symmetries supported on the codimension-$(k-1)$ cocycles $\eta^{k-1} \in H^{k-1}(\L; \ZZ_2)$ or equivalently their  $[\ds-(k-1)]$-dimensional  Poincare dual cycles $\eta^*_{\ds -(k-1)} \equiv \eta^*_{D-k} \in H_{\ds-(k-1)}(\L^*; \ZZ_2)$   with $k=p,q,s$.  The higher-form symmetry operators are expressed as a product of the dressed $X$-stabilizers $\tilde{A}^i_{\sigma_{k-1}}$ defined on the cells $\sigma_{k-1} $. In Eq.~\eqref{eq:charge_parity}, $\sigma_{k-1} \in \eta^{k-1}$ stands for the condition $\eta^{k-1} (\sigma_{k-1}) =1$, which can be interpreted as: the cell $\sigma_{k-1}$ is supported on the cocycle  $\eta^{k-1}$.

\begin{figure*}[t]
 \centering
    \includegraphics[width=1\linewidth]{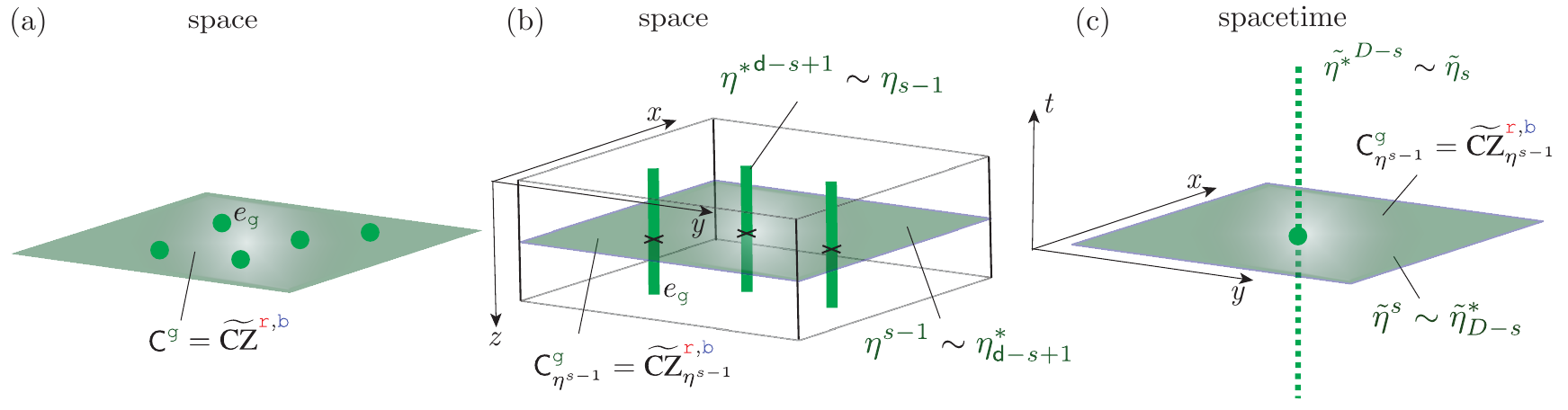}.
    \caption{(a) On a 2-manifold, there is only a single charge parity and transversal CZ operator $\mathsf{C}^{\gr} = \widetilde{\text{CZ}}^{\rd, \bl}$ supported on the entire manifold, which is a 0-form global symmetry in the untwisted code. The charge $e_{\gr}$ are point-like. (b) On a higher-dimensional $\ds$-manifold, the charge $e_{\gr}$ become an extended object support on an $(s-1)$-cycle $\green{\eta_{s-1}}$, which is detected by the intersected charge parity and transversal CZ operator supported on a $(\ds-s+1)$-cycle $\green{\eta^*_{\ds-s+1}}$.  The operator hence corresponds to a higher-form [($s-1$)-form] symmetry.  There are many classes of $(\ds-s+1)$-cycle $[\green{\eta^*_{\ds-s+1}}]$ which the higher-symmetry operators can be supported on.  The higher-symmetry operators can hence be measured independently, which is the key to addressable gauging measurement.    (c) In the spacetime picture,  a charge worldsheet supported on the spacetime $s$-cycle $\tilde \eta_s$ intersects with the spacetime $(D-s)$-cycle $\green{\tilde \eta^*_{D-s}}$ and is hence detected by the corresponding higher-symmetry operator.   }
    \label{fig:charge_intersection}
\end{figure*}

Note that the bare $X$-stabilizers in the untwisted qLDPC codes in Eq.~\eqref{eq:X-stabilizers_untiwsted} satisfy the following relations: 
\begin{align}
 \prod_{ \sigma_{p-1} \in \eta^{p-1}} {A}^{\rd}_{\sigma^{p-1}} =& 1,   \quad \prod_{ \sigma_{q - 1} \in \eta^{q - 1}} \tilde{A}^{\bl}_{\sigma_{q -1}} = 1,   \cr
\prod_{ \sigma_{s- 1} \in \eta^{s-1}} \tilde{A}^{\gr}_{\sigma_{s-1}} =& 1.
\end{align}
Therefore, all the Pauli-$X$ terms in the charge parity operators in Eq.~\eqref{eq:charge_parity} get canceled and what remained are just transversal CZ operators.  As an illustration, we consider the $\gr$-type charge operators: 
\begin{align}\label{eq:charge_operator_green}
{\mathsf{C}}^{\gr}_{\eta^{s-1}} =&  \prod_{ \sigma_{s-1} \in \eta^{s-1}} \ \  \prod_{\sigma_p,\sigma_q: \int \tilde{\sigma}_p\cup \tilde{\sigma}_q\cup \tilde{\sigma}_{s-1} \neq 0}\text{CZ}^{\rd,\bl}_{\sigma_p, \sigma_q} \cr
 \equiv & \prod_{ \sigma_{s-1} \in \eta^{s-1}}   \prod_{\sigma_p,\sigma_q} [\text{CZ}^{\rd,\bl}_{\sigma_p, \sigma_q}]^{ \int \tilde{\sigma}_p\cup \tilde{\sigma}_q\cup \tilde{\sigma}_{s-1}}    \cr
 =&    \prod_{\sigma_p,\sigma_q} [\text{CZ}^{\rd,\bl}_{\sigma_p, \sigma_q}]^{ \int \tilde{\sigma}_p\cup \tilde{\sigma}_q\cup \sum_{\sigma_{s-1} \in \eta^{s-1}} \tilde{\sigma}_{s-1}}    \cr
 =&  \prod_{\sigma_p,\sigma_q} [\text{CZ}^{\rd,\bl}_{\sigma_p, \sigma_q}]^{ \int \tilde{\sigma}_p\cup \tilde{\sigma}_q\cup  \eta^{s-1}}  \equiv  \widetilde{\text{CZ}}^{\rd,\bl}_{\eta^{s-1}} \equiv  \widetilde{\text{CZ}}^{\rd,\bl}_{\eta^*_{\ds-s +1}},   \cr
\end{align}
which is equal to a transversal CZ operator $\widetilde{\text{CZ}}^{\rd,\bl}_{\eta^{s-1}}$$\equiv  $$ \widetilde{\text{CZ}}^{\rd,\bl}_{\eta^*_{\ds-s +1}}$ supported on the cocycle $\eta^{s-1}$ or equivalently the dual cycle $\eta^*_{\ds-s +1}$.    By symmetry we can also derive the other two relations:
\begin{align}
{\mathsf{C}}^{\rd}_{\eta^{p-1}} \equiv \widetilde{\text{CZ}}^{\bl, \gr}_{\eta^{p-1}}  \equiv &  \widetilde{\text{CZ}}^{\rd,\bl}_{\eta^*_{\ds-p +1}}    \cr
{\mathsf{C}}^{\bl}_{\eta^{q-1}} \equiv \widetilde{\text{CZ}}^{\rd, \gr}_{\eta^{q-1}}  \equiv  & \widetilde{\text{CZ}}^{\rd,\gr}_{\eta^*_{\ds-q +1}}.    \cr
\end{align}

Recall that the transversal CZ operator can be expressed with the form:
\begin{align}
&\widetilde{\text{CZ}}^{\rd,\bl}_{\eta^*_{\ds-s +1}} =(-1)^{\int_{\eta^*_{\ds-s+1}} \red{\hat{a}^p} \cup  \blue{\hat{b}^q} }, \cr
\end{align}
which shows a generical form of higher-form ($k$-form) symmetry
in terms of cycle-cocycle pairing.   In this example, it is the pairing between the operator valued $(\ds-s+1)$-cocycle $\red{\hat{a}^p} \cup  \blue{\hat{b}^q}$  ($p+q=\ds-s+1$) and the $(\ds-s+1)$-cycle $\eta^*_{\ds-s+1}$. 

According to Eqs.~\eqref{eq:CZ_identity1} and \eqref{eq:CZ_identity2}, in the code (ground) space $\tilde{\C}$ the above transversal CZ operators all have eigenvalues $+1$.  Therefore, the charge parity operators should also have +1 eigenvalues:
\be
{\mathsf{C}}^{\rd}_{\eta^{p-1}} = 1,  \quad {\mathsf{C}}^{\bl}_{\eta^{q-1}} = 1,  \quad
{\mathsf{C}}^{\gr}_{\eta^{s-1}} = 1. 
\ee

\begin{figure}
    \centering \includegraphics[width=0.7\linewidth]{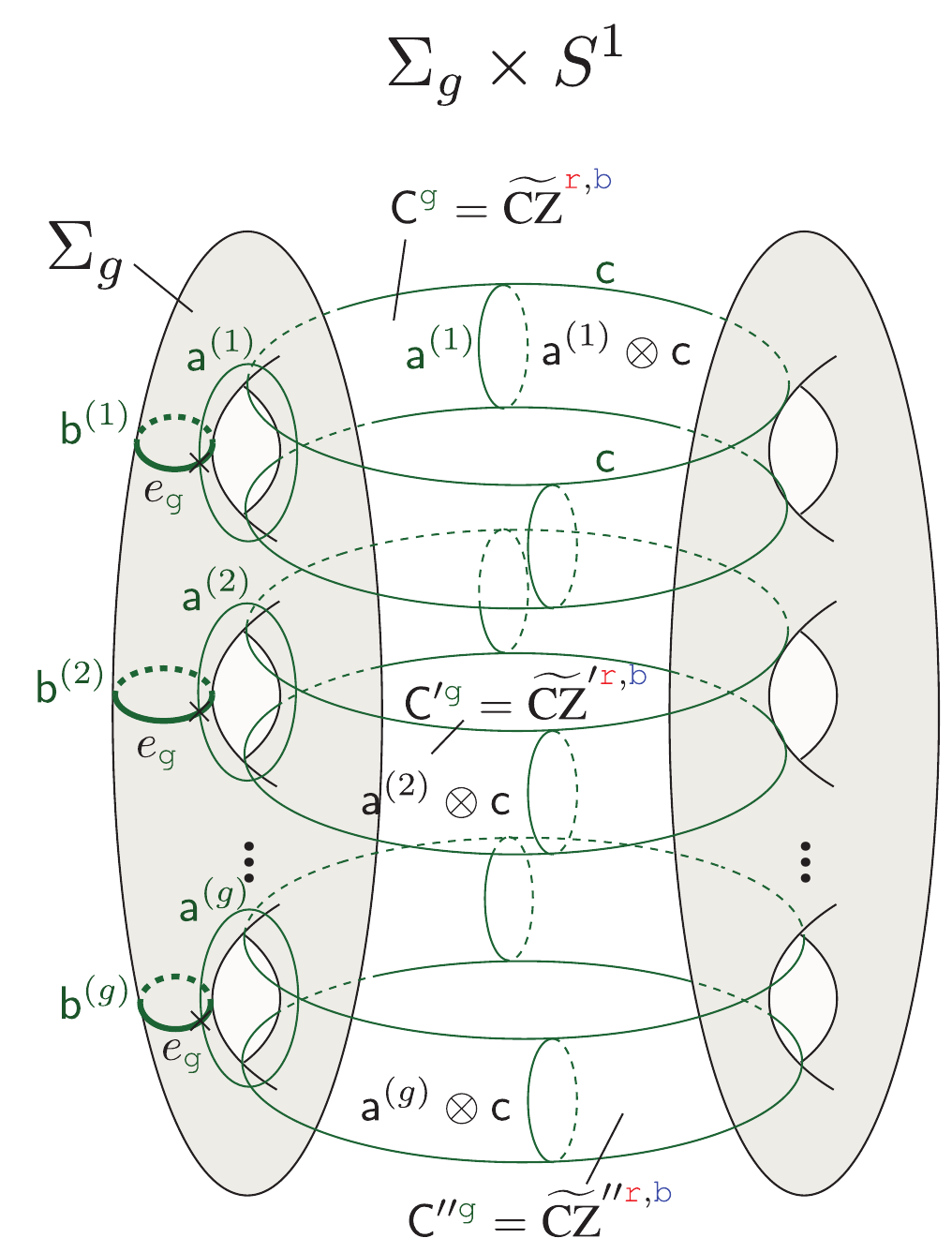}
    \caption{A toy example of a 3-manifold $\Sigma_g \times S^1$ ($\Sigma_g$ stands for a genus-$g$ surface) can host $O(g)$ different 1-form symmetry operators supported on distinct 2-cycle classes. Each operator is intersected with the charge supported on its dual 1-cycle.}
    \label{fig:three-manifold_examples}
\end{figure}

The violation of these constraints, i.e., $ {\mathsf{C}}^{i}_{\eta^{k-1}} = -1$ indicates that there exist \textit{charge excitations}  of type $i$ (\rd, \bl, \gr).  In the previous study in 2D topological codes \cite{Davydova:2025ylx}, where the transversal CZ operator is a 0-form global symmetry and hence there is only a single 2-cycle (0-cocycle) $\eta^*_2 \sim \eta^0$  corresponding to the entire 2-manifold where the operator can be supported, and $\mathsf{C}^{i}$ measures the total parity of all the electric charges of type $i$ in the entire code, as shown in Fig.~\ref{fig:charge_intersection}(a). This is due to the fact that there is only a unique top cycle ($\ds$-cycle), as well as a unique bottom cocycle ($0$-cocycle) in a $\ds$-dimensional manifold ($\ds=2$ here), since the Betti number $b_{\ds} = b_0=1$. In contrast,  the 2D hypergraph-product code considered here in the present case does not resemble a 2-manifold but a higher-dimensional manifold,  since it is mapped to a $\ds$-dimensional Poincar\'e CW complex or manifold (e.g., $\ds=16$).  The transversal CZ operator is a $(k-1)$-form symmetry  and can be supported on many possible  $(\ds-k+1)$-dimensional subcomplexes (sub-manifolds) corresponding to (co)cycle classes:  $[\eta^{k-1}]\sim[\eta^*_{d-k+1}]$, instead of the entire space complex $\L$, as illustrated in Fig.~\ref{fig:charge_intersection}(b). Note that this is possible since $\eta^*_{d-k+1}$ ($\eta^{k-1}$)is not a top cycle (bottom cocycle). More generally, such a higher-form symmetry is a special case of a \textit{subcomplex symmetry}.   

Note that the charge excitations are no longer point-like particles in the $\ds$-dimensional Poincar\'e CW complex (manifold) as in the case of the 2-manifold.  Instead, they are extended objects supported on the $(k-1)$-dimensional non-contractible cycles (strings/membranes) $\eta_k$ intersecting with the support of the transversal CZ operator.  As we can see, when setting $k=p=q=s=1$ and $\ds=2$ (2-manifold),  the charge reduces back to 0-dimensional point-like particle supported on a 0-cycle $\eta_k$, which is just a single vertex in the corresponding triangulation.   As illustrated in  Fig.~\ref{fig:charge_intersection}(b) for a toy example in a 3-manifold ($\ds=3$, $k=s=2$) where the charge are non-contractible strings going around the $z$-direction (with periodic boundary condition).   
We emphasize that since the $\ds$-dimensional Poincar\'e CW complex (manifold) is built from the 2D hypergraph-product code, most of the dimensions are effectively compactified to $O(1)$ size, as illustrated by the thin $z$-direction in Fig.~\ref{fig:charge_intersection}(b).  Therefore, although the local topological dimension of the charge excitation is $k-1$, at a large scale the charge excitation actually still looks point-like.  As can be seen in Fig.~\ref{fig:charge_intersection}(b), when the $z$-direction is compactified to $O(1)$ size, the string-like charge excitation effectively  becomes point-like at large scale.  Nevertheless, the higher topological dimension of the charge makes its intersection property quite non-trivial, since there exist many possible cycle (membrane) that the charge excitation can intersect with, which is impossible for a point charge in a 2-manifold.

Now let us think about a deep question: what is the essential difference between a 2-manifold and a 2D hypergraph-product code (a general 2-complex)?  
In the 2-manifold, there is only one unique 0-cycle class $[\eta_0]$ (a single vertex of the triangulation) as the support of point-like charge. The moving of the charge deforms the support to another vertex,  which is just another representative in the same 0-cycle class $[\eta_0]$ and that does not change the total charge parity $\mathsf{C}^i$.  The hypergraph-product code and the general 2-complex (3-term chain complex) does not have a unique 0-cycle
class, but in general a collection of them forming a cycle basis $\{ \eta_0\}$.  The charge can hence be supported on any basis cycle $\bar{\eta}_0$, and one hence needs to define the total charge parity for  every basis cycle class $[\bar{\eta}_0]$. The moving and deformation of the charge within the same cycle class $[\bar{\eta}_0]$ does not change the total charge parity for that class.  The 0-cycle class $[\bar{\eta}_0]$ in the hypergraph-product code is then mapped to the $(k-1)$-cycle class $[\eta_{k-1}]$ in the CW complex $\L$.    We will continue the  discussion of the subcomplex symmetry in the case of the general 2D chain complex in Sec.~\ref{sec:0-form}.

Now if we choose a set of cocycle basis $\{\eta^{k-1}\}$ and its Poincar\'e dual cycle basis $\{\eta^*_{d-k+1}\}$, then $ {\mathsf{C}}^{i}_{\eta^{k-1}}$ measures the total parity of charges supported on the dual non-contractible basis (co)cycle ${\eta^*}^{d-k+1} \sim \eta_{k-1}$ that intersects with the basis (co)cycle $\eta^{k-1} \sim \eta^*_{d-k+1}$, i.e.,
\be
\int_{\L}  \eta^{k-1} \cup {\eta^*}^{d-k+1}  = | \eta^*_{d-k+1} \cap \eta_{k-1} |=1.
\ee
Note that the more general relation is the following:
\be
\int_{\L}  \eta^{k-1} \cup {\eta'^*}^{d-k+1} =  | \eta^*_{d-k+1} \cap \eta'_{k-1} |= \delta_{\eta, \eta'},
\ee
which shows the unique intersection between only the pair of  dual (co)cycles.   Note that for each basis (co)cycle $\eta^{k-1} \sim \eta^*_{d-k+1}$, there is an independent total charge parity being measured. This is the essence why we can have addressable and parallelizable gauging measurement of individual transversal CZ operators $\widetilde{\text{CZ}}^{\rd,\bl}_{\eta^{s-1}}$,  rather than the product of all the operators, as will be studied in Sec.~\ref{sec:parallel_gauging}.  

We illustrate this interesting phenomenon in the toy example of a 3-manifold $\M^3=\Sigma_g \times S^1$ that is the product of a genus-$g$ surface $\Sigma_g$ and a circle $S^1$ as shown in Fig.~\ref{fig:three-manifold_examples}.  On the genus-$g$ surface, there is a standard 1-cycle basis $\{\as^{(1)}, \as^{(2)}, \cdots \as^{(g)},  \bs^{(1)}, \bs^{(2)}, \cdots \bs^{(g)} \}$ with the intersection condition $|\as^{(i)}\cap \bs^{(j)}| = \delta_{i,j}$.  The 1-cycle along the circle $S^1$ is denoted by $\cs$.   The K\"unneth formula gives rise to the basis 2-cycles $\as^{(i)}\times \cs$ ($i=1,2, \cdots g$).   We can see that there can be loop-like charges $e_{\gr}$ supported on different cycle classes $[\bs^{(1)}], [\bs^{(2)}], \cdots [\bs^{(g)}]$.   For the charges supported on each cycle class $[\bs^{(i)}]$, the corresponding total charge parity is detected by the charge parity operator $\mathsf{C}^{\gr}$ and transversal CZ operator supported on $\as^{(i)} \otimes \cs$.   This gives the opportunity to perform addressable (independent) measurement on each transversal CZ operator.

Next, we can also understand what the charge parity operator detects in the spacetime picture. We first choose a spacetime cocycle basis $\{\tilde{\eta}^{k}\} \equiv \{\eta^{k-1} \otimes {\tau^*}^1\}$, along with its Poincar\'e dual cycle basis $\{\tilde\eta^*_{D-k}\}$$\equiv$$\{ \eta^*_{D-k} \otimes \tau_0\}$, where $\eta^*_{D-k}  \equiv \eta^*_{\ds-k+1}$ is the space cycle.
There exist electric charge world-sheets represented by the  dual basis (co)cycle $\tilde{\eta}_k \sim \tilde{\eta^*}^{D-k}$ that have non-trivial $\ZZ_2$ intersection (intersect odd number of times) with the basis (co)cycles $\tilde\eta^{k} \sim \tilde\eta^*_{D-k}$ (or equivalently the Poincar\'e dual cocycles $\tilde\eta^{k}$). This can be expressed by the following intersection condition:
\be
\int_{\tilde \L}  \tilde{\eta}^k \cup \tilde{\eta^*}^{D-k}=|\tilde\eta^*_{D-k} \cap \tilde{\eta}_k |=1.
\ee
 Note that the closed spacetime cycles $\tilde\eta^*_{D-k}$ and  $\tilde{\eta}_k$ can intersect at a single point since their dimension add up to the total spacetime dimension~$D$. 
Therefore, $ {\mathsf{C}}^{i}_{\eta^{k-1}}$ detects the parity of the electric worldsheet intersecting with the spacetime (co)cycle $\tilde{\eta}^k \sim \tilde{\eta}^*_{D-k}$.

We now illustrate it for the $\gr$-type charge $e_{\gr}$ (i.e., 
$i=\gr, \ k=s$). The spacetime illustration in Fig.~\ref{fig:charge_intersection}(c) extends the spatial illustration in (b), and hence considers a (3+1)D spacetime by choosing $s=2$.  The figure hides the $z$-direction in (b) while showing the $t$-direction.  The \textit{electric charge} is string-like (with dimension $s-1=1$) as in (b) and its closed worldsheet represented by the s-cycle  $\green{\tilde{\eta}_s}$ is membrane-like ($s=2$), although in (c) it is visualized as a dashed line since the $z$-direction is hidden.  

The electric charge $e_{\gr}$ can be considered as the boundary of an open worldsheet, which can be represented by a relative cycle $\green{\tilde{\zeta}_s}$.  Namely, the charge is supported on $\green{\partial  \tilde{\zeta}_s}$.   

Although we were focusing on the path integral of the ground-state sector of the gauge theory in Sec.~\ref{sec:Feynman_path_integral},  the modified path integral can also describe the situation in the presence of \textit{excitations} including the electric charge excitations $e_{\rd}, e_{\bl}, e_{\gr}$,  and the magnetic flux excitations     $m_{\rd}, m_{\bl}, m_{\gr}$.  For example, in the presence of a charge excitation $e_{\gr}$ with the corresponding open worldsheet $\green{\tilde{\zeta}_s}$ which is a relative cycle terminated at the excitation, the path integral is modified as follows: 
\begin{align}\label{eq:path_integral_defect}
& \mathcal{Z}_{(e_{\gr}, \green{\tilde{\zeta}_s})}[\tilde{\L}] \cr
=& \sum_{\red{a^*_{D-p}}, \blue{b^*_{D-q}}, \green{c^*_{D-s}} \in {H_\bullet}(\tilde{\L})} (-1)^{| \red{a^*_{D-p}} \cap \blue{b^*_{D-q}} \cap \green{c^*_{D-s}}|}  \cdot (-1)^{| \green{\tilde{\zeta}_s} \cap \green{c^*_{D-s}} |}. \cr
\end{align}
One configuration in the sum with an extra ($-1$)-factor due to the excitation is  illustrated as below:
\be
    \centering
\includegraphics[width=0.8\linewidth]{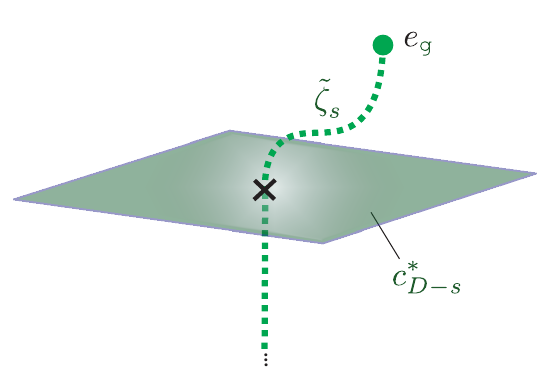}.
\ee
Extra ($-1$)-factor can be multiplied similarly for other world-sheets of charge $e_{\gr}$ intersecting with the $\gr$-type cycle (membrane) $c^*_{D-s}$.   Similarly, one can multiply the ($-1$)-factor from the world-sheet of charge $e_{\rd}$ and $e_{\bl}$ intersecting the $\rd$-type and $\bl$-type cycles (membranes) $\red{a^*_{D-p}}$ and $\blue{b^*_{D-q}}$ respectively.

\subsection{Encoding rate and distance for the explicit construction of the twisted hypergraph-product code}\label{sec:rate_and_distance}

\subsubsection{Encoding rate}

Since we have already constructed the specific CW (simplicial) complex $\L= \L_x \otimes \L_y$ from the product of the CW (simplicial) complexes $\L_x$ and $\L_y$ built from the chain complexes of the skeleton classical codes $X_x$ and $X_y$, we can now define the specific families of qLDPC codes on    $\L= \L_x \otimes \L_y$, both the untwisted code $\C$ and the twisted (gauged) code $\tilde{\C}$.   In particular, we call the former one \textit{thickened hypergraph-product code} since the skeleton classical code is ``thickened" to a higher-dimensional CW complex, and the later one the \textit{twisted (gauged) thickened hypergraph-product code}.  

We start from the two independent (decoupled) copies of untwisted codes $\C^{\rd}$ and $\C^{\bl}$ corresponding to a $\red{\ZZ_2^{(8)}}$$\times $$\blue{\ZZ_2^{(6)}}$ gauge theory, with the entire code space being: $\C = \C^{\rd} \otimes \C^{\bl}$.
Note that for our gauging measure  protocol in Sec.~\ref{sec:parallel_gauging}, we do not include the $\gr$ copy in the ungauged code space, instead the $\gr$ copy is merely used for the gauging measurement and does not store any logical information.  The qubits are defined on $8$-cells and $6$-cells 
respectively.   We have the following lemma:
\begin{lemma}\label{lemma:rate_untwisted}
The untwisted $[[n, k, d]]$ thickened hypergraph-product code  $\C=\C^{\rd} \otimes \C^{\bl}$  has constant rate, i.e.,  $k = \Theta(n)$. 
\end{lemma}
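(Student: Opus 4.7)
The plan is to bound the red and blue copies of $\C$ separately using the K\"unneth formula with $\ZZ_2$ coefficients, combined with an elementary cell-count, and then invoke additivity under the decoupled tensor product $\C = \C^{\rd} \otimes \C^{\bl}$.

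For the red copy, qubits sit on the $8$-cells of the $16$D product complex $\L = \L_x \otimes \L_y$, so $k^{\rd} = \dim_{\ZZ_2} H_8(\L; \ZZ_2)$ and $n^{\rd} = \sum_{i+j=8} n_i(\L_x)\, n_j(\L_y)$, where $n_i(\cdot)$ counts $i$-cells. Since $\ZZ_2$ is a field, the K\"unneth theorem gives
\begin{equation}
H_8(\L; \ZZ_2) \cong \bigoplus_{i+j=8} H_i(\L_x; \ZZ_2) \otimes H_j(\L_y; \ZZ_2),
\end{equation}
and restricting to the single summand $(i,j) = (3,5)$ already yields
\begin{equation}
k^{\rd} \geq \dim H_3(\L_x; \ZZ_2)\cdot \dim H_5(\L_y; \ZZ_2).
\end{equation}
By Lemma 1 of the excerpt applied to $\L_x$, $\dim H_3(\L_x; \ZZ_2) = \bar k_x = \Theta(\bar n)$. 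On the $8$-dimensional Poincar\'e complex $\L_y$, Poincar\'e duality combined with universal coefficients over the field $\ZZ_2$ gives $\dim H_5(\L_y; \ZZ_2) = \dim H^3(\L_y; \ZZ_2) = \dim H_3(\L_y; \ZZ_2) = \bar k_y = \Theta(\bar n)$. Hence $k^{\rd} = \Omega(\bar n^2)$.

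Matching this from above uses only the trivial bound $k^{\rd} \leq n^{\rd}$ together with $n^{\rd} = O(\bar n^2)$. The latter follows from the explicit handle/CW construction reviewed in Section \ref{sec:code_to_manifold}: each check contributes $O(1)$ cells of dimensions $0$ and $2$, and each bit contributes one $3$-cell plus $O(f(i))$ $1$-cells, where LDPC sparsity forces $f(i) = O(1)$; consequently $n_i(\L_x), n_i(\L_y) = O(\bar n)$ uniformly in $i$, so each cell-count sum is $O(\bar n^2)$. Combining the bounds yields $k^{\rd} = \Theta(n^{\rd}) = \Theta(\bar n^2)$. The blue copy is handled identically, using the $(i,j) = (3,3)$ summand of the K\"unneth decomposition of $H_6(\L; \ZZ_2)$; additivity $n = n^{\rd} + n^{\bl}$, $k = k^{\rd} + k^{\bl}$ under the decoupled tensor product then yields $k = \Theta(n)$.

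The only non-routine ingredient is the uniform per-dimension cell bound $n_i(\L_x) = O(\bar n)$, which I expect to be straightforward given the LDPC hypothesis and the handle-to-cell retraction pictured in Figs.~\ref{fig:retraction_2-handle}--\ref{fig:CW_construction}; I do not anticipate a genuine obstacle. Note that spurious (co)cycles can only enlarge $k^{\rd}$ beyond the K\"unneth lower bound, so they cannot threaten the $\Theta(n)$ conclusion, and no distance hypothesis enters the argument --- the subsystem-code treatment flagged in Section \ref{sec:rate_and_distance} is required only for distance, not rate.
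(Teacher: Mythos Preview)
Your overall architecture---K\"unneth for the lower bound, cell-count for the upper bound, then additivity---is exactly what the paper does. The gap is in which K\"unneth summands you invoke. In the explicit construction of Section~\ref{sec:triple_intersection_construction} and the proof of Lemma~\ref{lemma:rate_untwisted}, the \emph{transposed} (0-cocycle) code $\bar{\C}^*_c=\text{Ker}(\Hs^T)$ is taken to be good, and $\Hs^T$ is chosen full-rank so that the 1-cycle code $\text{Ker}(\Hs)$ has dimension zero. Under the isomorphism $H_1(X;\ZZ_2)\cong H_3(\L_x;\ZZ_2)$ this forces $b_3(\L_x)=b_5(\L_x)=0$, while $b_2(\L_x)=b_6(\L_x)=\bar b_0=\Theta(\bar n)$. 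Your appeal to Lemma~1 for $\dim H_3(\L_x;\ZZ_2)=\Theta(\bar n)$ therefore fails here: Lemma~1 as stated assumes $\text{Ker}(\Hs)$ is good, which is not the hypothesis in this section.

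Concretely, your $(3,5)$ summand for $k^{\rd}$ vanishes; the correct summand is $(2,6)$, giving $k^{\rd}\ge b_2\cdot b'_6=\Theta(\bar n)^2=\Theta(n)$, which is what the paper uses. For the blue copy, the $(3,3)$ summand is likewise zero, and in fact no summand of $H_6(\L;\ZZ_2)$ reaches $\Theta(\bar n^2)$: the best are $(0,6)$ and $(6,0)$, each contributing $\Theta(\bar n)$, so $k^{\bl}=\Theta(\sqrt{n})$ only. This does not harm the conclusion since $k^{\rd}=\Theta(n)$ already dominates, but your claim that the blue copy ``is handled identically'' is incorrect. Once you swap in the right Betti numbers the rest of your argument (including the $n_i(\L_x)=O(\bar n)$ cell bound) goes through unchanged.
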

\begin{proof}
We have chosen the skeleton $[\bar{n}, \bar{k}, \bar{d}]$ 0-cocycle (transposed) classical code $\bar{\C}^*_{\text{c}}=\text{Ker}(d) =\text{Ker}(\Hs^T)$ to be a good classical LDPC code, i.e., with linear dimension (constant rate) $\bar{k} = \Theta(\bar{n})$ and linear distance $\bar{d}=\Theta(\bar n)$.  For our convenience, we also choose the parity check matrix $\Hs^T$ to be full-rank, such that the 1-cycle code $\bar{\C}_{\text{c}}=\text{Ker}(\partial) =\text{Ker}(\Hs)$ has dimension 0.

For the skeleton classical code  $\bar{\C}^*_{\text{c}}$,   we can choose a 1-cycle basis $\{\as_1\}$, a 1-cocycle basis $\{\as^1\}$, a 0-cycle basis $\{\bs_0\}$ and a 0-cocycle basis $\{\bs^0\}$. Note that the $\bs^0$ is the codeword of the 0-cocycle classical code $\bar{\C}^*_{\text{c}}=\text{Ker}(d) =\text{Ker}(\Hs^T)$.
The code dimension $k$ is encoded into  the 0th $\ZZ_2$ Betti number equaling to the dimension of the 0th $\ZZ_2$ homology group:
\begin{align}
\bar{k}=& \bar{b}_0(X; \ZZ_2) = \dim (H_0(X; \ZZ_2))  =  \dim (H^0(X; \ZZ_2)) \cr
=& \Theta(\bar n),
\end{align}
where $X$ is the associated $\ZZ_2$ chain complex of $\bar{\C}_{\text{c}}$.  Note that for $\ZZ_2$ (co)homology, there is an isomorphism between the $i^\text{th}$ homology and cohomology: $H_i(X; \ZZ_2) \cong H^i(X; \ZZ_2)$.   Since we have chosen $\Hs^T$ to be full-rank, there exists no non-trivial 1-cycle $\as_1$ or 1-cocycle $\as^1$. Just to be general enough, we still include them in the following discussion.

We have the following mapping between the (co)cycles in the associated chain complex $X_x$ of the skeleton classical code $\bar{\C}_{\text{c}}$ and those in the CW complex $\L_x$ built from $X_x$:
\begin{align}\label{eq:skeleton_mapping}
\bar{\bs}_0  \rightarrow \bs_2 \sim &{\bs^*}^6,  \qquad \bar{\bs}^0  \rightarrow \bs^2 \sim \bs^*_6. \cr
\bar{\as}_1  \rightarrow \as_3 \sim & {\as^*}^5,  \qquad \bar{\as}^1  \rightarrow \as^3 \sim \as^*_5,  
\end{align}
Here, ${\bs^*}^6 \equiv PD(\as_1) \in H^6(\L_x^*; \ZZ_2)$ is the Poincar\'e dual cocycle on the dual CW complex $\L_x^*$ of the 2-cycle $\bs_2 \in H_2(\L_x; \ZZ_2)$ on the original complex $\L_x$, and $\bs^*_6 \in H_6(\L_x^*; \ZZ_2)$ is the Poincar\'e dual cycle on the dual CW complex $\L_x^*$ of the 2-cocycle $\bs^2 \in H^2(\L_x; \ZZ_2)$ on the original complex $\L_x$. Similar duality relations hold in the other mappings above.  Note that the same mappings and isomorphism as above occurs between the (co)cycles in the chain complex $X_y$ also associated with $\bar{\C}_{\text{c}}$ and the CW complex $\L_y$, where we use $\as'_3$, $\bs_2'$ etc. to label the (co)cycles in $\L_x$ as we did previously in Eq.~\eqref{eq:cocycle_decomposition}.    Note that in our construction, we have chosen $\L_x = \L_y$ for simplicity. 

The above isomorphisms  also occur between the entire (co)homology groups of $X_x$ and $\L_x$: 
\begin{align}
H_0(X_x; \ZZ_2)  \cong H_2(\L_x; \ZZ_2) \cong
H^6(\L^*_x; \ZZ_2), \cr   
H^0(X_x; \ZZ_2)  \cong H^2(\L_x; \ZZ_2) \cong H_6(\L^*_x; \ZZ_2). \cr
H_1(X_x; \ZZ_2)  \cong H_3(\L_x; \ZZ_2) \cong H^5(\L^*_x; \ZZ_2), \cr
H^1(X_x; \ZZ_2)  \cong H^3(\L_x; \ZZ_2) \cong H_5(\L^*_x; \ZZ_2), 
\end{align}
This leads to the following identity of the Betti numbers:
\be
\bar{b}_0= b_2 = b_6, \quad \bar{b}_1= b_3 = b_5.
\ee

For the code defined on the {\rd} copy $\C^{\rd}$, which we call the primary code, the qubits are defined on 8-cells of the product complex $\L = \L_x \otimes \L_y$, and hence the code dimension is determined by the $8^\text{th}$ Betti number.   Due to the K\"unneth formula, we have
\begin{align}
& H_8(\L_x \otimes \L_y; \ZZ_2) = \bigoplus_{i+j=8} H_i(\L_x; \ZZ_2) \otimes H_j(\L_y; \ZZ_2)  \cr  
=&   [H_2(\L_x; \ZZ_2) \otimes H_6(\L_y; \ZZ_2)] \oplus [H_3(\L_x; \ZZ_2) \otimes H_5(\L_y; \ZZ_2)]  \cr
& \oplus \cdots
\end{align}
We can hence have the following lower bound for the code dimension $k^{\rd}$ for the $\rd$ copy through the Betti numbers:
\begin{align}
k^{\rd} = & \tilde{b}_8 =  b_2 \cdot b'_6 + b_3 \cdot b'_5 + \cdots  \cr
\ge & b_2 \cdot b'_6 = \bar{b}_0 \cdot \bar{b}'_0 = \Omega(\bar{n}) \cdot \Omega(\bar{n}) =\Omega(n), 
\end{align}
where we have used the fact that $n=\Theta(\bar{n}^2)$.   Since linear dimension is already optimal, we have the following asymptotic scaling:
\be
k^{\rd}= \Theta(n).
\ee

We then consider the code defined on the $
\bl$ copy $\C^{\bl}$, which we call the secondary code.  The qubits are defined on the 6-cells,  hence the code dimension is encoded into the $6^\text{th}$ Betti number $\tilde{b}_6$.  According to the K\"unneth formula, we have
\begin{align}
& H_6(\L_x \otimes \L_y; \ZZ_2) = \bigoplus_{i+j=6} H_i(\L_x; \ZZ_2) \otimes H_j(\L_y; \ZZ_2)  \cr  
=&  [H_6(\L_x; \ZZ_2) \otimes H_0(\L_y; \ZZ_2)] \oplus [H_0(\L_x; \ZZ_2) \otimes H_6(\L_y; \ZZ_2)] \cr
& \oplus \cdots
\end{align}
Therefore, we get the following lower bound for the code dimension and Betti number of the {\bl} copy:
\be
k^{\bl} = \tilde{b}_6 \ge  b_6 \cdot b'_0 =  \Omega(\bar{n}) \cdot 1 = \Omega(\sqrt{n}).
\ee
Note that compare to copy $\rd$ with $\Theta(n)$ logical qubits,  copy $\bl$ only 
only have $\Omega(\sqrt{n})$ logical qubits due to the fact that the zeroth Betti number of the CW complex is only 1, i.e., $b'_0=1$.   Nevertheless, the total number of logical qubits in the code space $\C= \C^{\rd} \otimes \C^{\bl} $ is the sum of these two, i.e.,
\be
k =  k^{\rd} + k^{\bl} = \Theta(n).
\ee

\end{proof}

We then consider the twisted (gauged) thickened hypergraph-product code with the Clifford stabilizer group specified by Eq.~\eqref{eq:stabilizer_summary}, and have the following lemma:
\begin{lemma}\label{lemma:rate_twisted}
The twisted thickened hypergraph-product code $\tilde{\C}$ with parameter $[[\tilde{n}=\Theta(n), \tilde{k}, \tilde{d}]]$  has constant rate, i.e.,  $\tilde{k} = \Theta(\tilde{n})=\Theta(n)$. 
\end{lemma}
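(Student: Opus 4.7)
The plan is to reduce $\tilde k$ to the logical qubit count of the corresponding \emph{untwisted} three-copy CSS code on $\cL = \cL_x \otimes \cL_y$, for which the machinery of Lemma~\ref{lemma:rate_untwisted} already gives linear scaling. I would proceed in three steps.

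First I verify that $\tilde n = \Theta(n)$. The twisted code places qubits on $p$-, $q$-, and $s$-cells ($p=8,q=6,s=3$), so $\tilde n = |C_p(\cL)| + |C_q(\cL)| + |C_s(\cL)|$. Using $|C_m(\cL)| = \sum_{i+j=m} |C_i(\cL_x)|\cdot|C_j(\cL_y)|$ together with the fact that each $|C_i(\cL_\alpha)| = O(\bar n)$, every such contribution is $O(\bar n^2) = O(n)$, while the $(p,q)$ terms already give $\Omega(n)$ from the good-code cell counts used in Lemma~\ref{lemma:rate_untwisted}. Hence $\tilde n = \Theta(n)$.

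Second, I argue that $\dim \tilde{\cC}$ equals the dimension of the untwisted three-copy Pauli stabilizer code $\cC_3$ with stabilizer group $\cS_3 = \langle A^{\rd}, A^{\bl}, A^{\gr}, B^{\rd}, B^{\bl}, B^{\gr} \rangle$ on the same $\tilde n$ qubits. The $B$-stabilizers of $\tilde{\cS}$ and $\cS_3$ coincide, and Lemma~\ref{lemma:stabilizer_commutation} shows that on the zero-flux subspace $B=1$ the dressed $\tilde A$'s commute and that every multiplicative relation $\prod_i A_i^{c_i}=1$ among the untwisted $X$-stabilizers lifts to $\prod_i \tilde A_i^{c_i} = \prod_j B_j^{d_j}$ among the dressed ones. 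Such relations contribute no new independent generator modulo the $B$-subgroup, so the rank of $\tilde{\cS}$ (modulo center) matches that of $\cS_3$, giving $\tilde k = k^{\rd} + k^{\bl} + k^{\gr}$, where the right-hand side is the Betti-number sum $b_p(\cL) + b_q(\cL) + b_s(\cL)$.

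Finally, I invoke the lower bound from the $\rd$-copy alone. Repeating the K\"unneth computation of Lemma~\ref{lemma:rate_untwisted} gives $k^{\rd} = b_p(\cL) \geq b_2(\cL_x)\, b_6(\cL_y) = \bar b_0 \bar b_0 = \Theta(\bar n^2) = \Theta(n)$. Combined with Step~2, $\tilde k \geq k^{\rd} = \Theta(n) = \Theta(\tilde n)$, and the trivial upper bound $\tilde k \leq \tilde n$ yields $\tilde k = \Theta(\tilde n)$.

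The main obstacle is Step~2: cleanly showing that the CZ-dressing $A \mapsto \tilde A$ does not reduce the number of independent stabilizer generators modulo the $B$-subgroup. The key tool is the commutator identity from Lemma~\ref{lemma:stabilizer_commutation}, which I would use to translate every linear dependence among the untwisted $A$'s into a corresponding dependence among the $\tilde A$'s up to a $B$-stabilizer. A useful cross-check is to exhibit a (non-local) Clifford unitary built from CCZ gates indexed by triples with nonzero triple cup product, whose conjugation action takes the untwisted stabilizers to the twisted ones modulo $B$-stabilizers; since conjugation preserves stabilized-subspace dimension, this sidesteps the independence bookkeeping entirely.
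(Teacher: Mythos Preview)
Your Step~2 contains a genuine gap. You claim that every relation $\prod_i A_i^{c_i}=1$ among the untwisted $X$-stabilizers lifts to $\prod_i \tilde A_i^{c_i}=\prod_j B_j^{d_j}$ in the twisted group. This is false precisely at the heart of the construction: taking the product of the dressed green stabilizers along a nontrivial $(s-1)$-cocycle $\green{\gamma^{s-1}}$ gives
\[
\prod_{\sigma_{s-1}\in\green{\gamma^{s-1}}}\tilde A^{\gr}_{\sigma_{s-1}}
=\widetilde{\mathrm{CZ}}^{\rd,\bl}_{\green{\gamma^{s-1}}}
=\prod_{\sigma_p,\sigma_q}\bigl[\mathrm{CZ}^{\rd,\bl}_{\sigma_p,\sigma_q}\bigr]^{\int\tilde\sigma_p\cup\tilde\sigma_q\cup\green{\gamma^{s-1}}},
\]
which is a nontrivial \emph{Clifford} operator on the $\rd,\bl$ qubits, not a product of the (Pauli) $B$-stabilizers. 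So the redundancy $\prod A^{\gr}=1$ of the untwisted code becomes a genuine new constraint in the twisted code; the rank of $\tilde{\cS}$ is strictly larger than that of $\cS_3$, and $\tilde k$ is strictly smaller than $k^{\rd}+k^{\bl}+k^{\gr}$. Your CCZ cross-check cannot rescue this: the twisted and untwisted theories sit in distinct topological phases with different ground-state degeneracies (already in the $(2{+}1)$D case, $D_4$ quantum double has GSD~$22$ on the torus versus $64$ for $\ZZ_2^3$ toric code), so no unitary, local or not, can conjugate one stabilizer group to the other while preserving the stabilized subspace.

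The paper's proof instead embraces this discrepancy and counts it. It observes that the twisted code space is cut out of the untwisted one by the additional constraints $\widetilde{\mathrm{CZ}}^{\rd,\bl}_{\green{\gamma^2}}=1$ (the other two colour combinations being trivial here because the $\gr$ copy carries no relevant logical qubits), and then bounds the number of independent such constraints by $|\{\green{\gamma^2}\}|=\Theta(\sqrt n)$. Hence $\tilde k=\Theta(n)-\Theta(\sqrt n)=\Theta(n)$. What you are missing is exactly this constraint-counting step: once you see that the CZ dressing creates $\Theta(\sqrt n)$ new nontrivial stabilizer elements (the charge-parity operators), the rest of your outline goes through.
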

\begin{proof}
The twisted (gauged) code $\tilde{\C}$ contains the $\rd$, $\bl$ and $\gr$ lattice copies which in this case are coupled together.
We can define the dimension $\tilde k$ of the twisted code via $\tilde k=\log_2(\text{GSD})$, where $\text{GSD}$ represents the ground-state degeneracy of the parent Hamiltonian equaling the dimension of the code space $\tilde \C$.

Similar to the untwisted code $\C$, the logical qubits in the twisted code $\tilde{\C}$ can also be labeled by the logical-$Z$ operators in Eq.~\eqref{eq:logical_Z_1} and \eqref{eq:logical_Z_2}, or equivalently the basis cycles where they are supported:  $\red{\alpha_8}$, $\blue{\beta_6}$. Note that for the $\gr$ copy, there is no nontrivial 3-cycle $\green{\xi_3} \neq 0$, therefore there is no logical qubits due to the $\green{green}$ physical qubits defined on 2-cells $\sigma_2$.

A large portion of logical gates in the untwisted codes $\C^{\rd}$ and $\C^{\bl}$ are inherited by the twisted code $\tilde{\C}$, while certain logical qubit states in the untwisted code that do not satisfy the constraints of the twisted code space $\tilde{\C}$ in Eq.~\eqref{eq:constraint_operator}, or equivalently Eqs.~\eqref{eq:CZ_identity1} and \eqref{eq:CZ_identity2}, are projected out of the twisted code space.   

The counting of the dimension of the twisted code space  $\tilde{\C}$ can hence be obtained by subtracting the number of independent constraints mentioned above from the dimension of the untwisted code space $\C$.   For our current construction,  the $\gr$ copy does not encode any logical qubit.  The constraints involving $\gr$ are hence all trivialized since $\widetilde{\text{CZ}}^{\rd, \gr}_{\blue{\beta^6}} =1$ and $ \widetilde{\text{CZ}}^{\bl, \gr}_{\red{\alpha^8}} =1$ are always satisfied.   Therefore, the only remaining non-trivial  constraints are $ \widetilde{\text{CZ}}^{\rd, \bl}_{\green{\gamma^2}} = 1$, and we only need to count the total number of such independent constraints.   According to Eq.~\eqref{eq:cocycle_decomposition},  we have the tensor decomposition $\green{\gamma^2}=\cs^0 \otimes  \bs'^2 $ or $\bs^2 \otimes c^0$. The size of the cocycle basis $\{\green{\gamma^2}\}$ is hence:
\be
|\{\green{\gamma^2}\}| = 1 \cdot \Theta(\bar{n}) + \Theta(\bar{n}) \cdot 1= \Theta(\bar{n}) =  \Theta(\sqrt{n}).
\ee
Therefore, we have $\Theta(\sqrt{n})$ independent constraints.  The code dimension of the twisted code is hence:
\be
\tilde{k} = k - \Theta(\sqrt{n}) = \Theta(n) = \Theta(\tilde{n}),
\ee
which again has constant rate.

\end{proof}

\subsubsection{Code distance}

We first investigate the code distance of the untwisted code $\C$.   We have the following lemma:

We note that there exists spurious 1-(co)cycles
and 7-(co)cycles in the CW complex $\L_x$ and $\L_y$ built from the classical codes, as mentioned in Sec.~\ref{sec:code_to_manifold}.  For example, a cycle of the form $\fs_1 \otimes \fs'_7$ is a spurious cycle with distance of $O(1)$ where $\fs_1$ and $\fs'_7$ are spurious cycles from $\L_x$ and $\L_y$ respectively.

Nevertheless, as has been established in Ref.~\cite{zhu2025topological, zhu2025transversal}, we can use a subsystem-code encoding which selects only a subset of cycles and their conjugate co-cycles to encode logical $Z$ and $X$ operators, while treating the rest of logical degree of freedom associated with logical operators of shorter cycles as \textit{gauge qubits}.

We hence have the following lemma:
\begin{lemma}\label{lemma:subsystem}
    For a quantum code defined on a  $\ds$-dimensional Poincar\'e CW complex $\L$, one can define a subsystem code by associating the logical-$Z$ operators with a subset of an $k^\text{th}$ homology basis $\{\eta_k\}$ and the conjugate logical-$X$  operators on the conjugate subset of $k^\text{th}$ cohomology basis $\{\eta'^{k}\}$ satisfying the intersection relation $|\eta_k \cap \eta'^k|\equiv \int_{\eta_k} \eta'^k= \delta_{\eta, \eta'}$.  The distance of the subsystem code is hence $d=\text{min}(\text{min}\{|\eta_k|\}, \text{min}\{|\eta'^{k}|\})$.
\end{lemma}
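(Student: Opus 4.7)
The plan is to construct the subsystem code explicitly by choosing dual bases of the relevant (co)homology groups, splitting them into a logical and a gauge subset, and then reading off the distance from the chosen representatives. First, I would set up the ambient Pauli-stabilizer CSS code on $\L$ with qubits on $k$-cells: the $X$-stabilizers come from $(k-1)$-cells via the coboundary, and the $Z$-stabilizers come from $(k+1)$-cells via the boundary. The $Z$-type (resp.\ $X$-type) elements of $N(\mathcal{S})/\mathcal{S}$ are then parametrized by $H_k(\L;\ZZ_2)$ (resp.\ $H^k(\L;\ZZ_2)$), and the $\ZZ_2$ intersection pairing $|\eta_k \cap \eta^k|$ reproduces their symplectic form. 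Since $\L$ is a Poincar\'e CW complex, this pairing is nondegenerate, so one may pick dual bases $\{\eta_k^{(i)}\}_{i\in I}$ and $\{\tilde\eta^{k,(j)}\}_{j\in I}$ with $|\eta_k^{(i)} \cap \tilde\eta^{k,(j)}| = \delta_{ij}$, each class represented by a cycle/cocycle of minimum Hamming weight.

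Next, I would partition $I = I_L \sqcup I_G$, with $I_L$ indexing the logical qubits one wishes to retain and $I_G$ absorbing the remaining (in practice, the classes carrying spurious short representatives of the type $\fs_1 \otimes \fs'_7$ discussed after Fig.~\ref{fig:CW_construction}). The gauge group $\mathcal{G}$ is declared to be generated by $\mathcal{S}$ together with the conjugate pair of $Z$- and $X$-representatives for every $i \in I_G$. The intersection duality guarantees that these generators pairwise anticommute within conjugate pairs from $I_G$, commute with every generator indexed by $I_L$, and commute with all of $\mathcal{S}$. Hence $\mathcal{G}$ has center $\mathcal{S}$, and the construction is a valid subsystem code with bare logical operators given by $\{\eta_k^{(i)}, \tilde\eta^{k,(i)}\}_{i \in I_L}$.

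Finally, for the distance bound, any $P \in N(\mathcal{S}) \setminus \mathcal{G}$ must act nontrivially on some bare logical qubit $i \in I_L$. Its $Z$-part (resp.\ $X$-part) is then a $k$-cycle (resp.\ $k$-cocycle) whose intersection with $\tilde\eta^{k,(i)}$ (resp.\ $\eta_k^{(i)}$) equals $1$, which forces it to represent a class lying outside the $I_G$-span; its weight is therefore at least that of the minimum-weight representative $|\eta_k^{(i)}|$ or $|\tilde\eta^{k,(i)}|$. Taking the minimum over $i \in I_L$ and over both operator types gives the lower bound $d \ge \min(\min_{i\in I_L} |\eta_k^{(i)}|, \min_{i\in I_L} |\tilde\eta^{k,(i)}|)$; the matching upper bound is immediate since the chosen representatives themselves lie in $N(\mathcal{S}) \setminus \mathcal{G}$. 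The main obstacle is the final lower-bound step: verifying that no combination of stabilizers and $I_G$-gauge generators can reduce the weight below the chosen representative. This reduces to the fact that the intersection pairing restricted to $I_L$ is nondegenerate by construction, so dressing by gauge operators only shifts the cycle within the $I_G$-coset and leaves the component in the logical direction (hence its minimum weight) intact.
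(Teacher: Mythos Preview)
Your setup is considerably more explicit than the paper's argument in Appendix~\ref{app:subsystem}, which works directly from the pairing $\int_{\eta_k}\eta'^k=\delta_{\eta,\eta'}$ and passes in one line from ``an $X$-error flips $\lo{Z}_{\eta_k}$ only if it wraps $\eta^k$'' to ``$d_X=\min\{|\eta'^k|\}$.'' In structure the two proofs coincide, and the upper bound $d\le\min(\min_{i\in I_L}|\eta_k^{(i)}|,\min_{i\in I_L}|\tilde\eta^{k,(i)}|)$ is immediate in both.

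The step you yourself flag as the obstacle is a real gap, and your proposed resolution does not close it. A $Z$-type element of $N(\mathcal{S})\setminus\mathcal{G}$ is a $k$-cycle $c$ with $[c]=[\eta_k^{(i)}]+\sum_{j\in I_G}b_j[\eta_k^{(j)}]$ for some $i\in I_L$, and nothing prevents $|c|<|\eta_k^{(i)}|$: a short spurious cycle can overlap the logical representative so that their symmetric difference is lighter than either. Your last sentence conflates two different claims---that gauge dressing preserves the $I_L$-component of $[c]$ (true), and that the minimum Hamming weight over the $I_G$-coset of $c$ equals the minimum weight in the pure class $[\eta_k^{(i)}]$ (false in general). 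The paper's proof does not address this point either, so the equality in the lemma should really be read as the upper bound together with a tacit assumption that no mixed logical-plus-spurious (co)cycle is shorter than every pure logical one. To make the downstream $\Omega(\sqrt{n})$ bounds in Sec.~\ref{sec:rate_and_distance} rigorous one would have to verify this directly for the specific complex $\L=\L_x\otimes\L_y$, which neither your argument nor the paper's supplies.
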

See Appendix \ref{app:subsystem} for the proof of this lemma and the explanation of the intuition behind. 

We then show the following lemma about the distance of the untwisted code $\C$:
\begin{lemma}\label{lemma:distance_untwisted}
The untwisted $[[n, k, d]]$ thickened hypergraph-product code  $\C=\C^{\rd} \otimes \C^{\bl}$  has subsystem-code distance scaling $d = \Omega(\sqrt{n})$. 
\end{lemma}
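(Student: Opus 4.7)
The plan is to apply Lemma~\ref{lemma:subsystem}: declare every K\"unneth sector involving spurious short $1$- or $7$-(co)cycles of $\L_x,\L_y$ as gauge, then identify the remaining ``clean'' logical sectors with those of the underlying skeleton $2$D hypergraph-product code, whose distance is controlled by the good skeleton classical code $\bar\C^*_\text{c}$.

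For the $\rd$ copy (qubits on $8$-cells of $\L=\L_x\otimes\L_y$) I would put the logical-$Z$ subspace in the K\"unneth sector $H_2(\L_x;\ZZ_2)\otimes H_6(\L_y;\ZZ_2)\subset H_8(\L;\ZZ_2)$ and the logical-$X$ subspace in $H^2(\L_x;\ZZ_2)\otimes H^6(\L_y;\ZZ_2)\subset H^8(\L;\ZZ_2)$, using the isomorphisms \eqref{eq:skeleton_mapping} to match them bijectively with the logical $Z$/$X$ operators of the skeleton HGP on $X_x\otimes X_y$. The intersection hypothesis of Lemma~\ref{lemma:subsystem} is satisfied because the cup pairing between $H_\bullet(\L)$ and $H^\bullet(\L)$ factorizes into pairings inside $\L_x$ and $\L_y$, so clean $Z$-cycles pair only with clean $X$-cocycles and never with spurious-sector partners.

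Next I would lower-bound the weight of every non-trivial clean logical. A basis $8$-cycle has the product form $\bs_2\otimes\mathrm{PD}(\mu)$ where $\bs_2\in H_2(\L_x)$ comes from a single classical check (so $|\bs_2|=O(1)$, one constant-size dressed ``$2$-cell'' $S^2$) and $\mu\in H^2(\L_y)$ is the lift of a codeword of $\bar\C^*_\text{c}$; the weight-preserving identity \eqref{eq:Hamming_weight_preservation} together with the cell/dual-cell bijection of Poincar\'e duality on $\L_y$ gives $|\mathrm{PD}(\mu)|=|\mu|=\Omega(\bar n)=\Omega(\sqrt n)$. The basis $2$-cycles $\{\bs_2^{(i)}\}$ of $\L_x$ have mutually disjoint supports (they sit on different classical checks), so an arbitrary clean-sector cycle decomposes as $\alpha=\sum_i\bs_2^{(i)}\otimes\beta_6^{(i)}$ with
\begin{equation}
|\alpha|=\sum_i |\bs_2^{(i)}|\cdot|\beta_6^{(i)}|,
\end{equation}
and if $\alpha$ is non-trivial in $H_8$ then at least one $\beta_6^{(i)}$ is non-trivial in $H_6(\L_y)$, forcing $|\alpha|\ge\Omega(\sqrt n)$ since every non-zero class of $H_6(\L_y)$ is Poincar\'e-dual to a non-zero codeword of the good code $\bar\C^*_\text{c}$. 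A mirror argument on the cohomology side bounds every clean $8$-cocycle weight below by $\Omega(\sqrt n)$. The $\bl$ copy (qubits on $6$-cells) is handled identically using the clean sectors $H_6(\L_x)\otimes H_0(\L_y)$ and $H_0(\L_x)\otimes H_6(\L_y)$, yielding the same lower bound. Combining the two copies gives $d=\Omega(\sqrt n)$.

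The main obstacle I anticipate is the chain-level bookkeeping rather than the per-class weight estimate: a generic cycle in $C_8(\L_x\otimes\L_y)$ has pieces of several mixed bi-degrees under the K\"unneth splitting, and these pieces are not individually cycles, so one has to check that minimum-weight representatives of non-trivial clean classes really do reduce to the product form used above (up to boundaries that live entirely in the gauge or stabilizer group). The cleanest way to close this gap is via the deformation retraction of Sec.~\ref{sec:code_to_manifold}, which gives a chain-level quasi-isomorphism from the thickened complex to the skeleton $2$D HGP chain complex; once that is invoked, the bound collapses to the Tillich--Z\'emor distance theorem for the HGP of $\bar\C^*_\text{c}$ with itself, both of distance $\Omega(\bar n)=\Omega(\sqrt n)$.
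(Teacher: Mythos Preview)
Your proposal is correct and would close, but the paper takes a shorter route. You set up the same subsystem selection via Lemma~\ref{lemma:subsystem} (declaring the spurious K\"unneth sectors as gauge) and then attack the minimum-weight bound by a hands-on disjointness argument, correctly flag the chain-level subtlety that a representative of a ``clean'' class need not be a sum of tensor products, and propose fixing this by retracting to the skeleton HGP and invoking Tillich--Z\'emor. The paper instead dispatches the whole chain-level issue in one line by citing the Bravyi--Hastings tensor-product systolic inequality $\min\{|\alpha\otimes\beta|\}\ge\max(\min\{|\alpha|\},\min\{|\beta|\})$ (their Ref.~\cite{Bravyi:2014bq}), applied directly to each product-form basis (co)cycle $\red{\alpha^8}=\bs^2\otimes{\bs'^*}^6$, $\blue{\beta^6}={\bs^*}^6\otimes\cs^0$, etc. That inequality is already proved at the chain level for arbitrary representatives, so no K\"unneth bookkeeping or reduction to Tillich--Z\'emor is needed.

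What each buys: the paper's approach is faster and modular (one citation), but it treats the thickened complex as a black-box tensor product. Your approach, especially the retraction-to-skeleton step, is more explicit about \emph{why} the clean sector is isomorphic (with weight preserved) to the ordinary HGP, and it makes the role of the good classical code more transparent; it also foreshadows the pullback construction the paper develops later in Sec.~\ref{sec:pullback_skeleton}. A small omission in your write-up: for the $\rd$ copy you only list the sector $H_2(\L_x)\otimes H_6(\L_y)$, but the full logical basis also contains the swapped sector $H_6(\L_x)\otimes H_2(\L_y)$ (the paper's $\{\red{\alpha'^8}\},\{\red{\alpha'_8}\}$); the same bound applies by symmetry, so just add it.
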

\begin{proof}
We first consider the logical-$X$ operators supported on the subsets of cocycle basis $\{\red{\alpha^8}\}$ and $\{\blue{\beta^6}\}$ with the tensor product decomposition $\red{\alpha^8} = \bs^2 \otimes {\bs'^*}^6$ and  $\blue{{\beta}^{6}} = {\bs^*}^6  \otimes  \cs^0$ from Eq.~\eqref{eq:cocycle_decomposition}. 
Note that there is no decomposition of the 8-cocycle of the form $\as^3 \otimes {\as'^*}^5$ since there exists no non-trivial cocycle class $\as^3 \neq 0$ or ${\as'^*}^5 \neq 0$ in the complex $\L$ or $\L'$.

We have the following lower bounds on the subset of cocycle basis $\{\red{\alpha^8}\}$ from the {\rd} copy:
\begin{align}
\min\{|\red{\alpha^8}|\} =& \min\{|\bs^2 \otimes {\bs'^*}^6|\}    \cr
\ge &  \max\big( \min\{|\bs^2|\}, \min\{|{\bs'^*}^6|\}\big)   \cr
=&  \max\big( \min\{|\bar{\bs}^0|\}, \min\{|{\bar{\bs}'_0}|\}\big)  \cr
=& \max\big( \Omega(\bar{n}), 1 \big) 
=\Omega(\bar{n}) = \Omega(\sqrt{n}), \cr
\end{align}
where the inequality in the second line has been derived in Ref.~\cite{Bravyi:2014bq}. In the third line, we have used the mapping of (co)cycles between the skeleton classical code and the CW complex $\L$ from Eq.~\eqref{eq:skeleton_mapping} and their equaling Hamming weight  $|\bs^2|=|\bar{\bs}^0|$ and $|{\bs'^*}^6|=|\bs'_2|=|\bar{\bs}'_0|$.  For the last line, we have used the fact that $\bar{\bs}^0$ is the codeword of the asymptotically good skeleton 0-cocycle classical  code and the distance of the 0-cocycle code is $\bar{d} =\min\{|\bar{\bs}^0|\} = \Theta(\bar{n})$, as well as the fact that $\bar{\bs}'_0$ is just occupies a single 0-chain in the chain complex $X_x$ of the skeleton classical code $\bar{\C}_{\text{c}}$.                                                                                               

Similarly, for the subset of cocycle basis $\{\blue{\beta^6}\}$ from the {\bl} copy,  we have
\begin{align}
\min\{|\blue{\beta^6}|\} =& \min\{|{\bs^*}^6  \otimes  \cs^0|\}    \cr
\ge &  \max\big( \min\{|{\bs^*}^6 |\}, \min\{|\cs^0|\}\big)   \cr
=&  \max\big( \min\{|\bar{\bs}_0|\}, \min\{|{\cs^0}|\}\big)  \cr
=& \max\big( \Omega(1), \Omega(\bar{n}) \big) 
=\Omega(\bar{n}) = \Omega(\sqrt{n}), \cr
\end{align}
where in the last line we have used the fact that $\cs^0$ is the 0-cocycle supported on all the 0-cells of the CW complex $\L_y$ and hence have size $\Omega(\bar{n})$. 

By symmetry, we also have the other subsets of cocycle basis $\{\red{\alpha'^8}\}$ and $\{\blue{\beta'^6}\}$  with the form $\red{\alpha'^8} =   {\as^*}^5 \otimes \bs'^2$ and  $\blue{{\beta'}^{6}} =  \cs^0 \otimes {\bs^*}^6$  by exchanging the tensor factor of $\red{\alpha^8}$ and $\blue{\beta^6}$ in $\L_x$ and $\L_y$. We can hence obtain the same lower bound for them:   $\min\{|\red{\alpha'^8}|\} = \Omega(\sqrt{n})$ and $\min\{|\blue{\beta'^6}|\} = \Omega(\sqrt{n})$. 

The subsets of cocycle basis above form the complete logical-$X$ basis of the thickened hypergraph-product code, and we hence have the following $X$-distance:
\be
d_X = \min\{|\red{\alpha^8}|, |\blue{\beta^6}|, |\red{\alpha'^8}|, |\blue{\beta'^6}|\}  = \Omega(\sqrt{n}).
\ee

We now consider the logical-$Z$ operators supported on the subsets of conjugate cycle basis $\{\red{\alpha_8}\}$ and $\{\blue{\beta_6}\}$ with the form $\red{\alpha_8} = \bs_2 \otimes \bs'^*_6$ and  $\blue{{\beta}_6} = \bs^*_6  \otimes  \cs_0$:
\begin{align}
\min\{|\red{\alpha_8}|\} =& \min\{|\bs_2 \otimes {\bs'^*_6}|\}    \cr
\ge &  \max\big( \min\{|\bs_2|\}, \min\{|{\bs'^*_6}|\}\big)   \cr
=&  \max\big( \min\{|\bar{\as}_1|\}, \min\{|{\bar{\as}'^1}|\}\big)  \cr
=& \max\big( \Omega(\bar{n}), 1 \big) 
=\Omega(\bar{n}) = \Omega(\sqrt{n}), \cr
\min\{|\blue{\beta_6}|\} =& \min\{|\bs^*_6  \otimes  \cs_0|\}    \cr
\ge &  \max\big( \min\{|\bs^*_6 |\}, \min\{|\cs_0|\}\big)   \cr
=&  \max\big( \min\{|\bar{\bs}_0|\}, \min\{|{\cs_0}|\}\big)  \cr
=& \max\big( \Omega(\bar{n}), 1 \big) 
=\Omega(\bar{n}) = \Omega(\sqrt{n}), \cr
\end{align}
where we have used the fact that $\cs_0$ is a single vertex (0-cell) in the CW complex $\L_y$.  By symmetry, we also have the other subsets of cycle basis  $\{\red{\alpha'_8}\}$ and $\{\blue{\beta'_6}\}$ with the form $\red{\alpha'_8} =  \bs'^*_6 \otimes \bs_2 $ and  $\blue{{\beta}'_6} = \cs_0  \otimes  \bs^*_6$, and the same lower bound holds: $\min\{|\red{\alpha'_8}|\} = \Omega(\sqrt{n})$ and $\min\{|\blue{\beta'_6}|\} = \Omega(\sqrt{n})$.   

The subsets of cycle basis above form the complete logical-$Z$ basis of $\tilde{\C}$, and we obtain the following $Z$-distance: 
\be
d_Z = \min\{|\red{\alpha_8}|, |\blue{\beta_6}|, |\red{\alpha'_8}|, |\blue{\beta'_6}|\}  = \Omega(\sqrt{n}),
\ee
as well as the overall distance:
\be
d=\min(d_X, d_Z). 
\ee
\end{proof}

Unlike the untwisted code $\C$, the twisted code $\tilde{\C}$ is not a CSS code, and hence the code distance is not simply $\min(d_X, d_Z)$.  Instead, we should use the more general definition of code distance, that is the shortest support on all possible logical operators.  Let us formalize this definition below in the specific context of the Clifford stabilizer codes:
\begin{definition}
For a Clifford stabilizer code $\tilde{\C}$ with the corresponding Clifford stabilizer group $\tilde{\mathcal{S}}$, we define logical operators $L \in A$ to be the operators  that commute with the entire stabilizer group $\tilde{\mathcal{S}}$ when projected to the code space:
\be
P_{\tilde{\C}}  [L, \tilde{\cS}] P_{\tilde{\C}}  =1,
\ee
and do not act as a logical identity:   $P_{\tilde{\C}} L P_{\tilde{\C}} \neq \bar{I}$.
Here, $[\bullet]$ is the group commutator, and $A$ is the set containing all possible logical operators.   The distance of the code $\tilde{\C}$ is defined as the minimal weight of all logical operators
\be
d = \min\{|\text{supp}(L)|:  L \in A\}.
\ee
\end{definition}

Similar to Lemma \ref{lemma:subsystem}, we can also define the subsystem-code distance when only choosing a subset of the logical operators $A' \subset A$.

We will show that the distance of the twisted code $\tilde {\cal C}$ scales in the same way as in the untwisted code ${\cal C}$.  Our proof will rely on some known generic properties of TQFT, and assume that should still hold for our code since it is a combinatorial TQFT.
In particular, the logical operators of the twisted code $\tilde{\C}$ correspond to topological operators $L$ (also called generalized symmetries) of the corresponding TQFT, which preserve the code space (TQFT Hilbert space) $\tilde{\C}$, i.e.,
\be
L : \tilde{\C} \rightarrow \tilde{\C}.
\ee
In this case, the corresponding TQFT is the $\red{\ZZ_2^{(p)}} \times \blue{\ZZ_2^{(q)}} \times \green{\ZZ_2^{(s)}}$ twisted gauge theory.  We now have the following statement from TQFT:
\begin{statement}\label{statement:TQFT}
 The topological operators of TQFTs are the basic electric and magentic operators that describe the superselection sectors, together with their condensation descendants \cite{Gaiotto:2019xmp,Kong_2020,Kong:2024ykr} which correspond to summing over the condensed operators on the support of the condensation (see e.g. \cite{Roumpedakis:2022aik,Choi:2022zal,Cordova:2024mqg}):
 the condensation of the electric and/or magnetic operators on a cycle ${\zeta}$ is the operator
\begin{equation}
    {\cal C}({\zeta})=\sum_{ \eta,\Sigma \in {\zeta}}\alpha_{q_e,q_m}(\eta,\Sigma) W^{q_e}(\eta) M^{q_m}(\Sigma)~,
\end{equation}
where the summation is over the cycles $\eta, \Sigma$ belonging to $\zeta$ that support the electric and magnetic operators $W^{q_e},M^{q_m}$ with $q_e,q_m=0,1$ determining the condensate. $\alpha_{q_e,q_m}(\eta,\Sigma)$ is a numerical factor that gives weight for different cycles $\eta,\Sigma$ in the sum.
We have omitted an overall normalization factor outside the above sum.
\end{statement}
In the above statement, the superselection sectors correspond to states that are not locally distinguishable. Different superselection sectors are related by non-contractible electric and magnetic logical operators.

The above statement is based on remote detectability principle of topological orders \cite{PhysRevX.8.021074,Johnson_Freyd_2022}, which says that well-defined topological orders that do not require a bulk, such as the TQFT in our case, are ``centerless'': the only operators that braid trivially with all other operators are condensation descendants. For example, a domain wall braids trivially with other operators since there are no nontrivial local operators, and thus the domain wall operators must be condensation descendants.

For example, the transversal CZ operator $(-1)^{\int_{{\zeta}} \blue{b^q} \cup \green{\hat{c}^s}}$ can be written as a condensation operators as: \cite{Roumpedakis:2022aik,Cordova:2024mqg}
\begin{align}
    &(-1)^{\int_{{\zeta}} \blue{\hat{b}^q} \cup \green{\hat{c}^s}}=\sum_{u,v} (-1)^{\int_{{\zeta}} \blue{\hat{b}^q} \cup v}(-1)^{\int_{{\zeta}} u \cup \green{\hat{c}^s}}(-1)^{\int_{{\zeta}} u\cup v}\cr
    &=\sum_{\eta=\text{PD}_{\zeta}(u),\eta'=\text{PD}_{\zeta}(v)} (-1)^{\int_{\eta'} \blue{\hat{b}^q}}(-1)^{\int_{\eta} \green{\hat{c}^s}}(-1)^{|\eta \cap \eta'|}~,
\end{align}
where $u,v$ are $\mathbb{Z}_2$ cocycles, $\text{PD}_{\zeta}(u)$ is the Poincar\'e dual of $u$ on ${\zeta}$, and $|\eta \cap \eta'|$ is the $\ZZ_2$ intersection number of $\eta,\eta'$ on ${\zeta}$. Thus the CZ operator $(-1)^{\int_{{\zeta}} \blue{\hat{b}^q} \cup \green{\hat{c}^s}}$ is a sum of the Wilson operators.

Based on the above statement, we can hence obtain the following Lemma:
\begin{lemma}\label{lemma:distance_twisted}
   The twisted thickened hypergraph-product code $\tilde{\cal C}$   with parameters $[[\tilde{n}=\Theta(n), \tilde{k}, \tilde{d}]]$ has  subsystem-code distance scaled as $\tilde{d}=\Omega(\sqrt{\tilde{n}})=\Omega(\sqrt{{n}})$,  assuming Statement \ref{statement:TQFT} holds for $\tilde{\cal C}$. 
\end{lemma}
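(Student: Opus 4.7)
The plan is to exploit Statement \ref{statement:TQFT}, which classifies every non-trivial topological operator of the twisted $\red{\ZZ_2^{(p)}} \times \blue{\ZZ_2^{(q)}} \times \green{\ZZ_2^{(s)}}$ gauge theory as either a basic electric Wilson operator $W^{q_e}(\eta)$, a basic magnetic operator $M^{q_m}(\Sigma)$, or a condensation descendant $\mathcal{O}(\zeta) = \sum_{\eta,\Sigma \in \zeta} \alpha_{q_e,q_m}(\eta,\Sigma)\, W^{q_e}(\eta) M^{q_m}(\Sigma)$. Since any non-identity operator preserving $\tilde{\C}$ must be such a topological operator, the strategy is to lower-bound the physical support of each of these three families in turn, restricted to the cycle/cocycle basis chosen in the subsystem-code framework of Lemma \ref{lemma:subsystem}, and then to take the minimum.

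For the basic electric operators I would invoke Eq.~\eqref{eq:logical_Z_1}: $W(\eta)$ is literally the same Pauli-$Z$ string $\prod_{\sigma \in \eta} Z_\sigma$ as in the untwisted code $\C$, so its weight equals $|\eta|$. Restricted to the chosen subsystem-code logical basis $\{\red{\alpha_8}\},\{\blue{\beta_6}\},\{\red{\alpha'_8}\},\{\blue{\beta'_6}\}$, the minimum weight over non-trivial classes is $\Omega(\sqrt{n})$ by the computation already carried out in Lemma \ref{lemma:distance_untwisted}. So the electric sector alone already gives the claimed bound.

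For the basic magnetic operators in the projected form \eqref{eqn:magneticoprprojection0}: the Pauli-$X$ core $\prod_{\sigma \in C^*_{\ds-k}} X_\sigma$ lives on a non-contractible dual cycle, while the CZ decorations and stabilizer projectors can only enlarge, never shrink, the physical support. The minimum weight of a non-contractible dual cycle in each chosen homology class equals, via Poincar\'e duality, the minimum weight of its dual cocycle, which by the cocycle-basis analysis in Lemma \ref{lemma:distance_untwisted} is again $\Omega(\sqrt{n})$. For a generic condensation descendant $\mathcal{O}(\zeta)$, every summand is supported inside $\zeta$, so $|\mathrm{supp}\,\mathcal{O}(\zeta)| \le |\zeta|$; conversely, if $\mathcal{O}(\zeta)$ acts non-trivially on $\tilde{\C}$ then the homology class $[\zeta]$ must be non-trivial (otherwise the path-integral/gauge invariance of Lemma \ref{lemma:invariant} forces the sum to collapse to a stabilizer element), and therefore $|\zeta| = \Omega(\sqrt{n})$ by the same cycle-basis bounds. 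Taking the minimum over all three families gives $\tilde d = \Omega(\sqrt{n}) = \Omega(\sqrt{\tilde n})$.

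The hard part, I expect, is the condensation-descendant case: a priori, destructive interference among the Pauli terms in $\mathcal{O}(\zeta)$ could collapse the sum to an effective operator on $\tilde{\C}$ of support strictly smaller than $|\zeta|$. I expect this to be controllable because the decorations appearing in $M^{q_m}$ are diagonal Clifford gates (CZs) that cannot cancel against off-stabilizer Pauli-$X$ strings, and because absorbing any stabilizer (Lemma \ref{lemma:stabilizer_commutation}) into $\mathcal{O}(\zeta)$ preserves the homology class of $\zeta$ modulo gauge qubits in the sense of Lemma \ref{lemma:subsystem}. Once these combinatorial cancellation questions are settled, the distance bound is reduced to the cycle/cocycle minimum-weight problem already solved for the untwisted code $\C$ in Lemma \ref{lemma:distance_untwisted}, giving the claim.
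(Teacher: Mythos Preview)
Your overall strategy is exactly the paper's: invoke Statement~\ref{statement:TQFT} to sort logical operators into electric, magnetic, and condensation descendants, then bound each family against the untwisted cycle/cocycle systoles from Lemma~\ref{lemma:distance_untwisted}. Your treatment of the electric and magnetic sectors is essentially identical to the paper's (the paper records your ``dressing only enlarges support'' observation as the explicit inequality $|\text{supp}(\widetilde{\cX}^i_{\eta^k})|\ge|\text{supp}(\widetilde{X}^i_{\eta^k})|=|\eta^k|$).

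The condensation-descendant step, however, is where you diverge, and your argument there has a genuine gap. You establish $|\mathrm{supp}\,\mathcal{O}(\zeta)|\le|\zeta|$ and then $|\zeta|=\Omega(\sqrt{n})$ from non-triviality of $[\zeta]$; but these two facts together give no lower bound on $|\mathrm{supp}\,\mathcal{O}(\zeta)|$---the inequality points the wrong way. You recognize this and propose to repair it by a cancellation analysis between diagonal Clifford dressings and Pauli-$X$ cores, but that is both delicate and unnecessary. The paper's argument sidesteps the issue entirely: by definition a condensation descendant is a sum of basic electric/magnetic operators over subcycles $\eta,\Sigma\subset\zeta$, where $\zeta$ has \emph{strictly higher dimension} than any of the condensed $\eta,\Sigma$. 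Hence the support of $\mathcal{O}(\zeta)$, which is the full cycle $\zeta$, has weight at least that of the minimal-weight basic operator it condenses. The minimum over all logical operators is therefore realized already in the basic electric/magnetic family, and you never need to control cancellations inside the sum or invoke non-triviality of $[\zeta]$ at all.
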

\begin{proof}

As we have shown in Eqs.~\eqref{eq:logical_Z_1}, \eqref{eq:logical_Z_2} and \eqref{eqn:magoprCZ},  there are two basic types of logical operators (1) The $Z$-type logical operators corresponding to the basic  electric operators $\widetilde{Z}^{\rd}_{\red{\alpha_8}}\equiv W^{\rd}(\red{\alpha_8})$,  $\widetilde{Z}^{\bl}_{\blue{\beta_6}}\equiv W^{\bl}(\blue{\beta_6})$, and $\widetilde{Z}^{\gr}_{\green{\xi_3}}\equiv W^{\gr}(\green{\xi_3})$ etc.; (2) The dressed $X$-type logical operators corresponding to the basic magnetic operators ${\widetilde{\mathcal{X}}}^{
 \rd}_{\red{\alpha^8}} \equiv   M^{
 \rd}(\red{\alpha^8})$,  ${\widetilde{\mathcal{X}}}^{
 \bl}_{\blue{\beta^6}} \equiv   M^{
 \bl}(\blue{\beta^6})$ and ${\widetilde{\mathcal{X}}}^{
 \gr}_{\green{\xi^3}} \equiv   M^{
 \gr}(\green{\xi^3}
 )$ etc.   Note that the $Z$-type logical operators are the same as those on the 
untwisted code $\C$ and hence have exactly the same minimal weights.  On the other hand, the dressed $X$-type logical operators $\widetilde{\cX}^i_{\eta^k}$  has the form of the bare $X$-type logical operators $\widetilde{X}^i_{\eta^k}$ of the untwisted code $\C$ dressed by additional projectors invovling Pauli-$Z$'s as shown in Eq.~\eqref{eqn:magneticoprprojection0}.  Therefore, the dressed $X$-type logical operators have strictly larger support than the bare $X$-type logical operators, i.e.,
\be\label{eq:support_inequality}
|\text{supp}(\widetilde{\cX}^i_{\eta^k})|  \ge  |\text{supp}(\widetilde{X}^i_{\eta^k})| =|\eta^k|,
\ee
for $i=\rd,\bl,\gr$ and $k= \red{8}, \blue{6}, \green{3}$ respectively.

According to Statement \ref{statement:TQFT},  
since the condensation descendants correspond to summing over the condensed operators on the support of the condensation, which has higher dimensions compared to these condensed operators, the condensation descendants necessarily have higher weight compared to the basic electric and magnetic operators. Thus the minimal-weight logical operators arise from the basic electric and magnetic operators,  which have minimal weight lower bounded by the minimal weight of the logical-$Z$ and -$X$ operators in the three independent copies of untwisted codes $\C^{\rd} \otimes \C^{\bl} \otimes \C^{\gr}$.  Note that in our construction the green copy $\C^{\gr}$ does not encode any useful logical qubit, and is purely auxiliary for the gauging measurement protocol presented later in Sec.~\ref{sec:parallel_gauging}.

We will choose the same subsets of cycle basis $\{\red{\alpha_8}\}$, $\{\red{\alpha'_8}\}$, $\{\blue{\beta_6}\}$, $\{\blue{\beta'_6}\}$,  as well as their conjugate cocycle basis $\{\red{\alpha^8}\}$, $\{\red{\alpha'^8}\}$, $\{\blue{\beta^6}\}$, $\{\blue{\beta'^6}\}$,  to define the support of the electric ($Z$-type) and magnetic (dressed $X$-type) logical operators respectively as in the untwisted codes $\C^{\rd} \otimes \C^{\bl} \otimes \C^{\gr}$.
We hence have the subsystem-code distance of the twisted hypergraph-product code $\tilde{\C}$ as
\begin{align}
\tilde{d} =& \min\{ |\text{supp(}\widetilde{Z}^{\rd}_{\red{\alpha_8}})|, |\text{supp(}\widetilde{Z}^{\bl}_{\blue{\beta_6}})|, 
|\text{supp(}\widetilde{Z}^{\rd}_{\red{\alpha_8}})|,\cr
&|\text{supp(}\widetilde{Z}^{\bl}_{\blue{\beta_6}})|,  
|\text{supp(}\widetilde{\cX}^{\rd}_{\red{\alpha^8}})|, |\text{supp(}\widetilde{\cX}^{\bl}_{\blue{\beta^6}})|,  
|\text{supp(}\widetilde{\cX}^{\rd}_{\red{\alpha'^8}})|,  \cr
&|\text{supp(}\widetilde{\cX}^{\bl}_{\blue{\beta'^6}})| 
\}              \cr
\ge & \min\{|\red{\alpha_8}|, |\blue{\beta_6}|,  |\red{\alpha'_8}|, |\blue{\beta'_6}|,   
 |\red{\alpha^8}|, |\blue{\beta^6}|,   |\red{\alpha'^8}|, |\blue{\beta'^6}|  \}   \cr
=& \Omega(\sqrt{\tilde{n}}) = \Omega(\sqrt{n}),    
\end{align}
where the inequality comes from Eq.~\eqref{eq:support_inequality}.

\end{proof}

In the above discussion, we do not consider storing any logical information into the $\gr$  copy, either in the untwisted code $\C^{\gr}$ or the twisted code $\tilde{\C}$.  Nevertheless, due to the presence of spurious logical (co)cycles in the $\gr$ copy,  we still need to analyze its distance since it may affect the gauging measurement protocol introduced later in Sec.~\ref{sec:parallel_gauging}.  

First, we have the following lemma for the distance of the untwisted code $\C^{\gr}$: 
\begin{lemma}\label{lemma:green_distance_untwisted}
The untwisted thickened hypergraph-product code  $\C^{\gr}$ supported on the $\gr$ copy of the lattice has $X$-distance lower bound $d_X = \Omega(\sqrt{n})$ and $Z$-distance lower bound $d_Z=\Omega(1)$. 
\end{lemma}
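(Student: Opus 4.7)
The plan is to compute the minimum weights of the logical $X$ and logical $Z$ operators of $\C^{\gr}$ using the K\"unneth decomposition of $H^3(\L;\ZZ_2)$ and $H_3(\L;\ZZ_2)$ together with the same product bound from Ref.~\cite{Bravyi:2014bq} invoked in the proof of Lemma \ref{lemma:distance_untwisted}. Since the physical qubits live on the 3-cells of $\L = \L_x \otimes \L_y$, logical $X$ operators correspond to non-trivial 3-cocycle classes and logical $Z$ operators to non-trivial 3-cycle classes. The assumption that $\Hs^T$ is full-rank forces $H_1(X_x) = H_1(X_y) = 0$, hence $H_3(\L_x) = H_3(\L_y) = 0$, so with $\ZZ_2$ coefficients K\"unneth collapses to
$$H^3(\L) \;=\; \bigl(H^1(\L_x)\otimes H^2(\L_y)\bigr)\,\oplus\,\bigl(H^2(\L_x)\otimes H^1(\L_y)\bigr),$$
with the analogous decomposition for $H_3(\L)$.

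For the $X$-distance, I would take any pure-tensor basis 3-cocycle of the form $\fs^1 \otimes \bs'^2$ (or the symmetric $\bs^2 \otimes \fs'^1$) selected via the subsystem-code prescription of Lemma \ref{lemma:subsystem}. The Bravyi--Hastings product inequality yields
$$\min|\fs^1 \otimes \bs'^2| \;\ge\; \max\bigl(\min|\fs^1|,\;\min|\bs'^2|\bigr)\;\ge\;\min|\bs'^2|.$$
Now $\bs'^2 \in H^2(\L_y) \cong H^0(X_y)$ is a codeword of the skeleton 0-cocycle code $\bar\C_c^*$, which by assumption is asymptotically good with distance $\bar d = \Theta(\bar n) = \Theta(\sqrt n)$, so $\min|\bs'^2| = \Theta(\sqrt n)$; the mirror summand gives the identical bound from $\bs^2$. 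Consequently every basis 3-cocycle has weight $\Omega(\sqrt n)$, establishing $d_X = \Omega(\sqrt n)$.

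For the $Z$-distance, the same decomposition allows a non-trivial 3-cycle of pure-tensor form $\fs_1 \otimes \bs'_2$ whose weight equals $|\fs_1|\cdot|\bs'_2|$. I choose $\fs_1$ to be one of the short spurious 1-cycles of $\L_x$ --- for instance, a 1-cycle assembled from the green 1-cells sitting inside a single dressed ``3-cell'' of Fig.~\ref{fig:retraction_3-handle}(c), of weight $O(1)$ --- and $\bs'_2$ to be the 2-cycle represented by a single ``2-cell'' $S^2$ lifted from a 0-cell of $X_y$ that lies outside $\text{Im}(\Hs)$. Both factors are non-trivial in their respective homology groups, so K\"unneth gives $\fs_1 \otimes \bs'_2 \neq 0$ in $H_3(\L)$, yet the resulting 3-cycle has weight $O(1)$. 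Hence no bound beyond the vacuous $d_Z = \Omega(1)$ is available, matching the claim.

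The main subtlety is the intrinsic asymmetry between $H^2(\L_y) \cong H^0(X_y)$ and $H_2(\L_y) \cong H_0(X_y)$: every non-zero element of the former has Hamming weight $\Theta(\sqrt n)$ (it is a codeword of the good 0-cocycle code), whereas the latter contains non-trivial classes representable by a single 0-cell and hence of weight $1$. This is precisely why, on the $X$-side of the $\gr$ copy, the spurious short factor $\fs^1$ is always wedged against a good cocycle factor $\bs'^2$, while on the $Z$-side the spurious factor $\fs_1$ can be wedged against the weight-$1$ representative $\bs'_2$. This asymmetry is also the technical reason the $\gr$ copy is reserved for its ancillary role in the gauging-measurement protocol of Sec.~\ref{sec:parallel_gauging} rather than being used to store logical information.
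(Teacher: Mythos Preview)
Your proposal is correct and follows essentially the same approach as the paper: both arguments decompose $H^3(\L)$ and $H_3(\L)$ via K\"unneth into the two summands $H^1(\L_x)\otimes H^2(\L_y)$ and $H^2(\L_x)\otimes H^1(\L_y)$ (after noting $H_3(\L_x)=H_3(\L_y)=0$ from the full-rank assumption on $\Hs^T$), then apply the Bravyi--Hastings product bound to extract the $\Omega(\sqrt n)$ factor from $\bs'^2$ (resp.\ $\bs^2$) for the $X$-distance, and finally observe that on the homology side both tensor factors $\fs_1$ and $\bs'_2$ can be taken of weight $O(1)$, leaving only the trivial $\Omega(1)$ bound for $d_Z$. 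Your additional remarks---the explicit justification of the K\"unneth collapse and the discussion of the asymmetry between $H^2(\L_y)\cong H^0(X_y)$ and $H_2(\L_y)\cong H_0(X_y)$---are a nice elaboration but not a departure from the paper's line of argument.
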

\begin{proof}
The qubits are defined on the 3-cells of $\L$. There are two subsets of co-cycle basis $\{\green{\xi^3}\}$ and $\{\green{\xi'^3}\}$ as  the support of logical-$X$ operators, which have the form:
\be
\green{\xi^3} = \fs^1 \otimes \bs'^2,  \qquad
\green{\xi'^3} = \bs^2 \otimes \fs'^1,
\ee
where $\fs^1$ and $\fs'^1$ are the spurious basis 1-cocycles from $\L_x$ and $\L_y$ respectively,  which could have Hamming weight only $\Omega(1)$. Note that there is no basis 3-cocycle of the form $\cs^0 \otimes \as'^3$ or $\as^3 \otimes \cs^0$, since there is no non-trivial 3-cocycle $\as^3 \neq 0$ by construction. As explained before $\bs^2$ and $\bs'^2$ are basis cocycles which correspond to the codeword of the asymptotically good 0-cocycle classical code $\bar{\C}_{\text{c}}$: $\bar{\bs}^0$ and $\bar{\bs'}^0$.  The hamming weight hence scales as:
\be
|\bs^2| = |\bar{\bs}^0| = \Theta(\bar{n}),  \quad |\bs'^2| = |\bar{\bs'}^0| = \Theta(\bar{n}).
\ee
We can hence bound the Hamming weight of the basis 3-cocycles as
\begin{align}
\min\{|\green{\xi^3}|\} =& \min\{|\fs^1 \otimes \bs'^2|\} \cr
\ge &  \max\big( \min\{|\fs^1|\}, \min\{|{\bs'^2}|\}\big)   \cr
=& \max (\Omega(1),  \Theta(\bar{n}))
= \Omega(\bar{n}) = \Omega(\sqrt{n}), \cr
\end{align}
and similarly $\min\{|\green{\xi'^3}|\} =\Omega(\sqrt{n}) $.  We hence obtain the $X$-distance as
\be
d_X=\min\{|\green{\xi^3}|, |\green{\xi'^3}|\} = \Omega(\sqrt{n}). 
\ee

Now the logical-$Z$ are supported on the conjugate cycle basis $\green{\xi_3}$ and $\green{\xi'_3}$ with the form:
\be
\green{\xi_3} = \fs_1 \otimes \bs'_2,  \qquad
\green{\xi'_3} = \bs_2 \otimes \fs'_1.
\ee
Note that both $|\fs_1|$ and $|\bs'_2|$ can have only $O(1)$ size, we can only get the following lower bound on the $Z$-distance:
\be
d^{\gr}_Z=\min\{|\green{\xi_3}|, |\green{\xi'_3}|\} = \Omega(1). 
\ee

\end{proof}

Next, we have the following lemma for the twisted code: 
\begin{lemma}\label{lemma:green_distance_twisted}
 In the twisted code $\tilde{\C}$,  the minimal support of the magnetic logical operator defined on 3-cocycles is $\Omega(\sqrt{n})$, while the minimal support of the electric ($Z$-type) logical operator supported on 3-cycles is lower bounded by $\Omega(1)$.  
\end{lemma}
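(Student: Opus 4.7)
The plan is to reduce the analysis to the untwisted green copy $\C^{\gr}$ treated in Lemma \ref{lemma:green_distance_untwisted}, using the fact that electric ($Z$-type) logical operators of the twisted code $\tilde{\C}$ are inherited unchanged from the untwisted code, while magnetic (dressed $X$-type) logical operators are obtained from the bare $X$-type logical operators by decorating them with additional Pauli-$Z$ projectors as displayed in Eq.~\eqref{eqn:magneticoprprojection0}.

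First I would treat the magnetic sector. The logical magnetic operator on the $\gr$ copy has the form $\widetilde{\cX}^{\gr}_{\green{\xi^3}} \equiv M^{\gr}(\green{\xi^3})$, consisting of the product $\prod_{\sigma_3 \in \green{\xi^3}} X^{\gr}_{\sigma_3}$ dressed by the projectors ensuring the twisted-code constraints. By Eq.~\eqref{eq:support_inequality}, the support of the dressed operator is at least the support of its bare $X$ part, so
\begin{equation}
|\text{supp}(\widetilde{\cX}^{\gr}_{\green{\xi^3}})| \;\ge\; |\green{\xi^3}| \;=\; \Omega(\sqrt{n})~,
\end{equation}
where the last bound is exactly the $X$-distance bound on the green copy established in Lemma \ref{lemma:green_distance_untwisted}. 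Minimizing over the chosen cocycle basis subsets $\{\green{\xi^3}\}$ and $\{\green{\xi'^3}\}$ of the form $\fs^1 \otimes \bs'^2$ and $\bs^2 \otimes \fs'^1$ preserves the $\Omega(\sqrt{n})$ scaling, since the classical code factor $\bs^2$ (or $\bs'^2$) contributes the asymptotically good codeword weight $\Theta(\bar{n})=\Theta(\sqrt{n})$.

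For the electric sector, the $Z$-type logical operators $\widetilde{Z}^{\gr}_{\green{\xi_3}} = \prod_{\sigma_3 \in \green{\xi_3}} Z^{\gr}_{\sigma_3}$ defined in Eq.~\eqref{eq:logical_Z_1} are identical in form in both the untwisted and twisted codes (only the $X$-stabilizers acquire CZ dressings under gauging, while $Z$-stabilizers and hence the $Z$-type logical representatives on cycles are unchanged). Therefore the minimal support over the chosen cycle basis is the same as in $\C^{\gr}$, which by Lemma \ref{lemma:green_distance_untwisted} is only $\Omega(1)$, bounded from below but not from above in any nontrivial way due to the presence of spurious short 1-cycles $\fs_1$ leading to $\green{\xi_3} = \fs_1 \otimes \bs'_2$ of constant weight.

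The only subtlety, and the main place where one must invoke TQFT structure as in the proof of Lemma \ref{lemma:distance_twisted}, is to ensure that Statement \ref{statement:TQFT} forbids condensation descendants from producing a shorter magnetic logical operator than the basic electric/magnetic operator on the chosen basis; since descendants arise from sums over operators supported on a higher-dimensional region, they cannot beat the basic electric/magnetic weights, so the bounds above are indeed achieved by the basic operators and give the claimed scaling.
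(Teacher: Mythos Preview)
Your proof is correct and follows essentially the same approach as the paper: you reduce both sectors to the untwisted $\C^{\gr}$ analysis of Lemma \ref{lemma:green_distance_untwisted}, invoking Eq.~\eqref{eq:support_inequality} for the magnetic bound and the identity of $Z$-type logicals for the electric bound. The only difference is that your final paragraph on Statement \ref{statement:TQFT} is not strictly needed here, since the lemma as stated concerns only the minimal support of the \emph{basic} electric and magnetic operators on 3-(co)cycles rather than the full subsystem-code distance; the paper's own proof omits this and simply stops after establishing the two bounds.
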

\begin{proof}
We have the following sets of magnetic logical operator (or dressed $X$-logical operators) $\{\widetilde{\cX}_{\green{\xi^3}}\}$ and $\{\widetilde{\cX}_{\green{\xi'^3}}\}$ defined on the subsets of 3-cocycle basis $\{\green{\xi'^3}\}$ and $\{\green{\xi'^3}\}$ respectively.  

According to Eq.~\eqref{eq:support_inequality}, we have the dressed $X$-logical operators in the twisted code $\tilde{\C}$ larger or equal to the bare $X$-logical operators in the untwisted code $\C^{\gr}$:
\begin{align}
|\text{supp}(\widetilde{\cX}^{\gr}_{\green{\xi^3}})|  \ge  |\text{supp}(\widetilde{X}^{\gr}_{\green{\xi^3}})| =|\green{\xi^3}|,    \cr
|\text{supp}(\widetilde{\cX}^{\gr}_{\green{\xi'^3}})|  \ge  |\text{supp}(\widetilde{X}^{\gr}_{\green{\xi'^3}})| =|\green{\xi'^3}|. 
\end{align}
We hence have the minimal support of the magnetic logical operator as:
\begin{align}
d^{\gr}_{\cX} :=&  \min\{ |\text{supp}(\widetilde{\cX}^{\gr}_{\green{\xi^3}})|, |\text{supp}(\widetilde{\cX}^{\gr}_{\green{\xi'^3}})|   \} \cr
\ge& \min\{|\green{\xi^3}|, |\green{\xi'^3}|\} \cr
=& \Omega(\sqrt{n}).
\end{align}

On the other hand,  the electric ($Z$-type) logical operator is identical to the electric ($Z$-type) logical operator is the untwisted code, so it has the same lower bound on the minimal support:
\begin{align}
d^{\gr}_Z=&\min\{ |\text{supp}(\widetilde{Z}^{\gr}_{\green{\xi_3}})|, |\text{supp}(\widetilde{Z}^{\gr}_{\green{\xi'_3}})|   \} \cr
=&\min\{|\green{\xi_3}|, |\green{\xi'_3}|\} = \Omega(1). 
\end{align}

\end{proof}

As mentioned above, since the twisted code $\tilde{\C}$ is not a CSS code, we do not analyze the error correction properties by considering $Z$ and $X$ errors as usual.  Instead, we can decompose local stochastic  errors into two types: the \textit{electric errors} and the \textit{magnetic errors}, which do not have commute with certain dressed $X$-stabilizers and $Z$-stabilizers respectively.   We can hence have the notion of the electric distance $d_Z$ and the magnetic distance $d_{\cX}$,  in analogy to the $Z$-distance and $X$-distance in the case of CSS code.  

 \subsection{Pullback to the skeleton hypergraph-product code on a general 2D chain complex}\label{sec:pullback_skeleton}

\subsubsection{Truncating the CW complex and pullback of the cup product}
 In the above subsections, we follow the procedure of first mapping each two-term chain complex $X_x$ and $X_y$ associated with the skeleton classical code $\bar{\C}_{\text{c}}$ (or equivalently its transpose $\bar{\C}^*_{\text{c}}$) to a higher-dimensional (8D) Poincar\'e CW complex $\L_x$ and $\L_y$, and then taking the tensor product to form the 16D Poincar\'e CW complex $\L_x \otimes \L_y$ which is used to define the `thickened' hypergraph-product code $\C$.  This is followed by gauging the thickened hypergraph-product code to a twisted code $\tilde{\C}$.  

The classical code defined on the 8D CW complex $\L_x$ or $\L_y$ is identical to the skeleton code $\bar{\C}_{\text{c}}$, while there still exit extra cell structure with dimension other than 2 (check) and 3 (dimension) in the CW complex $\L_x$ or $\L_y$ compared to the skeleton 1D (2-term) chain complex $X_x$ or $X_y$.  When taking the tensor product $\L_x \otimes \L_y$, the product of these extra cells, e.g., $\sigma_1 \otimes \sigma_7$, can have the same dimension as the qubits or checks, which just introduces extra redundancy to the quantum code although costing only a constant overhead.   A natural question is that whether we can get rid of these redundant structure and reduce the untwisted (twisted) thickened hypergraph-product code back to the untwisted (twisted) skeleton hypergraph-product described by a 2D (3-term) chain complex while preserving the cup-product structure and the path integral as a topological invariant.

As we can see from Eqs.~\eqref{eq:cocycle_decomposition} and \eqref{eq:triple-cup_decomposition}, within each CW complex $\L_x$ or $\L_y$, the topological invariant associated with the path integral only requires non-trivial cup product between the cochains or cocycles of the form $\bs^2 \cup {\bs^*}^6$   and $\bs^2 \cup {\bs^*}^6 \cup \cs^0$.  We will also need the cup product of the form $\as^3 \cup {\as^*}^5$   and $\as^3 \cup {\as^*}^5 \cup \cs^0$ when defining the Clifford stabilizers of the twisted code.   On the other hand, we will never need non-trivial cup product involving 1-chains, 4-chains, and 7-chains.    This suggests that we can truncate the CW complex properly to get rid of the irrelevant cells without affecting the definition of the path integral topological invariant and the Clifford stabilizer codes.

Recall that the CW complex with $\ZZ_2$ coefficients has the following form
\begin{widetext}
\begin{align}\label{eq:long_chain_CW}
& C_8 \rightarrow C_7 \rightarrow C_{6} \xrightarrow[]{{\partial}_6= {\mathsf{H}}^T}  C_5 \rightarrow 0  \rightarrow C_3 \xrightarrow[]{{\partial_3} = {\mathsf{H}}} C_{2} \rightarrow C_1 \rightarrow C_0, \cr
&\qquad \qquad \ \ \text{check} \qquad \quad \ \text{bit} \qquad \ \ \ \ \  \text{bit} \ \qquad \text{check} 
\end{align}
\end{widetext}
In this construction, there is no non-trivial boundary map between $C_2$ and $C_1$.   Therefore, one truncation we can choose is to preserve only the portion containing $C_3$ and $C_2$: 
\begin{align}
 0  \rightarrow & C_3 \xrightarrow[]{{\partial_3} = {\mathsf{H}}} C_{2} \rightarrow 0, \cr
&   \text{bit} \qquad \text{check}
\end{align}
which is nothing but the chain complex $X$ associated with the skeleton classical code $\bar{\C}_c$ and is re-written as:
\begin{align}
X:   \qquad     \bar{C}_1 & \xrightarrow[]{\partial_1=\mathsf{H}} \bar{C}_{0} \quad. \cr
            \text{bit}& \qquad \ \ \text{check}
\end{align}
Due to the Poincar\'e duality symmetry of $\L$, we can also truncate $\L$ by preserving only $C_6$ and $C_5$ on the left portion: 
\begin{align}
 0 & \rightarrow  C_6  \xrightarrow[]{{\partial}_6 = {\mathsf{H}}^T} C_5 \rightarrow 0, \cr
&  \quad \text{check} \qquad \ \ \ \text{bit}
\end{align}
which is re-written into:
\begin{align}
X^*:   \qquad     \bar{C}^*_1 & \xrightarrow[]{{\partial}_1 = {\mathsf{H}}^T}     \bar{C}^*_0 \ . \cr
           \ \text{check} & \qquad \quad \  \text{bit}
\end{align}
Note that $X^*$ is the chain complex associated with the transposed classical code $\bar{\C}^*_c=\text{Ker}(\mathsf{H^T})$ where the roles of bit and check get exchanged, and the parity check matrix $\partial_1$ becomes the transposed matrix $\mathsf{H}^T$. 

In addition, in order to include the cup product with $\cs^0$,  we also need a code involving 0-cells and 1-cells.  We hence have another choice of truncation of the CW complex:
\begin{align}
 0  \rightarrow & C_1 \rightarrow  C_0 \rightarrow 0,
\end{align}
which can be rewritten as:
\begin{align}
X_R:   \qquad     \bar{C}_1 & \rightarrow  \bar{C}_{0} \quad.
\end{align}
According to the structure of the 0-cells and 1-cells shown in Fig.~\ref{fig:CW_construction}(c), $X_R$ is the chain complex corresponding to a 0-cocycle code $\C_R = \text{Ker}(d)=\text{Ker}(\mathsf{H}_R)$ that is a repetition code with the codeword being the 0-cocycle $\cs^0$ and checks defined on 1-cells (green edges), where $\Hs_R$ is the corresponding parity check matrix.   

Similarly, we can truncate the CW complex at its leftmost part:
\begin{align}
 0  \rightarrow & C_8 \rightarrow  C_7 \rightarrow 0,
\end{align}
which can be rewritten as:
\begin{align}
X^*_R:   \qquad     \bar{C}^*_1 & \rightarrow  \bar{C}^*_{0} \quad.
\end{align}
Here, $X_R$ is the chain complex corresponding to a 1-cycle code $\C_R = \text{Ker}(\partial)=\text{Ker}(\mathsf{H}_R)$ that is a repetition code with the codeword being the 0-cocycle $\cs^0$ and checks defined on 1-cells (green edges).

We call $X^*$ and $X^*_R$ to be the dual complexes of $X$ and $X_R$ respectively, since their corresponding cells are actually the dual cells in the Poincar\'e CW complex $\L$.   

Now we can define the 2D (3-term) skeleton chain complexes for the {\rd}, {\bl} and {\gr} copies by properly truncating the CW complex $\L=\L_x \otimes \L_y$ as follows: 
\begin{align}
X^{\rd} =&  X \otimes X^*  \cr
X^{\bl} =&  X^* \otimes X_R  \cr
X^{\gr} =&  X_R \otimes X. \cr
\end{align}
Note that copy ${\rd}$ corresponds to the (untwisted) skeleton 2D hypergraph-product code, while copy $\bl$ and $\gr$ corresponds to the product of a good classical code and a repetition code.  Here, the truncated 2D chain complexes are all subcomplexes of the CW complex $\L$, i.e.,  $X^{\rd}, X^{\bl}, X^{\gr}  \subset \L$.

We can now consider the pullback of the cochains from the CW complex $\L_x=\L_y$ to the skeleton 1D chain complex $X$, $X^*$ and $X_R$, i.e.,
\begin{align}
 \bs^2 \in C^2(\L_x; \ZZ_2) &\rightarrow   \bar{\bs}^0 \in \bar C^0(X; \ZZ_2),  \cr
 {\bs^*}^6 \in  C^6 (\L_x; \ZZ_2) &\rightarrow \bar{\bs}^{*1} \in  \bar C^{1}(X^*; \ZZ_2),  \cr
  \as^3 \in C^3(\L_x; \ZZ_2) &\rightarrow   \bar{\as}^1 \in \bar C^0(X; \ZZ_2),  \cr
 {\as^*}^5 \in  C^6 (\L_x; \ZZ_2) &\rightarrow \bar{\as}^{*0} \in  \bar C^{0}(X^*; \ZZ_2),  \cr
 \cs^0 \in C^0(\L_x; \ZZ_2) &\rightarrow   \bar{\cs}^0 \in \bar C^0(X_R; \ZZ_2), \cr
  \fs^1 \in C^1(\L_x; \ZZ_2) &\rightarrow   \bar{\fs}^1 \in \bar C^1(X_R; \ZZ_2),
\end{align}
where $\fs^1$ is the spurious 1-cochain mentioned before. Now the cup product structure defined in the CW complex $\L_x = \L_y$ is also pulled back to the skeleton 1D chain complex $X$, $X^*$ and $X_R$.  For example, we have:
\be
\bs^2 \cup {\bs^*}^6 \cup \cs^0 \rightarrow \bar{\bs}^0 \cup \bar{\bs}^{*1} \cup \bar{\cs}^0,  \quad \as^3 \cup {\as^*}^5 
\rightarrow \bar{\as}^1 \
\cup \bar{\as}^{*0}. \ee
The evaluation rule of the cup product in the skeleton chain complexes is completely the same as the evaluation rule in the CW complex $\L_x= \L_y$, since the pullback operation of the non-trivial cup product is just a renaming of the cocycles and cells.  

\subsubsection{Path integral on the skeleton 2D general chain complex}
Equipped with the cup product structure, we can now define the path integral in the skeleton spacetime chain complexes $\tilde X^{\rd}= X^{\rd} \otimes I_t$, $\tilde X^{\bl}= X^{\bl} \otimes I_t$ and $\tilde X^{\gr}= X^{\gr} \otimes I_t$:  
\be\label{eq:path_integral_skeleton}
\mathcal{Z}[\{\tilde X^{\rd}, \tilde X^{\bl}, \tilde X^{\gr} \}] = \sum_{\red{\bar{a}^1}, \blue{\bar{b}^1}, \green{\bar{c}^1}} (-1)^{\int \red{\bar{a}^1} \cup \blue{\bar{b}^1} \cup \green{\bar{c}^1}},   
\ee
where $\red{\bar{a}^1} \in H^1(\tilde X^{\rd}; \ZZ_2)$, $\blue{\bar{b}^1} \in H^1(\tilde X^{\bl}; \ZZ_2)$ and $\green{\bar{c}^1} \in H^1(\tilde X^{\gr}; \ZZ_2)$ are the $\ZZ_2$ 1-cocycles corresponding to the gauge fields defined on the skeleton chain complexes.  The sum `$\int$' is over the triple of skeleton chain complexes $\{\tilde X^{\rd}, \tilde X^{\bl}, \tilde X^{\gr} \}$, or equivalently the spacetime CW complex $\tilde{\L}=\L \otimes I_t$ with these truncated skeleton complexes being its subcomplexes. 

Now we can verify that there is a non-trivial triple intersection structure in the spacetime path integral.  We denote the 1-cocyle bases for $\tilde X^{\rd}, \tilde X^{\bl}, \tilde X^{\gr}$ by $\{\red{\tilde{{\alpha}}^1}\}$, $\{\blue{\tilde{{\beta}}^1}\}$  and $\{\green{\tilde{{\gamma}}^1}\}$.  The basis cocycles can have a tensor decomposition as:
\begin{align}\label{eq:skeleton_cocycle_decomposition}
\red{\tilde{\alpha}^{1}} =&  \bar{\bs}^0 \otimes \bar{\bs}'^{*1} \otimes  \ts^0 \equiv \red{{\bar \alpha}^{1}} \otimes \ts^0   \cr
\blue{\tilde{\beta}^{1}} =& \bar\bs^{*1}  \otimes  \cs^0 \otimes     \ts^0 \equiv  \blue{{
\bar \beta}^{1}} \otimes \tau^0 \cr
\green{\tilde{\gamma}^{1}} =& \cs^0 \otimes  \bar\bs'^0  \otimes {\ts^*}^1 \equiv \green{\bar \gamma^0} \otimes {\ts^*}^1,  
\end{align}
where $\red{{\bar \alpha}^{1}}$, $\blue{{
\bar \beta}^{1}} $ and $\green{\bar \gamma^0}$ represent the space basis cocycles.  We hence have the following nontrivial triple intersection
\begin{align}\label{eq:triple_intersection_skeleton}
   & \int  \red{\tilde{\alpha}^{1}} \cup \blue{\tilde{\beta}^{1}} \cup \green{\tilde{\gamma}^1}  \cr
=&  \int(\bar{\bs}^0 \cup \bar{\bs}^{*1} \cup \bar{\cs}^0    
) \cdot \int (\bar{\bs}'^{*1} \cup \bar{\cs}^0  \cup \bar{\bs}'^0  )   \cr
&\cdot \int_{I_t} ( \ts^0 \cup  \ts^0  \cup {\ts^*}^1 ) \cr
=& 1 \cdot 1 \cdot 1 =1,
\end{align}
which is just the pullback from Eq.~\eqref{eq:triple_cup_sum}. 

\subsubsection{Explicit evaluation of the cup product on the skeleton chain complex and the CW complex}\label{sec:evaluation}
In this subsection, we also want to derive the explicit formula of the evaluation of cup product between cochains. 
Note that since the chain complex $X$ and $X^*$ only differ by an exchange of the bit and check, there exists an isomorphism (duality) between a  1-cell (0-cell) in $X$ and a 0-cell (1-cell) $X^*$, i.e., $\sigma_1 \sim \sigma^*_0$ and $\sigma_0 \sim \sigma^*_1$, which resembles the Poincar\'e duality structure in the Poincar\'e CW complex $\L_x=\L_y$.   Similarly, there is an isomorphism (duality) between the chains  and cochains in $X$ and $X^*$, i.e.,
\be
\bar{\bs}_0 \sim \bar{\bs}^{*1},   \quad \bar{\bs}^0 \sim \bar{\bs}^*_1, \quad \bar{\as}_1 \sim \bar{\as}^{*0}, \quad
\bar{\as}^1 \sim \bar{\as}^*_{0}.
\ee
Due to this isomorphism, we can evaluate the following cup product on a cell $\sigma_1^* \sim \sigma_0$:
\begin{align}
(\bar{\bs}^0 \cup \bar{\bs}^{*1})(\sigma^*_1) =  (\bar{\bs}^0, \bar{\bs}_0)(\sigma_0)   \equiv  \bar{\bs}^0 (\sigma_0) \bar{\bs}_0 (\sigma_0).
\end{align}
Here, $(\bar{\bs}^0, \bar{\bs}_0)$ denotes a chain-cochain pairing, which can be considered as the inner product between the vector $\bar{\bs}^0$ and the dual vector $\bar{\bs}_0$ since the chain group $C_i$ can be viewed as a $\Z_2$ vector space with the cochain  group $C^i$ being its dual vector space. Its evaluation on a cell $(\bar{\bs}^0, \bar{\bs}_0) (\sigma_0)$ is explicitly given above as the product of $\bar{\bs}^0 (\sigma_0)$ and $\bar{\bs}_0 (\sigma_0)$.  Similarly, we can have the following cup product evaluation on a 1-cell $\sigma_1$:
\begin{align}
\bar{\as}^1 \cup \bar{\as}^{*0}(\sigma_1) = (\bar{\as}^1, \bar{\as}_1)(\sigma_1)   \equiv  \bar{\as}^1 (\sigma_1) \bar{\as}_1 (\sigma_1).
\end{align}

We can then evaluate the following non-trivial cup-product sum when enforcing $\bar{\bs}^0 $ and $\bar{\bs}^{*1}$ to be a pair of dual basis cocycles ($d\bar{\bs}^0=d \bar{\bs}^{*1}=0$): 
\begin{align}
& \int \bar{\bs}^0 \cup \bar{\bs}^{*1} \equiv \sum_{\sigma^*_1 \in X^*}   (\bar{\bs}^0 \cup \bar{\bs}^{*1}) (\sigma^*_1) \cr
=& \sum_{\sigma_0 \in X} (\bar{\bs}^0, \bar{\bs}_0)(\sigma_0) \equiv \int_{X}(\bar{\bs}^0, \bar{\bs}_0) = 1, 
\end{align}
where $\int_{X}(\bar{\bs}^0, \bar{\bs}_0)$ can be interpreted as a cycle-cocycle pairing. In this case,  $\bar{\bs}^0$ and $\bar{\bs}_0$ are a conjugate pair of basis cocycle and cycle, which gives the non-trivial pairing   $\int_{X}(\bar{\bs}^0, \bar{\bs}_0)=1$ due to the following isomorphism between the $\ZZ_2$ homology and cohomology groups:
\be
H_i(X; \ZZ_2) \cong H^i(X; \ZZ_2), 
\ee
which can be called a \textit{cycle-cocycle duality}.  The more general pairing relation between arbitrary pairs of basis cocycle $\bar{\bs}^0$ and cycle $\bar{\bs}'_0$ can be written as:
\be
\int_{X}(\bar{\bs}^0, \bar{\bs}'_0)=\delta_{\bs, \bs'},
\ee
with $\bs=\bs'$ enforcing them to be the conjugate pairs.  Note that while the non-trivial cup product in the CW complex $\L$ comes from the Poincar\'e duality,  the cup product in the pair of truncated 1D general chain complexes $\{X, X^*\}$ comes from the cycle-cocycle duality, which is related. 

We now evaluate the following triple cup product of cochains on the cell $\sigma^*_1 \sim \sigma_0$:
\begin{align}
(\bar{\bs}^0 \cup \bar{\bs}^{*1} \cup \bar{\cs}^0)(\sigma^*_1) &=  (\bar{\bs}^0, \bar{\bs}_0, \bar{\cs}^0)(\sigma_0) \cr
&\equiv  \bar{\bs}^0 (\sigma_0) \bar{\bs}_0 (\sigma_0)\bar{\cs}^0(\sigma_0).
\end{align}
Where $(\bar{\bs}^0, \bar{\bs}_0, \bar{\cs}^0)$ represent a triple generalizing the chain-cochain pairing.
Note that, as can be seen from Fig.~\ref{fig:CW_construction}(c), a 0-cell (red dot) of the repetition code chain complex $X_R$ and also the CW complex $\L$ resides on a 2-cell in the CW complex $\L$, which is translated to the 0-cells $\sigma_0$ in the chain complex $X$.  Therefore, we consider the 0-cell in $X_R$ coinciding with the 0-cell in $X$, which are both denoted by $\sigma_0$.   

Similarly, we can also evaluate the following trip cup product of cochains:
\begin{align}
&(\bar{\as}^1 \cup \bar{\as}^{*0}  \cup \bar{\cs}^0)(\sigma_1) = (\bar{\as}^1, \bar{\as}_1, \bar{\cs}^0)(\sigma_1) \cr  \equiv &\,  \bar{\as}^1 (\sigma_1) \bar{\as}_1 (\sigma_1) \bar{\cs}^0(\sigma_0),
\end{align}
where $\sigma_0$ is the 0-cell on the repetition code chain complex $X_R$  that is connected to the 1-cell $\sigma_1$ on $X$. This can be seen from the CW complex $\L$ in Fig.~\ref{fig:CW_construction}(c), where $f(i)$ 0-cells $\sigma_0$ is connected to a 3-cell $\sigma_3$.  The 3-cell is translated to the 1-cell $\sigma_1$ in $X$.  Note that we can choose any $\sigma_0$ connected to $\sigma_1$, since in the code space (ground-state subspace) $\bar{c}^0$ is enforced to be a 0-cocycle corresponding to the repetition codeword that is 1 everywhere.

Next, we can evaluate the following triple cup product sum of cocycles: 
\begin{align}
 &\int \bar{\bs}^0 \cup \bar{\bs}^{*1} \cup \bar{\cs}^0 = \int_{X} (\bar{\bs}^0, \bar{\bs}_0, \bar{\cs}^0) \cr
 \equiv& \sum_{\sigma_0 \in X} \bar{\bs}^0 (\sigma_0) \bar{\bs}_0 (\sigma_0)\bar{\cs}^0(\sigma_0) = \sum_{\sigma_0 \in X} \bar{\bs}^0 (\sigma_0) \bar{\bs}_0 (\sigma_0) \cr
\equiv & \int_{X} (\bar{\bs}^0, \bar{\bs}_0) =1,
\end{align}
where we have used the property of the codeword of the repetition code $\bar{\cs}^0(\sigma_0)=1$ for any $\sigma_0 \in X$.  With this, we can now explicitly  evaluate the triple intersection within the path integral of the skeleton twisted 2D hypergraph product code from Eq.~\eqref{eq:triple_intersection_skeleton}: 
\begin{align}\label{eq:triple_cocycle_evaluation}
   & \int  \red{\tilde{\alpha}^{1}} \cup \blue{\tilde{\beta}^{1}} \cup \green{\tilde{\gamma}^1}  \cr
=& \int_{X} (\bar{\bs}^0, \bar{\bs}_0, \bar{\cs}^0)  \cdot \int_X (\bar{\bs}'_0, \bar{\cs}^0, \bar{\bs}'^0  )   \cdot \int_{I_t} ( \ts^0 \cup  \ts^0  \cup {\ts^*}^1 ) \cr
=& 1 \cdot 1 \cdot 1 =1.
\end{align}
Similarly, we can evalute the triple intersection between the spatial cocycles as follows:
\begin{align}\label{eq:triple_cocycle_evaluation_spatial}
   & \int  \red{\bar{\alpha}^{1}} \cup \blue{\bar{\beta}^{1}} \cup \green{\bar{\gamma}^0}  \cr
=& \int_{X} (\bar{\bs}^0, \bar{\bs}_0, \bar{\cs}^0)  \cdot \int_X (\bar{\bs}'_0, \bar{\cs}^0, \bar{\bs}'^0  )  \cr
=& 1 \cdot 1  =1.
\end{align}

We can see that, quite remarkably, for the specific construction of the path integral in Eq.~\eqref{eq:path_integral_skeleton} along with the triple intersection property \eqref{eq:triple_intersection_skeleton}, we can define all the cup product and topological invariants on the 2D general chain complex from a first principle using the cyclce-cocycle duality and the triple with the repetition codeword, without resorting to the pullback of the cup product from the CW complexes.

Interestingly, we have also effectively obtained the direct evaluation rule of the cup product in the constructed CW complexes $\L$, when mapping the cycle-cocycle duality back to $\L$, rather than relying on subdividing the CW compexes into simplicial complexes as mentioned in Sec.~\ref{sec:cup_product}.

\subsubsection{Twisted skeleton hypergraph-product codes and the code parameter scaling}

We now construct the twisted (Clifford stabilizer) codes $\tilde{\C}_\text{sk}$ defined on the skeleton product chain complexes $\{X^{\rd}, X^{\bl}, X^{\gr}\}$ by truncating the corresponding twisted codes $\tilde{\C}$ defined on the CW complex $\L$.  In particular, we choose a subset of Clifford stabilizer generators in Eq.~\eqref{eq:stabilizer_summary} and pull back to the following set of generators: 
\begin{align}\label{eq:stabilizer_skeleton}
\tilde{A}^{\rd}_{\sigma_{0}} =& A^{\rd}_{\sigma_{0}} \prod_{\sigma_1,\sigma'_1: \int\tilde{\sigma}_{0}\cup \tilde{\sigma}_1\cup \tilde{\sigma}'_1\neq 0}\text{CZ}^{\bl,\gr}_{\sigma_1, \sigma'_1},  \cr
\tilde{A}^{\bl}_{\sigma_{0}} =& A^{\bl}_{\sigma_{0}}\prod_{\sigma_1,\sigma'_1: \int \tilde{\sigma}_1\cup \tilde{\sigma}_{0}\cup  \tilde{\sigma}'_{1} \neq 0}\text{CZ}^{\rd, \gr}_{\sigma_1, \sigma'_1}, \cr
 \tilde{A}^{\gr}_{\sigma_{0}} =& A^{\gr}_{\sigma_{0}}\prod_{\sigma_1,\sigma'_1: \int \tilde{\sigma}_1\cup \tilde{\sigma}'_1\cup \tilde{\sigma}_{0} \neq 0}\text{CZ}^{\rd,\bl}_{\sigma_1, \sigma'_1},  \cr
B^{\rd}_{\sigma_{2}}=& \prod_{\sigma_{1}\subset \partial \sigma_{2}}Z_{\sigma_1}^{\rd}, \quad  B^{\bl}_{\sigma_{2}}=\prod_{\sigma_{1}\subset \partial \sigma_{2}}Z_{\sigma_1}^{\bl}   \cr 
B^{\gr}_{\sigma_{2}} =& \prod_{\sigma_{1}\subset \partial \sigma_{2}}Z_{\sigma_{1}}^{\gr}.
\end{align}
We note that the dressed CZ operators above is determined by the non-trivial triple cup product sum of the indicator cochains, such as  $\int\tilde{\sigma}_{0}\cup \tilde{\sigma}_1\cup \tilde{\sigma}'_1=1$, which is the pullback of the corresponding triple cup product sum in the CW complex $\L$.   The untwisted Pauli stabilizer code on the skeleton chain complex $\C_\text{sk}$ is defined simply by removing the dressed CZ operators in the above stabilizer generators.  

Since the above Clifford stabilizer code $\tilde \C_\text{sk}$ on the skeleton chain complex  is a truncated version of the twisted thickened code $\tilde{\C}$ on the CW complex $\L$, the stabilizer generators on the skeleton are just the pullback of a subset of  the stabilizer generators of the thickened code.  Therefore, all the stabilizers in the skeleton code will still commute when projected to the 0-flux subspace, obeying Lemma~\ref{lemma:stabilizer_commutation}.  

We also have the following lemma for the encoding rate and code distance:
\begin{lemma}\label{lemma:scaling_skeleton}
   The untwisted and twisted 2D hypergraph-product codes $\C_{sk}$ and $\tilde{\cal C}_{sk}$  defined on the skeleton chain complexes both with parameters $[[\Theta(n), \Theta(k), \Theta(d)]]$ has constant rate, i.e., $k=\Theta(n)$, and subsystem-code distance scaling  $d=\Omega(\sqrt{n})$,  assuming Statement \ref{statement:TQFT} holds for $\tilde{\cal C}_{sk}$. 
\end{lemma}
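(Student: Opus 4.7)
The plan is to parallel the proofs of Lemmas~\ref{lemma:rate_untwisted}--\ref{lemma:distance_twisted} for the thickened codes, exploiting the fact that the skeleton codes arise as pullbacks from the Poincar\'e CW complex $\L$. Under this pullback, Poincar\'e duality on $\L$ reduces to the cycle-cocycle duality $H_i(X;\ZZ_2)\cong H^i(X;\ZZ_2)$ on the truncated pairs $\{X,X^*\}$, and the entire cup-product and triple-intersection data descends directly via Eqs.~\eqref{eq:skeleton_cocycle_decomposition} and \eqref{eq:triple_cocycle_evaluation_spatial}. The Hamming weights of the relevant basis (co)cycles in $\L$ and their skeleton preimages agree (as in Eq.~\eqref{eq:skeleton_mapping} with $|\bs^2|=|\bar\bs^0|$ etc.), so all the counting and weight bounds carry over verbatim.

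For the rate, I would first handle the untwisted code $\C_{sk}$ by applying the K\"unneth formula to each of $X^{\rd}=X\otimes X^*$, $X^{\bl}=X^*\otimes X_R$, $X^{\gr}=X_R\otimes X$. Since $\bar\C_c$ is asymptotically good with $\bar b_0=\bar k=\Theta(\bar n)$, the {\rd} copy is the standard Tillich--Z\'emor hypergraph-product code and contributes $k^{\rd}_{sk}=\Theta(n)$ logical qubits; the {\bl} and {\gr} copies involve the repetition chain complex $X_R$ (with $H_0(X_R;\ZZ_2)\cong \ZZ_2$) and contribute $\Omega(\sqrt n)$ logical qubits each. This gives $k_{sk}=\Theta(n)$ for $\C_{sk}$. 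For the twisted code $\tilde\C_{sk}$, I would repeat the constraint-counting of Lemma~\ref{lemma:rate_twisted}: with no logical qubits stored on the {\gr} copy, the only nontrivial constraints are $\widetilde{\mathrm{CZ}}^{\rd,\bl}_{\green{\bar\gamma^0}}=1$ indexed by the basis $\{\green{\bar\gamma^0}\}=\{\cs^0\otimes\bar\bs'^0\}\cup\{\bar\bs^0\otimes\cs^0\}$, whose size is $\Theta(\bar n)=\Theta(\sqrt n)$. Subtracting yields $\tilde k_{sk}=\Theta(n)$.

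For the distance, I would use the subsystem-code framework of Lemma~\ref{lemma:subsystem}, selecting the logical (co)cycle basis with product decomposition $\red{\bar\alpha^1}=\bar\bs^0\otimes\bar\bs'^{*1}$, $\blue{\bar\beta^1}=\bar\bs^{*1}\otimes\cs^0$ and their cycle conjugates, and treating all short spurious cycles as gauge qubits. The product-weight bound $|u\otimes v|\ge\max(|u|,|v|)$ (from Ref.~\cite{Bravyi:2014bq}) together with $|\bar\bs^0|=\bar d=\Theta(\bar n)$ and $|\cs^0|=\Theta(\bar n)$ yields $d_{sk}=\Omega(\sqrt n)$ for the untwisted code $\C_{sk}$. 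For the twisted code $\tilde\C_{sk}$, I would invoke Statement~\ref{statement:TQFT}: the $Z$-type logical operators are unchanged from the untwisted code and the dressed magnetic operators satisfy the support inequality $|\mathrm{supp}(\widetilde\cX^i_{\eta})|\ge|\eta|$ as in Eq.~\eqref{eq:support_inequality}, while all other topological operators (condensation descendants) necessarily have strictly larger support. Hence $\tilde d_{sk}=\Omega(\sqrt n)$ follows by the same min-over-bases argument as in Lemma~\ref{lemma:distance_twisted}.

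The main obstacle is justifying that Statement~\ref{statement:TQFT} genuinely applies to $\tilde\C_{sk}$, since the skeleton complex is a general 2D chain complex rather than a Poincar\'e CW complex and the usual manifold TQFT framework is not directly available. The resolution is precisely the content of Sec.~\ref{sec:pullback_skeleton} and Sec.~\ref{sec:evaluation}: the triple cup product, path integral invariance, and the identification of the twisted stabilizers arise intact from the pullback (with cycle-cocycle duality on $\{X,X^*\}$ substituting for Poincar\'e duality on $\L$), so the condensation-descendant classification of topological operators is inherited. A secondary subtlety is that certain basic electric/magnetic operators in $\L$ have no pullback representative in the skeleton; these correspond to logical operators absorbed into gauge qubits by the subsystem-code choice, and so do not affect the subsystem distance bound.
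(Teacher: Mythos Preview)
Your proposal is correct and follows essentially the same route as the paper's proof: both reduce to Lemmas~\ref{lemma:rate_untwisted}--\ref{lemma:distance_twisted} via the observation that the skeleton (co)cycles are pullbacks of the thickened-code (co)cycles with identical Hamming weights. The paper's argument is more terse---it simply notes that a constant fraction of the nontrivial (co)cycles survive the truncation and that pullback preserves support weights, then invokes the prior lemmas directly---whereas you re-run the K\"unneth and product-weight calculations explicitly on $X^{\rd},X^{\bl},X^{\gr}$; but the content is the same.
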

\begin{proof}
Note that all the (co)cycles in the untwisted and twisted skeleton codes $\C_\text{sk}$ and $\tilde\C_\text{sk}$ are the pullback of the untwisted and twisted thickened codes $\C$ and $\tilde{\C}$ defined on the CW complex $\L$.  

Since a constant fraction of the non-trivial (co)cycles in the CW complex $\L$ is preserved on the truncated 2D skeleton chain complexes, we still obtain a constant rate $k=\Theta(n)$ for the skeleton codes due to Lemma \ref{lemma:rate_untwisted} and Lemma \ref{lemma:rate_twisted}.  

Since the (co)cycle and logical operators in the skeleton codes obtained from the pullback has the same weight of support as the (co)cycles and logical operators in the thickened codes, we preserve the subsystem-code distance scaling $d=\Omega(\sqrt{n})$ due to Lemma \ref{lemma:distance_untwisted} and Lemma \ref{lemma:distance_twisted}, assuming Statement \ref{statement:TQFT} holds for the twisted skeleton code $\tilde{\cal C}_{sk}$.    

We also note that some spurious (co)cycles in the thickened codes such as the 1-cocycle $\fs^1$ are also pulled back to the skeleton codes, therefore we still need to consider the subsystem-code distance which ignores the short spurious (co)cycles. 
\end{proof}

\subsubsection{The 0-form subcomplex symmetries}\label{sec:0-form}

As has been shown in Sec.~\ref{sec:charge_parity_operator}, the charge parity operator or equivalently the transversal CZ operator is a higher-form [($k$-1)-form] symmetry in the $\ds$-dimensional CW complex $\L$ (or a triangulated $\ds$-manifold) acting on a codimension-($k$-1) or equivalently a $(\ds -k+1)$-dimensional subcomplex (cycle). This provides a mechanism for addressable transversal CZ gate in the untwisted code $\C$.  

In this subsection, we show that when pulled back to the skeleton 2D chain complexes, the corresponding operator becomes a novel \textit{0-form subcomplex symmetry}, which in contrast is supported on a codimension-0, or equivalently top dimensional ($\ds$-dimensional) subcomplex. The subcomplex can also be understood as a $\ds$-cycle $\bar \eta^*_d $ or equivalently the Poincar\'e dual $0$-cocycle $\bar\eta^0=\text{PD}(\bar\eta^*_{\ds})$.   This hence provides a more general mechanism of addressable transversal CZ gates in the untwisted code: the \textit{subcomplex symmetries}, with higher-form symmetries being special cases.

We start with the charge parity operator for the $\gr$ type or equivalently the transversal CZ operator between the $\rd$- and $\bl$-copies defined on the CW complex $\L$ according to Eqs.~\eqref{eq:CZ_identity1} and \eqref{eq:charge_operator_green}:
\begin{align}
& \mathsf{C}^{\gr}_{\green{\gamma^2}} \equiv \mathsf{C}^{\gr}_{\green{\gamma^*_{14}}} \equiv \widetilde{\text{CZ}}^{\rd, \bl}_{\green{\gamma^2}} \equiv \widetilde{\text{CZ}}^{\rd, \bl}_{\green{\gamma^*_{14}}} \cr
=&  (-1)^{\int_{\green{\gamma^*_{14}}} \red{\hat{a}^8} \cup \blue{\hat{b}^6}  } \equiv (-1)^{\int_{\L} \red{\hat{a}^8} \cup \blue{\hat{b}^6} \cup \green{\gamma^2} } .
\end{align}
The operator is supported on the basis cocycles $\green{\gamma^2}$$=$$\cs^0 $$\otimes$$\bs'^2 $, or equivalently its Poincar\'e dual basis cycles $\green{\gamma^*_{14}} \equiv PD(\green{\gamma^2})$.

Now we pull back the basis (co)cycles to the skeleton 2D chain complex:
\begin{align}
\green{\gamma^2}=\cs^0 \otimes \bs'^2 &\rightarrow  \green{\bar\gamma^0} = \bar\cs^0 \otimes \bar\bs'^0 \in H^0(X^{\gr};\ZZ_2),   \cr 
\green{\gamma^*_{14}} = \cs^*_8  \otimes \bs'^*_6 &\rightarrow  \green{\bar\gamma^*_2} = \bar \cs^*_1  \otimes \bar\bs'^*_1 \in H_2(X^{*\gr};\ZZ_2),  
\end{align}
where the cohomology is defined on the skeleton chain complex $X^{\gr} =  X_R \otimes X$, and the homology on its dual complex $X^{*\gr} =  X^*_R \otimes X^*$.  
The above charge parity and transversal CZ operators are also pulled back to the following form:
\begin{align}
& \mathsf{C}^{\gr}_{\green{\bar \gamma^0}} \equiv \mathsf{C}^{\gr}_{\green{\bar \gamma^*_{2}}} \equiv \widetilde{\text{CZ}}^{\rd, \bl}_{\green{\bar \gamma^0}} \equiv \widetilde{\text{CZ}}^{\rd, \bl}_{\green{\bar\gamma^*_{2}}} \cr
=&  (-1)^{\int_{\green{\bar\gamma^*_{2}}} \red{\hat{\bar a}^1} \cup \blue{\hat{\bar b}^1}  } \equiv (-1)^{\int \red{\hat{\bar a}^1} \cup \blue{\hat{\bar b}^1} \cup \green{\bar \gamma^0} },
\end{align}
where the gauge fields $\red{\hat{\bar a}^1}$ and  $ \blue{\hat{\bar b}^1}$ are operator-valued 1-cochains.   The above triple cup product can evaluated in a similar way to Eq.~\eqref{eq:triple_cocycle_evaluation} by carrying out cup product on the tensor components. 
Note that similar to the case of higher-form symmetry, the 0-form subcomplex symmetry also has the generaic form of a cycle-cocycle pairing.   In this case, it is the pairing between the 2-cycle $\green{\bar\gamma^*_{2}}$ and the operator-valued 2-cocycle $\red{\hat{\bar a}^1} \cup \blue{\hat{\bar b}^1}$. 

\begin{figure}
    \centering
\includegraphics[width=0.7\linewidth]{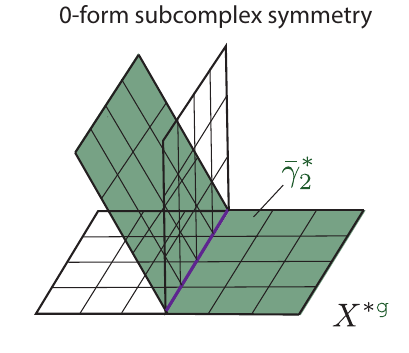}
    \caption{Illustration of a 0-form subcomplex symmetry on a toy model of square complex $X^{*\gr}$. The symmetry operator is  supported on the codimension-0 subcomplex (2-cycle) $\green{\bar{\gamma}^*_2} \subset X^{*\gr}$.}
  \label{fig:subcomplex_illustration}
\end{figure}

As we can see,  the above operators are supported on the top-dimensional cycle (2-cycle)  $\green{\bar\gamma^*_{2}}$ or equivalently a codimension-0 cocycle (0-cocycle) $\green{\bar{\gamma}^0}$.  Here, the 2-cycle is a codimension-0 (top dimensional) subcomplex of the general 2D  chain complex, i.e.,  $\green{\bar\gamma^*_{2}} \subset X^{*\gr} $, as illustrated in Fig.~\ref{fig:subcomplex_illustration}.  We hence call these operators the \textit{0-form subcomplex} symmetries of the untwisted skeleton hypergraph-product codes. Later in Sec.~\ref{sec:logical_operation}, we use such subcomplex symmetries to implement addressable and parallel gauging measurement of logical Clifford operators.

Remarkably, such type of symmetries have never shown up before in TQFT or gauge theory defined on a manifold since there is only one unique top-dimensional cycle ($\ds$-cycle) on a $\ds$-manifold $\M^d$, which is just the manifold $\M^d$ itself.  
This suggests that there exist much richer structures of generalized symmetries in general chain complexes beyond manifolds.

\section{Logical non-Clifford operation via gauging measurement and spacetime path integral}\label{sec:logical_operation}

\subsection{Addressable and parallel gauging measurement of logical CZ}
\label{sec:parallel_gauging}

In this section, we study the gauging measurement protocol which implements the non-Clifford logical operation and the corresponding spacetime path-integral description.

\subsubsection{Gauging measurement of higher-form symmetries}
The central idea of the protocol can be understood as a gauging measurement of the transversal CZ operators in two independent copies ($\rd$ and $\bl$) of untwisted qLDPC codes: $\C= \C^{\rd} \otimes \C^{\bl}$.   These transversal CZ operators are supported on the  basis cocycles $\green{\gamma^2}$, or equivalently its Poincar\'e dual basis cycles $\green{\gamma^*_{14}} \equiv PD(\green{\gamma^2})$ in the space complex $\L$:
\be\label{eq:measure_higher-form}
\widetilde{\text{CZ}}^{\rd, \bl}_{\green{\gamma^2}} \equiv \widetilde{\text{CZ}}^{\rd, \bl}_{\green{\gamma^*_{14}}} =  (-1)^{\int_{\green{\gamma^*_{14}}} \red{\hat{a}^8} \cup \blue{\hat{b}^6}  } \equiv (-1)^{\int_{\L} \red{\hat{a}^8} \cup \blue{\hat{b}^6} \cup \green{\gamma^2} }.  
\ee
We note that $\widetilde{\text{CZ}}^{\rd, \bl}_{\green{\gamma^2}}  \equiv \widetilde{\text{CZ}}^{\rd, \bl}_{\green{\gamma^*_{14}}}$ is hence a higher-form (3-form) symmetry operator supported on a codimension-3 cocycle $\green{\gamma^2}$ or equivalently the dual 13D cycle $\green{\gamma^*_{14}}$. 
By expressing the operator-valued cochain with the cocycle basis $\{\red{\alpha^8}\}$ and $\{\blue{\beta^6}\}$:  $\red{\hat{a}^8}$$=$$ \sum_{\red{\alpha}}\red{\hat{n}_\alpha \alpha^8}$ and $\blue{\hat{b}^6}$$=$$\sum_{\blue{\beta}}\blue{\hat{m}_\beta \beta^6}$, we can write the transversal CZ operator as:
\begin{align}\label{eq:logical_CZ}
\widetilde{\text{CZ}}^{\rd, \bl}_{\green{\gamma^*_{14}}} =&  \prod_{\red{\alpha}, \blue{\beta}} (-1)^{\int_{\green{\gamma^*_{14}}} \red{\hat{n}_\alpha \alpha^8} \cup \blue{\hat{m}_\beta \beta^6} } = \prod_{\red{\alpha}, \blue{\beta}} [(-1)^{\red{\hat{n}_{\alpha}} \cdot \blue{\hat{m}_\beta}}]^{\int_{\green{\gamma^*_{14}}} \red{ \alpha^8} \cup \blue{ \beta^6} }  \cr
=& \prod_{\red{\alpha}, \blue{\beta}} \lo{\text{CZ}}(\red{\alpha}, \blue{\beta})^{\int_{\green{\gamma^*_{14}}} \red{ \alpha^8} \cup \blue{ \beta^6} }  
\equiv \prod_{\red{\alpha}, \blue{\beta}} \lo{\text{CZ}}(\red{\alpha}, \blue{\beta})^{\int_{\L} \red{\alpha^8} \cup \blue{\beta^6} \cup \green{\gamma^2} }.
\end{align}
We can see that the transversal operator corresponds to the product of logical CZ operators acting on logical qubits with their logical-$X$ supported on ${\alpha^8}$ and ${\beta^6}$ if and only if the exponent gives rise to a non-trivial triple intersection   
\be
\int_{\L} \red{\alpha^8} \cup \blue{\beta^6} \cup \green{\gamma^2} = |\red{\alpha^*_8} \cap \blue{\beta^*_{10}} \cap \green{\gamma^*_{14}}|= 1,
\ee
where $\red{\alpha^*_8} $,  $ \blue{\beta^*_{10}}$ and $ \green{\gamma^*_{14}}$ are the Poincar\'e dual cycles defined on the dual space complex $\L^*$. 

Note that as pointed out in Sec.~\ref{sec:charge_parity_operator}, the above transversal CZ operator in the untwisted code is equivalent to the charge parity operator in the twisted (gauged) code which can be factorized into the product of local dressed $X$-stabilizers, i.e., 
\be\label{eq:CZ_factorizing}
\widetilde{\text{CZ}}^{\rd, \bl}_{\green{\gamma^*_{14}}} = \mathsf{C}^{\gr}_{\green{\gamma^2}} \equiv \mathsf{C}^{\gr}_{\green{\gamma^*_{14}}} = \prod_{ \sigma_2 \in \green{\gamma^2}} \tilde{A}^{\gr}_{\sigma_2}. 
\ee
The main gauging idea can hence be stated as below:

We start with two copies (\red{red} and \blue{blue}) of independent untwisted codes $\C= \C^{\rd} \otimes \C^{\bl}$ corresponding to a $\red{\ZZ_2} \times \blue{\ZZ_2}$ gauge theory.   We now gauge the collection of higher-form symmetries corresponding to the transversal CZ operators $\widetilde{\text{CZ}}^{\rd, \bl}_{\green{\gamma^*_{14}}}$ for all basis (co)cycles $ \green{\gamma^2} \sim \green{\gamma^*_{14}}$.    The essence of gauging is to turn a global symmetry (a non-local operator) into many local gauge  symmetries ($k$-local operators).
 In our case, we are gauging the generalized global symmetries $\widetilde{\text{CZ}}^{\rd, \bl}_{\green{\gamma^*_{14}}}$,  which are promoted to local gauge symmetries to all the dressed $X$-stabilizers $\tilde{A}^{\gr}_{\sigma_2}$ in the product in Eq.~\eqref{eq:CZ_factorizing}.   During the gauging process, we can fault-tolerantly measure  $\tilde{A}^{\gr}_{\sigma_2}$'s and also correct them properly.  By proper combination of the classical measurement data,  i.e., multiplying them along all the basis (co)cycles $\green{\gamma^2} \sim \green{\gamma^*_{14}}$,  one can in turn determine all the eigenvalues of the higher-form symmetry operators $\widetilde{\text{CZ}}^{\rd, \bl}_{\green{\gamma^*_{14}}}$. This gauging measurement is hence addressable and highly parallelizable.   

One can describe gauging the CZ symmetry in untwisted code using path integral as follows. We begin with path integral over $\mathbb{Z}_2$ gauge fields $a,b$ with weight 1 for the two copies of the untwisted code, where the CZ symmetry is $(-1)^{\int a\cup b}$. To gauge the symmetry, we introduce another $\mathbb{Z}_2$ gauge field $c$, and couple it to the charge as $(-1)^{\int (a\cup b)\cup c}=(-1)^{\int a\cup b\cup c}$. This gives the same path integral as the twisted code. In other words, gauging the CZ symmetry in the untwisted code produces the twisted code.

We note that in the twisted code that has the path integral with weight $(-1)^{\int a\cup b\cup c}$ inserting a Wilson operator $(-1)^{\rho\int_\gamma c}=(-1)^{\rho\int \gamma^* \cup c}$ for $\rho=0,1$ changes the weight to be $(-1)^{\int (a\cup b+\rho\gamma^*)\cup c}$. Thus $c$ is a Lagrangian multiplier that enforces $[a\cup b]=\rho [\gamma^*]$ in cohomology. In other words, the value of the operator $(-1)^{\int a\cup b}$ on closed support that intersects with $\gamma$ at a point is fixed to be $(-1)^{\rho}$.

 The gauging process corresponds to a non-unitary process at $t_i$ that maps the untwisted code $\C= \C^{\rd} \otimes \C^{\bl}$ to the twisted (gauged) code $\tilde{\C}$ which corresponds to a twisted $\red{\ZZ_2} \times \blue{\ZZ_2} \times \green{\ZZ_2}$ gauge theory, which is then followed by an \textit{ungauging} process at $t_f$ that maps $\tilde \C$ back to $\C$.   The entire gauging-ungauging process is illustrated in Fig.~\ref{fig:protocol}.   One can interpret the gauging and ungauging processes as the spacetime domain walls $W$ and $W'$ separating different topological phases.  As explained in Sec.~\ref{sec:charge_parity_operator}, the measurement outcome of  $\widetilde{\text{CZ}}^{\rd, \bl}_{\green{\gamma^*_{14}}} = \mathsf{C}^{\gr}_{\green{\gamma^*_{14}}}$ in the twisted code $\tilde \C$  from  $t_i$ to $t_f$ tells the total parity of charge that intersects with the cycle $\green{\gamma^*_{14}}$.    Similarly, in the spacetime picture $\widetilde{\text{CZ}}^{\rd, \bl}_{\green{\gamma^*_{14}}} = \mathsf{C}^{\gr}_{\green{\gamma^*_{14}}}$ detects the total parity of charge world-sheets along the relative cycle $\green{\tilde{\gamma}_3}$ terminated at the upper and lower spacetime domain-walls $W$ and $W'$  that intersect with the spacetime cycle $\green{\tilde{\gamma}^*_{14}} \equiv \green{{\gamma}^*_{14}} \otimes \tau_0$.  Note that there is representative of the spacetime cycle $\green{\tilde{\gamma}^*_{14}}$ which has exactly the same support as the space cycle $\green{{\gamma}^*_{14}}$ since $\tau_0$ is just a point (i.e., a vertex in the CW complex $\L$).   The spacetime picture will be understood more deeply using the spacetime path integral formalism in Sec.~\ref{sec:spacetime_path_integral}.

\begin{figure}
    \centering
    \includegraphics[width=1\linewidth]{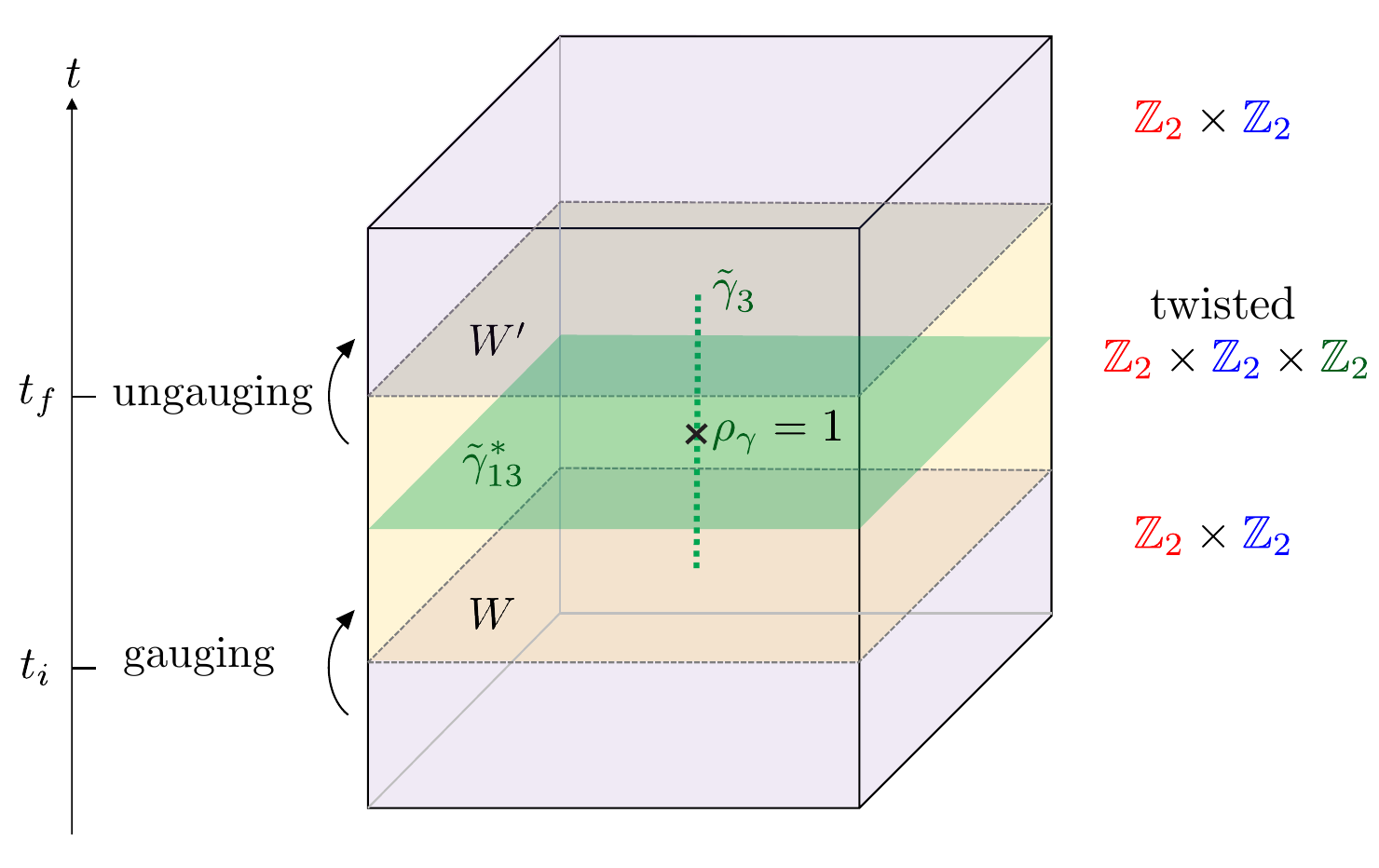}
    \caption{Schematic illustration of the gauging measurement protocol in the spacetime picture, with the arrow of time going upward. The spacetime domain walls $W$ and $W'$ corresponds to the gauging and ungauging processes, which separates the untwisted code $\C=\C^{\rd} \otimes \C^{\bl}$ on the bottom and top corresponding to $\red{\ZZ_2} \otimes  \blue{\ZZ_2}$ gauge theory and the twisted code $\tilde{\C}$ in the middle corresponding to the twisted $\red{\ZZ_2} \otimes  \blue{\ZZ_2} \otimes \green{\ZZ_2}$ gauge theory.  The protocol effectively measures the charge parity operator or equivalently the addressable transversal CZ gate $\mathsf{C}^{\gr}_{\green{\gamma^*_{14}}} =\widetilde{\text{CZ}}^{\rd, \bl}_{\green{\gamma^*_{14}}}$ supported on each spatial 14-cycle $\green{\gamma^*_{14}}$ or equivalently the spacetime 14-cycle $\green{\tilde\gamma^*_{14}} = \green{\gamma^*_{14}} 
    \otimes \tau_0$ shown in the figure.  This operator detects the total charge parity $\green{\rho_\gamma}$ supported on the cycle $\green{\gamma^*_{14}}$ (in the illustration $\green{\rho_\gamma}=1$) at any time between $t_i$ and $t_f$, or equivalently the total parity of the charge worldsheet (spacetime relative cycle) $\green{\tilde \gamma_3}$ that intersects with the spacetime cycle  $\green{\tilde\gamma^*_{14}}$. The protocol stays in the twisted code $\C$ for $O(d)$ rounds of error correction in order to collect reliable syndromes even in the presence of measurement errors.  }
    \label{fig:protocol}
\end{figure}

\subsubsection{The gauging measurement  protocol}\label{sec:protocol}

Now we list the concrete gauging measurement protocol as follows:
\begin{enumerate}
    \item   
Start with three copies of qubit lattices, denoted by $\rd$, $\bl$, and $\gr$. Prepare the {\rd} and {\bl} copies as two decoupled $\Z_2$ qLDPC codes: $\C= \C^{\rd} \otimes \C^{\bl}$.  Initialize the logical qubits at  $\overline{\ket{+}}$ or   $\overline{\ket{0}}$ states depending on the required  connectivity of the logical CZ measurements, denoted by $\lo{\ket{\psi}}_i$.  Meanwhile, initialize the qubits in the $\gr$ copy at $\ket{0}^{\otimes n}$ state.  The stabilizer group of the initial state is 
\be\label{eq:stabilizer_group_1}
\cS_1= \langle   A^{\rd}_{\sigma_{7}},  B^{\rd}_{\sigma_{9}}, A^{\bl}_{\sigma_{5}},  B^{\bl}_{\sigma_{7}}, Z^{\gr}_{\sigma_3},  B^{\gr}_{\sigma_{4}}, \lo{Z}^{\gr}_{\green{\xi_3}}     \rangle
\ee
which is composed of the bare $X$-stabilizers and $Z$-stabilizers in the {\rd} and {\bl} copies. For the green copy, the $\ket{0}^{\otimes n}$ state is stabilized by all the single-qubit Pauli-$Z$ on each qubit $Z^{\gr}_{\sigma_3}$ as well as the $X$-stabilizers $B^{\gr}_{\sigma_{4}}$ from a virtual untwisted qLDPC code with qubits defined on the 4-cells $\sigma_4$ of the space complex $\L$.  Here,  the logical-$Z$ operators $\lo{Z}^{\gr}_{\green{\xi_3}}$  supported on the basis 3-cycle $\green{\xi_3}$ from the $\gr$ copy due to the presence of spurious 1-cycle $\fs_1$ in the CW complex $\L_x$ and $\L_y$ have eigenvalues $+1$. Recall that the 3-cycle $\green{\xi_3}$ has the decomposition form $\fs_1 \otimes \as'_2$ or $ \as_2  \otimes \fs'_1$.   Due to the $\ket{0}^{\otimes n}$ state in the $\gr$ copy, we can also rewrite the above stabilizer group in terms of dressed $X$-stabilizers from the twisted code $\tilde{\C}$ which we will prepare later:
\be
\cS_1= \langle   \tilde{A}^{\rd}_{\sigma_{7}},  B^{\rd}_{\sigma_{9}}, \tilde{A}^{\bl}_{\sigma_{5}},  B^{\bl}_{\sigma_{7}}, Z^{\gr}_{\sigma_3},  B^{\gr}_{\sigma_{4}},  \lo{Z}^{\gr}_{\green{\xi_3}}   \rangle.
\ee

\item  Gauging (at time $t_i$): make a transition to the twisted $\Z_2^3$ non-Abelian qLDPC code by measuring all the green dressed $X$-stabilizers ${\tilde{A}}^{\gr}_{\sigma_2}$, resulting in a collection of  eigenvalues $(-1)^{\mu_{\sigma_2}} =\pm 1$. The product of the dressed $X$-stabilizers $\prod_{\sigma_q} {\tilde{A}}^{\gr}_{\sigma_2}$ along each  basis (co)cycle $\green{\gamma^2} \sim \green{\gamma^*_{14}} $ gives rise to the transversal operator $\widetilde{\text{CZ}}^{\rd, \bl}_{\green{\gamma^*_{14}}}$ since all the Pauli $X$'s get canceled.   We hence get a collection of  projective measurement results 
\be
\{\widetilde{\text{CZ}}^{\rd, \bl}_{\green{\gamma^*_{14}}}= (-1)^{\green{\rho_{\gamma}}}  \equiv \prod_{\sigma_2 \in \green{\gamma^2}}(-1)^{\mu_{\sigma_2}} = \pm 1 \}
\ee
for the entire cycle basis $\{\green{\gamma^*_{14}}\}$, where $\green{\rho_{\gamma}}=0,1$ storing the total charge parity intersecting the basis cycle $\green{\gamma^*_{14}}$, according to the relation $\widetilde{\text{CZ}}^{\rd, \bl}_{\green{\gamma^*_{14}}} = \mathsf{C}^{\gr}_{\green{\gamma^*_{14}}}$.
For each $\green{\gamma^2} \sim \green{\gamma^*_{14}}$,  this gives rise to the logical measurements of the product of logical operators $\prod_{\red{\alpha}, \blue{\beta}} \lo{\text{CZ}}(\red{\alpha}, \blue{\beta})= (-1)^{\green{\rho_{\gamma}}} = \pm 1$  satisfying the condition that the triple intersection is non-trivial, i.e., $\int_{\L} \red{\alpha^8} \cup \blue{\beta^6} \cup \green{\gamma^2}=1$.  
The stabilizer group now becomes
\begin{align}
\qquad \cS_2= \langle   \tilde{A}^{\rd}_{\sigma_{7}},  B^{\rd}_{\sigma_{9}}, \tilde{A}^{\bl}_{\sigma_{5}},  B^{\bl}_{\sigma_{7}}, (-1)^{\mu_{\sigma_2}}\tilde{A}^{\gr}_{\sigma_{2}},  B^{\gr}_{\sigma_{4}}, \lo{Z}^{\gr}_{\green{\xi_3}}     \rangle,  \cr
\end{align}
which involves the random dressed $X$-stabilizer eigenvalues in the $\gr$ copy.

\item
Ungauging (at time $t_f$): to get back to the two original copies of $\ZZ_2$ qLDPC codes $\C= \C^{\rd} \otimes \C^{\bl}$,  we measure out all the qubits of the {\gr} copy in the $Z$ basis.  This results in the following stabilizer group
\be
\cS_3= \langle   \tilde{A}^{\rd}_{\sigma_{7}},  B^{\rd}_{\sigma_{9}}, \tilde{A}^{\bl}_{\sigma_{5}},  B^{\bl}_{\sigma_{7}}, (-1)^{\mu_{\sigma_3}} Z^{\gr}_{\sigma_3},  B^{\gr}_{\sigma_{4}},  (-1)^{\green{\rho_{\gamma}}}, \lo{Z}^{\gr}_{\green{\xi_3}}, \mathsf{C}^{\gr}_{\green{\gamma^2}}     \rangle.
\ee
Here, we have introduced newly measured operators to the stabilizer group while dropping the stabilizers that do not commute with them, i.e., $ \tilde{A}^{\gr}_{\sigma_{2}}$ in this case which does not commute with the overlapping single-qubit measurement of $Z^{\gr}_{\sigma_3}$, while the product of $\tilde{A}^{\gr}_{\sigma_{2}}$ equaling $\mathsf{C}^{\gr}_{\green{\gamma^2}}$ still commutes with all $Z^{\gr}_{\sigma_3}$'s.  The redundant operators $B^{\gr}_{\sigma_4}$  in $\cS_3$ impose constraint on the single-qubit measurement outcome $\mu_{\sigma^3}$:  those qubits with measured value $\mu_{\sigma_3}=1$ ($Z^{\gr}_{\sigma_3}=-1$) must form a cocycle  $\green{\eta^3}$, since $B^{\gr}_{\sigma_4}=+1$ for any $\sigma_4$ which is equivalent to the cocycle condition $(d \green{\eta^3})(\sigma_4) =0$ for any $\sigma_4$;  now due to the fact that  $\lo{Z}^{\gr}_{\green{\xi_3}}=1$ for any (spurious) basis 3-cycle $\xi_3$, there does not exist any non-contractible cycle with $\mu_{\sigma_3}=1$ values on its support. Therefore, we have $\green{\eta^3}$ being a contractible cocycle, i.e., a coboundary of certain open 2-cochain (membrane) $\mathcal{V}^2$:  $\green{\eta^3}= d \mathcal{V}^2$.  We can hence flip all the $\mu_{\sigma_3}  (Z^{\gr}_{\sigma_3})$  to 0 (+1) by applying the correction consisting of a  product of the dressed $X$-stabilizers supported on arbitrary $\mathcal{V}^2$ satisfying $\green{\eta^3}=d \mathcal{V}^2$:
\be
\mathcal{R} = \prod_{\sigma_2 \in  \mathcal{V}^2} \mathcal{A}^{\gr}_{\sigma_2}. 
\ee
This correction leads to the following stabilizer group
\be
\qquad \ \cS_4= \langle   A^{\rd}_{\sigma_{7}},  B^{\rd}_{\sigma_{9}}, {A}^{\bl}_{\sigma_{4}},  B^{\bl}_{\sigma_{7}}, Z^{\gr}_{\sigma_3},  B^{\gr}_{\sigma_{4}}, (-1)^{\green{\rho_{\gamma}}} , \lo{Z}^{\gr}_{\green{\xi_3}}, \mathsf{C}^{\gr}_{\green{\gamma^2}}     \rangle.
\ee
Note that we have replaced the dressed $X$-stabilizers $\tilde{A}^{\rd}_{\sigma_{7}}$ and $  \tilde{A}^{\bl}_{\sigma_{5}}$ with the bare $X$-stabilizers $ A^{\rd}_{\sigma_{7}}$ and $ {A}^{\bl}_{\sigma_{4}}$,  since $\text{CZ}^{\rd, \gr}$ and  $\text{CZ}^{\bl, \gr}$ get trivialized due to the fact that all the \green{green} qubits are in the $\ket{0}^{\otimes n}$ state.  Therefore, the bare and dressed $X$-stabilizers all have the same $+1$ eigenvalues, and the $\gr$ copy gets completely  decoupled with the $\rd$ and $\bl$ copies coming back to the original untwisted code space $\C= \C^{\rd} \otimes \C^{\bl}$.  The only difference with the initial stabilizer group $\cS_1$ in Eq.~\eqref{eq:stabilizer_group_1} is the extra constraint of $\mathsf{C}^{\gr}_{\green{\gamma^2}}=\widetilde{\text{CZ}}^{\rd, \bl}_{\green{\gamma^*_{14}}} = (-1)^{\green{\rho_\gamma}}$, which corresponds to a projection onto the $(-1)^{\green{\rho_\gamma}} = \pm1$ eigenspace of the operator $\widetilde{\text{CZ}}^{\rd, \bl}_{\green{\gamma^*_{14}}}$ or  $\mathsf{C}^{\gr}_{\green{\gamma^2}}$.

\item
 This outputs the following logical state: 
\begin{align}\label{eq:output_logical_state}
\lo{\ket{\psi}}_f =& \prod_{\green{\gamma}} \hat{P}_{1-2\green{\rho_\gamma}}\big[\widetilde{\text{CZ}}^{\rd, \bl}_{\green{\gamma^*_{14}}}\big] \lo{\ket{\psi}}_i   \cr
=& \prod_{\green{\gamma}} \hat{P}_{1-2\green{\rho_\gamma}}\big[\prod_{\red{\alpha}, \blue{\beta}} \lo{\text{CZ}}(\red{\alpha}, \blue{\beta})^{\int_{\L} \red{\alpha^8} \cup \blue{\beta^6} \cup \green{\gamma^2} } \big]\lo{\ket{\psi}}_i, \cr 
\end{align}
where $1-2\green{\rho_\gamma} \equiv (-1)^{\green{\rho_\gamma}}=\pm 1$ for $\green{\rho_\gamma} \in \{0,1\}$, and  $\hat{P}_{\pm 1}\big[\widetilde{\text{CZ}}^{\rd, \bl}_{\green{\gamma^*_{14}}}]$ represents the projector to the $\pm 1$-eigenspace of $\widetilde{\text{CZ}}^{\rd, \bl}_{\green{\gamma^*_{14}}}$. 

\end{enumerate}

\subsubsection{Instantiation: preparing disjoint magic states via the magic state fountain}

\begin{figure*}[t]
    \centering
\includegraphics[width=0.7\linewidth]{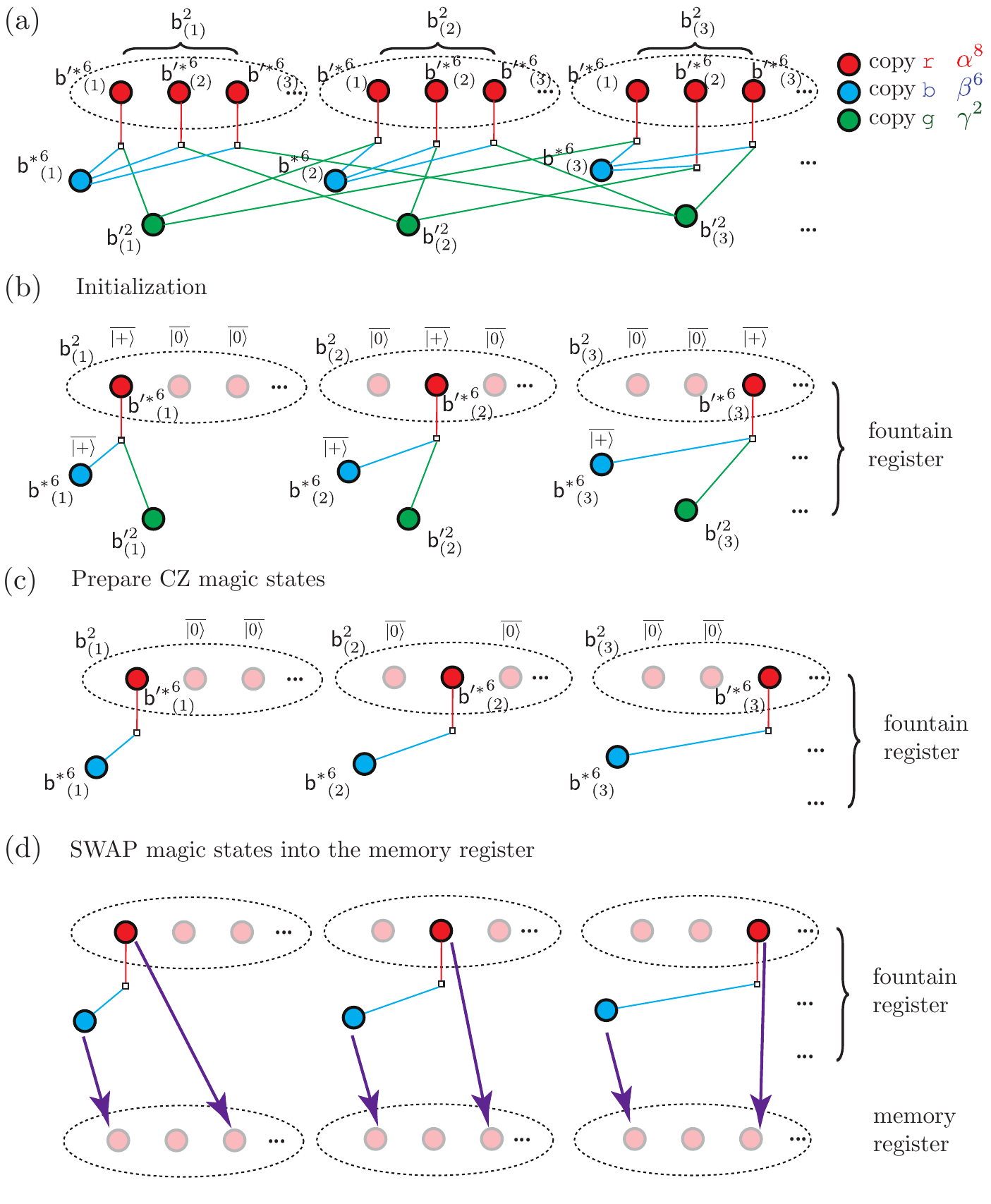}
 \caption{Illustration of the magic state fountain protocol via gauging measurement.
(a) A hypergraph representing the triple intersection structure between the cocycles $\int_{\L} \red{\alpha^8} \cup \blue{\beta^6} \cup \green{\gamma^2}$ in the three copies ({\rd}, {\bl}, {\gr}). Each vertex represents a cocycle, while each hyper-edge (with three legs) represents a triple intersection. Each $\rd$ and $\bl$ vertices also corresponds to a logical qubit, while the $\gr$ vertex represents the logical CZ operator that acts on the intersected $\rd$ and $\bl$ logical qubits. In copy~$\rd$,  each logical qubit (red vertex)  carry two cocycle labels, \(\bs^{2}_{(i)}\) and \({b'^*}^6_{(j)}\), which are hence grouped into $\Theta(\sqrt{n})$ clusters (dashed ellipses) with the same $\bs^{2}_{(i)}$ label. Each cluster contains $\Theta(\sqrt{n})$ logical qubits with different \({b'^*}^6_{(j)}\) labels. On the other hand, each logical qubit in copy~$\bl$ and CZ operator in copy $\gr$ carry a single cocycle label \({b'^*}^6_{(i)}\) and $\bs'^{2}_{(i)}$ respectively.
(b) A controlled fraction of logical qubits in copy~$\rd$ is initialized in the $\lo{\ket{0}}$ state, which effectively disables all hyper-edges  incident on those qubits. The remaining logical qubits in copy $\rd$, one in each cluster, are initialized in the $\lo{\ket{+}}$ state along with all the logical qubits in copy $\bl$.
(c) After performing the gauging measurement protocol with $O(d)$ rounds of error correction, one prepares $\Theta(\sqrt{n})$ disjoint CZ magic states. 
(d) The magic states are then swapped from the fountain register into the memory register using logical Clifford gates and subsequently converted into parallelizable logical non-Clifford gates via gate teleportation.}
    \label{fig:interaction-hypergraph}
\end{figure*}

As we can see, in the general case if we initialize all logical qubits in the $\lo{\ket{+}}$ states, we will prepare a highly entangled logical magic state coupling many logical qubits together.  Nevertheless, for more parallelizable computation,  it is desirable to make the logical magic states supported on disjoint logical qubits.  It has been proposed in Ref.~\cite{zhu2023non} and elaborated in Ref.~\cite{zhu2025topological} that qLDPC codes can be viewed as a resource for fault-tolerantly preparing a large number of logical magic states in parallel,  with the corresponding scheme dubbed `\textit{magic state fountain}'. 

With the specific construction in Sec.\ref{sec:triple_intersection_construction}, we can obtain the following theorem:
\begin{theorem}\label{theorem:protocol}
There exists a gauging measurement protocol applied on the 2D thickened hypergraph-product code that can produce $\Theta(\sqrt{n})$ disjoint logical CZ magic state with distance $\Omega(\sqrt{n})$ (assuming Statement \ref{statement:TQFT} holds for the twisted code).  
\end{theorem}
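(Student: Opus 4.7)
The plan is to apply the gauging measurement protocol of Sec.~\ref{sec:protocol} to the thickened 2D HGP code, specialized to the triple-intersection structure of Sec.~\ref{sec:triple_intersection_construction}. The key algebraic input is that the K\"unneth decomposition \eqref{eq:cocycle_decomposition} together with Poincar\'e duality on each factor complex forces the triple intersection to be ``diagonal'': if I index the basis cocycles as $\red{\alpha^8_{(i,j)}} = \bs^2_{(i)} \otimes \bs'^{*6}_{(j)}$, $\blue{\beta^6_{(j')}} = \bs^{*6}_{(j')} \otimes \cs^0$, and $\green{\gamma^2_{(i')}} = \cs^0 \otimes \bs'^2_{(i')}$, then factorizing \eqref{eq:triple_cup_sum} gives $\int_{\L} \red{\alpha^8_{(i,j)}} \cup \blue{\beta^6_{(j')}} \cup \green{\gamma^2_{(i')}} = \delta_{i,j'}\,\delta_{j,i'}$. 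So each $\green{\gamma^2_{(i')}}$-measurement couples, via \eqref{eq:logical_CZ}, the $\rd$-qubit labeled $(i,i')$ to the $\bl$-qubit labeled $i$ for every $i$ in the $\Theta(\sqrt{n})$-element index set.

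Next I would specify the initialization. Fix any bijection $h$ on the index set of size $\Theta(\sqrt{n})$. In copy $\bl$, initialize all $\Theta(\sqrt n)$ logical qubits in $\lo{\ket{+}}$; in copy $\rd$, initialize the single logical qubit labeled $(i,h(i))$ in each cluster $i$ in $\lo{\ket{+}}$ and put all other $\rd$-logical qubits in $\lo{\ket{0}}$. Because $\lo{Z}\lo{\ket{0}} = \lo{\ket{0}}$, any $\lo{\text{CZ}}$ factor in Eq.~\eqref{eq:logical_CZ} that touches a $\lo{\ket{0}}$ qubit acts as the identity stabilizer on the input, so the logical action induced by the $i'$-th transversal CZ measurement collapses to the single factor $\lo{\text{CZ}}\bigl(\red{(h^{-1}(i'), i')}, \blue{h^{-1}(i')}\bigr)$. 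Bijectivity of $h$ guarantees that the $\Theta(\sqrt{n})$ pairs obtained this way are pairwise disjoint.

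Then I would execute steps 2--4 of the protocol: measure all $\tilde{A}^{\gr}_{\sigma_2}$ dressed $X$-stabilizers through $O(d)$ rounds of syndrome extraction in the twisted phase, ungauge by measuring $Z^{\gr}_{\sigma_3}$, and apply the coboundary correction $\mathcal{R}$. Multiplying the measured eigenvalues along each basis cycle $\green{\gamma^*_{14}}$ yields the outcome $\green{\rho_\gamma}\in\{0,1\}$ of $\widetilde{\text{CZ}}^{\rd,\bl}_{\green{\gamma^*_{14}}}$. Substituting the initialization into Eq.~\eqref{eq:output_logical_state} and using the diagonal intersection pattern, the output state factorizes as a tensor product of $\Theta(\sqrt{n})$ projections onto $(-1)^{\green{\rho_\gamma}}$-eigenspaces of disjoint logical CZs, which, up to the Pauli byproduct operators determined by the classical outcomes, are $\Theta(\sqrt{n})$ disjoint logical CZ magic states in the sense of Refs.~\cite{zhu2023non,zhu2025topological}. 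The distance claim $\Omega(\sqrt{n})$ then follows by combining Lemmas~\ref{lemma:distance_untwisted} and \ref{lemma:distance_twisted}: the pre- and post-gauging codes have subsystem distance $\Omega(\sqrt{n})$, and the intermediate twisted code has subsystem distance $\Omega(\sqrt{\tilde n})=\Omega(\sqrt{n})$ under Statement~\ref{statement:TQFT}, so $O(d)$ rounds of extraction suffice for a standard spacetime decoder to correct any error string of weight below $d/2$.

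The main obstacle I anticipate is the intermediate fault-tolerance analysis: one must argue that measurement noise on the \emph{Clifford} stabilizers $\tilde A^{\gr}_{\sigma_2}$ (which carry dressing CZs) cannot induce low-weight undetectable logical errors during the $O(d)$-round twisted phase. This requires both the subsystem-code bookkeeping of Lemma~\ref{lemma:subsystem} (to quotient out spurious short cycles in the $\gr$-copy quantified in Lemmas \ref{lemma:green_distance_untwisted}--\ref{lemma:green_distance_twisted}) and Statement~\ref{statement:TQFT} (to rule out condensation-descendant logical operators of weight below $\Omega(\sqrt{n})$); these two together are what upgrades the static distance bound of Lemma~\ref{lemma:distance_twisted} to a spacetime distance bound for the gauging circuit. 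A secondary but routine check is that the $\lo{\ket{0}}$ initializations on the deactivated $\rd$ qubits are themselves fault-tolerantly prepared at the full $\Omega(\sqrt n)$ distance, which follows because $\lo{Z}$ representatives on the deactivated qubits live in the same homology basis used to establish $d_Z$ in Lemma~\ref{lemma:distance_untwisted}.
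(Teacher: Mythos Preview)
Your approach matches the paper's: use the K\"unneth/Poincar\'e-duality factorization of the triple intersection to get the diagonal coupling $\int_{\L}\red{\alpha^8_{(i,j)}}\cup\blue{\beta^6_{(j')}}\cup\green{\gamma^2_{(i')}}=\delta_{i,j'}\delta_{j,i'}$, then kill all but one $\rd$-qubit per cluster with $\lo{\ket{0}}$ so that each $\green{\gamma}$-measurement collapses to a single $\lo{\text{CZ}}$ on a disjoint pair. Your explicit bijection $h$ is a clean way to enforce disjointness; the paper leaves this choice implicit but does essentially the same thing.

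There is one genuine gap. Your claim that the two measurement outcomes are related ``up to Pauli byproduct operators'' is false: the $+1$ and $-1$ eigenspaces of $\lo{\text{CZ}}$ have dimensions $3$ and $1$ respectively, so no Pauli (indeed no Clifford) can map one projection to the other. Concretely
\[
\hat P_{+1}\bigl[\lo{\text{CZ}}\bigr]\lo{\ket{++}}\ \propto\ \lo{\ket{00}}+\lo{\ket{01}}+\lo{\ket{10}}
\]
is a genuine non-stabilizer magic state, whereas $\hat P_{-1}\bigl[\lo{\text{CZ}}\bigr]\lo{\ket{++}}\propto\lo{\ket{11}}$ is a stabilizer state and cannot be converted into a magic resource by Clifford feedforward. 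The paper closes the count differently: the disjoint $\lo{\text{CZ}}$ measurements are independent and each returns $+1$ with constant probability, so on average a constant fraction of the $\Theta(\sqrt n)$ outcomes are $+1$, still yielding $\Theta(\sqrt n)$ disjoint CZ magic states per execution. Replace your Pauli-byproduct step with this probabilistic count and the argument goes through.
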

\begin{proof}
According to Eq.~\eqref{eq:cocycle_decomposition} in Sec.\ref{sec:triple_intersection_construction},  we obtain a triple intersection structure of the cocycles $\red{\alpha^8}$, $\blue{\beta^6}$ and $\green{\gamma^2}$ as shown by the hypergraph in Fig.~\ref{fig:interaction-hypergraph}(a).  Each vertex in the hypergraph represents a cocycle class, while each hyper-edge connecting to three vertices represents a non-trivial triple intersection $\int_{\L} \red{\alpha^8} \cup \blue{\beta^6} \cup \green{\gamma^2}=1 $.  In particular, each red vertex corresponds to cocycle class $\red{{\alpha}^{8}} =  \bs^2 \otimes {\bs'^*}^6$, where the tensor decomposition gives rise to two cocycle labels $\bs^2_{(i)}$ and ${\bs'^*}^6_{(j)}$ within each CW complex $\L_x$ and $\L_y$ as tunable knobs.  The red vertices in the hypergraph can hence be grouped into clusters (dashed ellipse), with each cluster labeled by the same $\bs^2_{(i)}$ and vertices in that cluster labeled by $\bs^2_{(j)}$. On the other hand, for the blue and green vertices corresponding to cocycle class $\blue{{\beta}^{6}} = {\bs^*}^6  \otimes  \cs^0$ and $\green{{\gamma}^{2}} = \cs^0 \otimes  \bs'^2$, there is only one varying label ${\bs^*}^6$ and $\bs'^2$ respectively since the CW complex $\L_x$ or $\L_y$ has only a unique 0-cocycle class $\cs^0$. 

  Now we set only one of the red logical qubit (vertex) in each cluster (dashed ellipse) at $\lo{\ket{+}}$ state, and the others in the $\lo{\ket{0}}$ states. All the blue logical qubits (vertices) are set also in the $\lo{\ket{+}}$ states. This gives rise to a subset of $\Theta(\sqrt{n})$ logical qubits initialized at $\lo{\ket{+}}$ states, while the rest are in the $\lo{\ket{0}}$ states. With this choice, for a subset of basis cocycles $\{\green{\gamma'^2}\} \subset \{\green{\gamma^2}\}$, there is a unique pair $(\alpha'^8, \beta'^6)$ that has non-trivial intersection of with $\green{\gamma'^2}$, i.e., $\int_{\L} \red{\alpha'^8} \cup \blue{\beta'^6} \cup \green{\gamma'^2} = 1$. 
We hence get the following output state: 
\begin{align}
\lo{\ket{\psi}}_f =& \prod_{(\red{\alpha'}, \blue{\beta'})} \hat{P}_{\pm 1}\big[ \lo{\text{CZ}}(\red{\alpha'}, \blue{\beta'}) \big] \lo{\ket{++}}_{\red{\alpha'}, \blue{\beta'}}
\end{align}
where all the logical operator $\lo{\text{CZ}}(\red{\alpha'}, \blue{\beta'})$ are supported on disjoint pairs of logical qubits labeled by $(\red{\alpha'}, \blue{\beta'})$.  If the logical CZ measurement outcome  on a logical qubit pair $(\red{\alpha'}, \blue{\beta'})$ is $(-1)^{\green{\rho_\gamma}}=+1$,  we get the output state on these two logical qubits as the logical CZ magic state:
\begin{align}
& \hat{P}_{+ 1}\big[ \lo{\text{CZ}}(\red{\alpha'}, \blue{\beta'}) \big] \lo{\ket{++}}_{\red{\alpha'}, \blue{\beta'}} = \lo{\ket{\text{CZ}}}_{\red{\alpha'}, \blue{\beta'}} \cr
\equiv & \frac{1}{\sqrt{3}}(\lo{\ket{00}}_{\red{\alpha'}, \blue{\beta'}} + \lo{\ket{01}}_{\red{\alpha'}, \blue{\beta'}} + \lo{\ket{10}}_{\red{\alpha'}, \blue{\beta'}}). 
\end{align}
On the other hand, if the logical CZ measurement outcome is $(-1)^{\green{\rho_\gamma}}=-1$, we end up with the logical state:
\be
\hat{P}_{-1}\big[ \lo{\text{CZ}}(\red{\alpha'}, \blue{\beta'}) \big] \lo{\ket{++}}_{\red{\alpha'}, \blue{\beta'}} = \lo{\ket{11}}_{\red{\alpha'}, \blue{\beta'}}.
\ee

In each execution of gauging measurement, we get on average  $\Theta(\sqrt{n})$ $+1$-measurement outcomes, which leads to the preparation of $\Theta(\sqrt{n})$ disjoint CZ magic states. 

\end{proof}

\subsection{Spacetime path integral as logical action}\label{sec:spacetime_path_integral}

\subsubsection{General formalism }

We can generically represent the spacetime path integral as
\be
\mathcal{Z}= \sum_{\vec{\mathbf{c}}} \prod_\sigma w_\sigma(\vec{\mathbf{c}}), 
\ee
with $\vec{\mathbf{c}} \equiv \{ \red{a^p}, \blue{b^q}, \green{c^s} \}$ represent all the cochain (gauge field) configurations.
The value $\cZ$ of the path integral can be considered as the outcome of the tensor contraction in the corresponding spacetime tensor-network equivalent to the state sum over all the cochain (gauge field) configurations.   

More importantly, $\cZ$ has a meaning when evaluated on a spacetime complex $\tilde{\L}$ with an input or output \textit{state boundary} (denoted by $\B_I$ and $\B_F$), placed in the beginning or end of the protocol. The corresponding spacetime tensor-network should be considered as a matrix (tensor) product operator (MPO) $\cT$, which  can be graphically represented as: 
\be
\cT \ = \raisebox{-1.5cm}{\includegraphics[width=0.3\linewidth]{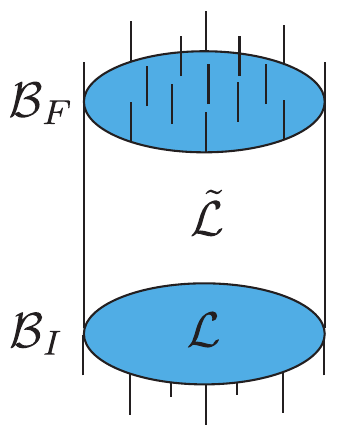}}  \quad ,
\ee
where the time direction goes upward. The above  operator $\T$ is defined on a ``spacetime cylinder" $\tilde{\L}=\L \times I_t$, where the input and output state boundaries corresponding to the space complex $\L$.
Note that there are no summation of the cochain variables at the state boundary, i.e., some tensor legs (shown above) at the boundary have not been contracted.
For each cochain (gauge field) configuration on the state boundary $\vec{\mathbf{c}}_\B \equiv \{a^p|_\B$,  $b^q|_\B,  c^s|_\B \}$, there exists a separate amplitude $\cZ(\vec{\mathbf{c}}_\B)$.   We can hence define an input (output) \textit{boundary state} $\ket{\psi_\B}$ via the evaluation of the path integral with an input (output) state boundary $\B$, with the wavefunction amplitude being $\bket{\vec{\mathbf{c}}_\B}{\psi_\B} = \cZ(\vec{\mathbf{c}}_\B)$.  We can diagrammatically represent the input (ket) and output (bra) boundary states as:
\be
\ket{\psi_{\B_I}}= \raisebox{-0.9cm}{\includegraphics[width=0.3\linewidth]{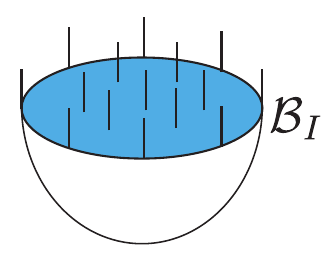}}, \quad 
\bra{\psi_{\B_F}}= \raisebox{-0.9cm}{\includegraphics[width=0.3\linewidth]{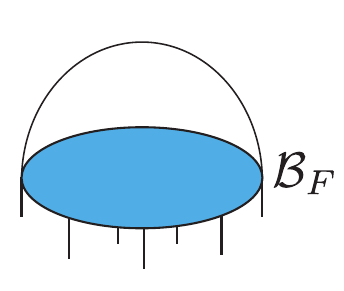}}. 
\ee
Note that here we have generalized the TQFT notion of associating the ket and bra with  open spacetime manifolds $\M$ that have state boundaries.  Here, in the context of qLDPC codes, we associate the ket and bra with spacetime Poincar\'e complexes $\tilde{\L}$ that have state boundaries.

The basis cocycle class of the input and output states has one-to-one correspondence with the logical state in the $Z$ basis.
The $i^\text{th}$ input (initial) and output (final) basis cocycle class are labeled as $I_i$ and $F_i$ respectively. An arbitrary cohomology class for  the input and output states can hence be represented as
\be
[a_\text{in}]= \sum_i u'_i I_i \quad \text{and} \quad [a_\text{out}]= \sum_i u''_i F_i,
\ee
where $u'_i$ and $u''_i$ are the $\Z_2$ coefficients. We can hence label the basis of the input and output logical states as ${\ket{\vec{u}'}}$ and $\ket{\vec{u}''}$ respectively, where $\vec{u}' \equiv (u'_1, u'_2, \cdots )$ and    $\vec{u}'' \equiv (u''_1, u''_2, \cdots )$ are $\Z_2$ vectors.   We can hence use the path integral to define the operator $\T$, whose matrix elements  $\boket{\vec{u}''}{\T}{\vec{u}'}$ will determine the logical action applied between the input and output states.  The matrix element can be diagrammatically represented as:
\be
\boket{\vec{u}''}{\T}{\vec{u}'} \ = \raisebox{-2cm}{\includegraphics[width=0.4\linewidth]{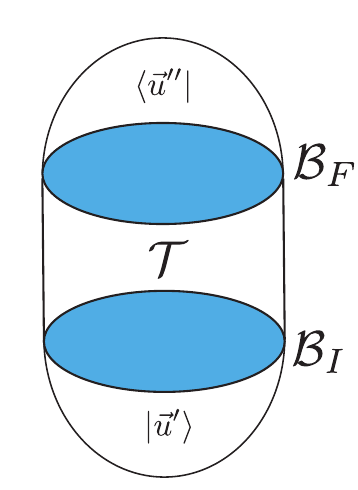}}  \quad,
\ee
where the spacetime complex corresponds to the operator $\mathcal{T}$ is sandwiched by the spacetime complexes corresponding to the  ket and bra. We can see that all the tensors at the state boundaries $\B_I$ and $\B_F$ have been contracted, and we hence end up with a scalar.

In order to determine the matrix, we choose a cocycle basis $\{\alpha_j\}$ depending on the bulk space-time topology.  An arbitrary bulk cohomology class can hence be expressed as $[a_\text{bulk}]=\sum_j l_j \alpha_j$, with $l_j \in \ZZ_2$.   We can hence label the cohomology class via the $\Z_2$ vector $\vec{l} = (l_1, l_2, \cdots)$. We then define the input restriction matrix $\mathcal{I}$,  where the restriction of $\alpha_j$ to the $i^\text{th}$ input cocycle class $I_i$ is captured by the $\ZZ_2$-valued  matrix elements $\mathcal{I}_{i,j}$.  Similarly, one can define the output restriction matrix $\mathcal{F}_{i,j}$, where $\alpha_j$ is restricted to the $i^\text{th}$ output cocycle class $F_i$.   We can hence evaluate the matrix element via the sum of the path integral: 
\be
\boket{\vec{u}''}{\T}{\vec{u}'} =  \sum_{\vec{l}: \  \mathcal{I}\cdot \vec{l}= \vec{u}', \  \mathcal{F}\cdot \vec{l}= \vec{u}''  } \cZ\big(\ \vec{l} \ \big). 
\ee

In general, the operator $\mathcal{T}$ is not a unitary, but acts as a quantum channel which depends on the measurement outcomes.  In the example which we consider, there will be non-trivial relative basis cocycles $C_i$ (charge worldsheet) connecting the input and output state boundaries and an arbitrary class can be expended with the basis as $[a_C]= \sum_i \rho_i C_i $.  The coefficients form a $\ZZ_2$ vector $
\vec{\mathbf{\rho}} =(\rho_1, \rho_2, \cdots)$.  The path-integral value,  denoted by $\cZ^{\vec{\mathbf{\rho}}}(\vec{l})$, hence depends on the measurement outcome $\vec{\mathbf{\rho}}$ of the non-trivial cocycle class $[a_C]$.  Therefore, the logical action corresponds to a projective measurement defined by a collection of operators ${\T}^{\vec{\mathbf{\rho}}}$, with the corresponding matrix elements being
\be
\boket{\vec{u}''}{{\T}^{\vec{\mathbf{\rho}}}}{\vec{u}'} =  \sum_{\vec{l}: \  \mathcal{I}\cdot \vec{l}= \vec{u}', \  \mathcal{F}\cdot \vec{l}= \vec{u}''  } \cZ^{\vec{\mathbf{\rho}}}\big(\ \vec{l} \ \big). 
\ee

\subsubsection{Imaginary-time vs real-time evolution: relation to quantum error correction}

Here, we briefly discuss the physical interpretation of the path integral.  More systematic study can be found in e.g.~the textbook \cite{Peskin:1995ev}.  The spacetime path integral we study in this paper can also be interpreted  as an imaginary-time evolution of a local quantum Hamiltonian in the literature \cite{Tsui:2019ykk}, and we hence also call it an \textit{imaginary-time path integral}.\footnote{
In the previous expression of topological actions we use Lorentzian signature; in the Euclidean signature with imaginary time, they have additional $i$ since it is odd under parity: $S_E= -iS$, and the path integral weight in the Euclidean signature $e^{-S_E}=e^{iS}$ is the same in both convention.
} For example, as shown in Ref.~\cite{Tsui:2019ykk}, one can express the matrix element introduced above as a imaginary-time evolution for time $T$ as follow:
\be
\boket{\vec{u}''}{\T}{\vec{u}'} = \boket{\vec{u}''}{e^{-TH_{\infty}}}{\vec{u}'},
\ee
where $H_{\infty}$ is a Hamiltonian with eigenvalue 0 and $\infty$, i.e., have an infinite gap.  Therefore, $\T$ corresponds to a projector with eigenvalues 0 and 1,  which projects to the ground state subspace of $H_{\infty}$. One can view the imaginary-time path integral as a non-unitary circuit composed of local unitary gates and projectors of the stabilizer generators to the +1 sectors with the form $P_{\sigma}=\frac{1}{2}(1+S_{\sigma})$, where $S_{\sigma}$ is a stabilizer generator defined on a cell $\sigma$.  The imaginary time evolution hence corresponds to a post-selected time evolution such that all the stabilizer measurements have +1 outcome. 

Now the actual quantum error correction (QEC) process is a real-time evolution 
corresponding to a quantum channel where we replace the projector in the imaginary-time path integral (tensor network) with stabilizer (syndrome) measurement, which can have $\pm 1$ outcome instead of just +1.  The $-1$ outcomes of the stabilizers correspond to the presence of \textit{topological defects} (here defined as the worldsheet of the excitations), and can also be captured by the \textit{defect-decorated imaginary-time path integral} such as Eq.~\eqref{eq:path_integral_defect}.

We now consider how to perform the error correction.   If the syndrome outcomes in the history are all $+1$, then we have successfully implemented the defect-free imaginary-time path integral.  In other cases, there exist some $-1$ syndromes in the history, which means we have implemented certain defect-decorated imaginary-time path integral.   The syndrome histories are then passed to the decoder which outputs a recovery operation $\R$ to close the defects.   If the decoder succeeds, the defects are closed in a homologically trivial way. This gives rise to a defect-decorated path integral which equals to the defect-free path integral, due to the topological invariance of the defects. On the other hand, if the defects are closed in a homological non-trivial way, a logical error occurs.        

From the above discussion, we can see that although the defect-free imaginary-time path integral corresponds to a post-selected process, the evaluation of it can still give the correct logical action which is the same as the logical action of the actual error-corrected process.  Therefore, we will derive the logical action via the imaginary-time path integral later in Sec.~\ref{sec:derivation_logical_action}.

More interestingly, as elaborated in Refs.~\cite{Bauer:2023awl,Bauer:2024qpc,Bauer:2024alh,Davydova:2025ylx}, the path integral such as Eq.~\eqref{eq:path_inegral_cup} can be used to construct the non-unitary quantum circuit of a fault-tolerant quantum protocol, including the unitary gates and measurements. This includes not only the special case of the standard Pauli stabilizer codes, but also more general spacetime codes such as the Floquet codes \cite{Hastings:2021ptn,Bauer:2024alh}. In both cases the corresponding quantum circuit is a Clifford circuit along with Pauli measurement.  
One can hence obtain a more unified framework to view a wide range of fault-tolerant protocols as a spacetime code, including the time evolution of a fault-tolerant memory as well as performing logical actions in the context of both stabilizer and Floquet codes.  In this unified picture, the spacetime path integral can serve as the major language,  which can be used to  both describe the error correction process and derive the logical action.  In the context of this paper, we have further generalized the spacetime code framework to Clifford-stabilizer codes, which now corresponds to a non-Clifford quantum circuit.   Although we have described our protocol in Sec.~\ref{sec:protocol}  in the context of a Clifford stabilizer LDPC code, the path integral in Eq.~\eqref{eq:path_inegral_cup} can be used to straightforwardly extend the current protocol also to the variation using a corresponding Floquet spacetime LDPC codes. 

Note that in this paper we have not derived the stabilizer measurement circuit directly from the path integral, but instead uses a gauging procedure to derive it.   However, such derivation from path integral has be carried out in Ref.~\cite{Davydova:2025ylx} in the context of $D_4$ non-Abelian topological code.   Such a procedure can be straightforwardly generalized to here since the path integral has a similar form.

\subsubsection{Derivation of the logical action from cellular path integral}\label{sec:derivation_logical_action}

Here we derive the logical action mentioned above using the cellular path integral in spacetime.   

The input of the protocol is two independent copies of qLDPC codes (\rd \ and \bl) as a $\red{\ZZ_2} \times \blue{\ZZ_2}$ gauge theory.   The cocycle basis for the input and output boundary states are denoted by $\{\red{I}_{\red{\alpha'}}\}$, $\{\blue{I}_{\blue{\beta'}}\}$ and $\{\red{F}_{\red{\alpha''}}\}$, $\{\blue{F}_{\blue{\beta''}}\}$ respectively.   After the code switching when going across the spacetime boundary, we enter the non-Abelian qLDPC codes corresponding to the twisted $\red{\ZZ_2} \times \blue{\ZZ_2} \times  \green{\ZZ_2}$ gauge theory.  As we have shown before, the relevant cocycle basis in the twisted gauge theory is $\{\red{\tilde{\alpha}^8}\}$,  $\{\blue{\tilde{\beta}^6}\}$ and $\{\green{\tilde{\gamma}^3}\}$.    We hence have the following input restriction matrix elements:  
\be
\mathcal{I}_{\red{\alpha', \alpha}} = \delta_{\red{\alpha', \alpha}}, \quad      \mathcal{I}_{\blue{\beta', \beta}} = \delta_{\blue{\beta', \beta}}, \quad \text{all} \ 0 \  \text{otherwise}.   
\ee
The Kronecker delta originates from the fact that the bulk cocycle class $[\red{\tilde{\alpha}^8}]$ and $[\blue{\tilde{\beta}^6}]$ are only connected to the input cocycle classes $[\red{I}_{\red{\alpha}}]$ and $[\blue{F}_{\blue{\beta}}]$ respectively, i.e., their corresponding representatives are identified at the input spacetime boundary $\mathcal{B}_I$:   $\red{\tilde{\alpha}^8}|_{\mathcal{B}_I} = \red{I}_{\red{\alpha}}$, $\blue{\tilde{\beta}^6}|_{\mathcal{B}_I}= \blue{I}_{\blue{\beta}}$.  
Similarly, we have the output restriction matrix elements: 
\be
\mathcal{F}_{\red{\alpha'', \alpha}} = \delta_{\red{\alpha'', \alpha}}, \quad      \mathcal{F}_{\blue{\beta'', \beta}} = \delta_{\blue{\beta'', \beta}}, \quad \text{all} \ 0 \  \text{otherwise},   
\ee
where we have the identification of the output spacetime boundary $\mathcal{B}_F$:
$\red{\tilde{\alpha}^8}|_{\mathcal{B}_F} = \red{F}_{\red{\alpha}}$, $\blue{\tilde{\beta}^6}|_{\mathcal{B}_F}= \blue{F}_{\blue{\beta}}$.  

There also exists relative basis cocycles $\green{{\tilde{\gamma^*}}^{14}}$  of the  bulk charge world-sheet of the $\gr$-type terminating at the input and output state boundaries, which are the dual of the basis cocycles  $\green{{\gamma}^{4}}$ satisfying the property that $\int_{\tilde{\L}} \green{\tilde{\gamma}^3} \cup \green{{\tilde{\gamma^*}}^{14}}$$=$$1$.  An arbitrary relative cocycle $\green{\zeta^{14}}$ corresponding to the open charge worldsheet ending in the input and output state boundaries  can be expanded with this basis as $\green{\zeta^{14}}$$=$$\sum_{\green{\gamma}}\green{\rho_\gamma} $$\cdot$$\green{{\tilde{\gamma^*}}^{14}}$, where we have also truncated $\green{{\tilde{\gamma^*}}^{14}}$ to relative cycle terminated at the state boundaries.  We denote the collection of all the measurement results as $\green{\vec{\rho}}=\{\green{\rho_{\gamma}}\}$.    The path integral weight  conditioned on the measurement results for a given bulk homology class parameterized by the winding number $\red{\vec{n}} \equiv \{\red{n_\alpha}\}$, $\blue{\vec{m}} \equiv \{\blue{m_\beta}\}$ and $\green{\vec{l}} \equiv \{\green{l_\gamma}\}$  can hence be expressed as
\be\label{eq:Z_component}
\mathcal{Z}^{\green{\vec{\rho}}}(\red{\vec{n}}, \blue{\vec{m}},  \green{\vec{l}}) = \prod_{\green{\gamma}} (-1)^{\green{\rho_\gamma} \cdot \green{l_\gamma} } \cdot \prod_{\red{\alpha}, \blue{\beta}}   \big[(-1)^{\red{n_\alpha} \cdot \blue{m_\beta} \cdot \green{l_\gamma}} \big]^{\int_{\tilde{\L}} \red{\tilde{\alpha}^8} \cup \blue{\tilde{\beta}^6} \cup \green{\tilde{\gamma}^3} }  
\ee
The matrix elements of the logical action can be expressed as the some of these weights over all the bulk cohomology classes satisfying the incidence conditions at both the input and output state boundaries:
\be
\boket{\red{\vec{u''}}, \blue{\vec{v''}}}{{\T}^{\green{\vec{\rho}}}}{\red{\vec{u'}}, \blue{\vec{v'}}} =
\sum_{ \substack{\red{\vec{n}}, \blue{\vec{m}},  \green{\vec{l}}: \   \mathcal{I} \cdot (\red{\vec{n}}, \blue{\vec{m}})= (\red{\vec{u'}}, \blue{\vec{v'}  }), \\  \qquad  \ \  \mathcal{F} \cdot (\red{\vec{n}},  \blue{\vec{m}}) = (\red{\vec{u''}}, \blue{\vec{v''}  })}}  \mathcal{Z}^{\green{\vec{\rho}}}(\red{\vec{n}}, \blue{\vec{m}},  \green{\vec{l}} )
\ee

We can see that if $(\red{\vec{u'}}, \blue{\vec{v'}  }) \neq (\red{\vec{u''}}, \blue{\vec{v''}  }) $, there is no $\red{\vec{n}}$ and $\blue{\vec{m}}$ that satisfies the incidence conditions at the input and output state boundaries, and hence only have zero contribution. Therefore the off-diagonal matrix elements of ${\T}^{\green{\vec{\rho}}}$ are all zeros. Meanwhile, for the diagonal matrix elements, i.e., $(\red{\vec{n}},  \blue{\vec{m}})=(\red{\vec{u'}}, \blue{\vec{v'}  }) = (\red{\vec{u''}}, \blue{\vec{v''}})$, we have fixed the the particular cohomology classes on the {\rd} and {\bl} copies, i.e., $(\red{\vec{n}},  \blue{\vec{m}})$.  In this case, we are only summing over the cohomology class in the {\gr} copy captured by $\green{\vec{l}} \equiv \{\green{l_\gamma} \in \ZZ_2 \}$. For each component $\green{l_\gamma}$, we are summing over two values   $\green{l_\gamma}= 0, 1$.   We can hence express the diagonal matrix elements as 
\begin{align}
& \boket{\red{\vec{m}}, \blue{\vec{n}}}{{\T}^{\green{\vec{\rho}}}}{\red{\vec{m}}, \blue{\vec{n}}} \cr
=& \sum_{\green{\vec{l}}=\{\green{l_\gamma}\}} \prod_{\green{\gamma}} (-1)^{\green{\rho_\gamma} \cdot \green{l_\gamma} } \cdot \prod_{\red{\alpha}, \blue{\beta}}   \big[(-1)^{\red{n_\alpha} \cdot \blue{m_\beta} \cdot \green{l_\gamma}} \big]^{\int_{\tilde{\L}} \red{\tilde{\alpha}^8} \cup \blue{\tilde{\beta}^6} \cup \green{\tilde{\gamma}^3} }  \cr
=&  \prod_{\green{\gamma}} \sum_{\green{l_\gamma}} (-1)^{\green{\rho_\gamma} \cdot \green{l_\gamma} } \cdot \prod_{\red{\alpha}, \blue{\beta}}   \big[(-1)^{\red{n_\alpha} \cdot \blue{m_\beta} \cdot \green{l_\gamma}} \big]^{\int_{\tilde{\L}} \red{\tilde{\alpha}^8} \cup \blue{\tilde{\beta}^6} \cup \green{\tilde{\gamma}^3} } \cr
\equiv &\prod_{\green{\gamma}} \boket{\red{\vec{m}}, \blue{\vec{n}}}{\tilde{\T}^{\green{\rho_\gamma}}_{\green{\gamma}}}{\red{\vec{m}}, \blue{\vec{n}}}, 
\end{align}
where we have factorized the diagonal operator ${\T}^{\green{\vec{\rho}}}$ into a product of diagonal operators:  ${\T}^{\green{\vec{\rho}}}= \prod_{\green{\gamma}} \tilde{\T}^{\green{\rho_\gamma}}_{\green{\gamma}}$. We can hence analyze the matrix elements of each factor operator, i.e., $\boket{\red{\vec{m}}, \blue{\vec{n}}}{\tilde{\T}^{\green{\rho_\gamma}}_{\green{\gamma}}}{\red{\vec{m}}, \blue{\vec{n}}}$.  

We discuss the situations of the two different measurement outcome $\green{\rho_\gamma}$:
\begin{enumerate}
\item 
For $\green{\rho_\gamma}=0$,   we have 
\begin{align}
&\boket{\red{\vec{m}}, \blue{\vec{n}}}{\tilde{\T}^{\green{0}}_{\green{\gamma}}}{\red{\vec{m}}, \blue{\vec{n}}}  \cr
=& \sum_{\green{l_\gamma}=0,1} \prod_{\red{\alpha}, \blue{\beta}}   \big[(-1)^{\red{n_\alpha} \cdot \blue{m_\beta} \cdot \green{l_\gamma}} \big]^{\int_{\tilde{\L}} \red{\tilde{\alpha}^8} \cup \blue{\tilde{\beta}^6} \cup \green{\tilde{\gamma}^3} }    \cr
=& 1 + \prod_{\red{\alpha}, \blue{\beta}}   \big[(-1)^{\red{n_\alpha} \cdot \blue{m_\beta} } \big]^{\int_{\tilde{\L}} \red{\tilde{\alpha}^8} \cup \blue{\tilde{\beta}^6} \cup \green{\tilde{\gamma}^3} }  \cr
=&  \boket{\red{\vec{m}}, \blue{\vec{n}}}{ \Big(1 +  \prod_{\red{\alpha}, \blue{\beta}} \lo{\text{CZ}}(\red{\alpha}, \blue{\beta})^{\int_{\tilde{\L}} \red{\tilde{\alpha}^8} \cup \blue{\tilde{\beta}^6} \cup \green{\tilde{\gamma}^3} } \Big) }{\red{\vec{m}}, \blue{\vec{n}}}  \cr
\propto& \boket{\red{\vec{m}}, \blue{\vec{n}}}{\hat{P}_{+1}\big[\prod_{\red{\alpha}, \blue{\beta}} \lo{\text{CZ}}(\red{\alpha}, \blue{\beta})^{\int_{\tilde{\L}} \red{\tilde{\alpha}^8} \cup \blue{\tilde{\beta}^6} \cup \green{\tilde{\gamma}^3} } \big] }{\red{\vec{m}}, \blue{\vec{n}}}.  \cr
\end{align}
Note that in the last line we have used `$\propto$' to  ignore the difference of a $\frac{1}{2}$ normalization constant, and one should keep in mind that our definition of $\tilde{\T}^{\green{0}}_{\green{\gamma}}$ has not been properly normalized, which is not really important here.  Here, $\hat{P}_{+1}\big[\prod_{\red{\alpha}, \blue{\beta}} \lo{\text{CZ}}(\red{\alpha}, \blue{\beta})^{\int_{\tilde{\L}} \red{\tilde{\alpha}^8} \cup \blue{\tilde{\beta}^6} \cup \green{\tilde{\gamma}^3} } \big]$ represents the projector to the $(+1)$-eigenstate of a product of logical CZ  acting on logical qubits with their logical-$X$ support satisfying the triple intersection condition $\int_{\tilde{\L}} \red{\tilde{\alpha}^8} \cup \blue{\tilde{\beta}^6} \cup \green{\tilde{\gamma}^3}=1$.  Note that the expression is slightly different from the projector in Eq.~\eqref{eq:output_logical_state}, since here we use the spacetime cocycles on the spacetime complex $\tilde{\L}=\L \otimes I_t$ such as $\green{\tilde{\gamma}^3}$ instead of the cocycles on the space complex $\tilde{\L}$ such as $\green{\gamma^2}$.   
This projector can also be rewritten into the form $  {\hat{P}_{+ 1}\big[\widetilde{\text{CZ}}^{\rd, \bl}_{\green{\gamma^*_{14}}}\big] }$, which is the projector
onto the $(+1)$-eigenstate of the transversal CZ operator between copy $\rd$ and $\bl$ supported on the cycle $\green{\gamma^*_{14}}$ of either the space complex $\L$ or the spacetime complex $\tilde{\L}$.

\item 
For $\green{\rho_\gamma}=1$,   we have 
\begin{align}
&\boket{\red{\vec{m}}, \blue{\vec{n}}}{\tilde{\T}^{\green{1}}_{\green{\gamma}}}{\red{\vec{m}}, \blue{\vec{n}}}  \cr
=& \sum_{\green{l_\gamma}=0,1} (-1)^{\green{l_\gamma}} \cdot \prod_{\red{\alpha}, \blue{\beta}}   \big[(-1)^{\red{n_\alpha} \cdot \blue{m_\beta} \cdot \green{l_\gamma}} \big]^{\int_{\tilde{\L}} \red{\tilde{\alpha}^8} \cup \blue{\tilde{\beta}^6} \cup \green{\tilde{\gamma}^3} }    \cr
=& 1 - \prod_{\red{\alpha}, \blue{\beta}}   \big[(-1)^{\red{n_\alpha} \cdot \blue{m_\beta} } \big]^{\int_{\tilde{\L}} \red{\tilde{\alpha}^8} \cup \blue{\tilde{\beta}^6} \cup \green{\tilde{\gamma}^3} }  \cr
=&  \boket{\red{\vec{m}}, \blue{\vec{n}}}{ \Big(1 - \prod_{\red{\alpha}, \blue{\beta}} \lo{\text{CZ}}(\red{\alpha}, \blue{\beta})^{\int_{\tilde{\L}} \red{\tilde{\alpha}^8} \cup \blue{\tilde{\beta}^6} \cup \green{\tilde{\gamma}^3} } \Big) }{\red{\vec{m}}, \blue{\vec{n}}}  \cr
\propto& \boket{\red{\vec{m}}, \blue{\vec{n}}}{\hat{P}_{-1}\big[\prod_{\red{\alpha}, \blue{\beta}} \lo{\text{CZ}}(\red{\alpha}, \blue{\beta})^{\int_{\tilde{\L}} \red{\tilde{\alpha}^8} \cup \blue{\tilde{\beta}^6} \cup \green{\tilde{\gamma}^3} } \big] }{\red{\vec{m}}, \blue{\vec{n}}}.  \cr
\end{align}
Here, we have $\hat{P}_{-1}[\bullet]$ represents the project of the product of logical CZ satisfying the triple intersection.  Again, it can be expressed as the projector onto the ($-1$)-eigenstate of the transversal CZ operator on $\green{\gamma^*_{14}}$, i.e., $  {\hat{P}_{-1}\big[\widetilde{\text{CZ}}^{\rd, \bl}_{\green{\gamma^*_{14}}}\big] }$.
\end{enumerate}

In sum, we see the logical action ${\T}^{\green{\vec{\rho}}}$ is a diagonal operator which can be expressed as a product of projectors onto the logical CZ operators satisfying the triple intersection condition:
\begin{align}\label{eq:logical_action_form}
{\T}^{\green{\vec{\rho}}}  =& \prod_{\green{\gamma}} {\hat{P}_{1-2\green{\rho_\gamma}}\big[\prod_{\red{\alpha}, \blue{\beta}} \lo{\text{CZ}}(\red{\alpha}, \blue{\beta})^{\int_{\tilde{\L}} \red{\tilde{\alpha}^8} \cup \blue{\tilde{\beta}^6} \cup \green{\tilde{\gamma}^3} } \big] }, \cr
\equiv&  \prod_{\green{\gamma}}   \hat{P}_{1-2\green{\rho_\gamma}}\big[\widetilde{\text{CZ}}^{\rd, \bl}_{\green{\gamma^*_{14}}}\big]
\end{align}
where $1-2\green{\rho_\gamma}=\pm 1$ for $\green{\rho_{\gamma}} \in \{0, 1\}$. It is also equivalent to the product of projectors onto the $\pm 1$ eigenstate of the transversal operator $\widetilde{\text{CZ}}^{\rd, \bl}_{\green{\gamma^*_{14}}}$ as shown above in the second line.

\subsection{Pullback to the skeleton 2D chain complex and gauging measurement of the 0-form subcomplex symmetries}\label{sec:pullback_protocol}

In this subsection, we translate the above gauging measurement protocol for the CW complex to the skeleton 2D chain complex.  

Instead of measuring the transversal CZ gate as a higher-form symmetry operator in Eq.~\eqref{eq:measure_higher-form},  we will perform gauging measurement of the following 0-form subcomplex symmetry operator: 
\begin{align}
\widetilde{\text{CZ}}^{\rd, \bl}_{\green{\bar \gamma^0}} \equiv \widetilde{\text{CZ}}^{\rd, \bl}_{\green{\bar\gamma^*_{2}}} 
= (-1)^{\int_{\green{\bar\gamma^*_{2}}} \red{\hat{\bar a}^1} \cup \blue{\hat{\bar b}^1}  } \equiv (-1)^{\int \red{\hat{\bar a}^1} \cup \blue{\hat{\bar b}^1} \cup \green{\bar \gamma^0} },
\end{align}
as has been introduced in Sec.~\ref{sec:0-form}. By expressing the operator-valued cochain with the cocycle basis $\{\red{\bar\alpha^1}\}$ and $\{\blue{\bar \beta^1}\}$:  $\red{\hat{\bar a}^1}$$=$$ \sum_{\red{\alpha}}\red{\hat{n}_\alpha \bar\alpha^1}$ and $\blue{\hat{\bar b}^1}$$=$$\sum_{\blue{\beta}}\blue{\hat{m}_\beta \bar\beta^1}$, we can re-write the transversal CZ operator as:
\begin{align}\label{eq:logical_CZ}
\widetilde{\text{CZ}}^{\rd, \bl}_{\green{\gamma^*_{2}}} =&  \prod_{\red{\alpha}, \blue{\beta}} (-1)^{\int_{\green{\gamma^*_{2}}} \red{\hat{n}_\alpha \bar \alpha^1} \cup \blue{\hat{m}_\beta \bar \beta^1} } = \prod_{\red{\alpha}, \blue{\beta}} [(-1)^{\red{\hat{n}_{\alpha}} \cdot \blue{\hat{m}_\beta}}]^{\int_{\green{\gamma^*_{2}}} \red{ \bar \alpha^1} \cup \blue{ \bar \beta^1} }  \cr
=& \prod_{\red{\alpha}, \blue{\beta}} \lo{\text{CZ}}(\red{\alpha}, \blue{\beta})^{\int_{\green{\gamma^*_{2}}} \red{ \bar \alpha^1} \cup \blue{\bar  \beta^1} }  
\equiv \prod_{\red{\alpha}, \blue{\beta}} \lo{\text{CZ}}(\red{\alpha}, \blue{\beta})^{\int \red{\bar \alpha^1} \cup \blue{\bar \beta^1} \cup \green{\bar \gamma^0} }.
\end{align}
Note that the triple cup product (triple intersection)  in the exponent can be evaluated non-trivially as $\int \red{\bar \alpha^1}$$\cup $$\blue{\bar \beta^1}$$\cup$$\green{\bar \gamma^0}$$=$$1$ for certain triples according to Eq.~\eqref{eq:triple_cocycle_evaluation_spatial}.
The gauging measurement protocol in Sec.~\ref{sec:protocol} can be easily adapted to the case of the skeleton chain complex by simplying pulling back all the cocycles and cup product. 
The output state of this protocol hence has the following general form:
\begin{align}\label{eq:output_logical_state_skeleton}
\lo{\ket{\psi}}_f =& \prod_{\green{\gamma}} \hat{P}_{1-2\green{\rho_\gamma}}\big[\widetilde{\text{CZ}}^{\rd, \bl}_{\green{\gamma^*_{2}}}\big] \lo{\ket{\psi}}_i   \cr
=& \prod_{\green{\gamma}} \hat{P}_{1-2\green{\rho_\gamma}}\big[\prod_{\red{\alpha}, \blue{\beta}} \lo{\text{CZ}}(\red{\alpha}, \blue{\beta})^{\int  \red{\bar \alpha^1} \cup \blue{\bar \beta^1} \cup \green{\bar \gamma^0} } \big]\lo{\ket{\psi}}_i, \cr 
\end{align}
which is nothing but a translation of Eq.~\eqref{eq:output_logical_state}.

For the derivation from the spacetime path integral, we have the following path integral weight
\be\label{eq:Z_component_skeleton}
\mathcal{Z}^{\green{\vec{\rho}}}(\red{\vec{n}}, \blue{\vec{m}},  \green{\vec{l}}) = \prod_{\green{\gamma}} (-1)^{\green{\rho_\gamma} \cdot \green{l_\gamma} } \cdot \prod_{\red{\alpha}, \blue{\beta}}   \big[(-1)^{\red{n_\alpha} \cdot \blue{m_\beta} \cdot \green{l_\gamma}} \big]^{\int \red{\tilde{ \alpha}^1} \cup \blue{\tilde{\beta}^1} \cup \green{\tilde{\gamma}^1} },  
\ee
which is a translation of Eq.~\eqref{eq:Z_component}. One can hence derive the final logical action as follows:
\begin{align}\label{eq:logical_action_form_skeleton}
{\T}^{\green{\vec{\rho}}}  =& \prod_{\green{\gamma}} {\hat{P}_{1-2\green{\rho_\gamma}}\big[\prod_{\red{\alpha}, \blue{\beta}} \lo{\text{CZ}}(\red{\alpha}, \blue{\beta})^{\int \red{\tilde{\alpha}^1} \cup \blue{\tilde{\beta}^1} \cup \green{\tilde{\gamma}^1} } \big] }, \cr
\equiv&  \prod_{\green{\gamma}}   \hat{P}_{1-2\green{\rho_\gamma}}\big[\widetilde{\text{CZ}}^{\rd, \bl}_{\green{\bar{\gamma}^*_{2}}}\big],
\end{align}
which is translated from Eq.~\eqref{eq:logical_action_form}.

\section{Non-Abelian fusion and braiding statistics}
\label{sec:nonabelianfusionbraiding}
\subsection{Non-Abelian fusion rules}

As discussed in \cite{Hsin:2024pdi}, we can compute the fusion rule for the magnetic operators using the expressions (\ref{eqn:magneticoprprojection0}). The magnetic operators obey non-Abelian fusion due to the projection:
\begin{align}
    M^{\rd}(\Sigma_{\ds-p})\times M^{\rd}(\Sigma_{\ds-p}) = \sum_{\substack{\eta_q\in H_q(\Sigma_{\ds-p},\Z_2) \\\eta_s\in H_s(\Sigma_{\ds-p},\Z_2)}} W^{\bl}(\eta_q) W^{\gr}(\eta_s)~,
    \label{eq:fusing M red}
\end{align}
\begin{align}
    M^{\bl}(\Sigma_{\ds-q})\times M^{\bl}(\Sigma_{\ds-q}) = \sum_{\substack{\eta_p\in H_p(\Sigma_{\ds-q},\Z_2) \\\eta_s\in H_s(\Sigma_{\ds-q},\Z_2)}} W^{\rd}(\eta_p) W^{\gr}(\eta_s)~,
\end{align}
\begin{align}
    M^{\gr}(\Sigma_{\ds-r})\times M^{\gr}(\Sigma_{\ds-r}) = \sum_{\substack{\eta_p\in H_p(\Sigma_{\ds-r},\Z_2) \\\eta_q\in H_q(\Sigma_{\ds-r},\Z_2)}} W^{\rd}(\eta_p) W^{\bl}(\eta_q)~.
\end{align}
For instance, the fusion rule \eqref{eq:fusing M red} implies that the magnetic operators $M^{\rd}$ fuse into a sum of electric operators supported at cycles on $\Sigma_{\ds-p}$. This generalizes the fusion rule of magnetic flux excitations in non-Abelian $\mathbb{D}_4$ topological order in (2+1)D.

\subsection{Borromean ring braiding}
Let us show that the theory has Borromean ring braiding, which is characteristic of non-Abelian topological order. The Borromean ring braiding of extended excitations is discussed in \cite{Hsin:2024pdi}. Here, we will present several approaches to Borromean braiding statistics.

\subsubsection{Path integral}
In the continuum limit, the Borromean type of braiding process can be captured as follows. Consider spherical space\footnote{
The braiding of topological operators are local processes that can already be detected with ambient trivial topology.
} with operators supported on the contractible cycles $\tilde \Sigma_1,\tilde \Sigma_2,\cdots,\tilde \Sigma_N$ of $n_1,n_2,\cdots n_N$ dimensions in spacetime dimension $D$.
The contractible cycles in spherical space are bounded by chains $\tilde \Sigma_i=\partial \tilde V_i$. The Poincar\'e dual of the chain $\tilde V_i$ is a $(D-n_i-1)$-cochain $\text{PD}(\tilde V_i)$. For the set of contractible cycles with dimensions $n_i$ satisfying
\begin{equation}
    \sum (D-n_i-1)=D\;\Rightarrow \; \sum n_i =(N-1)D-N~,
\end{equation}
the Borromean ring braiding can be captured by 
\begin{equation}
    \int \cup_i  \text{PD}(\tilde V_i)~.
\end{equation}

In our case at hand, we want to compute the Borromean ring braiding for the twisted higher-form $\Z_2^3$ gauge theory with action:
\begin{align}
 S=   \pi\int\red{\tilde a^u}\cup  \red{d a^p} +\blue{\tilde b^v}\cup  \blue{d b^q} + \green{\tilde c^w}\cup \green{d c^s} +
    \red{a^p}\cup\blue{b^q}\cup \green{c^s}
\end{align}
where $\red{a^p},\blue{b^p},\green{c^s}$ are $\Z_2$ $p,q,s$-forms (cocycles), and $p+q+s=D$ with $D$ the spacetime dimensions. We want to compute the Borromean-like braiding of the magnetic operators that are supported on contractible cycles of dimension $n_1=D-p-1$, $n_2=D-q-1$, $n_3=D-s-1$ so $\sum n_i=2D-3$. The discussion is similar to e.g. \cite{Putrov:2016qdo} in $D=3,p=q=s=1$.

Let us show that there is a nontrivial link of three magnetic operators satisfying the following two properties:
\begin{itemize}
\item Any two of three magnetic operators are unlinked, while all three defects together form a non-trivially linked configuration.
\item The correlation function of the magnetic defects $M^{\rd},M^{\bl}, M^{\gr}$ evaluated on this Borromean link takes the value $(-1)$.
\end{itemize}

For illustration of a link, let us consider Euclidean spacetime, and construct a nontrivial link of three magnetic operators $M^{\rd}, M^{\bl}, M^{\gr}$ with the correlation function $(-1)$, which becomes an invariant of the link. These operators have $(D-p-1),(D-q-1),(D-s-1)$ dimensions respectively.

For convenience, we assign the coloring to the $D$ spacetime coordinates $x_1,\dots, x_D$ the colors: $x_j$ with $\red{1\le j\le p}$ are red, $\blue{p+1\le j\le p+q}$ are blue, $\green{p+q+1\le j\le D}$ are green. Then consider three hypercubes:
\begin{align}
\begin{split}
    \tilde C^{\rd}&:= \{x^{\rd} = 0, -L\le x^{\bl}\le L, -L\le x^{\gr}\le L \} \\
    \tilde C^{\bl}&:= \{x^{\bl} = 0, -L\le x^{\rd}\le L, -L\le x^{\gr}\le L \} \\
    \tilde C^{\gr}&:= \{x^{\gr} = 0, -L\le x^{\rd}\le L, -L\le x^{\bl}\le L \} \\
\end{split}
\end{align}
where $x^{\rd}$ is for all red coordinates, and same for other colors.
Then define the boundary of these hypercubes: 
\begin{align}
    \tilde \Sigma^{\rd} = \partial \tilde C^{\rd}~, \ \tilde \Sigma^{\bl} = \partial \tilde C^{\bl}~, \ \tilde \Sigma^{\gr} = \partial \tilde C^{\gr}~,   \end{align}
which are closed $(D-p-1),(D-q-1),(D-s-1)$ cycles respectively.

Then we want to consider a link of three operators 
\begin{align}
    M^{\rd}(\tilde\Sigma^{\rd})~, \ M^{\bl}(\tilde\Sigma^{\bl})~, \ M^{\gr}(\tilde\Sigma^{\gr})
\end{align}
But in order to form a link, we need to resolve the overlap of three membranes $\tilde\Sigma^{\rd}, \tilde\Sigma^{\bl}, \tilde\Sigma^{\gr}$. Let us consider the overlap of two cycles $\tilde\Sigma^{\rd}, \tilde\Sigma^{\bl}$. Its overlap $\tilde\Sigma^{\rd}\cap \tilde\Sigma^{\bl}$ is given by $\tilde\Sigma^{\rd\bl}:=\partial \tilde C^{\rd\bl}$, with
\begin{align}
    \tilde C^{\rd\bl} := \{x^{\rd} = x^{\bl} = 0, -L\le x^{\gr}\le L \}
\end{align}
Similarly, $\tilde \Sigma^{\bl}\cap \tilde\Sigma^{\gr}=\tilde\Sigma^{\bl\gr}=\partial \tilde C^{\bl\gr}$, $\tilde \Sigma^{\rd}\cap \tilde \Sigma^{\gr}=\tilde \Sigma^{\rd\gr}=\partial \tilde C^{\rd\gr}$, with
\begin{align}
\begin{split}
    \tilde C^{\bl\gr} &:= \{x^{\bl} = x^{\gr} = 0, -L\le x^{\rd}\le L \} \\
    \tilde C^{\rd\gr} &:= \{x^{\rd} = x^{\gr} = 0, -L\le x^{\bl}\le L \} \\
\end{split}
\end{align}

Then, we resolve the overlap of the three operators by perturbing the supports $\tilde \Sigma^{\rd}, \tilde \Sigma^{\bl},\tilde \Sigma^{\gr}$ as follows:
\begin{itemize}
    \item Along the overlap $\tilde \Sigma^{\rd\bl}$, we perturb the operator $M^{\rd}(\tilde\Sigma^{\rd})$ in the direction of $x^{\gr}$ so that the absolute value of $x^{\gr}$ slightly increases. Namely, $M^{\rd}(\tilde\Sigma^{\rd})$ is perturbed so that its support restricted at $x^{\bl}=0$ is $\partial \tilde C'^{\rd\bl}$ with
    \begin{align}
    \tilde C'^{\rd\bl} := \{x^{\rd} = x^{\bl} = 0, -L-\epsilon\le x^{\gr}\le L+\epsilon \}
\end{align}
with a small positive $\epsilon$. 

    \item To resolve the overlap at $\tilde \Sigma^{\bl\gr}$,  $M^{\bl}(\tilde \Sigma^{\bl})$ is perturbed so that its support restricted at $x^{\gr}=0$ is $\partial \tilde C'^{\bl\gr}$ with
    \begin{align}
    \tilde C'^{\bl\gr} := \{x^{\bl} = x^{\gr} = 0, -L-\epsilon\le x^{\rd}\le L+\epsilon \}
\end{align}
with a small positive $\epsilon$.

\item To resolve the overlap at $\tilde \Sigma^{\rd\gr}$,  $M^{\gr}(\tilde\Sigma^{\gr})$ is perturbed so that its support restricted at $x^{\rd}=0$ is $\partial \tilde C'^{\rd\gr}$ with
    \begin{align}
    \tilde C'^{\rd\gr} := \{x^{\rd} = x^{\gr} = 0, -L-\epsilon\le x^{\bl}\le L+\epsilon \}
\end{align}
with a small positive $\epsilon$.
\end{itemize}

Let us again denote the operators after the above perturbations by $M^{\rd}(\tilde \Sigma^{\rd}),  M^{\bl}(\tilde \Sigma^{\bl}),  M^{\gr}(\tilde \Sigma^{\gr})$. In (2+1)D, these three operators correspond to the configurations as shown in Fig.~\ref{fig:borromean}.
 In general dimensions, these three operators  are nontrivially linked as a whole. Indeed, one cannot shrink one of the membranes $M^{\rd}(\tilde\Sigma^{\rd})$ without intersecting with $M^{\bl}(\tilde\Sigma^{\bl})$, similar for other colors (related by cyclic permutation of colors $\rd\to\bl\to\gr\to\rd$). Meanwhile, once we eliminate one of the operators (say we eliminate $M^{\gr}(\tilde \Sigma^{\gr})$), then the configuration of two operators $M^{\rd},M^{\bl}$ are unlinked; one can topologically shrink $M^{\bl}$ to a point without intersecting with $M^{\rd}$.
Therefore this generalizes the Borromean ring in (2+1)D.

The above magnetic operators source the holonomy of $\Z_2$ gauge fields, such that 
\begin{align}
    \red{a^p} = \text{PD}(\tilde C^{\rd})~, \  \blue{b^q} = \text{PD}(\tilde C^{\bl})~, \ \green{c^s} = \text{PD}(\tilde C^{\gr})~.
\end{align}
These three hypercubes $\tilde C^{\rd},\tilde C^{\bl},\tilde C^{\gr}$ have nontrivial triple intersection, so the correlation function evaluates as
\begin{align}
\begin{split}
    \langle M^{\rd}(\tilde \Sigma^{\rd})M^{\bl}(\tilde \Sigma^{\bl})M^{\gr}(\tilde \Sigma^{\gr})\rangle &= (-1)^{\int \text{PD}(\tilde C^{\rd})\cup\text{PD}(\tilde C^{\bl})\cup\text{PD}(\tilde C^{\gr})} \\
    &= -1~.
    \end{split}
\end{align}
This correlation function is an invariant of the generalized Borromean ring.

We note that although we constructed the generalized Borromean ring using specific hypercubes, this choice is not essential: such link can be defined on a general CW complex, since the hypercubes can be topologically deformed while preserving the correlation function.

\begin{figure*}[htb]
\centering	\includegraphics[width=0.4\textwidth]{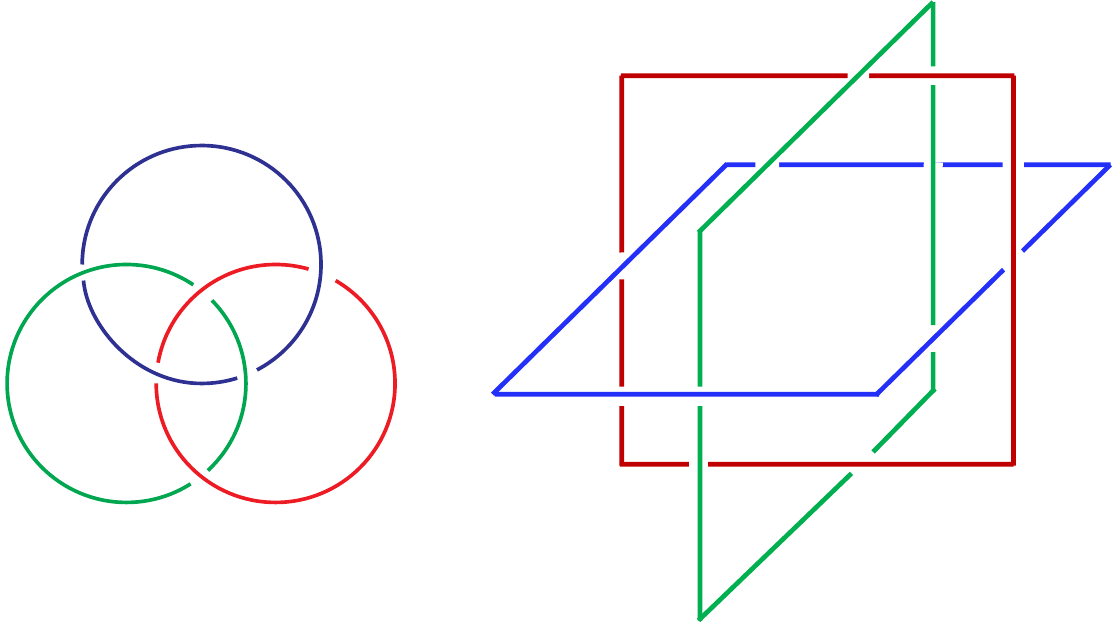}
    \caption{Left: The Borromean ring in (2+1)D. Right: It is convenient to illustrate a Borromean ring by three intersecting squares, which can be generalized to generic dimensions. Three squares are extended in $xy, yz, zx$ directions and have triple intersection.}
    	\label{fig:borromean}
\end{figure*}

\subsubsection{Operator-based derivation}

Here, we will present another approach, similar to the discussion in \cite{Witten:1989wf,Inamura:2023ldn} by taking time slices of the higher-dimensional analogue of Borromean ring process.

In the spacetime coordinate $\{x^{\rd},x^{\bl},x^{\gr}\}$ introduced above, suppose that the time direction is labeled red and the red coordinates decompose into $\{x^{\rd}\}=\{t, x_{\text{sp}}^{\rd}\}$; the spatial coordinate is given by $\{x_{}^{\rd},x^{\bl},x^{\gr}\}$. 

Then we consider three operators:
\begin{align}
    M^{\rd}(C_{}^{\rd})~, \quad M^{\bl}(C_{}^{\bl})~, \quad M^{\gr}(C_{}^{\gr})~,
\end{align}
where we define the operator supports as
\begin{align}
\begin{split}
    C_{}^{\rd}&:= \{x_{\text{sp}}^{\rd} = 0, -L\le x^{\bl}\le L, -L\le x^{\gr}\le L \} \\
    C_{}^{\bl}&:= \{x^{\bl} = 0, -L\le x_{\text{sp}}^{\rd}\le L, -L\le x^{\gr}\le L \} \\
    C_{}^{\gr}&:= \{x^{\gr} = 0, -L\le x_{\text{sp}}^{\rd}\le L, -L\le x^{\bl}\le L \} \\
\end{split}
\end{align}
and
\begin{align}
    \Sigma_{}^{\rd} := \partial C_{}^{\rd}~. 
\end{align}

Note that $M^{\bl}(C_{}^{\bl}),  M^{\gr}(C_{}^{\gr})$ have boundaries, where the magnetic excitations are created. Meanwhile, $M^{\rd}(\Sigma_{}^{\rd})$ is a symmetry operator that preserves the stabilizer space. They are explicitly represented by Clifford operators as
\begin{align}\label{eqn:magneticoprprojection}
    \begin{split}
        M^{\rd}(C_{}^{\rd}) &= P^{\rd}\left[\prod_{\sigma^*\in C_{}^{\rd}} X^{\rd}_{\sigma^*}\prod_{\sigma_q,\sigma_s\in C_{}^{\rd}}(1+Z^{\bl}_{\sigma_q})(1+Z^{\gr}_{\sigma_s})\right] P^{\rd} \\
        M^{\bl}(C_{}^{\bl}) &=P^{\bl}\left[\prod_{\sigma^*\in C_{}^{\bl}}X^{\bl}_{\sigma^*} \prod_{\sigma_p,\sigma_s\in C_{}^{\bl}}(1+Z^{\rd}_{\sigma_p})(1+Z^{\gr}_{\sigma_s})\right]P^{\bl} \\
        M^{\gr}(C_{}^{\gr}) &=P^{\gr}\left[\prod_{\sigma^*\in C_{}^{\gr}}X^{\gr}_{\sigma^*}\prod_{\sigma_p,\sigma_q\in C_{}^{\gr}}(1+Z^{\rd}_{\sigma_p})(1+Z^{\bl}_{\sigma_q})\right] P^{\gr}~,\\
    \end{split}
\end{align}
where $P^{\rd}$ is a projection onto the stabilizer subspace along $C_{}^{\rd}$, and $P^{\bl},P^{\gr}$ are projections onto the stabilizer subspace for stabilizers within the interior of $C_{}^{\bl},C_{}^{\gr}$ respectively.

The Borromean ring invariant is then evaluated by a sequence of operators acting on the stabilizer state $\ket{\psi}$,
\begin{align}
    \bra{\psi} M^{\bl}(C_{}^{\bl})^\dagger M^{\gr}(C_{}^{\gr})^\dagger M^{\rd}(\Sigma_{}^{\rd}) M^{\gr}(C_{}^{\gr})M^{\bl}(C_{}^{\bl})\ket{\psi}
\end{align}
The combined operation of magnetic operators $ M^{\gr}(C_{}^{\gr})M^{\bl}(C_{}^{\bl})$ generates the configuration of the gauge fields $\blue{\hat{b}^q},\green{\hat{c}^s}$ such that $(-1)^{\int_{C_{}^{\rd}}\blue{\hat{b}^q}\cup \green{\hat{c}^s}}=-1$. Note that the product of $X^{\rd}$ operators in the expression of $M^{\rd}(\Sigma_{}^{\rd})$ is identical to the product of $CZ$ operators within the stabilizer subspace,
\begin{align}
    \prod_{\sigma^*\in  \Sigma_{}^{\rd}} X^{\rd}_{\sigma^*} = \prod_{\substack{\sigma_q,\sigma_s\in C_{}^{\rd} \\ \sigma_q\cup\sigma_s\neq 0 }} CZ^{\bl,\gr}_{\sigma_{q},\sigma_{s}}
\end{align}
that evaluates the sum $(-1)^{\int_{C_{}^{\rd}}\blue{\hat{b}^q}\cup \green{\hat{c}^s}}$.
Therefore we get
\begin{align}
\begin{split}
    &\text{Arg}\;\bra{\psi} M^{\bl}(C_{}^{\bl})^\dagger M^{\gr}(C_{}^{\gr})^\dagger M^{\rd}(\Sigma_{}^{\rd}) M^{\gr}(C_{}^{\gr})M^{\bl}(C_{}^{\bl})\ket{\psi} \\
    &= \text{Arg}\left(-\bra{\psi} M^{\bl}(C_{}^{\bl})^\dagger M^{\gr}(C_{}^{\gr})^\dagger  M^{\gr}(C_{}^{\gr})M^{\bl}(C_{}^{\bl})\ket{\psi} \right)\\
    &=\pi~.
\end{split}
\end{align}
We note that inside the big bracket in the second line is the negative of the norm of $M^{\gr}(C_{}^{\gr})M^{\bl}(C_{}^{\bl})\ket{\psi}$, which is negative as long as we consider the state $\ket{\psi}$ such that the state $M^{\gr}(C_{}^{\gr})M^{\bl}(C_{}^{\bl})\ket{\psi}$ is not zero.
The above therefore diagnoses the Borromean ring invariant on the stabilizer code.

\section{Discussion and Outlook}
\label{sec:outlook}

The current paper focus on the construction of the non-Abelian qLDPC codes and the logical action via the gauging measurement protocol or  the corresponding spacetime path integral.  We have not investigated the decoding algorithm and the fault-tolerance proof of the gauging protocol.  In the special case that the skeleton classical code in the twisted hypergraph-product code construction is a repetition code which gives rise to a 2D topological code,  a `just-in-time' decoder has been developed in Ref.~\cite{Davydova:2025ylx}, along with a proof of the fault-tolerant threshold.  In the follow-up work, we will devise a decoder for the general twisted 2D HGP code using a similar strategy as the just-in-time decoder which first closes the non-Abelian magnetic flux defects and then then cleans up the Abelian electric charge excitations.    The proof strategy will need to be modified for the case that the skeleton classical codes are asymptotically good expander codes.

Next, it should be straightforward to generalize the twisted HGP code construction to a twisted (gauged) balanced product code \cite{Breuckmann:2021_balanced}.  This will further enhance the code distance beyond $\Omega(\sqrt{n})$. 

For the Clifford-stabilizer codes constructed in  Refs.~\cite{Kobayashi:2025cfh,Hsin:2025zgn,Warman:2025hov} in the context of topological codes, one can directly find  transversal non-Clifford gates using automorphism symmetries of the corresponding gauge theory in the presence of boundaries.   In the twisted 2D HGP codes studied here, the non-Clifford logical operation is constructed instead using the spacetime path integral, and there are no boundaries in these codes. It would be interesting to also find transversal non-Clifford gates in the  twisted 2D HGP codes.

While the twisted 2D HGP code we construct here is a Clifford stabilizer code, it is straightforward to generalize it to a twisted $N$-dimensional HGP code as a higher Clifford-hierarchy stabilizer code, such as the models discussed in \cite{Kobayashi:2025cfh,Hsin:2025zgn} with $(N+1)$-fold  intersection instead of triple intersection in the path integral.
The corresponding TQFT has the topological action of the form
\begin{align}
    S&=\sum_{i}\pi\int_{\tilde \L} a_i^{m_i}\cup d\tilde a^{l_i} +\pi \int_{\tilde \L} a_1^{m_1}\cup \cdots \cup a^{m_{N+1}}_{N+1}~,
\end{align}
where the spacetime dimension of the Poincare CW complex $\tilde{\L}$ is $D=m_i+l_i+1=\sum_{j=1}^{N+1} m_j$.
The theory corresponds to a Clifford-hierarchy stabilizer code consisting of stabilizer generators of the form $\prod Z$ and $\prod X (\text{C}^{N-1}\text{Z})$ \cite{Kobayashi:2025cfh,Hsin:2025zgn,Warman:2025hov}. The theory also has non-Abelian excitations that obey non-Abelian fusion and braiding, in particular $N$-body analogue of Borromean ring braiding.

One can then generalize the logical action through a spacetime path integral similar to Eq.~\eqref{eq:Z_component} or Eq.~\eqref{eq:Z_component_skeleton} involving an $N$-fold intersection which then performs the gauging measurement of addressable logical $\text{C}^{N-1}\text{Z}$ gates and creates a magic state fountain to produce $\text{C}^{N-1}\text{Z}$ magic states in parallel.   This will effectively implement a logical gate in the $(N+1)^\text{th}$-level of Clifford hierarchy in the twisted (gauged) $N$-dimensional HGP code, which boosts the computational power of the untwisted $N$-dimensional HGP code that only admits  transversal gates in the $N^\text{th}$-level of hierarchy. We call this approach a `\textit{gauging booster}'. 

We note that for the twisted 2D HGP code the gauging measurement needs $O(d)$ rounds of error corrections, which can be considered as a space-time tradeoff when compared to the single-shot transversal CCZ gates in the 3D HGP code.  Nevertheless, for twisted HGP codes with dimension $N \ge 3$, it is expected that the codes will admit single-shot state preparation similar to their untwisted counterpart \cite{golowich2025constant} due to the confinement of the flux excitations.    In that case, the gauging booster may enhance the computational power just in a single-shot protocol  without the additional  $O(d)$ time overhead.  

We conjecture that the gauging booster will allow native logical action in arbitrary Clifford hierarchies implemented via the  twisted (gauged) 2D product codes (including HGP and balanced product) in the form of Clifford-hierarchy stabilizer codes, which is in the same spirit as the recent results in Ref.~\cite{Warman:2025hov} in the context of 2D topological codes.

Finally, it would be interesting to explore a wider class of non-Abelian qLDPC codes and the associated combinatorial TQFTs, including the more general string-net models \cite{levin2005} and quantum double models \cite{kitaev2003}.

\section*{Acknowledgment}

We thank Ben Brown for insightful discussions about the error correction details of the gauging protocol and comments on the manuscript. G.Z. also thanks Virgile Guemard, Louis Golowich, Andreas Bauer, and Julio  Magdalena de la Fuente for helpful discussions. G.Z. is supported by the U.S. Department of Energy, Office of Science, National Quantum
Information Science Research Centers, Co-design Center
for Quantum Advantage (C2QA) under contract number
DE-SC0012704. 
R.K. is supported by the U.S. Department of Energy through grant number DE-SC0009988 and the Sivian Fund.
P.-S.H. is supported by Department of Mathematics King’s College
London. G.Z. and P.-S.H. thank Simons Foundation for hosting the Collaboration on Global Categorical Symmetries Annual Meeting 2025, during which part of the work is completed.

\appendix

\begin{appendix}

\section{Derivation of the Clifford stabilizer models by gauging a higher-form SPT}\label{app:gauging_derivation}

In this appendix, we present the derivation of the Clifford stabilizer codes presented in Sec.~\ref{sec:code_construction}.  This is also equivalent to construct the parent Hamiltonian whose ground space is the code space $\C$.  We follow the procedure in Ref.~\cite{Hsin2024_non-Abelian} which studies a general class of higher-form gauge theories called \textit{cubic theory} and is generalized to our non-Abelian qLDPC construction on a CW complex.   We start by constructing the Hamiltonian of a class of symmetry-protected topological (SPT) phases protected by higher-form symmetries, and then gauge these symmetries minimally.

\subsection{SPT Hamiltonian}
The SPT phase we consider here has $\red{\ZZ_2^{(p-1)}} \otimes \blue{\ZZ_2^{(q-1)}} \otimes \green{\ZZ_2^{(s-1)}}$ higher form symmetries, i.e., with ($p-1$)-form,  ($q-1$)-form and ($s-1$)-form symmetries.  For spacetime dimension $D$, this family of models satisfy the constraint $p+q+s=D$.

We start with the $D$-dimensional spacetime complex $\tilde{\L} \equiv \L \otimes I_t$, which can either be a CW complex (more specifically a Poincar\'e complex) or the triangulation associated with the constructed manifold $\M$ obtained from the generalized code-to-manifold mapping introduced in Sec.~\ref{sec:code_to_manifold}.  We start with three independent copies of lattice models, denoted by \red{red} (\rd), \blue{blue} (\bl) and \green{green} (\gr).  The qubits in copy \rd \ are placed on $(p-1)$-cells, qubits in copy \bl \ on $(q-1)$-cells, and qubits in copy \gr \ on $(s-1)$-cells. The corresponding Pauli operators are denoted by $\{\tilde X^{i},\tilde Y^{i},\tilde Z^{i}\}$ with the index $i=\rd, \bl, \gr$.   Where we have used tilde to denote the operators for the ungauged SPT models to distinguish them with the Pauli operators in the gauged model introduced later.   

We now introduce the operator-valued cochain formalism developed in Refs.~\cite{Hsin2024_non-Abelian, zhu2023non, Hsin2024:classifying, zhu2025topological, zhu2025topological}.  We consider an operator-valued $k$-cochain  $\hat{\lambda}^k_i$ with eigenvalues on each $k$-cell being 0 or 1, which physically represents the $
\ZZ_2$-valued ``matter'' fields transformed under $\ZZ_2$ global symmetry.   We can relate the operator-valued cochain on each $k$-cell $\sigma_k$ to the Pauli-$Z$ operator as
\be\label{eq:Pauli_relation}
(-1)^{\hat{\lambda}^k_i(\sigma_k)}=\tilde Z^{i}(\sigma_k),
\ee
or equivalently
\be
\hat{\lambda}^k_i(\sigma_k) = [1-Z^{i}(\sigma_k)]/2.
\ee

The topological response action of the higher-form SPT can be expressed as
\be\label{sec:SPT_action}
S_\text{SPT} = \pi \int_{\tilde{\L}}  \red{A^p} \cup  \blue{B^q} \cup \green{C^s}. 
\ee
Here, the $p$-, $q$- and $s$-cochains $\red{A^p}$,  $\blue{B^q}$ and $\green{C^s}$ represent the background gauge fields, which are classical variables.  This theory is a higher-form generalization of the group cohomology SPT in Ref.~\cite{Chen:2013foa}, and can be considered as a higher-form analog of ``type-III cocycle" SPT phases \cite{deWildPropitius:1995cf, Yoshida:2015cia}. In particular, Yoshida provides detailed lattice model construction of higher-form SPTs in Ref.~\cite{Yoshida:2015cia} (see also \cite{Tsui:2019ykk,Chen:2021xks} for lattice models of other higher-form SPTs).

Here we follow the gauging procedure introduced in Refs.~\cite{Yoshida:2015cia, barkeshli2023codimension}, where both papers have detailed lattice construction.  We start with a trivial Hamiltonian of the Ising model in the paramagnetic phase:
\begin{align}
    H^0=-\sum_{\sigma_{p-1}} 
    X^{\rd}_{\sigma_{p-1}}
    -\sum_{\sigma_{q-1}} \tilde X^{\bl}_{\sigma_{q-1}}-\sum_{\sigma_{s-1}} \tilde X^{\gr}_{\sigma_{s-1}},
\end{align}
where $\sigma_{p-1}$, $\sigma_{q-1}$ and $\sigma_{s-1}$ represent $(p-1)$-cells, $(q-1)$-cells and $(s-1)$-cells.  We construct the following entangler \cite{Yoshida:2015cia}:
\begin{equation} \label{eq:entangler}
    U= (-1)^{\int_{{\L}} \hat{\lambda}^{p-1}_{\rd}\cup d\hat{\lambda}^{q-1}_{\bl}\cup d\hat{\lambda}^{s-1}_{\gr}},
\end{equation}
where the sum in the exponent is over the entire spatial complex $\L$ of dimension $\mathsf{d}=D-1$.  
Here, for a given $\Z_2$ $k$-cochain $\alpha$, its coboundary $d\alpha$ is a $\Z_2$ $(k+1)$-cochain expressed as
\begin{align}
    d\alpha(\sigma_{k+1}) = \sum_{\sigma_k\in\partial \sigma_{k+1}} \alpha(\sigma_k).
    \label{eq:coboundary}
\end{align}
The operator-valued cochain can be expanded as
$\hat{\lambda}^{k}_i $$ = $$ \sum_{{\sigma}_{k}}\hat{N}_i^k   \tilde{\sigma}_{k}$, where $\hat{N}_i^k \equiv \hat{\lambda}^{k}_i(\sigma_k)$ is the quantum operator with its eigenvalue being  ${N}_i^k \in \{0,1\}$ and has the relation with the Pauli operator as: $(-1)^{\hat{N}_i^k}= \tilde{Z}^i(\sigma_k)$, which is equivalent to Eq.~\eqref{eq:Pauli_relation}.  The classical variable $\tilde{\sigma}_{k}$ is an indicator $k$-cochain that takes value 1 on a single $k$-cell $\sigma_k $ and zero otherwise. 
We can hence rewrite the above entangler in Eq.~\eqref{eq:entangler} as quantum gates:
\begin{align}\label{eq:entangler_re-express}
 U =& (-1)^{\int_{\L} \hat{N}^p_{\rd} \tilde{\sigma}_{p-1}\cup  \hat{N}^q_{\bl} d\tilde{\sigma}_{q-1} \cup \hat{N}^s_{\gr} d \tilde{\sigma}_{s-1}}    \cr
 =& \prod_{{\sigma}_{p-1}, {\sigma}_{q-1}, {\sigma}_{s-1}}  {\big[\text{CCZ}^{{\rd},{\bl},{\gr}}_{\sigma_{p-1}, \sigma_{q-1}, \sigma_{s-1}}\big]}^{\int_{\L} \tilde{\sigma}_{p-1}\cup  d\tilde{\sigma}_{q-1} \cup d \tilde{\sigma}_{s-1}} \cr
\equiv &  \prod_{\substack{{\sigma}_{p-1}, {\sigma}_{q-1}, {\sigma}_{s-1}: \\  \int_{\L} \tilde{\sigma}_{p-1}\cup  d\tilde{\sigma}_{q-1} \cup d \tilde{\sigma}_{s-1} \neq 0 } }  \text{CCZ}^{{\rd},{\bl},{\gr}}_{\sigma_{p-1}, \sigma_{q-1}, \sigma_{s-1}},  
\end{align}
where we have used the identity 
\be
(-1)^{\hat{N}^p_{\rd} \cdot \hat{N}^q_{\bl} \cdot \hat{N}^s_{\gr}}= \text{CCZ}^{{\rd},{\bl},{\gr}}_{\sigma_{p-1}, \sigma_{q-1}, \sigma_{s-1}} ,
\ee
since by definition we have
\begin{align}
&\text{CCZ}^{{\rd},{\bl},{\gr}}_{\sigma_{p-1}, \sigma_{q-1}, \sigma_{s-1}}  \ket{{N}^p_{\rd}, {N}^q_{\bl}, {N}^s_{\gr}} \cr
=& (-1)^{{N}^p_{\rd} \cdot {N}^q_{\bl} \cdot {N}^s_{\gr}} \ket{{N}^p_{\rd}, {N}^q_{\bl}, {N}^s_{\gr}}.
\end{align}
Note that in the second line of Eq.~\eqref{eq:entangler_re-express},  the evaluation of the cup product sum in the exponent with value 1 or 0 determines whether the CCZ gate is applied or not, leading to the conditions in the product in the third line of Eq.~\eqref{eq:entangler_re-express}: the CCZ is only acting on the triples of cells $\{\sigma_{p-1}, \sigma_{q-1}, \sigma_{s-1}\}$ satisfying the triple intersection condition on their indicator cochains 
\be
\int_{\L} \tilde{\sigma}_{p-1}\cup  d\tilde{\sigma}_{q-1} \cup d \tilde{\sigma}_{s-1}=1.
\ee

One can then obtain the higher-form SPT Hamiltonian by conjugating the entangler $U$ on the trivial paramagnetic Hamiltonian:  
\begin{align}\label{eq:SPT_Hamiltonian}
     & H_\text{SPT}  = UH^0 U^\dag  \cr
=&    -\sum_{\sigma_{p-1}} \tilde X^{\rd}_{\sigma_{p-1}}(-1)^{\int \tilde{\sigma}_{p-1}\cup d\hat{\lambda}_{q-1}^{\bl}\cup d\hat{\lambda}_{s-1}^{\gr}} \cr
    &-\sum_{\sigma_{q-1}} \tilde X^{\bl}_{\sigma_{q-1}}(-1)^{\int  d\hat{\lambda}_{p-1}^{\rd}\cup \tilde{\sigma}_{q-1}\cup d\hat{\lambda}_{s-1}^{\gr}} \cr
    &-\sum_{\sigma_{s-1}} \tilde X^{\gr}_{\sigma_{s-1}}
    (-1)^{\int d\hat{\lambda}_{p-1}^{\rd}\cup d\hat{\lambda}_{q-1}^{\bl}\cup  \tilde{\sigma}_{s-1}}  \cr
    =& -\sum_{\sigma_{p-1}} \tilde X^{\rd}_{\sigma_{p-1}}\prod_{{\sigma}_{q-1}, {\sigma}_{s-1}} {\text{CZ}^{\bl, \gr}_{\sigma_{q-1}, \sigma_{s-1}}}^{\int \tilde{\sigma}_{p-1}\cup d\tilde{\sigma}_{q-1}\cup d\tilde{\sigma}_{s-1}} \cr
    &-\sum_{\sigma_{q-1}} \tilde X^{\bl}_{\sigma_{q-1}} \prod_{{\sigma}_{p-1},  {\sigma}_{s-1}} {\text{CZ}^{\rd, \gr}_{\sigma_{p-1}, \sigma_{s-1}}}^{\int  d\tilde{\sigma}_{p-1}\cup \tilde{\sigma}_{q-1}\cup d\tilde{\sigma}_{s-1}} \cr
    &-\sum_{\sigma_{s-1}} \tilde X^{\gr}_{\sigma_{s-1}}
   \prod_{{\sigma}_{p-1}, {\sigma}_{q-1}}  {\text{CZ}^{\rd, \bl}_{\sigma_{p-1}, \sigma_{q-1}}}^{\int d\tilde{\sigma}_{p-1}\cup d\tilde{\sigma}_{q-1}\cup  \tilde{\sigma}_{s-1}}.  \cr
\end{align}
Note that we have derived the second equality by re-expressing the operator-valued cochain as $\hat{\lambda}^{k}_i $$ = $$ \sum_{\sigma_{k}}\hat{N}^k_i  \tilde{\sigma}_{k}$ as before, along with the identity $(-1)^{ \hat{N}^k_i \hat{N}^{k'}_j}= \text{CZ}^{i,j}_{\sigma_k, \sigma_{k'}}$, where $i,j \in \{\rd, \bl, \gr\}$ and $k, k' \in \{p-1, q-1, s-1\}$.   
By taking the exponent as the condition in the product as before, we arrive at the final form of the SPT Hamiltonian:
\begin{align}
    H_\text{SPT}   =& -\sum_{\sigma_{p-1}} \tilde X^{\rd}_{\sigma_{p-1}}\prod_{\substack{{\sigma}_{q-1}, {\sigma}_{s-1}: \\ \int \tilde{\sigma}_{p-1}\cup d\tilde{\sigma}_{q-1}\cup d\tilde{\sigma}_{s-1} \neq 0}} {\text{CZ}^{\bl, \gr}_{\sigma_{q-1}, \sigma_{s-1}}} \cr
    &-\sum_{\sigma_{q-1}} \tilde X^{\bl}_{\sigma_{q-1}} \prod_{\substack{{\sigma}_{p-1},  {\sigma}_{s-1}: \\\int  d\tilde{\sigma}_{p-1}\cup \tilde{\sigma}_{q-1}\cup d\tilde{\sigma}_{s-1} \neq 0}} {\text{CZ}^{\rd, \gr}_{\sigma_{p-1}, \sigma_{s-1}}} \cr
    &-\sum_{\sigma_{s-1}} \tilde X^{\gr}_{\sigma_{s-1}}
   \prod_{\substack{{\sigma}_{p-1}, {\sigma}_{q-1}: \\ \int d\tilde{\sigma}_{p-1}\cup d\tilde{\sigma}_{q-1}\cup  \tilde{\sigma}_{s-1}\neq 0}}  {\text{CZ}^{\rd, \bl}_{\sigma_{p-1}, \sigma_{q-1}}}.  \cr
\end{align}
Note that the above SPT Hamiltonian is a Clifford-stabilizer Hamiltonian which consists of dressed $X$-stabilizers, i.e., a local product of $X$-operators dressed by a product of CZ operators.  These are nothing but the parent Hamiltonian of the higher-form generalization of the cluster states widely studied by the quantum information community.  In particular, the special 1-form case of $p=q=s=1$ corresponds to the usual  cluster state. 

The higher-form SPT is protected by a $\red{\Z_2^{(p-1)}}$$\times$$\blue{\Z_2^{(q-1)}}$$\times$$\green{\Z_2^{(s-1)}}$ symmetry, where $\Z_2^{(k)}$ represents $\ZZ_2$ $k$-form symmetry. Each symmetry is generated by a transversal operator as the following:  
\begin{align}
    \begin{split}
        \red{\Z_2^{(p-1)}}: & \quad \prod_{ \sigma_{p-1}\subset \gamma^*_{\mathsf{d}-(p-1)}} \tilde{X}^{\rd}_{\sigma_{p-1}}, \\
        \blue{\Z_2^{(q-1)}}: & \quad \prod_{ \sigma_{q-1}\subset \gamma^*_{\mathsf{d}-(q-1)}} \tilde{X}^{\bl}_{\sigma_{q-1}}, \\
         \green{\Z_2^{(s-1)}}: & \quad \prod_{ \sigma_{s-1}\subset \gamma^*_{\mathsf{d}-(s-1)}} \tilde{X}^{\gr}_{\sigma_{s-1}}. \\
         \label{eq:globalsym}
    \end{split}
\end{align}
One can verify that when conjugated by the above symmetry generators, the  $H_\text{SPT}$ remains invariant.

Note that the above symmetries are generalized global symmetries which do not act on the entire system as the usual 0-form global symmetry does.  Instead, they act on a subcomplex of the dual complex $\L^*$ corresponding to cycles with codimension $p-1$, $q-1$ and $s-1$, i.e., $\gamma^*_{\mathsf{d}-(p-1)}$, $\gamma^*_{\mathsf{d}-(q-1)}$ and $\gamma^*_{\mathsf{d}-(s-1)}$, which are $\mathsf{d}-(p-1)$-cycle, $\mathsf{d}-(q-1)$-cycle and $\mathsf{d}-(s-1)$-cycle on the dual spatial complex $\L^*$ respectively, where $\mathsf{d}=D-1$ is the spatial dimension.  We also note that conventionally the $k$-form symmetry is defined on a codimension-$k$ submanifold, while here we have generalized the definition to a codimension-$k$ subcomplex $\L^*_k \subset \L^*$.   Note that the fact that $\L$ is a Poincar\'e complex which carries the Poincar\'e duality makes this generalization more straightforward.

\subsection{Constructing twisted qLDPC codes via gauging}

We now present the gauging procedure following the treatment in Refs.~\cite{Yoshida:2015cia, barkeshli2023codimension}. 
In general, there are two equivalent notions of gauging:
\begin{enumerate}
\item 
Promote the static background gauge fields $A$ to dynamical gauge fields $a$ with quantum fluctuation. 
\item
Promote the global symmetries to local gauge symmetries.
\end{enumerate}

We can understand notion 1 as promoting the background gauge fields $\red{A^p}$,  $\blue{B^q}$ and $\green{C^s}$ in the SPT action $S_\text{SPT}$ from Eq.~\eqref{sec:SPT_action} to dynamical gauge fields $\red{a^p}$,  $\blue{b^q}$ and $\green{c^s}$ in the action of the twisted gauge theory in Eq.~\eqref{eq:non-Abelian_action}.  The gauging procedure effectively applies the following mapping:
\begin{align}
& S_\text{SPT} = \pi \int_{\tilde{\L}}  \red{A^p} \cup  \blue{B^q} \cup \green{C^s}  \cr
&\rightarrow   S= \pi \int_{\tilde{\L}} \red{a^p} \cup \blue{b^q} \cup \green{c^s} + \text{BF terms}.
\end{align}

Notion 2 is the one we will pursue in this paper to construct the Clifford stabilizer models of the non-Abelian qLDPC codes. We will promote the global symmetries in Eq.~\eqref{eq:globalsym} to local gauge symmetries by introducing new degrees of freedom—$\mathbb{Z}_2$ gauge fields—to the microscopic model and modifying the Hamiltonian by imposing the Gauss law.

We begin with introducing $\ZZ_2$ gauge fields and the corresponding gauge qubits on the $p$-, $q$- and $s$-cells in copy $\rd$, $\bl$ and $\gr$ respectively.   The associated Pauli operators of the gauge qubits are denoted by $X'^i,Y'^i,Z'^i$, where $i= \rd, \bl, \gr$.    The three types of $\ZZ_2$ gauge fields $\red{\hat{a}'^p}, \blue{\hat{b}'^q}, \green{\hat{c}'^s}$ on the $\rd$, $\bl$ and $\gr$ copies represented by operator-valued $p$-, $q$- and $s$-cochains are related to the Pauli-$Z$ operators as:
\begin{align}
    \begin{split}
        (-1)^{\red{\hat{a}'^p}(\sigma_p)} = Z'^{\rd}_{\sigma_p}, \ (-1)^{\blue{\hat{b}'^q}(\sigma_q)} = Z'^{\bl}_{\sigma_q}, \  (-1)^{\green{\hat{c}'^s}(\sigma_s)} = Z'^{\gr}_{\sigma_s}, \\
    \end{split}
\end{align}

The key idea of gauging with notion 2 is that the matter fields $\hat{\lambda}^k_i$ in Eq.~\eqref{eq:Pauli_relation} which were transformed under the $\ZZ_2$ global symmetries in Eq.~\eqref{eq:globalsym} before gauging are now transformed under local $\ZZ_2$ gauge symmetries.  The gauge transformations are specified by $
\ZZ_2$ cochains $\mu^{p-1}$, $\nu^{q-1}$, $\rho^{s-1}$.  Under the $\ZZ_2$ gauge transformation, the matter fields are hence transformed by: 
\begin{align}
    \hat{\lambda}_{\rd}^{p-1}&\rightarrow \hat{\lambda}_{\rd}^{p-1}-\mu^{p-1}~,\cr 
    \lambda_{\bl}^{q-1}&\rightarrow \lambda_{\bl}^{q-1}-\nu^{q-1}~,\cr
    \lambda_{\gr}^{s-1}&\rightarrow \lambda_{\gr}^{s-1}-\rho^{s-1}~,
    \label{eq:gaugetrans1}
\end{align}
while the gauge fields are transformed correspondingly as
\begin{align}
    \red{\hat{a}'^p} &\rightarrow \red{\hat{a}'^p}+d\mu^{p-1}~,\cr
    \blue{\hat{b}'^q} &\rightarrow \blue{\hat{b}'^q}+d\nu^{q-1}~,\cr
    \green{\hat{c}'^s} &\rightarrow \green{\hat{c}'^s}+d\rho^{s-1}~,
    \label{eq:gaugetrans2}
\end{align}
where $d$ represents the coboundary operator.

We choose the generators of the above gauge transformations by setting $\mu^{p-1}= \tilde{\sigma}_{p-1}$, $\nu^{q-1}= \tilde{\sigma}_{q-1}$, and $\rho^{s-1}= \tilde{\sigma}_{s-1}$, where  $ \tilde{\sigma}_{p-1}$, $\tilde{\sigma}_{q-1}$ and $\tilde{\sigma}_{s-1}$ are indicator cochains as introduced before.  Such gauge transformations only act on matter fields $\hat{\lambda}_{\rd}^{p-1}$, $\hat{\lambda}_{\bl}^{q-1}$, and $\hat{\lambda}_{\gr}^{s-1}$ at a single site supported on cells $\sigma_{p-1}$,  $\sigma_{q-1}$, and $\sigma_{s-1}$ respectively.  Therefore, the gauge transformations are  generated by the following Gauss's law operators:
\begin{align}\label{eq:Gauss_law}
    G^{\rd}(\sigma_{p-1}) = &\tilde X^{\rd}_{\sigma_{p-1}}\prod_{\sigma_{p-1}\subset \partial \sigma_p} X'^{\rd}_{\sigma_p}~,\cr
    G^{\bl}(\sigma_{q-1}) =&\tilde X^{\bl}_{\sigma_{q-1}}\prod_{\sigma_{q-1}\subset \partial \sigma_q} X'^{\bl}_{\sigma_q}~,\cr
    G^{\gr}(\sigma_{s-1}) = &\tilde X^{\gr}_{\sigma_{s-1}}\prod_{\sigma_{s-1}\subset \partial \sigma_s} X'^{\gr}_{\sigma_s}~.
\end{align}
One can easily verify that they indeed generate the gauge transformation in Eqs.~\eqref{eq:gaugetrans1} and \eqref{eq:gaugetrans2}.
As an example, one can see that $G^{\rd}(\sigma_{p-1})$ acts by the following transformation:  $\hat{\lambda}_{\rd}^{p-1}\to \hat{\lambda}_{\rd}^{p-1}-\tilde{\sigma}_{p-1}$, $\red{\hat{a}'^p}\to \red{\hat{a}'^p} + d\tilde{\sigma}_{p-1}$.

Now the gauging is carried out by imposing the Gauss's law constraint $G^{\rd}=1, G^{\bl}=1, G^{\gr}=1$ for each cell $\sigma_{p-1}, \sigma_{q-1}, \sigma_{s-1}$. 
These Gauss's law constraints characterize the gauge invariant ``physical'' Hilbert space $\mathcal{H}_{\text{phys}}$ spanned by the gauge invariant states $G^i\ket{\text{phys}} = \ket{\text{phys}}$. 

Next, we minimally couple the Hamiltonian $H_{\text{SPT}}$ with the $\Z_2$ gauge fields so that it commutes with the Gauss's law operators and acts within the physical Hilbert space $\mathcal{H}_{\text{phys}}$. One can achieve this by the following 
replacement of the coboundaries in Eq.~\eqref{eq:SPT_Hamiltonian}:
\begin{align}
d\hat{\lambda}_{\rd}^{p-1} \rightarrow d\hat{\lambda}_{\rd}^{p-1}+\red{\hat{a}'^p} \equiv  \red{\hat{a}^p},   \cr
d\hat{\lambda}_{\bl}^{q-1} \rightarrow d\hat{\lambda}_{\bl}^{q-1}+\blue{\hat{b}'^q} \equiv  \blue{\hat{b}^q},  \cr
d\hat{\lambda}_{\gr}^{s-1} \rightarrow d\hat{\lambda}_{\gr}^{s-1}+\green{\hat{c}'^s} \equiv \green{\hat{c}^s}. 
\end{align}
Here,  we have introduced the new gauge fields $\red{\hat{a}^p}$,  $\blue{\hat{b}^q}$ and $\green{\hat{c}^s}$, which are related to $\red{\hat{a}'^p}$,  $\blue{\hat{b}'^q}$ and $\green{\hat{c}'^s}$ by the gauge transformations with $\mu^{p-1} = \hat{\lambda}^{p-1}, \ \nu^{q-1} = \hat{\lambda}^{q-1}, \rho^{s-1} = \hat{\lambda}^{s-1}$ in Eq.~\eqref{eq:gaugetrans2} that eliminates the matter fields $\hat{\lambda}$.

Due to the introduction of the new gauge fields, we  define the new Pauli operators on each cell $\sigma_p,\sigma_q,\sigma_s$ accordingly:
\begin{align}
    \begin{split}
        X^{\rd}_{\sigma_p} &= X'^{\rd}_{\sigma_p}~, \\
        Z^{\rd}_{\sigma_p} &= (-1)^{\red{\hat{a}^p}}= (-1)^{[d\hat{\lambda}_{p-1} + \red{\hat{a}'^p}](\sigma_p)}~, \\
        X^{\bl}_{\sigma_q} &= X'^{\bl}_{\sigma_q}~, \\
        Z^{\bl}_{\sigma_q} &= (-1)^{\blue{\hat{b}^q}}= (-1)^{[d\hat{\lambda}_{q-1} + \blue{\hat{b}'^q}](\sigma_q)}~, \\
        X^{\gr}_{\sigma_s} &= X'^{\gr}_{\sigma_s}~, \\
        Z^{\gr}_{\sigma_s} &= (-1)^{\green{\hat{c}^s}} = (-1)^{[d\hat{\lambda}_{s-1} + \green{\hat{c}'^s}](\sigma_s)}~. \\
        \label{eq:gaugeinvariant_paulis}
    \end{split}
\end{align}
We further impose  the flat gauge fields condition, or equivalently the cocycle condition: $\red{d\hat{a}^p}$$=$$0, \blue{d\hat{b}^q}$$=$$0, \green{d\hat{c}^s}$$=$$0$, where $\red{\hat{a}^p}$, $\blue{\hat{b}^q}$ and $\green{\hat{c}^s}$ now become operator-valued cocycles instead of general cochains. We can hence derive the gauged SPT Hamiltonian: 
\begin{align}
    \tilde{H}&=H_\text{Gauss}+H_\text{Flux}~,
\end{align}
where the Gauss's law term is given by
\begin{align}\label{eq:Gauss_law_term}
    H_\text{Gauss}&=-\sum_{\sigma_{p-1}} \left(\prod_{\sigma_{p-1}\subset \partial \sigma_p} X^{\rd}_{\sigma_{p}}\right)(-1)^{\int \tilde{\sigma}_{p-1}\cup \blue{\hat{b}^q}\cup \green{\hat{c}^s}} \cr   
    &\quad -\sum_{\sigma_{q-1}}\left(\prod_{\sigma_{q-1}\subset \partial \sigma_q} X^{\bl}_{\sigma_{q}}\right)(-1)^{\int \red{\hat{a}^p}\cup \tilde{\sigma}_{q-1}\cup  \green{\hat{c}^s}}\cr 
    &\quad  -\sum_{\sigma_{s-1}}\left(\prod_{\sigma_{s-1}\subset \partial \sigma_s} X^{\gr}_{\sigma_{s}}\right)(-1)^{\int \red{\hat{a}^p}\cup \blue{\hat{b}^q}\cup \tilde{\sigma}_{s-1}}~,
\end{align}
Note that we have re-expressed $\tilde X$ term in $H_{\text{SPT}}$ in terms of a product of $X$ in cells of one dimension higher by using the Gauss law constraints $G^i=1$ according to Eq.~\eqref{eq:Gauss_law} as well as the identification of $X'$ and $X$ in Eq.~\eqref{eq:gaugeinvariant_paulis}.
In addition, the flux term
\begin{equation}
\begin{split}
    H_\text{Flux} = &-\sum_{\sigma_{p+1}}\left(\prod_{\sigma_{p}\subset \partial \sigma_{p+1}}Z_{\sigma_p}^{\rd}\right) 
     -\sum_{\sigma_{q+1}}\left(\prod_{\sigma_{q}\subset \partial \sigma_{q+1}}Z_{\sigma_q}^{\bl}\right)\\
 &-    \sum_{\sigma_{s+1}}\left(\prod_{\sigma_{s}\subset \partial \sigma_{s+1}}Z_{\sigma_{s}}^{\gr}\right)~,
 \label{eq:Hflux}
\end{split}
\end{equation}
energetically enforces the flat $\Z_2$ gauge fields (cocycle) condition $\red{d\hat{a}^p}=0, \ \blue{d\hat{b}^q}=0, \ \green{d\hat{c}^s}=0$ on each cell $\sigma_{p+1},\sigma_{q+1}, \sigma_{s+1}$ in the ground-state subspace of $\tilde{H}$ in analogy to the plaquette term in standard (1-form) toric code.

We then re-express the Gauss's law terms in Eq.~\eqref{eq:Gauss_law_term} by expanding the gauge fields as $\red{\hat{a}^p} $$ = $$ \sum_{\sigma_{p}}\hat{N}^p_{\rd}  \tilde{\sigma}_{p}$,   $\blue{\hat{b}^q} $$ = $$ \sum_{\sigma_{q}}\hat{N}^q_{\bl}  \tilde{\sigma}_{q}$, $\green{\hat{c}^s} $$ = $$ \sum_{\sigma_{s}}\hat{N}^s_{\gr}  \tilde{\sigma}_{s}$,  along with the identity $(-1)^{ \hat{N}^k_i \hat{N}^{k'}_j}= \text{CZ}^{i,j}_{\sigma_k, \sigma_{k'}}$, where $i,j \in \{\rd, \bl, \gr\}$ and $k, k' \in \{p, q, s\}$,  which turns the terms with gauge fields into CZ operators. 
The rewritten gauged SPT Hamiltonian $\tilde{H}=H_\text{Gauss}+H_\text{Flux}$ then gives rise to the Clifford stabilizer generators defined in Eq.~\eqref{eq:stabilizer_summary}  in Sec.\ref{sec:code_construction},  which we list again as follows:
\begin{align}
\tilde{A}^{\rd}_{\sigma_{p-1}} =& A^{\rd}_{\sigma_{p-1}} \prod_{\sigma_q,\sigma_s: \int\tilde{\sigma}_{p-1}\cup \tilde{\sigma}_q\cup \tilde{\sigma}_s\neq 0}\text{CZ}^{\bl,\gr}_{\sigma_q, \sigma_s},  \cr
\tilde{A}^{\bl}_{\sigma_{q-1}} =& A^{\bl}_{\sigma_{q-1}}\prod_{\sigma_p,\sigma_s: \int \tilde{\sigma}_p\cup \tilde{\sigma}_{q-1}\cup  \tilde{\sigma}_{s} \neq 0}\text{CZ}^{\rd, \gr}_{\sigma_p, \sigma_s}, \cr
 \tilde{A}^{\gr}_{\sigma_{s-1}} =& A^{\gr}_{\sigma_{s-1}}\prod_{\sigma_p,\sigma_q: \int \tilde{\sigma}_p\cup \tilde{\sigma}_q\cup \tilde{\sigma}_{s-1} \neq 0}\text{CZ}^{\rd,\bl}_{\sigma_p, \sigma_q},  \cr
B^{\rd}_{\sigma_{p+1}}=& \prod_{\sigma_{p}\subset \partial \sigma_{p+1}}Z_{\sigma_p}^{\rd}, \quad  B^{\bl}_{\sigma_{q+1}}=\prod_{\sigma_{q}\subset \partial \sigma_{q+1}}Z_{\sigma_q}^{\bl}   \cr 
B^{\gr}_{\sigma_{s+1}} =& \prod_{\sigma_{s}\subset \partial \sigma_{s+1}}Z_{\sigma_{s}}^{\gr}.
\end{align}

Note that the untwisted code $\C$ in Eq.~\eqref{eq:stabilizer_untwisted} can be obtained by directly gauging the trivial paramagnetic Hamiltonian $H^0$ instead of the SPT Hamiltonian $H_\text{SPT}$ with completely the same gauging procedure dsicussed above.  The CZ terms in the Gauss's law term $A$ go away because they were absent in $H^0$. 

\section{Subsystem-code distance}\label{app:subsystem}

Here is the proof of Lemma \ref{lemma:subsystem}:
\begin{proof}
Due to the intersection pairing
\[
|\eta_k \cap \eta'^k|\equiv \int_{\eta_k} \eta'^k= \delta_{\eta, \eta'},
\]
a logical \(Z\) operator supported on a \(k\)-cycle \(\eta_k\) anticommutes with exactly one logical \(X\) operator, namely the one supported on its conjugate \(k\)-cocycle \(\eta^k\). This canonical anticommutation relation is
\begin{equation}
\lo{Z}_{\eta_k}\,\lo{X}_{\eta^k} = -\,\lo{X}_{\eta^k}\,\lo{Z}_{\eta_k}.
\end{equation}
Consequently, an \(X\)-type error can flip the eigenvalue of \(\lo{Z}_{\eta_k}\) only if its support wraps nontrivially around the conjugate  cocycle \(\eta^k\). Dually, a \(Z\)-type error must be supported on the cycle \(\eta_k\) in order to flip the eigenvalue of \(\lo{X}_{\eta^k}\).

It follows that the \(Z\)-distance of the subsystem code is set by the shortest nontrivial representative among the basis \(k\)-cycles \(\{\eta_k\}\),
\begin{equation}
d_Z = \min\{|\eta_k|\},
\end{equation}
while the \(X\)-distance is determined by the smallest basis \(k\)-cocycle \(\{\eta'^k\}\),
\begin{equation}
d_X = \min\{|\eta'^k|\}.
\end{equation}
The overall distance of the subsystem code is therefore
\begin{equation}
d = \min(d_Z, d_X)
  = \min\left(\min\{|\eta_k|\}, \min\{|\eta'^k|\}\right).
\end{equation}

\end{proof}

\end{appendix}

\bibliography{mybib_merge, biblio.bib}

\end{document}